\def\dOi{10(3:10)2014}
\DeclareMathAlphabet{\mathpzc}{OT1}{pzc}{m}{it}
\let\mathpzc\mathscr
\let\mathpzc\mathcal
\def\bondi{{\bf bondi}}
\title[A Concurrent Pattern Calculus]{A Concurrent Pattern Calculus\rsuper*}
\author[T.~Given-Wilson]{Thomas Given-Wilson\rsuper a}	
\address{{\lsuper a}INRIA, Paris, France}	
\email{thomas.given-wilson@inria.fr}  
\thanks{{\lsuper a}The first author has been partially supported by the project ANR-12-IS02-001 PACE}
\author[D.~Gorla]{Daniele Gorla\rsuper b}	
\address{{\lsuper b}Dip.~di Informatica, ``Sapienza'' Universit\`a di Roma}	
\email{gorla@di.uniroma1.it}
\author[B.~Jay]{Barry Jay\rsuper c}	
\address{{\lsuper c}University of Technology, Sydney}	
\email{Barry.Jay@uts.edu.au}
\keywords{Pattern Matching; Process Calculi; Behavioural Theory; Encodings}
\renewcommand{\beq}{\simeq}
\renewcommand{\rew}{\redar}
\newcommand{\xangle}[1]{\langle #1 \rangle}
\newcommand{\withsetnot}[2]{#2} 
\def \rightarrowfill{\m@th\mathord{\smash-}\mkern-6mu%
  \cleaders\hbox{$\mkern-2mu\mathord{\smash-}\mkern-2mu$}\hfill
  \mkern-6mu\mathord\to}
\def \Rightarrowfill{\m@th\mathord{\smash=}\mkern-6mu%
  \cleaders\hbox{$\mkern-2mu\mathord{\smash=}\mkern-2mu$}\hfill
  \mkern-6mu\mathord\Rightarrow}
\begin{document}
\makeatactive


\maketitle  

\begin{abstract}
  Concurrent pattern calculus (CPC) drives interaction between processes by
  comparing data structures, just as sequential pattern calculus
  drives computation. By generalising from pattern matching to pattern
  unification, interaction becomes symmetrical, with information
  flowing in both directions.
  CPC provides a natural language to express trade where information
  exchange is pivotal to interaction.
  The unification allows some patterns to be more discriminating than
  others; hence, the behavioural theory must take this aspect into
  account, so that bisimulation becomes subject to compatibility of patterns.
  Many popular process calculi can be encoded in CPC; this allows for a
  gain in expressiveness, formalised through encodings.

\end{abstract} 

\section{Introduction}
\label{sec:intro}


The $\pi$-calculus \cite{milner.parrow.ea:calculus-mobile,sangiorgi.walker:theory-mobile} 
holds an honoured position amongst process calculi as a succinct calculus that can capture topological changes in a network, as well as encode computation as represented by $\l$-calculus \cite{Barendregt85}.
Interaction in $\pi$-calculus is done by matching upon a single name known by both the input and output primitives.
The polyadic $\pi$-calculus extends this by also matching on the length of the tuple of names to be communicated.
Linda \cite{Gel85} extends this further by allowing matching on any number of names known by both processes.
A more symmetric approach to communication is taken in Fusion calculus \cite{parrow.victor:fusion-calculus} 
that matches a channel name and tuple length, like polyadic $\pi$-calculus, but allows symmetric information exchange.
Other calculi consider structured information rather than simply names \cite{gordon1997ccp},
even matching arbitrary structures asymmetrically during communication \cite{BJPV11}.
Hence it is natural to explore how a concurrent pattern calculus can unify structured patterns with symmetric matching and communication mechanisms.

This paper develops pattern unification in a setting that supports
parallel composition, name restriction, and replication. This yields
{\em concurrent pattern calculus} (CPC), where prefixes for input and
output are generalised to patterns whose {\em unification} triggers a
symmetric flow of information, as represented by the sole interaction
rule
\[
(p\pre P \bnf q\pre Q) \quad \rew \quad \sigma P\bnf \rho Q
\]%
where $\sigma$ and $\rho$ are the substitutions on names resulting
from the unification of the patterns $p$ and $q$. 


The flexibility of the pattern unification and the symmetry of exchange 
in CPC align closely with the world of trade. Here the support for 
discovering a compatible process and exchanging information mirrors 
the behaviour of trading systems such as a stock market.
The main features of CPC are illustrated in the following sample trade interaction:
\[
\begin{array}{r}
\res{\mathit{\ sharesID}}\pro{\mathit{ABCShares}}\bullet\mathit{sharesID}\bullet\l x
\to \xangle{\mathit{charge\ x\ for\ sale}}
\\ \bnf \quad
\res{\mathit{\ bankAcc}}\pro{\mathit{ABCShares}}\bullet\l y\bullet\mathit{bankAcc}
\to \xangle{\mathit{save\ y\ as\ proof}}
\vspace*{.3cm}
\\
\redar
\res{\mathit{\ sharesID}}\res{\mathit{\ bankAcc}}
(\xangle{\mathit{charge\ bankAcc\ for\ sale}}\\
\bnf \xangle{\mathit{save\ sharesID\ as\ proof}})
\end{array}
\]
The first line models a seller that will synchronise with a buyer, using the protected
information $\mathit{ABCShares}$, and exchange its shares ($\mathit{sharesID}$) for bank
account information (bound to $x$).
The second line models a buyer. Notice that the information exchange is bidirectional
and simultaneous: {\it sharesID} replaces $y$ in the (continuation of the) buyer and
{\it bankAcc} replaces $x$ in the (continuation of the) seller.
Moreover, the two patterns $\pro{\mathit{ABCShares}}\bullet\mathit{sharesID}\bullet\l x$
and $\pro{\mathit{ABCShares}}\bullet\l y\bullet\mathit{bankAcc}$ also specify the
details of the stock being traded, that must be matched for equality in the pattern
matching, as indicated by the syntax $\pro {\,\cdot\,}$.

Pattern unification in CPC is even richer than indicated in this example, as unification
may bind a compound pattern to a single name; that is, patterns do not need to be
fully decomposed in unification.
For example, the bank account information could be specified, and matched upon, in
much more detail.
The buyer could provide the account name and number such as in the following pattern:
$\res{\mathit{\ accName}}\res{\mathit{\ accNum}}
\pro{\mathit{ABCShares}}\bullet\l y\bullet
(\mathit{name}\bullet\mathit{accName}\bullet\mathit{number}\bullet\mathit{accNum})$.
This more detailed buyer would still match against the seller, now yielding
$\xangle{\mathit{charge\ \mathit{name}\bullet\mathit{accName}\bullet\mathit{number}\bullet\mathit{accNum}\ for\ sale}}$.
Indeed, the seller could also specify a desire to only accept bank account information
whose structure includes some name and number (bound to $a$ and $b$ respectively) with the following pattern:
$\pro{\mathit{ABCShares}}\bullet\mathit{sharesID}\bullet
(\pro{\mathit name}\bullet\l a\bullet\pro{\mathit number}\bullet\l b)$
and continuation $\xangle{\mathit{charge\ a\ b\ for\ sale}}$.
This would also match with the detailed buyer information by
unifying $name$ with $\pro{name}$, $number$ with $\pro{number}$, and binding
$accName$ and $accNum$ to $a$ and $b$ respectively.
The second seller exploits the structural matching of CPC to only interact with a
buyer whose pattern is of the right structure (four sub-patterns) and contains
the right information (the protected names $name$ and $number$, and shared information
in the other two positions).

The structural patterns of CPC are inspired by those of {\em pattern calculus} \cite{JK09,pcb}
that supports even more computations than $\l$-calculus, since pattern-matching functions may
be {\em intensional} with respect to their arguments \cite{JayGW10}.
For example, the pattern $x~y$ can decompose any compound data structure
$u~v$ into its components $u$ and $v$.
This is different from {\em extensional} computation, like those of
$\l$- and $\pi$-calculus, where arguments are `atomic'.
This rich form of structural interaction, combined with concurrency, makes CPC very expressive, as
illustrated by the following diamond \cite{GivenWilsonPHD}:
\begin{center}
\begin{picture}(240,85)(0,0)
\put(95,1){\mbox{$\l_v$-calculus}}
\put(12,38){\mbox{$SF$-calculus}}
\put(169,38){\mbox{~$\pi$-calculus}}
\put(58,78){\mbox{Concurrent Pattern Calculus}}
\put(90,6){\vector(-1,1){28}}
\put(148,6){\vector(1,1){28}}
\put(62,47){\vector(1,1){28}}
\put(174,47){\vector(-1,1){28}}
\end{picture}
\end{center}
The $\l$-calculus sits at the bottom and can be generalised either by 
$SF$-calculus \cite{JayGW10}, by considering intensionality in the sequential setting,
or by $\pi$-calculus, by considering concurrency in the extensional setting.
CPC completes the diamond by adding either concurrency to $SF$-calculus, or
intensionality to $\pi$-calculus. Thus, CPC is the most expressive of all
by supporting intensional concurrent computation.

The definition of CPC also includes a behavioural theory that defines
when two processes are behaviourally equivalent. This is done using a
standard approach in concurrency.  First, define an intuitive notion of
equivalence that equates processes with the same behaviour (i.e., with
the same interaction capabilities), in any context and along any reduction sequence, to yield a
notion of {\em barbed congruence}. Second, provide a more effective
characterisation of such equivalence by means of a {\em labelled
transition system} (LTS) and a {\em bisimulation-based} equivalence.
Although this path is familiar, some delicacy is required for each
definition.  For example, as unification of patterns may require
testing of names for equality, the {\em barbs} of CPC (i.e.  the
predicate describing the interactional behaviour of a CPC process)
must account for names that {\em might} be matched, not just those
that {\em must} be matched.  This is different from the standard barbs
of, say, the $\pi$-calculus.  Further, as some patterns are more
discriminating than others, the bisimulation defined here will rely on
a notion of {\em compatibility} of patterns, yielding a bisimulation
game in which a challenge can be replied to with a different, though
compatible, reply. This is reminiscent of the asynchronous
bisimulation for the asynchronous $\pi$-calculus
\cite{amadio.castellani.ea:bisimulations-asynchronous}
or the symbolic characterization of open bisimilarity in the $\pi$-calculus
\cite{San96}.

CPC's support for interaction that is both structured and symmetrical
allows CPC to simulate many approaches to interaction and reduction in the
literature \cite{G:IC08}.  For example, checking equality of channel names, as in
$\pi$-calculus \cite{milner.parrow.ea:calculus-mobile}, can be viewed
as a trivial form of pattern unification. It also supports 
unification of tuples of names, as in Linda \cite{Gel85}, or fusing names,
as in Fusion \cite{parrow.victor:fusion-calculus}.  Spi calculus
\cite{gordon1997ccp} adds patterns for numbers (zero and successors)
and encryptions. Also the Psi calculus \cite{BJPV11} introduces
support for structures, albeit with a limited symmetry.

More formally, $\pi$-calculus, Linda and Spi calculus can all be
encoded into CPC but CPC cannot be encoded into any of them.  By
contrast, the way in which name fusion is modeled in fusion calculus
is not encodable into CPC; conversely, the richness of CPC's pattern
unification is not encodable in fusion calculus. Similarly, the
implicit computation of name equivalence
in Psi calculus cannot be encoded within CPC;
the converse separation result is ensured by CPC's symmetry.

A natural objection to CPC is that its unification is too complex to
be an atomic operation. In particular, any limit to the size of
communicated messages could be violated by some match. Also, one
cannot, in practice, implement a simultaneous exchange of information,
so that pattern unification must be implemented in terms of simpler
primitives.
This objection applies to many other calculi.
For example, neither polyadic $\pi$-calculus' arbitrarily large tuple communication
nor Linda's pattern matching are atomic, but both underpin many existing programming environments
\cite{Pierce97pict:a,cpplinda,Klava,Lime}.
Indeed the arbitrary complexity of both Psi calculus and CPC patterns can
exceed the capability of any computer, yet both have implementations
\cite{Khorsandiaghai603139,cbondi}.
Simlar comments apply to other process calculi
\cite{20110201:jocaml,nomadic-pict}.
A further complexity is the secure synchronisation and information exchange between agents
in distinct locations \cite{DY83,Fournet07atype,bengtson2011refinement},
however since the focus here is on exploring structured, symmetric, pattern unification
and not implementations, we do not attempt to address these details.

The structure of the paper is as follows.  Section \ref{sec:cpc}
introduces symmetric matching through a concurrent pattern calculus
and an illustrative example. Section \ref{sec:bisim} defines the
behavioural theory of the language: its barbed congruence, LTS and the
alternative characterization via a bisimulation-based equivalence.  %
Section \ref{sec:compare} formalises the relation between CPC and
other process calculi.
Section \ref{sec:conclusions} concludes the paper.
Standard proofs have been moved to the Appendix.

\section{Concurrent Pattern Calculus} 
\label{sec:cpc} 
This section presents a {\em concurrent pattern calculus} (CPC) that
uses symmetric pattern unification as the basis of communication. Both
symmetry and pattern matching appear in existing models of
concurrency, but in more limited ways.  For example, $\pi$-calculus
requires a sender and receiver to share a channel, so that knowledge
of the channel is symmetric but information flows in one direction
only.  Fusion calculus achieves symmetry by fusing names together but
has no intensional patterns.  Linda's matching is more intensional as
it can test equality of an arbitrary number of names, and the number
of names to be communicated, in an atomic interaction.  Spi calculus
has even more intensional patterns, e.g.~for natural numbers, and can
check equality of terms (i.e.\ patterns), but does not perform
matching in general.  Neither Linda or Spi calculus support much
symmetry beyond that of the $\pi$-calculus.

The expressiveness of CPC comes from extending the class of
communicable objects from raw names to a class of {\em patterns} that
can be unified. This merges equality testing and bi-directional
communication in a single step.

\subsection{Patterns}

Suppose given a countable set of {\em names} ${\cal N}$ (meta-variables $n,m,x,y,z,\ldots$).
The {\em patterns} (meta-variables $p,p',p_1,q,q',q_1,\ldots$)
are built using names and have the following forms:
\[
\begin{array}{lll}
{\it Patterns}\quad p \ ::= \
&	\lambda x					& \mbox{binding name} \\
&	x									& \mbox{variable name} \\
&	\pro x						& \mbox{protected name} \\
&	p\bullet p \quad	& \mbox{compound}
\end{array}
\]%

A binding name $\lambda x$ denotes information sought, e.g.\ by a
trader; a variable name $x$ represents such information. Note that a
binding name binds the underlying process, defined in
Section~\ref{subsec:processes}.  Protected names $\pro x$ represent
information that can be checked but not traded.  A compound combines
two patterns $p$ and $q$, its {\em components}, into a pattern
$p\bullet q$.  Compounding is left associative, similar to application
in $\l$-calculus, and pure pattern calculus.
The {\em atoms} are patterns that are not compounds.  The
atoms $x$ and $\pro x$ {\em know} $x$.

Binding, variable and protected names are all well established
concepts in the literature. Indeed, there is a correspondence between
patterns and prefixes of more familiar process calculi, such as
$\pi$-calculus: binding names correspond to input arguments and
variable names to output arguments.  Moreover, a form of protected
names appear in Linda.
There is some subtlety in the relationship of protected names to
variable names.  As protected names specify a requirement, it is
natural that they unify with the variable form of the name.
Similarly, as protected names can be used to support
channel-based communication, it is also natural that protected names
unify with themselves.

Given a pattern $p$ the sets of: {\em variables names}, denoted ${\sf
  vn}(p)$; {\em protected names}, denoted ${\sf pn}(p)$; and {\em
  binding names}, denoted ${\sf bn}(p)$, are defined as expected with
the union being taken for compounds.  The {\em free names} of a
pattern $p$, written ${\sf fn}(p)$, is the union of the variable names
and protected names of $p$.  A pattern is {\em well formed} if its
binding names are pairwise distinct and different from the free ones.
All patterns appearing in the rest of this paper are assumed to be
well formed.

As protected names are limited to recognition and binding names are
being sought, neither should be communicable to another process.
This leads to the following definition.
\begin{defi}[Communicable Pattern]
\label{def:communicable}
A pattern is {\em communicable} if it contains no protected or binding names.
\end{defi}
Protection can be extended from names to communicable patterns by
defining
$$
\pro {p\bullet q} = \pro p \bullet \pro q
$$

A {\em substitution} $\sigma$ is defined as a partial function from names to communicable patterns.
The {\em domain} of $\sigma$ is denoted $\textsf{dom}(\sigma)$;
the free names of $\sigma$, written $\textsf{fn}(\sigma)$, is given by the union of the sets $\textsf{fn}(\sigma x)$ where $x \in \textsf{dom}(\sigma)$.
The {\em names} of $\sigma$, written $\textsf{names}(\sigma)$, are $\textsf{dom}(\sigma)\cup\textsf{fn}(\sigma)$.
A substitution $\sigma$ {\em avoids} a name $x$ (or a collection of names $\wt n$) if $x\notin\textsf{names}(\sigma)$ (respectively $\wt n\cap\textsf{names}(\sigma)=\{\}$).
Note that all substitutions considered in this paper have finite domain.
For later convenience, we denote by $\idsub_X$ the identity
substitution on a finite set of names $X$; it maps every name in $X$ to
itself, i.e.  $\idsub_X(x) = x$, for every $x\in X$.

Substitutions are applied to patterns as follows
\begin{eqnarray*}
\sigma x &=&
\left\{
\begin{array}{ll}
\sigma (x) & \mbox{if $x \in \textsf{dom}(\sigma)$}\\
x & \mbox{otherwise}
\end{array}
\right. \\
\sigma \pro x &=&
\left\{
\begin{array}{ll}
\pro{\sigma (x)} & \mbox{if $x \in \textsf{dom}(\sigma)$}\\
\pro x & \mbox{otherwise}
\end{array}
\right. \\
\sigma (\l x)&=& \l x\\
\sigma (p\bullet q)&=& (\sigma p)\bullet (\sigma q)
\end{eqnarray*}
The action of a substitution
$\sigma$ on patterns can be adapted to produce a function $\hat\sigma$
that acts on binding names rather than on free names.  In CPC, it is
defined by
\begin{eqnarray*}
\hat\sigma x &=& x\\
\hat\sigma \pro x &=& \pro x\\
\hat\sigma (\l x) &=&
\left\{
\begin{array}{ll}
\sigma(x) &\mbox{if $x \in \textsf{dom}(\sigma)$}\\
\l x &\mbox{otherwise}
\end{array}
\right. \\
\hat\sigma (p\bullet q)&=& (\hat\sigma p)\bullet (\hat\sigma q)
\end{eqnarray*}
When $\sigma$ is of the form $\{p_i/x_i\}_{i \in I}$,
then $\{p_i/\l x_i\}_{i \in I}$ may be used to denote $\hat\sigma$.

The {\em symmetric matching} or {\em unification} $\{p \pmatch q\}$ of
two patterns $p$ and $q$ attempts to unify $p$ and $q$ by generating
substitutions upon their binding names. When defined, the result is
a pair of substitutions whose domains are the binding names of $p$
and of $q$, respectively. The rules to generate the substitutions are:
\begin{eqnarray*}
\begin{array}{rcll}
\left.
\begin{array}{r}
\{x\pmatch x\}\\
\{x\pmatch \pro{x}\}\\
\{\pro{x}\pmatch x\}\\
\{\pro{x}\pmatch \pro{x}\}
\end{array}
\right\}\!\!\! &=\!\!\!&  (\{\}, \{\})\\
\{\lambda x\pmatch q\} &=\!\!\!&  (\{q/x\}, \{\}) & \mbox{if $q$ is communicable}\\
\{p\pmatch \lambda x\} &=\!\!\!&  (\{\}, \{p/x\}) & \mbox{if $p$ is communicable}\\
\{p_1\bullet p_2\pmatch q_1\bullet q_2\} &=\!\!\!&  ((\sigma_1\cup\sigma_2), (\rho_1\cup\rho_2))
			& \mbox{if } \{p_i\pmatch q_i\}= (\sigma_i, \rho_i) \mbox{ for } i \in \{1,2\}\\
\{p\pmatch q\} &=\!\!\!& \mbox{undefined} & \mbox{otherwise}
\end{array}
\end{eqnarray*}
Two atoms unify if they know the same name. A name that seeks
information (i.e., a binding name) unifies with any communicable pattern to
produce a binding for its underlying name.  Two compounds unify if
their corresponding components do; the resulting substitutions are
given by taking unions of those produced by unifying the components 
(necessarily disjoint as patterns are well-formed).  Otherwise the
patterns cannot be unified and the unification is undefined.

\begin{prop}
\label{prop:free_n_match}
If the unification of patterns $p$ and $q$ is defined then any protected
name of $p$ is a free name of $q$.
\end{prop}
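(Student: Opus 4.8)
The plan is to prove this by structural induction on the derivation that $\{p\pmatch q\}$ is defined --- equivalently, by induction on the combined size of the patterns $p$ and $q$ --- following the case analysis in the rules defining unification. The statement to maintain along the induction is precisely ${\sf pn}(p)\subseteq{\sf fn}(q)$.

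First I would dispatch the base cases, namely the atom-against-atom rules and the two binding-name rules. Whenever $p$ is a variable name $x$ or a binding name $\lambda x$ we have ${\sf pn}(p)=\{\}$ and the claim holds vacuously; this already covers $\{x\pmatch x\}$, $\{x\pmatch\pro x\}$, and $\{\lambda x\pmatch q\}$. The only atom rules that actually put a protected name on the left are $\{\pro x\pmatch x\}$ and $\{\pro x\pmatch\pro x\}$: in both, ${\sf pn}(p)=\{x\}$ and ${\sf fn}(q)=\{x\}$, so the inclusion holds. Finally, for $\{p\pmatch\lambda x\}$ the side condition forces $p$ to be communicable, hence ${\sf pn}(p)=\{\}$ and again the claim is vacuous. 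The one point to be careful about here is exactly this: to notice that the side conditions on the binding-name rules exclude protected names on the communicable side, and to identify which atom branches genuinely introduce a protected name on the left.

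For the inductive step the only remaining rule is the compound rule: $\{p_1\bullet p_2\pmatch q_1\bullet q_2\}$ is defined precisely when $\{p_i\pmatch q_i\}$ is defined for $i\in\{1,2\}$. Applying the induction hypothesis to each component gives ${\sf pn}(p_i)\subseteq{\sf fn}(q_i)$, and since the protected-name and free-name sets of a compound are the unions of those of its components, ${\sf pn}(p_1\bullet p_2)={\sf pn}(p_1)\cup{\sf pn}(p_2)\subseteq{\sf fn}(q_1)\cup{\sf fn}(q_2)={\sf fn}(q_1\bullet q_2)$, which is the claim for the compound.

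I do not expect a genuine obstacle: the argument is a routine induction once the base cases are organised by which side carries a protected name. The one thing worth remarking is that both the statement and the proof are symmetric in $p$ and $q$, so the dual fact --- every protected name of $q$ is a free name of $p$ --- follows by the same argument, and together they will be convenient later when reasoning about the domains of the substitutions produced by unification.
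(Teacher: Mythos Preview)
Your proof is correct and follows essentially the same approach as the paper, which simply writes ``By induction on the structure of $p$.'' You organise the induction along the derivation of $\{p\pmatch q\}$ rather than purely on $p$, but the case analysis and the key observations (in particular that the binding-name rules force the communicable side to have no protected names) are the same routine ones the paper leaves implicit.
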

\begin{proof}
By induction on the structure of $p$. 
\end{proof}

\subsection{Processes}
\label{subsec:processes}

The processes of CPC are given by:
\[
\begin{array}{lll}
{\it Processes}\quad P \ ::= \
&	\zero					& \mbox{null} \\
&	P|P						& \mbox{parallel composition} \\
&	!P						& \mbox{replication} \\
&	\res x P			& \mbox{restriction} \\
&	p\pre P	\quad	& \mbox{case}
\end{array}
\]%
The null process, parallel composition, replication and restriction
are the classical ones for process calculi: $\zero$ is the inactive
process; $P\bnf Q$ is the parallel composition of processes $P$ and
$Q$, allowing the two processes to evolve independently or to
interact; the replication $!P$ provides as many parallel copies of $P$
as desired; $\res x P$ binds $x$ in $P$ so that it is not visible from
the outside.  The
traditional input and output primitives are replaced by the {\em
  case}, viz.  $p\pre P$, that has a {\em  pattern} $p$ and a {\em body} $P$.  If
$P$ is $\zero$ then $p\to \zero$ may be denoted by $p$.

The free names of processes, denoted ${\sf fn}(P)$, are defined as
usual for all the traditional primitives and
\[
{\sf fn}(p\pre P) \quad =\quad {\sf fn}(p)\cup({\sf fn}(P)\backslash{\sf bn}(p))
\]%
for the case. As expected the binding names of the pattern bind their free occurrences in the body.

\subsection{Operational Semantics}
\label{s:reduction-new}

The application $\sigma P$ of a substitution $\sigma$ to a process $P$
is defined in the usual manner, provided that there is no name
capture.
Name capture can be avoided by $\alpha$-conversion (written $=_\alpha$) that
is the congruence relation generated by the following axioms:
\[
\begin{array}{rcll}
\res x P &=_\alpha& \res y (\{y/x\}P) & \quad y\notin{\sf fn}(P)\\
p\pre P &=_\alpha& (\{\l y/\l x\}p)\pre (\{y/x\}P) & 
			\quad x\in{\sf bn}(p),y\notin{\sf fn}(P)\cup{\sf bn}(p)
\end{array}
\]%

The {\em structural congruence relation} $\equiv$ is defined
just as in $\pi$-calculus \cite{milner:polyadic-tutorial}: it includes
$\alpha$-conversion and its defining axioms are:
\begin{equation*}
\begin{array}{c}
\vspace*{.3cm}
P\bnf \stoppr \equiv P
\qquad
P \bnf Q \equiv Q \bnf P
\qquad
P \bnf (Q \bnf R) \equiv (P \bnf Q) \bnf R
\\
\vspace*{.3cm}
\res n \stoppr \equiv \stoppr
\qquad
\res n \res m P \equiv \res m \res n P
\qquad
!P \equiv P\bnf !P
\\
P\bnf\res n Q \equiv \res n (P\bnf Q)
\quad \mbox{if $n \not\in{\sf fn}(P)$}
\end{array}
\end{equation*}
It states that: $|$ is a commutative, associative, monoidal operator,
with $\zero$ acting as the identity; that restriction has no effect
on the null process; that the order of restricted names is
immaterial; that replication can be freely unfolded; and that the
scope of a restricted name can be freely extended, provided that no
name capture arises.

For later convenience, $\wt x$ denotes a sequence of $x$'s, for example
$\wt n$ can denote the names $n_1,\ldots,n_i$.
Similarly, $\res {n_1} (\ldots (\res {n_i} P))$ will be written $\rest n P$;
however, due to structural congruence, these shall be considered as a set of names.
\withsetnot{}{For clarity set notation is used, for example
$\res {m_1} \res {m_2}\ldots \res {m_i} \res {n_1} \res {n_2} \ldots \res {n_j} P$
shall be denoted $\res {\wt m \cup \wt n} P$.}

The operational semantics of CPC is formulated via a {\em reduction
relation} $\redar$ between pairs of CPC processes.  Its defining
rules are:
\begin{eqnarray*}
\begin{array}{c}
\Rule{}
{}
{(p\pre P)\bnf(q\pre Q) \redar (\sigma P)\bnf(\rho Q)}{if $\{p\pmatch q\}=(\sigma,\rho)$}\\
\vspace*{.0cm}
\\
\Rule{}
{P \rew P'}{P|Q \rew P'|Q}{}
\ 
\Rule{}
{P \rew P'}{\res n P \rew \res n P'}{}
\ 
\Rule{}
{P \equiv Q \ \ Q \rew Q'\ \ Q' \equiv P'}{P \rew P'}{}
\vspace*{.1cm}
\end{array}
\end{eqnarray*}
CPC has one interaction axiom, stating that, if the unification of two
patterns $p$ and $q$ is defined and generates $(\sigma, \rho)$, then
the parallel composition of two cases $p\pre P$ and $q\pre Q$ reduces
to the parallel composition of $\sigma P$ and $\rho Q$.
Alternatively, if the unification of $p$ and $q$ is undefined, then no
interaction occurs.
Unlike the sequential setting, there is no need for a rule to capture failure
of unification since failure to interact with one process does not
prevent interactions with other processes.

The interaction rule is then closed under parallel composition,
restriction and structural congruence in the usual manner. Unlike
pure pattern calculus, but like pi-calculus, computation does not occur within the body of a
case. As usual, $\redar^k$ denotes $k$ interactions, and $\Redar$ denotes the reflexive and transitive closure
of $\redar$.

The section concludes with three simple properties of substitutions and the reduction relation.

\begin{prop}
\label{lem:fn-sub}
For every process $P$ and substitution $\sigma$, it holds that ${\sf
  fn}(\sigma P)\subseteq {\sf fn}(P)\cup{\sf fn}(\sigma)$.
\end{prop}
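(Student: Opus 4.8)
The plan is to prove this by structural induction on $P$, and it is convenient to first record the analogous fact for patterns: for every pattern $p$ and substitution $\sigma$, ${\sf fn}(\sigma p)\subseteq{\sf fn}(p)\cup{\sf fn}(\sigma)$. This follows by a short induction on $p$ along the clauses defining the action of $\sigma$ on patterns: for an atom $x$ (or $\pro x$) with $x\in\textsf{dom}(\sigma)$ the result is $\sigma(x)$ (resp.\ $\pro{\sigma(x)}$), whose free names lie in ${\sf fn}(\sigma)$; for an atom not in $\textsf{dom}(\sigma)$ the free names are unchanged; for $\lambda x$ there are no free names; and the compound case is immediate since both ${\sf fn}(-)$ and the substitution action distribute over $\bullet$. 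Note also that $\sigma$ never touches binding names, so ${\sf bn}(\sigma p)={\sf bn}(p)$.

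For the main statement, the base case $P=\zero$ is immediate since ${\sf fn}(\zero)=\{\}$. The cases $P=P_1\bnf P_2$ and $P=!P_1$ follow directly from the inductive hypothesis, since $\sigma$ and ${\sf fn}(-)$ both distribute over these constructors. For $P=\res x P_1$, $\alpha$-conversion lets me assume $x\notin\textsf{names}(\sigma)$, so that $\sigma(\res x P_1)=\res x(\sigma P_1)$ with no capture; then
\[
{\sf fn}(\sigma(\res x P_1))={\sf fn}(\sigma P_1)\setminus\{x\}\ \subseteq\ ({\sf fn}(P_1)\cup{\sf fn}(\sigma))\setminus\{x\}\ =\ ({\sf fn}(P_1)\setminus\{x\})\cup{\sf fn}(\sigma),
\]
where the inclusion is the inductive hypothesis and the last equality uses $x\notin{\sf fn}(\sigma)$; the right-hand side is exactly ${\sf fn}(\res x P_1)\cup{\sf fn}(\sigma)$.

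For $P=p\pre P_1$, $\alpha$-conversion again lets me assume ${\sf bn}(p)\cap\textsf{names}(\sigma)=\{\}$, so that $\sigma(p\pre P_1)=(\sigma p)\pre(\sigma P_1)$ without capture and ${\sf bn}(\sigma p)={\sf bn}(p)$; hence ${\sf fn}(\sigma(p\pre P_1))={\sf fn}(\sigma p)\cup({\sf fn}(\sigma P_1)\setminus{\sf bn}(p))$. Bounding the first summand by the pattern statement above and ${\sf fn}(\sigma P_1)$ by the inductive hypothesis, and using ${\sf bn}(p)\cap{\sf fn}(\sigma)=\{\}$ (from ${\sf bn}(p)$ avoiding $\textsf{names}(\sigma)$) to move the set difference past ${\sf fn}(\sigma)$, one obtains the bound ${\sf fn}(p)\cup({\sf fn}(P_1)\setminus{\sf bn}(p))\cup{\sf fn}(\sigma)={\sf fn}(p\pre P_1)\cup{\sf fn}(\sigma)$, as required. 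There is no real obstacle in this argument; the only point needing care is the bookkeeping around the two binders, i.e.\ invoking $\alpha$-conversion so that $\sigma$ genuinely commutes with $\res x$ and with the case constructor, and then tracking which names are discarded by the binders against those contributed by ${\sf fn}(\sigma)$.
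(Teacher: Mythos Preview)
Your proof is correct and is essentially the approach the paper takes: the paper simply writes ``By definition of the application of $\sigma$,'' which is exactly the structural induction you have spelled out in full detail. Your explicit handling of the pattern case and of the binders via $\alpha$-conversion is the careful version of what the paper leaves implicit.
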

\begin{proof}
By definition of the application of $\sigma$.
\end{proof}

\begin{prop}
\label{prop:free_name_subset}
If $P\Redar P'$, then ${\sf fn}(P')\subseteq {\sf fn}(P)$.
\end{prop}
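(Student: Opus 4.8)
The plan is to reduce the statement to a single reduction step and then argue by induction on the derivation of that step. Since $\Redar$ is the reflexive and transitive closure of $\redar$, a straightforward induction on the length of the reduction sequence $P \Redar P'$ reduces everything to proving the claim for one step, i.e.\ that $P \redar P'$ implies ${\sf fn}(P') \subseteq {\sf fn}(P)$; the reflexive case is trivial and the transitive case chains two inclusions. So the real work is the one-step case, which I would prove by induction on the derivation of $P \redar P'$ using the four reduction rules.

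For the inductive step, the three contextual/closure rules are routine. For the parallel rule, if $P \mid Q \redar P' \mid Q$ because $P \redar P'$, then by the induction hypothesis ${\sf fn}(P') \subseteq {\sf fn}(P)$, so ${\sf fn}(P' \mid Q) = {\sf fn}(P') \cup {\sf fn}(Q) \subseteq {\sf fn}(P) \cup {\sf fn}(Q) = {\sf fn}(P \mid Q)$. The restriction rule is similar, using that ${\sf fn}(\res n R) = {\sf fn}(R) \setminus \{n\}$ is monotone in ${\sf fn}(R)$. For the structural congruence rule one needs the auxiliary fact that $P \equiv Q$ implies ${\sf fn}(P) = {\sf fn}(Q)$, which is checked by inspecting the defining axioms of $\equiv$ (including $\alpha$-conversion); this is entirely standard and I would either cite it or dispatch it in a line.

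The interesting case is the interaction axiom: $(p \to P_1) \mid (q \to Q_1) \redar (\sigma P_1) \mid (\rho Q_1)$ where $\{p \pmatch q\} = (\sigma, \rho)$. Here I would use Proposition \ref{lem:fn-sub} to get ${\sf fn}(\sigma P_1) \subseteq {\sf fn}(P_1) \cup {\sf fn}(\sigma)$ and ${\sf fn}(\rho Q_1) \subseteq {\sf fn}(Q_1) \cup {\sf fn}(\rho)$. It then remains to bound ${\sf fn}(\sigma)$ and ${\sf fn}(\rho)$. The domain of $\sigma$ is ${\sf bn}(p)$ and the domain of $\rho$ is ${\sf bn}(q)$, so those names do not contribute to the free names of the redex but also do not appear in the codomain in a harmful way; what matters is that the \emph{free names of the substitutions} — i.e.\ ${\sf fn}(\sigma x)$ for $x \in {\sf bn}(p)$, and symmetrically for $\rho$ — are already free names of the redex. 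The key sub-lemma, provable by induction on the unification rules, is: if $\{p \pmatch q\} = (\sigma,\rho)$ then ${\sf fn}(\sigma) \subseteq {\sf fn}(q)$ and ${\sf fn}(\rho) \subseteq {\sf fn}(p)$. Indeed each binding of the form $q'/x$ put into $\sigma$ by the rule $\{\lambda x \pmatch q'\} = (\{q'/x\},\{\})$ arises from a sub-pattern $q'$ of $q$, and since $q'$ is communicable its free names are just its names, all of which occur in $q$; the compound case takes unions. Combining: ${\sf fn}((\sigma P_1)\mid(\rho Q_1)) \subseteq {\sf fn}(P_1) \cup {\sf fn}(q) \cup {\sf fn}(Q_1) \cup {\sf fn}(p)$, and since ${\sf fn}(p \to P_1) = {\sf fn}(p) \cup ({\sf fn}(P_1) \setminus {\sf bn}(p)) \supseteq {\sf fn}(p)$ together with the analogous fact for $q \to Q_1$, this last set is contained in ${\sf fn}(p \to P_1) \cup {\sf fn}(q \to Q_1)$ — modulo the binding names ${\sf bn}(p) \subseteq {\sf dom}(\sigma)$ which are removed anyway since $\sigma$ replaces every name in its domain. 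The main obstacle, then, is exactly this bookkeeping around binding names: ensuring that the names in ${\sf bn}(p) \cup {\sf bn}(q)$ — which are free in $P_1, Q_1$ but bound in the cases — genuinely disappear after applying $\sigma$ and $\rho$; this follows because ${\sf dom}(\sigma) = {\sf bn}(p)$ and ${\sf dom}(\rho) = {\sf bn}(q)$, so no name of ${\sf bn}(p)$ survives in $\sigma P_1$ and no name of ${\sf bn}(q)$ survives in $\rho Q_1$.
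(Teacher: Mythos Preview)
Your proposal is correct and follows exactly the approach the paper sketches: induction on the length of the reduction sequence, with the single-step case handled via Proposition~\ref{lem:fn-sub}; you simply spell out the details the paper omits. One small imprecision: the claim that ``no name of ${\sf bn}(p)$ survives in $\sigma P_1$'' is not literally true (a name $x\in{\sf bn}(p)$ could reappear via ${\sf fn}(\sigma)$), but since ${\sf fn}(\sigma)\subseteq{\sf fn}(q)$ any such $x$ already lies in ${\sf fn}(q\to Q_1)$, so the inclusion still holds --- you may want to phrase the final bookkeeping as ${\sf fn}(\sigma P_1)\subseteq({\sf fn}(P_1)\setminus{\sf bn}(p))\cup{\sf fn}(\sigma)$ rather than appealing to disappearance.
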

\begin{proof}
$P\Redar P'$ means that $P\redar\!\!\!^k\, P'$, for some $k \geq 0$. The proof is by induction on $k$
and follows by Proposition~\ref{lem:fn-sub}.
\end{proof}

\begin{prop}
\label{red:renaming}
If $P \redar P'$, then $\sigma P \redar \sigma P'$, for every $\sigma$.
\end{prop}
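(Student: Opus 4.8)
The plan is to proceed by induction on the derivation of $P\redar P'$, the crucial case being the interaction axiom and everything else following by straightforward congruence reasoning.

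The base case is the interaction axiom, where $P$ has the form $(p\pre P_1)\bnf(q\pre Q_1)$ and $P' = (\sigma P_1)\bnf(\rho Q_1)$ with $\{p\pmatch q\} = (\sigma,\rho)$. Fix $\theta$; by $\alpha$-conversion we may assume $\theta$ avoids the binding names of $p$ and $q$, so that $\theta P = (\theta p\pre\theta P_1)\bnf(\theta q\pre\theta Q_1)$, where ${\sf bn}(\theta p) = {\sf bn}(p)$ and ${\sf bn}(\theta q) = {\sf bn}(q)$. The heart of the argument is a lemma stating that \emph{unification commutes with substitution}: if $\{p\pmatch q\} = (\sigma,\rho)$ and $\theta$ avoids ${\sf bn}(p)\cup{\sf bn}(q)$, then $\{\theta p\pmatch\theta q\}$ is defined and equal to $(\sigma',\rho')$, where $\sigma'(x) = \theta(\sigma(x))$ and $\rho'(y) = \theta(\rho(y))$ for all $x\in{\sf bn}(p)$ and $y\in{\sf bn}(q)$. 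Granting this, the interaction axiom gives $\theta P\redar(\sigma'\,\theta P_1)\bnf(\rho'\,\theta Q_1)$; and the standard substitution lemma $\theta(\sigma P_1) = \sigma'(\theta P_1)$ — valid since ${\sf dom}(\sigma) = {\sf bn}(p)$ is disjoint from ${\sf names}(\theta)$, and proved by a routine induction on $P_1$ — together with its counterpart for $\rho$, identifies that process with $\theta P'$.

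The lemma itself is proved by induction on the derivation of $\{p\pmatch q\} = (\sigma,\rho)$. For two atoms knowing the same name $x$: if $x\notin{\sf dom}(\theta)$ the images are unchanged; otherwise both images are built from $r := \theta(x)$ (possibly protected, and possibly via $\pro{r} = \pro{r_1}\bullet\pro{r_2}$ when $r$ is a compound), and one checks by a sub-induction on the structure of $r$ that $\{\theta p\pmatch\theta q\}$ is still defined and yields the empty pair of substitutions. For $\{\l x\pmatch q\} = (\{q/x\},\{\})$ with $q$ communicable, $\theta p = \l x$ and $\theta q$ is again communicable (substitution preserves communicability), so $\{\l x\pmatch\theta q\} = (\{\theta q/x\},\{\})$ as required, and the symmetric case is dual. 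For compounds, the result follows componentwise from the induction hypothesis, the generated substitutions being disjoint unions. The remaining inductive cases of the outer induction are immediate: parallel composition and restriction follow from the induction hypothesis and the respective closure rules (using $\alpha$-conversion to keep the restricted name out of ${\sf names}(\theta)$), and closure under $\equiv$ uses the auxiliary fact that $P\equiv Q$ implies $\theta P\equiv\theta Q$, a routine induction on the axioms of $\equiv$ (the scope-extrusion axiom relying on Proposition~\ref{lem:fn-sub}).

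I expect the main obstacle to be precisely the ``unification commutes with substitution'' lemma. The subtlety is that $\theta$ may turn a variable name $x$ into a compound $r_1\bullet r_2$ and a protected name $\pro x$ into $\pro{r_1}\bullet\pro{r_2}$, so the derivation of $\{\theta p\pmatch\theta q\}$ is \emph{not} structurally parallel to that of $\{p\pmatch q\}$ and must be reconstructed via the nested induction on the shapes of the images $\theta(x)$; once this bookkeeping is in place, the rest of the proof is mechanical.
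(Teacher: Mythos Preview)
Your proposal is correct and follows exactly the approach the paper takes: the paper's proof is the single line ``By induction on the derivation of $P \redar P'$,'' and you have simply unpacked that induction in full, including the auxiliary lemma that unification commutes with substitution, which is the only non-trivial step and which the paper leaves implicit.
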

\begin{proof}
By induction on the derivation of  $P \redar P'$.
\end{proof}

\begin{prop}
\label{prop:free_n_match_proc}
Suppose a process $p\to P$ interacts with a process $Q$. If $x$ is a protected
name in $p$ then $x$ must be a free name in $Q$.
\end{prop}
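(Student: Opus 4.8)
The plan is to reduce the claim about processes to the already-established Proposition~\ref{prop:free_n_match} about patterns. First I would unfold what it means for $p\to P$ to interact with $Q$: by the definition of the reduction relation, the interaction axiom applies only after structural congruence has been used to bring $p\to P$ and (a part of) $Q$ into a parallel composition of two cases. So I would argue that $Q \equiv \rest n\,((q\to Q') \bnf R)$ for some pattern $q$, processes $Q'$ and $R$, and restricted names $\wt n$, such that $\{p \pmatch q\}$ is defined. The exact bookkeeping of the congruence and the surrounding context is routine, so I would not belabour it; the essential point is that an interaction of $p\to P$ forces the existence of a partner case $q\to Q'$ occurring in $Q$ whose pattern $q$ unifies with $p$.

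Next, given that $\{p\pmatch q\}$ is defined, Proposition~\ref{prop:free_n_match} gives immediately that every protected name $x$ of $p$ is a free name of $q$, i.e.\ $x \in {\sf fn}(q)$. Then I would trace this name through the syntax: since $q$ occurs as the pattern of the case $q\to Q'$ which sits inside $Q$ (under some restrictions $\wt n$ and in parallel with $R$), we have ${\sf fn}(q) \subseteq {\sf fn}(q\to Q') \cup {\sf bn}(q)$ — actually ${\sf fn}(q)\subseteq {\sf fn}(q\to Q')$ since the free names of the pattern are a subset of the free names of the case by the defining equation ${\sf fn}(p\pre P) = {\sf fn}(p)\cup({\sf fn}(P)\setminus{\sf bn}(p))$. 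Propagating outward through parallel composition and restriction, ${\sf fn}(q\to Q') \subseteq {\sf fn}(\rest n((q\to Q')\bnf R)) \cup \wt n = {\sf fn}(Q) \cup \wt n$. So the only thing that could go wrong is that $x$ is one of the restricted names $\wt n$ rather than genuinely free in $Q$.

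This is the step I expect to be the main (admittedly mild) obstacle: ruling out $x \in \wt n$. Here I would invoke the side condition in the $\alpha$-conversion / capture-avoidance discipline, or more directly observe that $x$ is a protected name of $p$ and hence a \emph{free} name of the process $p\to P$, which is a fixed name not under any binder in the ambient configuration; since structural congruence and the reduction context do not rename the free names of $p\to P$, and since $x$ must simultaneously be free in $q$ for the unification to be defined, $\alpha$-conversion of $Q$ can always be chosen so that its restricted names $\wt n$ avoid $x$. Therefore $x \in {\sf fn}(Q)$, as required. I would close by remarking that this is the process-level counterpart of Proposition~\ref{prop:free_n_match}, the only new content being the (routine) analysis of how an interaction decomposes the partner process.
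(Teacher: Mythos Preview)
Your proposal is correct and follows essentially the same route as the paper: decompose $Q$ so as to expose a case $q\to Q'$ whose pattern unifies with $p$, invoke Proposition~\ref{prop:free_n_match} to get $x\in{\sf fn}(q)$, and rule out $x\in\wt n$ using the fact that $x\in{\sf fn}(p\to P)$ together with the freshness of the extruded names. The only cosmetic difference is that the paper states the side condition $\wt n\cap{\sf fn}(p\to P)=\emptyset$ up front as part of the decomposition (it is forced by the scope-extension axiom) and appeals to Proposition~\ref{prop:free_name_subset}, whereas you work with $\equiv$ directly and recover the side condition via $\alpha$-conversion at the end.
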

\begin{proof}
  For $Q$ to interact with $p\to P$ it must be that $Q\Redar \rest n
  (q\to Q_1\bnf Q_2)$ such that $\wt n \cap {\sf fn}(p \to P) =
  \emptyset$ and $\{p\pmatch q\}$ is defined. Then, by
  Proposition~\ref{prop:free_n_match}, the free names of $q$ include
  $x$ and, consequently, $x$ must be free in $q\to Q_1\bnf Q_2$.  By
  Proposition~\ref{prop:free_name_subset}, $x$ is free in $Q$. Further,
  $x$ cannot belong to $\wt n$, since $x \in {\sf fn}(p \to P)$ and
  $\wt n \cap {\sf fn}(p \to P) = \emptyset$.
\end{proof}

\subsection{Trade in CPC}
\label{subsec:cpc-trade}

This section uses the example of share trading to explore the
potential of CPC. The scenario is that two  traders, a buyer
and a seller, wish to engage in trade. To complete a transaction, the
traders need to progress through two stages: {\em discovering} each
other and {\em exchanging} information. Both traders begin with a
pattern for their desired transaction.
The discovery phase can be characterised as a pattern-unification
problem, where traders' patterns are used to find a compatible
partner.
The exchange phase occurs when a buyer and seller have agreed upon a
transaction. Now each trader wishes to exchange information in a
single interaction, preventing any incomplete trade from occurring.

The rest of this section develops three solutions of increasing
sophistication that: demonstrate discovery; introduce a registrar to
validate the traders; and protects names to ensure privacy.
\enlargethispage{\baselineskip}

\subsubsection*{Solution 1}
\label{sec:example:sol1}
Consider two traders, a buyer and a seller. The buyer $\buy_1$ with
bank account $b$ and desired shares $s$ can be given by
\begin{eqnarray*}
\buy_1 &=& s\bullet\lambda m \pre m\bullet b\bullet \lambda x \pre B(x)
\end{eqnarray*}
The first pattern $s\bullet \lambda m$ is used to match with a
compatible seller using share information $s$, and to input a name $m$
to be used as a channel to exchange bank account information $b$ for
share certificates bound to $x$. The transaction successfully
concludes with $B(x)$.

The seller $\sel_1$ with share certificates $c$ and desired share sale
$s$ is given by
\begin{eqnarray*}
\sel_1 &=& \res n s\bullet n\pre n\bullet \lambda y\bullet c\pre S(y) 
\end{eqnarray*}
The seller creates a channel name $n$ and then tries to find a buyer
for the shares described in $s$, offering $n$ to the buyer to
continue the transaction. The channel is then used to exchange
billing information, bound to $y$, for the share certificates $c$.
The seller then concludes with the successfully completed transaction
as $S(y)$.

The discovery phase succeeds when the traders are placed in a parallel
composition and discover each other by unification on $s$
\begin{eqnarray*}
\buy_1|\sel_1
&\equiv&
	\res n (s\bullet \lambda m \pre m\bullet b \bullet \lambda x \pre B(x)
				\bnf	s\bullet n \pre n\bullet \lambda y \bullet c \pre S(y))\\
&\rew&
	\res n (n\bullet b \bullet \lambda x \pre B(x)
				\bnf	n\bullet \lambda y \bullet c \pre S(y)) 
\end{eqnarray*}
The next phase is to exchange billing information for share certificates, as in
$$
\res n (n\bullet b \bullet \lambda x \pre B(x)
				\bnf	n\bullet \lambda y \bullet c \pre S(y))
\ \ \rew\ \ \res n (B(c)\bnf S(b))
$$
The transaction concludes with the buyer having the share
certificates $c$ and the seller having the billing account $b$.

This solution allows the traders to discover each other and exchange
information atomically to complete a transaction. However, there is no way to
determine if a trader is  trustworthy.

\subsubsection*{Solution 2}
\label{sec:example:sol2}
Now add a registrar that keeps track of registered
traders. Traders offer their identity to potential partners
and the registrar confirms if the identity belongs to a valid trader.
The buyer is now
\begin{eqnarray*}
\buy_2 &=&	s\bullet i_B\bullet\lambda j \pre n_B\bullet j\bullet\lambda m \pre 
						m\bullet b\bullet \lambda x \pre B(x) 
\end{eqnarray*}
The first pattern now swaps the buyer's identity $i_B$ for the seller's, 
bound to $j$. The buyer then consults the registrar using the
identifier $n_B$ to validate $j$; if valid, the exchange continues as before.

Now define the seller symmetrically by
\begin{eqnarray*}
\sel_2 &=& s\bullet \lambda j\bullet i_S \pre n_S\bullet j\bullet\lambda m \pre
           m\bullet \lambda y\bullet c\pre S(y) 
\end{eqnarray*}
Also define the registrar $\reg_2$ with identifiers $n_B$ and $n_S$ to
communicate with the buyer and seller, respectively, by
\begin{eqnarray*}
\reg_2 &=& \res n (n_B\bullet{i_S}\bullet n\bnf n_S\bullet{i_B}\bullet n) 
\end{eqnarray*}
The registrar creates a new identifier $n$ and delivers it to traders who
have been validated; then it makes the identifier available to known
traders who attempt to validate another known trader.  Although rather
simple, the registrar can easily be extended to support a multitude of
traders.

Running these processes in parallel yields the following interaction
\begin{eqnarray*}
& & \buy_2 \bnf \sel_2 \bnf \reg_2\\
 &\equiv&	\res n (\ s\bullet i_B\bullet\lambda j \pre n_B\bullet j\bullet\lambda m \pre 
									m\bullet b\bullet \lambda x \pre B(x)
						\bnf  n_B\bullet\  {i_S}\bullet n\\
 && 	\qquad |\ 	s\bullet \lambda j\bullet i_S\ \pre n_S\bullet j\bullet\lambda m \pre
           				m\bullet \lambda y\bullet c\pre S(y)
 						\bnf  n_S\bullet  {i_B}\bullet n)\\
 &\rew&	  \res n (n_B\bullet i_S\bullet\lambda m \pre 
									m\bullet b\bullet \lambda x \pre B(x)
						\bnf  n_B\bullet  {i_S}\bullet n\\
 && 	\qquad |\   n_S\bullet i_B\bullet\lambda m \pre
           				m\bullet \lambda y\bullet c\pre S(y)
 						\bnf  n_S\bullet  {i_B}\bullet n) 
\end{eqnarray*}
The share information $s$ allows the buyer and seller to discover
each other and swap identities $i_B$ and $i_S$. The next two
interactions involve the buyer and seller validating each other's
identity and inputting the identifier to complete the transaction
\begin{eqnarray*}
 &&	  \res n (n_B\bullet i_S\bullet\lambda m \pre 
									m\bullet b\bullet \lambda x \pre B(x)
						\bnf  n_B\bullet  {i_S}\bullet n\\
 && 	\qquad |\   n_S\bullet i_B\bullet\lambda m \pre
           				m\bullet \lambda y\bullet c\pre S(y)
 						\bnf  n_S\bullet  {i_B}\bullet n)\\
&\rew&\res n (n\bullet b\bullet \lambda x \pre B(x)\\
 && 	\qquad |\   n_S\bullet i_B\bullet\lambda m \pre
           				m\bullet \lambda y\bullet c\pre S(y)
 						\bnf  n_S\bullet  {i_B}\bullet n)\\
&\rew&\res n (n\bullet b\bullet \lambda x \pre B(x)\bnf 
           				n\bullet \lambda y\bullet c\pre S(y))
\end{eqnarray*}
Now that the traders have validated each other, they can continue
with the exchange step from before
\begin{eqnarray*}
\res n (n\bullet b\bullet \lambda x \pre B(x)\bnf n\bullet \lambda y\bullet c\pre S(y))
& \rew & \res n (B(c)\bnf S(b))
\end{eqnarray*}
The traders exchange information and successfully complete with 
$B(c)$ and $S(b)$.

\subsubsection*{Solution 3}
\label{sec:example:sol3}
Although Solution 2 satisfies the desire to validate that
traders are legitimate, the freedom of unification allows for
malicious processes to interfere. Consider the promiscuous
process $\prom$ given by
\begin{eqnarray*}
\prom &=& \lambda z_1\bullet \lambda z_2\bullet a\pre P(z_1,z_2) 
\end{eqnarray*}
This process is willing to match any other process that will swap
two pieces of information for some arbitrary name $a$. Such a
process could interfere with the traders trying to complete the
exchange phase of a transaction. For example,
\begin{eqnarray*}
\res n (n\bullet b\bullet \lambda x \pre B(x)\bnf n\bullet \lambda y\bullet c\pre S(y))\bnf\prom\\
\qquad \qquad \rew \quad \res n (B(a)\bnf n\bullet \lambda y\bullet c\pre S(y) \bnf P(n,b))
\end{eqnarray*}
where the promiscuous process has stolen the identifier $n$ and the
bank account information $b$. The unfortunate buyer is left with some
useless information $a$ and the seller is waiting to complete the
transaction.

This vulnerability (emerging both in Solution 1 and 2) can be repaired by using protected
names.
For example, the buyer, seller and registrar of Solution 2 can become 
\begin{eqnarray*}
\buy_3 &=&	s\bullet i_B\bullet\lambda j \pre \pro {n_B}\bullet j\bullet\lambda m \pre 
						\pro m\bullet b\bullet \lambda x \pre B(x) \\
\sel_3 &=& s\bullet \lambda j\bullet i_S \pre \pro {n_S}\bullet j\bullet\lambda m \pre
           \pro m\bullet \lambda y\bullet c\pre S(y) \\
\reg_3 &=& \res n (\pro {n_B}\bullet\pro{i_S}\bullet n\bnf \pro {n_S}\bullet\pro{i_B}\bullet n)
\end{eqnarray*}
Now all communications between the buyer, seller and registrar
use protected identifiers: $\pro {i_B}, \pro {i_S}, \pro {n_B},\pro {n_S}$ and $\pro m$.
Thus, we just need to add the appropriate restrictions:
\begin{eqnarray*}
\res {i_B}\res {i_S}\res {n_B} \res {n_S} (\buy_3\bnf\sel_3\bnf\reg_3)
\end{eqnarray*}
Therefore, other processes can only interact with the traders during
the discovery phase, which will not lead to a successful
transaction. The registrar will only interact with the
traders as all the registrar's patterns have protected names
known only to the registrar and a trader (Proposition~\ref{prop:free_n_match_proc}).

\section{Behavioural Theory}
\label{sec:bisim}
\newcommand{\usedby}[1]{}

This section follows a standard approach in concurrency to defining
behavioural equivalences, beginning with a barbed congruence and
following with a labelled transition system (LTS) and a bisimulation
for CPC. We will prove that the two semantics do coincide. Then the
bisimulation technique will be used to prove some sample equational
laws for CPC.
\nobreak
\subsection{Barbed Congruence}
The first  step is to characterise the interactions a process can participate in via
{\em barbs}. 
Since a barb is an opportunity for interaction, a simplistic
definition could be the following:
\begin{equation}
\label{def-one}
P \barb{} \mbox{ iff }\ P \equiv p \pre P'\bnf P'' \mbox{, for some $p,P'$ and $P''$}
\end{equation}
However, this definition is too strong: for example, $\res n(n \pre P)$ does not
exhibit a barb according to \eqref{def-one}, but it can interact with an external
process, e.g. $\lambda x \pre \zero$.
Thus, an improvement to \eqref{def-one} is as follows:
\begin{equation}
\label{def-two}
P \barb{} \mbox{ iff }\ P \equiv \rest n(p \pre P'\bnf P'') \mbox{, for some $\wt n, p, P'$ and $P''$}
\end{equation}
However, this definition is too weak. Consider $\res n(\pro n \pre P)$: it exhibits
a barb according to \eqref{def-two}, but cannot interact with any external
process.
A further refinement on \eqref{def-two} could be:
\begin{equation}
\label{def-three}
P \barb{} \mbox{ iff }\ P \equiv \rest n(p \pre P'\bnf P'') \mbox{, for some $\wt n, p, P',P''$ s.t. }
{\sf pn}(p) \cap \wt n = \emptyset
\end{equation}
This definition is not yet the final one, as it is not sufficiently discriminating to have
only a single kind of barb.
Because of the rich form of interactions in CPC, there is no single identifier such as in CCS
and $\pi$-calculus $\pi$-calculus \cite{milner.sangiorgi:barbed-bisimulation},
thus CPC barbs include the set of names that {\em may} be tested for equality
in an interaction, not just those that {\em must} be equal. This leads to the following definition:

\begin{defi}[Barb]
\label{def:barb}
Let $P\barb{\wt m}$ mean that
$P\equiv \rest n (p\pre P'\bnf P'')$
for some $\wt n$ and $p$ and $P'$ and $P''$ such that 
${\sf pn}(p)\cap \wt n = \emptyset$ and $\wt m = {\sf fn}(p) \backslash \wt n$.
\end{defi}

For later convenience, define $P\suc_{\wt m}$ to mean that there exists some $P'$
such that $P\Redar P'$ and $P'\barb{\wt m}$.

Using this definition, a barbed congruence can be defined in the standard way
\cite{milner.sangiorgi:barbed-bisimulation,HY95} by requiring three properties.
Let $\Re$ denote a binary relation on CPC processes, and let a {\em context}
$\context C \cdot$ be a CPC process with the hole `$\,\cdot\,$'.

\begin{defi}[Barb preservation]
\label{def:barb-pres}
$\Re$ is barb preserving iff, for every $(P,Q) \in \Re$ and set of names  ${\wt m}$, it holds that $P \barb{\wt m}$
implies $Q \barb{\wt m}$.
\end{defi}

\begin{defi}[Reduction closure]
\label{def:barb-close}
$\Re$ is reduction closed iff, for every $(P,Q) \in \Re$, it holds that $P \redar P'$
implies $Q \redar Q'$, for some $Q'$ such that $(P',Q') \in \Re$.
\end{defi}

\begin{defi}[Context closure]
\label{def:cont-close}
$\Re$ is context closed iff, for every $(P,Q) \in \Re$ and CPC context
$\context C \cdot$, 
it holds that $(\context C P, \context C Q) \in \Re$.
\end{defi}

\begin{defi}[Barbed congruence]
\label{def:barb-con}
Barbed congruence, $\beq$, is the least binary relation on CPC processes that is symmetric, barb preserving, reduction closed and
context closed.
\end{defi}

Barbed congruence relates processes with the same
behaviour, as captured by barbs: two equivalent processes must exhibit the same behaviours, 
and this property should hold along every sequence of reductions and in every execution context.

The challenge in proving barbed congruence is to prove
context closure.  The typical way of solving the problem is by giving
a coinductive (bisimulation-based) characterization of barbed
congruence, that provides a manageable proof technique. In turn, this
requires an alternative operational semantics, by means of a labelled
transition system, on top of which the bisimulation equivalence can be
defined.

\subsection{Labelled Transition System}
\label{sec:LTS}

The following is an adaption of the late LTS for the
$\pi$-calculus \cite{milner.parrow.ea:calculus-mobile}.
{\em Labels} are defined as follows:
$$
\mu\ ::=\ \tau \quad |\quad \rest n p 
$$
where $\tau$ is used to label silent transitions.
\begin{figure}[t]
$$
\begin{array}{ll}
{\sf case:}&
(p \pre P) \ltsred p P\\
\\
{\sf resnon:}&
\prooftree P \ltsred\mu P'
\justifies \res n P\ltsred\mu \res n P'
\endprooftree\quad n \notin {\sf names}(\mu)
\vspace*{.5cm}
\\
{\sf open:}&
\prooftree P \ltsred{\rest n p} P'
\justifies \res m P\ltsred{\res {\withsetnot{\wt n,m}{\wt n\cup \{m\}}} p} P'
\endprooftree\quad m \in {\sf vn}(p) \setminus (\wt n \cup {\sf pn}(p) \cup {\sf bn}(p))
\vspace*{.5cm}
\\
{\sf unify:}&
\prooftree P \ltsred{\rest m p} P' \quad Q \ltsred{\rest n q} Q'
\justifies P\bnf Q \ltsred\tau \res{\withsetnot{\wt m,\wt n}{\wt m\cup \wt n}}(\sigma P'\bnf \rho Q')
\endprooftree\quad
\begin{array}{l}
\{p \pmatch q\} = (\sigma, \rho)\\
\wt m \cap {\sf fn}(Q) = \wt n \cap {\sf fn}(P) = \emptyset\\
\wt m \cap \wt n = \emptyset
\end{array}
\vspace*{.5cm}
\\
{\sf parint:}&
\prooftree P \ltsred\tau P'
\justifies P\bnf Q \ltsred\tau P'\bnf Q
\endprooftree
\vspace*{.5cm}
\\
{\sf parext:}&
\prooftree P \ltsred{\rest n p} P'
\justifies P\bnf Q \ltsred{\rest n p} P'\bnf Q
\endprooftree\quad (\wt n \cup {\sf bn}(p)) \cap {\sf fn}(Q) = \emptyset
\vspace*{.5cm}
\\
{\sf rep:}&
\prooftree !P|P \ltsred\mu P'
\justifies !P \ltsred\mu P'
\endprooftree
\end{array}
$$
\caption{Labelled Transition System for CPC (the symmetric versions of {\sf parint} and {\sf parext} have been omitted)}
\label{fig:lts}
\end{figure}

Labels are used in {\em transitions} $P\ltsred\mu P'$ between CPC
processes, whose defining rules are given in Figure~\ref{fig:lts}. If
$P\ltsred\mu P'$ then $P'$ is a {\em $\mu$-reduct} of $P$, alternatively the transition $P\ltsred\mu P'$
indicates that $P$ is able to {\em perform} $\mu$ and reduces to $P'$.
Rule {\sf case} states that a case's pattern can be used to interact with external processes.
Rule {\sf resnon} is used when a restricted name does not appear in the names of the label: it simply maintains the restriction on the process after the transition.
By contrast, rule {\sf open} is used when a restricted name occurs in the label: as the restricted name is going to be shared with other processes, the restriction is moved from the process to the label (this is called {\em extrusion}, in $\pi$-calculus terminology).
Rule {\sf unify} defines when two processes can interact to perform an internal action:
this can occur whenever the processes exhibit labels with unifiable patterns and with no possibility of clash or capture due to restricted names.
Rule {\sf parint} states that, if either process in a parallel composition can evolve with an internal action, then the whole process can evolve with an internal action.
Rule {\sf parext} is similar, but is used when the label is visible: when one of the processes in parallel exhibits an external action, then the whole composition exhibits the same external action, as long as the restricted or binding names of the label do not appear free in the parallel component that does not generate the label.
Finally, rule {\sf rep} unfolds the replicated process to infer the action. 

Note that $\alpha$-conversion is always assumed, so that the  side conditions can always be satisfied in practice. 

The presentation of the LTS is concluded with the following two
results.  First, for every $P$ and $\mu$, there are finitely many
$\equiv$-equivalence classes of $\mu$-reducts of $P$
(Proposition~\ref{prop:imfin}).  Second, the LTS induces the same
operational semantics as the reductions of CPC
(Proposition~\ref{prop:tau-red}).  As CPC reductions only involve
interaction between processes and not external actions, it is
sufficient to show that any internal action of the LTS is mimicked by
a reduction in CPC, and vice versa. All proofs are in Appendix A, because they are quite standard.

\begin{defi}
An LTS is {\em structurally image finite} if, for every $P$ and $\mu$, it holds that
$\{P' :\, P \ltsred\mu P'\}/_\equiv$ contains finitely many elements.
\end{defi}

\begin{prop}
\label{prop:imfin}
The LTS defined in Figure~\ref{fig:lts} is structurally image finite.
\end{prop}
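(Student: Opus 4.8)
The plan is to prove structural image finiteness by induction on the structure of the process $P$, simultaneously for all labels $\mu$. Before starting, I would state the right inductive invariant: not merely that $\{P' : P \ltsred{\mu} P'\}/_\equiv$ is finite for each fixed $\mu$, but also that the set of labels $\mu$ (up to $\alpha$-renaming of binding and extruded names) under which $P$ can move at all is finite. This second part is needed because the rules {\sf unify}, {\sf parint}, {\sf parext} combine transitions of subprocesses, and one must control how many labels can be contributed.

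The base cases are immediate: $\zero$ has no transitions, and $p \pre P$ has, by rule {\sf case}, exactly one transition $p \pre P \ltsred{p} P$ — so a single label and a single reduct (the only other applicable rule is {\sf resnon} which does not apply here, and {\sf open} requires a restriction). For the inductive step I would go rule by rule. For $\res n P$: a $\mu$-transition of $\res n P$ arises either from {\sf resnon} applied to a $\mu$-transition of $P$ (with $n \notin {\sf names}(\mu)$), or from {\sf open} applied to a $\rest m p$-transition of $P$ with the label enlarged to include $n$; in both cases the reducts of $\res n P$ are of the form $\res n P'$ or $P'$ for $P'$ a reduct of $P$, and finiteness for $\res n P$ follows from the inductive hypothesis on $P$ (both the reduct-finiteness and the label-finiteness parts). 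For $!P$: rule {\sf rep} says transitions of $!P$ are exactly transitions of $!P \bnf P$; this is the delicate case, because $!P \bnf P$ is not a structural subterm of $!P$. I would handle it by observing that any transition of $!P \bnf P$ is, up to $\equiv$ (using $!P \equiv P \bnf !P$ and the commutative monoid laws), obtained by combining finitely many "active copies" of $P$ — concretely, a single {\sf rep} unfolding exposes one copy of $P$, and a $\tau$-label via {\sf unify} involves at most two participating copies, while an external label involves one — so the reducts are, up to $\equiv$, of the form $(\text{reduct of one or two copies of } P) \bnf !P$, and label-finiteness and reduct-finiteness for these follow from the inductive hypothesis applied to $P$.

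For $P_1 \bnf P_2$: a label $\mu = \rest n p$ transition comes from {\sf parext} on one side (finitely many by IH on that side, and the other component is carried along unchanged), and a $\tau$ transition comes from {\sf parint} on one side (IH) or from {\sf unify} pairing a $\rest m p$-transition of $P_1$ with a $\rest n q$-transition of $P_2$ (or symmetrically). For {\sf unify} the key point is that by the inductive hypothesis each side has only finitely many labels $\rest m p$ and $\rest n q$ up to $\alpha$-equivalence, and only finitely many reducts for each; for each compatible pair $(\rest m p, \rest n q)$ the unification $\{p \pmatch q\}$ yields a single pair $(\sigma, \rho)$, hence a single reduct $\res{\wt m \cup \wt n}(\sigma P_1' \bnf \rho P_2')$ up to $\equiv$; a finite union of finite sets is finite. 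The main obstacle I anticipate is precisely the bookkeeping around $\alpha$-conversion and the replication rule: one must be careful that "finitely many labels" is meant up to renaming of bound/extruded names (otherwise it is trivially false), and that unfolding $!P$ does not secretly generate infinitely many inequivalent reducts — this is where the structural-congruence quotient and the observation that each single transition touches only a bounded number of copies of $P$ do the real work. Everything else is routine rule-chasing.
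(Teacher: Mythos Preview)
Your proposal is correct and rests on the same underlying observation as the paper --- that a transition of $!P$ touches at most two ``active'' copies of $P$ --- but the paper packages this differently and more cleanly. Rather than strengthening the inductive invariant with a label-finiteness clause and arguing informally about bounded unfoldings of replication, the paper first replaces {\sf rep} by an equivalent (up to $\equiv$) pair of rules for $!P$ that refer only to transitions of $P$: one rule lets a single copy act, the other lets two copies synchronise via {\sf unify}. This makes the structural induction on the process syntax go through without any circularity at the replication case. The paper then defines a numerical measure $\meas{P}$ (with $\meas{P_1\bnf P_2} = \meas{P_1}+\meas{P_2}+\meas{P_1}\cdot\meas{P_2}$ and $\meas{!P} = \meas{P}+\meas{P}^2$) and shows $|\{P' : P \llts\mu P'\}| \leq \meas{P}$ by a straight structural induction on the alternative LTS; the product term absorbs exactly the {\sf unify} pairing you handle via your label-finiteness invariant. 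So: your route works, but you correctly identify its ``main obstacle'' (the $\alpha$-bookkeeping and the non-structural {\sf rep} rule), and the paper's two devices --- the alternative LTS and the explicit measure --- are precisely what dissolve that obstacle rather than argue around it.
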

\usedby{Lemma \ref{lem:bisim-rep}}

\begin{lem}
\label{lem:lts-exhibit-p}
If $P\ltsred{\rest m p}P'$ then 
there exist $\wt n$ and $Q_1$ and $Q_2$ such that 
$P \equiv \rest m \rest n (p\to Q_1\bnf Q_2)$ and
$P' \equiv \rest n (Q_1\bnf Q_2)$ and
$\wt n\cap{\sf names}(\rest m p)=\emptyset$ and ${\sf bn}(p)\cap{\sf fn}(Q_2)=\emptyset$.
\end{lem}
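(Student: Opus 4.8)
The statement asserts that a visible transition $P \ltsred{\rest m p} P'$ can only arise, up to structural congruence, from a case $p \to Q_1$ sitting under a suitable set of restrictions, in parallel with some inert context $Q_2$. The natural approach is induction on the derivation of $P \ltsred{\rest m p} P'$, with a case analysis on the last rule applied. Only the rules that can produce a visible label are relevant: \textsf{case}, \textsf{resnon}, \textsf{open}, \textsf{parext} and \textsf{rep}; the rules \textsf{unify} and \textsf{parint} produce a $\tau$ label and cannot conclude a derivation of this shape.

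\emph{Base case.} If the last rule is \textsf{case}, then $P = p \to P'$ and the label has empty $\wt m$. We take $\wt n = \emptyset$, $Q_1 = P'$ and $Q_2 = \stoppr$, so that $P \equiv p \to Q_1 \bnf Q_2$ and $P' \equiv Q_1 \bnf Q_2$; all the side conditions hold trivially (there are no restricted names, and ${\sf bn}(p) \cap {\sf fn}(\stoppr) = \emptyset$).

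\emph{Inductive cases.} For \textsf{resnon}, the transition is $\res r P_0 \ltsred{\rest m p} \res r P_0'$ with $P_0 \ltsred{\rest m p} P_0'$ and $r \notin {\sf names}(\rest m p)$; by the induction hypothesis $P_0 \equiv \rest m \rest n (p \to Q_1 \bnf Q_2)$ and $P_0' \equiv \rest n (Q_1 \bnf Q_2)$, and we push the restriction $\res r$ inside, extending $\wt n$ to $\wt n \cup \{r\}$; here one uses that $r \notin {\sf names}(\rest m p)$ to keep $\wt n \cap {\sf names}(\rest m p) = \emptyset$, together with $\alpha$-conversion to avoid a clash between $r$ and ${\sf fn}(Q_1) \cup {\sf fn}(Q_2)$ wherever the side condition ${\sf bn}(p) \cap {\sf fn}(Q_2) = \emptyset$ is at risk. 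For \textsf{open}, the premise gives $P_0 \ltsred{\rest{n'} p} P_0'$ and the conclusion is $\res r P_0 \ltsred{\rest{n' \cup \{r\}} p} P_0'$ with $r \in {\sf vn}(p)$; applying the induction hypothesis to the premise and then observing that the set $\rest{n' \cup \{r\}}$ in the label of the conclusion absorbs precisely the restriction we already have on $P_0$, so $P_0' \equiv \rest{n} (Q_1 \bnf Q_2)$ is exactly the required $P'$ and $\res r P_0 \equiv \rest{n' \cup \{r\}} \rest{n} (p \to Q_1 \bnf Q_2)$. For \textsf{parext}, $P_0 \bnf Q \ltsred{\rest m p} P_0' \bnf Q$ with $P_0 \ltsred{\rest m p} P_0'$ and $(\wt m \cup {\sf bn}(p)) \cap {\sf fn}(Q) = \emptyset$; the induction hypothesis rewrites $P_0$, and we use structural congruence to slide $Q$ inside the restrictions (scope extrusion, valid since $\wt m \cup \wt n$ can be $\alpha$-renamed away from ${\sf fn}(Q)$), putting $Q$ in parallel with $Q_2$; the new $Q_2' = Q_2 \bnf Q$ still satisfies ${\sf bn}(p) \cap {\sf fn}(Q_2') = \emptyset$ because ${\sf bn}(p) \cap {\sf fn}(Q) = \emptyset$ by the side condition. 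The \textsf{rep} case is immediate: $!P_0 \equiv !P_0 \bnf P_0$, apply the induction hypothesis to $!P_0 \bnf P_0 \ltsred{\rest m p} P'$, and note $!P_0 \equiv !P_0 \bnf P_0$ preserves the required form.

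\emph{Main obstacle.} The delicate point throughout is the bookkeeping of restricted names and the $\alpha$-conversion needed to discharge the side conditions $\wt n \cap {\sf names}(\rest m p) = \emptyset$ and ${\sf bn}(p) \cap {\sf fn}(Q_2) = \emptyset$: in the \textsf{resnon} and \textsf{parext} steps one must be careful that the name being added to $\wt n$ (resp.\ the free names of the component being absorbed into $Q_2$) do not collide with the binding names of $p$ or with the names already tracked, and this is where the standing assumption that $\alpha$-conversion is always available is used. Once the name-handling is set up correctly, each inductive step is a routine rearrangement using the structural congruence axioms for restriction and parallel composition.
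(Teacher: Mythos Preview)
Your proof takes essentially the same approach as the paper: induction on the transition derivation with the same case analysis (\textsf{case} as base; \textsf{resnon}, \textsf{open}, \textsf{parext}, \textsf{rep} as inductive steps) and the same witnesses for $\wt n$, $Q_1$, $Q_2$ in each case. One small slip worth noting: in the \textsf{resnon} step the $\alpha$-conversion is needed to keep the newly absorbed name $r$ out of the existing $\wt n$ (so the extended set still satisfies the disjointness condition), not to protect ${\sf bn}(p)\cap{\sf fn}(Q_2)=\emptyset$, which is unaffected since $Q_2$ does not change in that case.
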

\usedby{Prop \ref{prop:tau-red} and Lemma \ref{lem:sound-one}}

\begin{prop}
\label{prop:tau-red}
If $P \ltsred\tau P'$ then $P \redar P'$.
Conversely, if $P \redar P'$ then there exists $P''$ such that $P \ltsred\tau P'' \equiv P'$.
\end{prop}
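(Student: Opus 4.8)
The plan is to prove the two implications separately, each by induction on the derivation of the relevant judgement, after first establishing an auxiliary \emph{harmony lemma}: whenever $P_1\equiv P_2$ and $P_1\ltsred\mu P_1'$, there is $P_2'$ with $P_2\ltsred\mu P_2'$ and $P_2'\equiv P_1'$. The harmony lemma is itself proved by induction on the derivation of $P_1\equiv P_2$, inspecting each defining axiom of $\equiv$; $\alpha$-conversion is absorbed by the standing convention on names, while the delicate cases are scope extrusion $P\bnf\res n Q\equiv\res n(P\bnf Q)$ and the reordering of restrictions, which must be matched against LTS moves through rules {\sf open}, {\sf parext} and {\sf unify} while keeping their side conditions satisfied.

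For ``$P\ltsred\tau P'$ implies $P\redar P'$'' I would induct on the derivation of $P\ltsred\tau P'$. Only {\sf unify}, {\sf parint} (and its symmetric variant), {\sf resnon} and {\sf rep} can emit a $\tau$. In the {\sf parint} and {\sf resnon} cases the inductive hypothesis yields a reduction of the sub-process, and the congruence rules of $\redar$ for parallel composition and restriction conclude; in the {\sf rep} case the inductive hypothesis gives $!P\bnf P\redar P'$, and since $!P\equiv\,!P\bnf P$ the closure of $\redar$ under $\equiv$ concludes. The core case is {\sf unify}, where $P=P_1\bnf P_2$ with $P_1\ltsred{\rest m p}P_1'$, $P_2\ltsred{\rest n q}P_2'$, $\{p\pmatch q\}=(\sigma,\rho)$, and $P'=\res{\wt m\cup\wt n}(\sigma P_1'\bnf\rho P_2')$ subject to the freshness side conditions. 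Applying Lemma~\ref{lem:lts-exhibit-p} to each premise gives decompositions $P_1\equiv\rest m \rest k (p\to A_1\bnf A_2)$ with $P_1'\equiv\rest k (A_1\bnf A_2)$, and symmetrically $P_2\equiv\rest n \rest l (q\to B_1\bnf B_2)$ with $P_2'\equiv\rest l (B_1\bnf B_2)$, together with the freshness guarantees stated there. Using scope extrusion, the reordering of restrictions, and an $\alpha$-renaming of the extruded and local names to be suitably fresh (legitimate by the convention), $P_1\bnf P_2$ becomes structurally congruent to a process in which all restricted names are pulled to the front and the cases $p\to A_1$ and $q\to B_1$ stand side by side; the interaction axiom fires on these two cases, and re-closing under parallel composition, restriction and $\equiv$ reassembles $P'$, so $P\redar P'$.

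For the converse, ``$P\redar P'$ implies $P\ltsred\tau P''\equiv P'$ for some $P''$'', I would induct on the derivation of $P\redar P'$. If the last rule is the interaction axiom, then $P=(p\to P_1)\bnf(q\to Q_1)$ with $\{p\pmatch q\}=(\sigma,\rho)$ and $P'=(\sigma P_1)\bnf(\rho Q_1)$; rule {\sf case} gives $p\to P_1\ltsred p P_1$ and $q\to Q_1\ltsred q Q_1$ (labels with empty restriction set), and {\sf unify} applies with vacuously true side conditions, so $P\ltsred\tau P'$ and we may take $P''=P'$. If the last rule closes $\redar$ under parallel composition or restriction, the inductive hypothesis supplies $P_1\ltsred\tau P_1''\equiv P_1'$, and {\sf parint} (respectively {\sf resnon}, whose side condition is vacuous for $\mu=\tau$) yields the transition, the congruence $P''\equiv P'$ being preserved. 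If the last rule closes $\redar$ under $\equiv$, with $P\equiv Q$, $Q\redar Q'$ and $Q'\equiv P'$, the inductive hypothesis gives $Q\ltsred\tau Q''\equiv Q'$, and the harmony lemma applied to $P\equiv Q$ produces $P\ltsred\tau P''$ with $P''\equiv Q''\equiv P'$.

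I expect the main obstacle to be the harmony lemma, and within it the scope-extrusion and restriction-reordering cases, where one has to track carefully how extruded names migrate between process and label and to check that the side conditions of {\sf open}, {\sf parext} and {\sf unify} can be re-established after the congruence step. A secondary, purely bookkeeping, point is the rearrangement in the {\sf unify} case above: one must verify that pulling all restricted names to the front captures no free name of the opposite component, which holds precisely because of the side conditions of {\sf unify} together with the freshness clauses of Lemma~\ref{lem:lts-exhibit-p} (after the harmless $\alpha$-renaming).
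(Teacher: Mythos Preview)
Your proposal is correct and follows essentially the same approach as the paper: both directions are proved by induction on the respective derivations, the \textsf{unify} case of the first direction is handled via Lemma~\ref{lem:lts-exhibit-p} exactly as you describe, and the structural-congruence case of the second direction is discharged by showing that $\equiv$ respects $\tau$-transitions up to $\equiv$. The only organisational difference is that you factor this last step out as a free-standing harmony lemma stated for arbitrary labels $\mu$, whereas the paper proves it inline as a nested induction and only for $\tau$ (which still forces it to chase visible labels through the premises of \textsf{unify} in the scope-extrusion case, relying again on Lemma~\ref{lem:lts-exhibit-p}); this is a matter of packaging rather than a different argument.
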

\usedby{Lemma \ref{lem:sound-two}, Lemma \ref{lem:succ-beq} and Thm \ref{thm: complete}}

To conclude, it is known that $\alpha$-conversion must be handled with care \cite{UBN07}.
A way in which we can leave it out from our presentation is to follow \cite{BP09}, that
also has the advantage of being implementable in Isabelle/HOL \cite{NPW02}.
However, we prefer to follow a more traditional approach in our presentation.

\subsection{Bisimulation}
\label{sec:bis}

We now develop a {\em bisimulation} relation for CPC that equates processes with the same interactional behaviour; this is captured by the labels of the LTS.
The complexity for CPC is that the labels for external actions contain patterns, and some patterns are more general than others, in that they unify with more things. 
For example, a transition $P\ltsred{\pro n}P'$ performs the action $\pro n$; however a similar external action of another process could be the variable name $n$ and the transition $Q\ltsred n Q'$.
Both transitions have the same barb, that is $P\barb{n}$ and $Q\barb{n}$; however their labels are not identical
and, indeed, the latter can interact with a process performing a transition labeled with $\l x$ whereas the former cannot.
Thus, a {\em compatibility} relation is defined on patterns that can be used to develop the bisimulation.
The rest of this section discusses the development of compatibility and concludes with the definition of bisimulation for CPC.

Bisimilarity of two processes $P$ and $Q$ can be captured by a challenge-reply game 
based upon the actions the processes can take.
One process, say $P$, issues a {\em challenge} and evolves to a new state $P'$.
Now $Q$ must perform an action that is a {\em proper reply} and evolve to a state $Q'$.
If $Q$ cannot perform a proper reply then the challenge issued by $P$ can distinguish $P$ and $Q$, 
and shows they are not equivalent.
If $Q$ can properly reply then the game continues with the processes $P'$ and $Q'$.
Two processes are bisimilar (or equivalent) if any challenge by one
can be answered by a proper reply from the other. 

The main complexity in defining a bisimulation to capture this challenge-reply game is the choice of actions, i.e.\ challenges and replies.
In most process calculi, a challenge is replied to with an identical action \cite{Mil89,milner.parrow.ea:calculus-mobile}.
However, there are situations in which an exact reply would make the bisimulation equivalence too fine for characterising barbed congruence \cite{amadio.castellani.ea:bisimulations-asynchronous,DGP:IC07}.
This is due to the impossibility for the language contexts to force barbed congruent processes to execute the same action; in such calculi more liberal replies must be allowed. 
That CPC lies in this second group of calculi is demonstrated by the following two examples.

\begin{exa}{Example 1}
\label{ex:input}
Consider the processes
\[
P = \l x \bullet \l y \to x\bullet y
\qquad \mbox{and} \qquad
Q = \l z \to z
\]
together with the challenge $P \ltsred{\l x \bullet \l y} x \bullet y$.
One may think that a possible context
$\context {C_{{\it \l x \bullet \l y}}} \cdot$
to enforce a proper reply could be $\ \cdot\bnf w \bullet w \pre \pro w$, for $w$ fresh.
Indeed, $\context {C_{{\it \l x\bullet\l y}}} P\redar w\bullet w \bnf \pro w$
and the latter process exhibits a barb over $w$.
However, the exhibition of action $\l x \bullet \l y$ is {\em not} necessary
for the production of such a barb: indeed,
$\context {C_{{\it \l x\bullet\l y}}} Q\redar w\bullet w \bnf \pro w$,
but in doing so $Q$ performs $\l z$ instead of $\l x \bullet \l y$.
\end{exa}

\begin{exa}{Example 2}
\label{ex:pro}
Consider the processes
\[
P = \pro n \pre \zero
\qquad \mbox{and} \qquad
Q = n\pre \zero
\]
together with the context $\context {C_{{\it \pro n}}} \cdot = n\pre\pro w$, for $w$ fresh.
Although $\context {C_{{\it \pro n}}} P\redar\pro w$ and the latter process exhibits a barb over $w$,
the exhibition of action $\pro n$ is {\em not} necessary for the production of such a barb:
$\context {C_{{\it \pro n}}} Q\redar\pro w$ also exhibits a barb on $w$,
but in doing so $Q$ performs $n$ instead of $\pro n$.
\end{exa}

\medskip

Example~1 shows that CPC pattern-unification allows binding names to be contractive: 
it is not necessary to fully decompose a pattern to bind it.
Thus a compound pattern may be bound to a single name or to more than one name in unification.
Example~2 illustrates that CPC pattern-unification on protected names only requires the other pattern know the name, but such a name is not necessarily protected in the reply.

These two observations make it clear that some patterns are more discerning than others,
i.e.\ unify with fewer patterns than others. This leads to the following definitions.

\begin{defi}
\label{def:match}
Define a {\em match} $(p,\sigma)$ to be a pattern $p$ and substitution $\sigma$ such that
${\sf bn}(p)={\sf dom}(\sigma)$.
\end{defi}

\begin{defi}
\label{def:compat}
Let $(p,\sigma)$ and $(q,\rho)$ be matches.
Define inductively that $p$ is {\em compatible} with $q$ by $\sigma$ and $\rho$, 
denoted $p,\sigma\compat q,\rho$ as follows:
$$
\begin{array}{rcll}
p,\sigma &\compat& \l y,\{\hat\sigma p/y\} & \mbox{if } {\sf fn}(p)=\emptyset\\
n,\{\} &\compat& n,\{\}\\
\pro n ,\{\} &\compat& \pro n ,\{\}\\
\pro n ,\{\}&\compat& n,\{\}\\
p_1\bullet p_2,\sigma_1\cup\sigma_2 &\compat& q_1\bullet q_2,\rho_1\cup\rho_2
                \quad&  \mbox{if } p_i,\sigma_i\compat q_i,\rho_i \mbox{, for } i \in \{1,2\}
\end{array}
$$
\end{defi}

The idea behind this definition is that a pattern $p$ is compatible
with another pattern $q$ with substitutions $(\sigma,\rho)$ if and only if every other pattern $r$ that
unifies $p$ by some substitutions $(\theta,\sigma)$ also unifies with
$q$ with substitutions $(\theta,\rho)$.
That is, the patterns that unify with $p$ are a subset of
the patterns that unify with $q$.  This will be proved later in
Proposition~\ref{lem:pat-lessthan}.

The compatibility relation on patterns provides the concept of proper reply in the challenge-reply game.

\begin{defi}[Bisimulation]
\label{def:bisim}
A symmetric binary relation on CPC processes $\Re$ is a bisimulation if,
for every $(P,Q) \in \Re$ and $P \ltsred\mu P'$, it holds that:
\begin{itemize}
  \item if $\mu = \tau$, then $Q \ltsred\tau Q'$, for some $Q'$ such that $(P',Q') \in \Re$;
  \item if $\mu = \rest n p$ and $({\sf bn}(p)\cup \wt n)\cap {\sf fn}(Q)=\emptyset$,
		then, for all matches $(p,\sigma)$ with 
        ${\sf fn}(\sigma)\cap \wt n = \emptyset$,
        there exist a match $(q,\rho)$ and $Q'$ 
		such that
        $p,\sigma \compat q,\rho$ and
        $Q\ltsred {\rest n q}Q'$ 
        and $(\sigma P',\rho Q')\in \Re$.
\end{itemize}
Denote by $\bisim$ the largest bisimulation closed under any substitution.
\end{defi}

The definition is inspired by the early bisimulation congruence for the
$\pi$-calculus \cite{milner.parrow.ea:calculus-mobile} (actually, it is what
in \cite{sangiorgi.walker:theory-mobile} is called {\em strong full bisimilarity} -- see Definition 2.2.2 therein):
	for every possible instantiation $\sigma$ of the binding names,
	there exists a proper reply from $Q$.
Of course, $\sigma$ cannot be chosen arbitrarily:
	it cannot use in its range names that were opened by $P$.
Also the action $\mu$ cannot be arbitrary, as in the $\pi$-calculus:
	its restricted and binding names cannot occur free in $Q$.

Unlike the $\pi$-calculus, however, the reply from $Q$ can be
different from the challenge from $P$:
	this is due to the fact that contexts in CPC are not powerful enough to
	enforce an identical reply (as highlighted in Examples~1 and~2).
Indeed, this notion of bisimulation allows a challenge $p$ to be replied to by
any compatible $q$, provided that $\sigma$ is properly adapted (yielding $\rho$, as
described by the compatibility relation) before being applied to $Q'$.
This feature somehow resembles the symbolic characterization of 
open bisimilarity given in \cite{San96,BM08}. There, labels are pairs made up
of an action and a set of equality constraints. A challenge can be replied to by a 
smaller (i.e.~less constraining) set. However, the action in the reply must be 
the same (in \cite{San96}) or becomes the same once we apply the name identifications
induced by the equality constraints (in \cite{BM08}).

An alternative approach may consider a standard bisimulation defined on top of an LTS that directly captures
the difficulties of Examples 1 and 2.
That is, allow $\l z\to z$ to reduce with label $\l x\bullet \l y$ to $x\bullet y$; similarly, allow
$n\to 0$ to reduce with label $\pro n$ to $0$. For example, this would allow the transition 
$\l w\bullet \l x\bullet n \to P \ltsred{\l w\bullet (\l y\bullet \l z)\bullet \pro n} \{y\bullet z/x\}P$.
The difficulty with this approach is that, for every binding name in a pattern, there would be an
infinite collection of transitions. For example, $\l x \to P$ would have transitions $\l x\to P\ltsred\mu \sigma P$, 
where $\mu$ can be $\l x_1\bullet \ldots\bullet \l x_i$ (for every $i \geq 1$), but also
$(\l y_1\bullet \ldots\bullet \l y_j)\bullet \l z$ and $\l y\bullet(\l z_1\bullet\ldots\bullet \l z_k)$,
and so forth (with $\sigma$ adapted appropriately).
This would make working with the LTS very heavy (the LTS would not be finitely branching anymore);
so the simplicity of relating patterns by compatibility is used here.

\subsection{Properties of Compatibility}

This section considers some properties of the compatibility relation on patterns introduced in Section~\ref{sec:bis}; 
they are formalised for later exploitation, even though some of them also illustrate some general features of patterns.
In particular, we show that compatibility preserves information used for barbs, is stable under substitution, is reflexive and transitive.

\begin{lem}
\label{prop:compat-fn}
If $p,\sigma\compat q,\rho$ then ${\sf fn}(p)={\sf fn}(q)$ and ${\sf vn}(p) \subseteq {\sf vn}(q)$ 
and ${\sf pn}(q) \subseteq {\sf pn}(p)$.
\end{lem}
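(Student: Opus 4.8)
The plan is to prove the statement by induction on the derivation of $p,\sigma \compat q,\rho$, following exactly the inductive structure of Definition~\ref{def:compat}. There are five clauses to consider, and in each case I need to verify the three inclusions/equalities: ${\sf fn}(p)={\sf fn}(q)$, ${\sf vn}(p)\subseteq{\sf vn}(q)$, and ${\sf pn}(q)\subseteq{\sf pn}(p)$. The base cases $n,\{\}\compat n,\{\}$ and $\pro n,\{\}\compat\pro n,\{\}$ are immediate since $p=q$. The base case $\pro n,\{\}\compat n,\{\}$ is the interesting atomic one: here ${\sf fn}(p)={\sf fn}(\pro n)=\{n\}={\sf fn}(n)={\sf fn}(q)$, while ${\sf vn}(p)={\sf vn}(\pro n)=\emptyset\subseteq\{n\}={\sf vn}(n)={\sf vn}(q)$, and ${\sf pn}(q)={\sf pn}(n)=\emptyset\subseteq\{n\}={\sf pn}(\pro n)={\sf pn}(p)$. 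So this clause is where the two proper inclusions (rather than equalities) genuinely arise, and it motivates why the lemma is stated with $\subseteq$ rather than $=$.

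Next I would handle the clause $p,\sigma\compat\l y,\{\hat\sigma p/y\}$ with side condition ${\sf fn}(p)=\emptyset$. Here $q=\l y$, so ${\sf fn}(q)={\sf fn}(\l y)=\emptyset={\sf fn}(p)$ by the side condition; and ${\sf vn}(p)\subseteq{\sf fn}(p)=\emptyset={\sf vn}(\l y)={\sf vn}(q)$ (recall ${\sf fn}$ is the union of ${\sf vn}$ and ${\sf pn}$, so both are empty), while ${\sf pn}(q)={\sf pn}(\l y)=\emptyset\subseteq{\sf pn}(p)$. So all three hold trivially, both sides being empty on the relevant sets.

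For the compound clause $p_1\bullet p_2,\sigma_1\cup\sigma_2\compat q_1\bullet q_2,\rho_1\cup\rho_2$ with $p_i,\sigma_i\compat q_i,\rho_i$, I would simply apply the induction hypothesis to each component and take unions: ${\sf fn}(p_1\bullet p_2)={\sf fn}(p_1)\cup{\sf fn}(p_2)={\sf fn}(q_1)\cup{\sf fn}(q_2)={\sf fn}(q_1\bullet q_2)$, and likewise ${\sf vn}$ and ${\sf pn}$ distribute over $\bullet$ via union, so the inclusions are preserved under union. This uses that all the relevant name functions are defined by taking unions over compounds, as noted in the paper.

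I do not expect a serious obstacle here; the only point requiring a moment's care is making sure the atomic case $\pro n\compat n$ is handled so that the inclusions go in the stated directions (variable names can only grow when passing from $p$ to $q$, protected names can only shrink), and that in the $\l y$ case one correctly invokes the side condition ${\sf fn}(p)=\emptyset$ to kill off everything on the left. One might also want to remark in passing that the lemma is what justifies the definition of barbs interacting well with compatibility — since ${\sf fn}(p)={\sf fn}(q)$, a compatible reply exhibits the same barb — though that is a downstream consequence rather than part of this proof.
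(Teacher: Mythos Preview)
Your proposal is correct and is precisely the argument the paper gestures at: the paper's proof reads ``By definition of compatibility and induction on the structure of $q$,'' and your case analysis over the five clauses of Definition~\ref{def:compat} is exactly that induction spelled out in full. The only cosmetic difference is that you phrase it as induction on the derivation rather than on $q$, but since each clause of the definition is keyed to the shape of $q$, these amount to the same thing.
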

\usedby{Lemma \ref{lem:sound-one} and \ref{lem:bisim-nuR}}
\begin{proof}
By definition of compatibility and induction on the structure of $q$.
\end{proof}

Given two substitutions $\sigma$ and $\theta$, denote with $\theta[\sigma]$ 
the composition of $\sigma$ and $\theta$,
with domain limited to the domain of $\sigma$, 
i.e.\ the substitution mapping every $x\in{\sf dom}(\sigma)$ to $\theta(\sigma(x))$.

\begin{lem}
\label{lem:compat-compose-subs}
If $p,\sigma \compat q,\rho$ then $p,\theta[\sigma] \compat q,\theta[\rho]$, for every $\theta$.
\end{lem}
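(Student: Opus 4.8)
The plan is to prove this by induction on the derivation of the compatibility judgement $p,\sigma \compat q,\rho$, following exactly the five cases of Definition~\ref{def:compat}. The statement is the stability of compatibility under precomposition with an arbitrary substitution $\theta$ (acting via the $\theta[\cdot]$ operation), so each case of the inductive definition of $\compat$ must be shown to be preserved when $\sigma$ and $\rho$ are replaced by $\theta[\sigma]$ and $\theta[\rho]$. Before starting the induction I would note that $\theta[\cdot]$ does not change the domain of a substitution, so $(q,\theta[\rho])$ and $(p,\theta[\sigma])$ are still matches in the sense of Definition~\ref{def:match}; this keeps the induction hypothesis well-typed.

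First I would handle the four base cases. The three atomic cases $n,\{\}\compat n,\{\}$ and $\pro n,\{\}\compat\pro n,\{\}$ and $\pro n,\{\}\compat n,\{\}$ are immediate, since $\theta[\{\}] = \{\}$ (the empty substitution has empty domain), so the conclusion is literally the same judgement. The interesting base case is $p,\sigma\compat \l y,\{\hat\sigma p/y\}$ when ${\sf fn}(p)=\emptyset$. Here I must show $p,\theta[\sigma]\compat \l y,\{\widehat{\theta[\sigma]}p/y\}$, which by the same defining clause reduces to checking that $\widehat{\theta[\sigma]}\,p = \theta[\rho](y)$ where $\rho = \{\hat\sigma p/y\}$, i.e.\ that $\widehat{\theta[\sigma]}\,p = \theta(\hat\sigma p)$. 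This is a small auxiliary lemma: for any pattern $p$ with ${\sf fn}(p)=\emptyset$, applying $\widehat{\theta[\sigma]}$ equals applying $\theta$ after $\hat\sigma$. It follows by a straightforward induction on $p$ (only binding names and compounds occur, since $p$ has no free names), using the definitions of $\hat{\cdot}$ and of the composition $\theta[\sigma]$; the hypothesis ${\sf fn}(p)=\emptyset$ is what lets the base case go through, and it is inherited by subpatterns.

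For the inductive (compound) case, $p_1\bullet p_2,\sigma_1\cup\sigma_2\compat q_1\bullet q_2,\rho_1\cup\rho_2$ with $p_i,\sigma_i\compat q_i,\rho_i$, I would apply the induction hypothesis to each component to get $p_i,\theta[\sigma_i]\compat q_i,\theta[\rho_i]$, and then reassemble using the compound clause of Definition~\ref{def:compat}. The only thing to check is the compatibility of the bookkeeping: that $\theta[\sigma_1\cup\sigma_2] = \theta[\sigma_1]\cup\theta[\sigma_2]$ and likewise for $\rho$. This holds because the domains of $\sigma_1$ and $\sigma_2$ are disjoint (the patterns are well formed, so ${\sf bn}(p_1)$ and ${\sf bn}(p_2)$ are disjoint, and ${\sf bn}(p_i) = {\sf dom}(\sigma_i)$ since these are matches), and $\theta[\cdot]$ acts pointwise on the domain, so it distributes over the disjoint union.

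The main obstacle, such as it is, is the small commutation lemma $\widehat{\theta[\sigma]}\,p = \theta(\hat\sigma\,p)$ needed in the $\l y$ base case; it is routine but it is the one place where the interaction between the ``hatted'' action on binding names and ordinary substitution composition has to be unwound carefully, and it is where the side condition ${\sf fn}(p)=\emptyset$ is actually used. Everything else is direct rearrangement of definitions, so I would state that lemma explicitly (or prove it inline) and then let the main induction go through cleanly.
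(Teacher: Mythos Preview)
Your proposal is correct and follows essentially the same approach as the paper: an induction driven by the structure of $q$ (equivalently, by the derivation of $\compat$), with the only nontrivial case being $q=\lambda y$, where the paper likewise reduces to the identity $\widehat{\theta[\sigma]}(p)=\theta(\hat\sigma p)$. Your explicit treatment of that commutation step and of the disjoint-union bookkeeping in the compound case is more detailed than the paper's terse proof, but the underlying argument is identical.
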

\usedby{Lemma \ref{lem:succ-beq} and Thms \ref{thm: complete} and \ref{eq-grrr}}
\begin{proof}
By induction on the structure of $q$. The only interesting base case is when $q = \l y$.
Since ${\sf dom}(\theta[\sigma]) = {\sf dom}(\sigma)$, we have that
$p,\theta[\sigma] \compat q,\vartheta$, for $\vartheta = \{ \widehat{\theta[\sigma]}(p)/y\}
= \{\theta(\rho(y))/y\} = \theta[\rho]$.
\end{proof}

\begin{prop}[Compatibility is reflexive]
\label{prop:compat-reflexive}
For all matches $(p,\sigma)$, it holds that $p,\sigma\compat p,\sigma$.
\end{prop}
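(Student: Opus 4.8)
The plan is to proceed by induction on the structure of the pattern $p$, mirroring the case analysis in Definition~\ref{def:compat}. The statement to establish is that $p,\sigma\compat p,\sigma$ for every match $(p,\sigma)$, i.e.\ whenever ${\sf bn}(p)={\sf dom}(\sigma)$. I would immediately note that the induction is really on $p$, but the matching substitution $\sigma$ is constrained: since $(p,\sigma)$ is a match, $\sigma$ is determined only insofar as its domain must be exactly ${\sf bn}(p)$.

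The base cases split according to the atomic forms of $p$. If $p=n$ (a variable name), then ${\sf bn}(p)=\emptyset$, so $\sigma=\{\}$, and the clause $n,\{\}\compat n,\{\}$ applies directly. If $p=\pro n$ (a protected name), again ${\sf bn}(p)=\emptyset$ forces $\sigma=\{\}$, and the clause $\pro n,\{\}\compat\pro n,\{\}$ applies. The slightly more delicate atomic case is $p=\l y$ (a binding name): here ${\sf bn}(p)=\{y\}={\sf dom}(\sigma)$, so $\sigma=\{r/y\}$ for some communicable pattern $r$ (which must have ${\sf fn}(r)$ arbitrary). I would apply the first clause of Definition~\ref{def:compat}, instantiated with the pattern on the left also being $\l y$: that clause reads $p',\sigma'\compat\l y,\{\widehat{\sigma'}(p')/y\}$ provided ${\sf fn}(p')=\emptyset$. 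Taking $p'=r$, which is communicable hence satisfies ${\sf fn}(r)=\emptyset$, and $\sigma'=\{\}$ (the empty substitution, a valid match for $r$ since ${\sf bn}(r)=\emptyset$), we get $r,\{\}\compat\l y,\{\widehat{\{\}}(r)/y\}=\l y,\{r/y\}=\l y,\sigma$. But wait --- this shows $r$ is compatible with $\l y$, not $\l y$ with $\l y$; so I must instead read the clause with $p'=\l y$ itself. However ${\sf fn}(\l y)=\emptyset$ holds (a lone binding name has no free names), so the clause applies with $p'=\l y$, $\sigma'=\sigma=\{r/y\}$, giving $\l y,\{r/y\}\compat\l y,\{\widehat{\{r/y\}}(\l y)/y\}=\l y,\{r/y\}$, since $\widehat{\{r/y\}}(\l y)=\sigma(y)=r$. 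That is exactly $p,\sigma\compat p,\sigma$.

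For the inductive step, $p=p_1\bullet p_2$. Since $p$ is well formed, ${\sf bn}(p_1)$ and ${\sf bn}(p_2)$ are disjoint and their union is ${\sf bn}(p)={\sf dom}(\sigma)$; hence $\sigma$ splits uniquely as $\sigma_1\cup\sigma_2$ with ${\sf dom}(\sigma_i)={\sf bn}(p_i)$, so each $(p_i,\sigma_i)$ is a match. The induction hypothesis gives $p_i,\sigma_i\compat p_i,\sigma_i$ for $i\in\{1,2\}$, and the compound clause of Definition~\ref{def:compat} then yields $p_1\bullet p_2,\sigma_1\cup\sigma_2\compat p_1\bullet p_2,\sigma_1\cup\sigma_2$, i.e.\ $p,\sigma\compat p,\sigma$. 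I expect the only point requiring care --- and the main (mild) obstacle --- is the binding-name base case: one must correctly observe that ${\sf fn}(\l y)=\emptyset$ so the first clause of the definition is applicable with $p$ on both sides, and then check that $\{\widehat\sigma(\l y)/y\}$ collapses back to $\sigma$. Everything else is a routine structural recursion.
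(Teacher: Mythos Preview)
Your proof is correct and takes the same approach as the paper, which simply writes ``By definition of compatibility''; you have merely spelled out the structural induction that this phrase entails, with the binding-name case handled exactly as the first clause of Definition~\ref{def:compat} demands.

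One small slip in your abandoned aside: a communicable pattern $r$ need \emph{not} satisfy ${\sf fn}(r)=\emptyset$ --- communicable means no protected or binding names, but variable names are permitted and those are free --- so the line ``which is communicable hence satisfies ${\sf fn}(r)=\emptyset$'' is false. This does not affect your final argument, since you correctly abandon that branch and instead instantiate the first clause with $p'=\l y$ (for which ${\sf fn}(\l y)=\emptyset$ genuinely holds), but the misconception is worth correcting.
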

\usedby{Lemma \ref{lem:bisim-case}, Lemma \ref{lem:bisim-nuR} and Thms \ref{thm: complete} and \ref{eq-grrr}}
\begin{proof}
By definition of compatibility.
\end{proof}

\begin{prop}[Compatibility is closed under substitution]
\label{prop:compat-sub-closed}
If $p,\sigma\compat q,\rho$ then for all substitutions $\theta$ there exists
$\sigma '$ and $\rho '$ such that
$\theta p, \sigma '\compat \theta q,\rho '$.
\end{prop}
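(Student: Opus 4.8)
The plan is to proceed by induction on the structure of $q$ (equivalently, on the derivation of $p,\sigma\compat q,\rho$), exactly as in the proofs of Lemmas~\ref{prop:compat-fn} and~\ref{lem:compat-compose-subs}. The guiding observation is that a substitution $\theta$ never touches binding names, so ${\sf bn}(\theta r)={\sf bn}(r)$ for every pattern $r$ and, in particular, $\theta r=r$ whenever ${\sf fn}(r)=\emptyset$; what $\theta$ \emph{can} do is replace a variable or protected atom by an arbitrary communicable compound. Hence the only genuine work is to check that compatibility survives this possible ``expansion'' of atoms into compounds, everything else being bookkeeping about domains of substitutions.

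To that end I would first record a small auxiliary fact, proved by a side induction on the communicable pattern $r$: for every communicable $r$ we have $r,\{\}\compat r,\{\}$ (this is just Proposition~\ref{prop:compat-reflexive}, as $(r,\{\})$ is a match since ${\sf bn}(r)=\emptyset$), and also $\pro r,\{\}\compat r,\{\}$ and $\pro r,\{\}\compat \pro r,\{\}$. The base cases, with $r$ a variable name, are precisely the second, third and fourth clauses of Definition~\ref{def:compat}, and the compound case follows from the last clause of that definition together with $\pro{r_1\bullet r_2}=\pro{r_1}\bullet\pro{r_2}$.

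The induction on $q$ then goes case by case. If $q=\l y$, then ${\sf fn}(p)=\emptyset$, so $\theta p=p$ and $\theta q=q$, and $\sigma'=\sigma$, $\rho'=\rho$ work. If $q=n$ with $p=n$, then $\theta p=\theta q=\theta n$, which is communicable, so $\sigma'=\rho'=\{\}$ works by the auxiliary fact; if instead $q=n$ with $p=\pro n$, then $\theta p=\pro{\theta n}$ and $\theta q=\theta n$, and again $\sigma'=\rho'=\{\}$ works by the auxiliary fact (when $n\notin{\sf dom}(\theta)$ this is literally clause four). If $q=\pro n$, then necessarily $p=\pro n$, so $\theta p=\theta q=\pro{\theta n}$ and $\sigma'=\rho'=\{\}$ works. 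Finally, if $q=q_1\bullet q_2$, then $p=p_1\bullet p_2$ with $p_i,\sigma_i\compat q_i,\rho_i$; the induction hypothesis yields matches $(\theta p_i,\sigma_i')$ and $(\theta q_i,\rho_i')$ with $\theta p_i,\sigma_i'\compat\theta q_i,\rho_i'$, and since $p$ and $q$ are well formed the sets ${\sf bn}(\theta p_i)={\sf bn}(p_i)$ (and likewise ${\sf bn}(\theta q_i)={\sf bn}(q_i)$) are pairwise disjoint, so the last clause of Definition~\ref{def:compat} gives $\theta p,\sigma_1'\cup\sigma_2'\compat\theta q,\rho_1'\cup\rho_2'$.

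I do not expect a hard part: the only mild subtlety, which the auxiliary fact isolates, is that $\theta$ may turn a single (possibly protected) name into a communicable compound, so one must confirm that the atomic compatibilities $\pro n\compat n$, $n\compat n$, $\pro n\compat\pro n$ are preserved componentwise under such an expansion. Once that lemma is in place, the main induction is pure bookkeeping, relying on the inertness of binding names under substitution and on well-formedness to keep the unions of substitutions disjoint.
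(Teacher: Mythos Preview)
Your proposal is correct and follows exactly the approach the paper indicates: the paper's proof is the single line ``By induction on the structure of $q$,'' and you have faithfully unfolded that induction, including the one case that needs a moment's thought---namely, that after a substitution a variable or protected name may expand to a communicable compound, so one must verify $\pro r,\{\}\compat r,\{\}$ and $\pro r,\{\}\compat\pro r,\{\}$ by a side induction on the communicable $r$. This auxiliary fact is not stated separately in the paper but is implicit in (and an immediate consequence of) Proposition~\ref{prop:compat-reflexive} and the compound clause of Definition~\ref{def:compat}, exactly as you use it.
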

\usedby{Lemma~\ref{lem:bisim-case}}
\begin{proof}
By induction on the structure of $q$.
\end{proof}

\begin{prop}[Compatibility is transitive]
\label{prop:trans-pattern}
If $p,\sigma \compat q,\rho$ and $q,\rho \compat r,\theta$ then $p,\sigma \compat r,\theta$.
\end{prop}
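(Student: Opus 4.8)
The plan is to prove the statement by induction on the structure of the pattern $r$, performing a case analysis on the last rule of Definition~\ref{def:compat} used to derive $q,\rho\compat r,\theta$ and, in each case, inverting the hypothesis $p,\sigma\compat q,\rho$ to expose the shape of $p$ and $\sigma$. The observation that makes the inversion work is that the right-hand pattern of each rule of Definition~\ref{def:compat} has a fixed syntactic shape, so the shape of $r$ (resp.\ of $q$) almost determines which rule produced the second (resp.\ first) judgement.

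For the atomic base cases the argument is direct. If $r = n$ is a variable name, then the second judgement comes from one of $n,\{\}\compat n,\{\}$ or $\pro n,\{\}\compat n,\{\}$, so $\rho=\theta=\{\}$ and $q\in\{n,\pro n\}$; inverting $p,\sigma\compat q,\rho$ then forces $\sigma=\{\}$ and $p\in\{n,\pro n\}$, and in every resulting combination $p,\{\}\compat n,\{\}$ is again an instance of one of those two rules. If $r=\pro n$ the situation is even more rigid: the only rule with $\pro n$ on the right is $\pro n,\{\}\compat\pro n,\{\}$, which forces $q=p=\pro n$ and $\sigma=\rho=\{\}$, and the same rule yields $p,\sigma\compat r,\theta$. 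If $r=r_1\bullet r_2$, the second judgement must come from the compound rule, so $q=q_1\bullet q_2$, $\rho=\rho_1\cup\rho_2$, $\theta=\theta_1\cup\theta_2$ with $q_i,\rho_i\compat r_i,\theta_i$; inverting the first judgement similarly gives $p=p_1\bullet p_2$, $\sigma=\sigma_1\cup\sigma_2$ and $p_i,\sigma_i\compat q_i,\rho_i$ --- here one uses well-formedness of $q$, namely ${\sf bn}(q_1)\cap{\sf bn}(q_2)=\emptyset$, together with the fact that the pairs are matches, so that ${\sf dom}(\rho_i)={\sf bn}(q_i)$ pins down the decomposition of $\rho$ uniquely and makes it agree with the one coming from the second judgement. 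Applying the induction hypothesis componentwise yields $p_i,\sigma_i\compat r_i,\theta_i$, and the compound rule recombines these into $p,\sigma\compat r,\theta$.

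The only delicate case is $r=\l y$, where the second judgement was obtained by the first rule, so ${\sf fn}(q)=\emptyset$ and $\theta=\{\hat\rho q/y\}$. By Lemma~\ref{prop:compat-fn}, $p,\sigma\compat q,\rho$ gives ${\sf fn}(p)={\sf fn}(q)=\emptyset$, so the first rule is applicable to $p$ and $\l y$ and produces $p,\sigma\compat\l y,\{\hat\sigma p/y\}$. To finish one must check $\{\hat\sigma p/y\}=\theta$, i.e.\ that $\hat\sigma p=\hat\rho q$ whenever $p,\sigma\compat q,\rho$ with ${\sf fn}(p)={\sf fn}(q)=\emptyset$. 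I would isolate this as an auxiliary observation, proved by induction on the derivation of $p,\sigma\compat q,\rho$: if $q=\l y$ then $\rho=\{\hat\sigma p/y\}$ by definition, so $\hat\rho q=\rho(y)=\hat\sigma p$; the atomic rules $n,\{\}\compat n,\{\}$, $\pro n,\{\}\compat\pro n,\{\}$ and $\pro n,\{\}\compat n,\{\}$ cannot apply since each would entail a nonempty set of free names; and in the compound case one uses well-formedness to rewrite $\widehat{\sigma_1\cup\sigma_2}$ acting on $p_1$ as $\hat{\sigma_1}$ (and likewise on $p_2$), then applies the induction hypothesis together with Lemma~\ref{prop:compat-fn} to propagate emptiness of the free names to the components. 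I expect this auxiliary fact relating $\hat\sigma p$ and $\hat\rho q$ to be the main --- though still routine --- obstacle; everything else is bookkeeping over the five rules of Definition~\ref{def:compat}.
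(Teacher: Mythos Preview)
Your proposal is correct and follows essentially the same induction on $r$ as the paper's proof, with the same case analysis and the same inversion of the two compatibility hypotheses. The only notable difference is in the case $r=\l y$: the paper unwinds the structure of $q$ and $p$ explicitly (writing $q=\l y_1\bullet\ldots\bullet\l y_n$ and $p$ as a further refinement) to compute $\theta$ directly, whereas you abstract the key identity $\hat\sigma p=\hat\rho q$ into a separate small induction and invoke Lemma~\ref{prop:compat-fn} to obtain ${\sf fn}(p)=\emptyset$; both arrive at the same conclusion, and your version is arguably a bit cleaner since it avoids committing to a particular linear shape for $q$. Your explicit remark about well-formedness guaranteeing that the two decompositions of $\rho$ in the compound case agree is a point the paper leaves implicit.
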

\usedby{Lemma \ref{lem:trans-bisim}}
\begin{proof}
By induction on $r$. We have three possible base cases:
\begin{itemize}
\item $r = \l z$: in this case, $q = \l y_1 \bullet \ldots \bullet \l y_n$, for some $n \geq 1$, and
	$$\theta = \{\hat\rho q/z\} = \{\rho(y_1) \bullet \ldots \bullet \rho(y_n) / z\}.$$ 
	Again by definition of compatibility,
	$p = \l x_1^1 \bullet \ldots \bullet \l x_1^{k_1} \bullet \ldots \bullet \l x_n^1 \bullet \ldots \bullet \l x_n^{k_n}$,
	for some $k_1,.., k_n \geq 1$, and 
	$$\rho = \{\hat\sigma (x_i^1 \bullet \ldots \bullet x_i^{k_i})/y_i\}_{i = 1,..,n} = 
	\{\sigma(x_i^1) \bullet \ldots \bullet \sigma(x_i^{k_i})/y_i\}_{i = 1,..,n}.$$
	Thus, $\theta = \{\sigma(x_1^1) \bullet \ldots \bullet \sigma(x_1^{k_1}) \bullet \ldots \bullet
	\sigma(x_n^1) \bullet \ldots \bullet \sigma(x_n^{k_n})/z\} = \{\hat\sigma p/z\}$
	and $p,\sigma \compat r,\theta$, as desired.
\item $r = \pro n$: in this case $q = \pro n$ and $\theta = \rho = \{\}$. Again by compatibility,
	$p = \pro n$ and $\sigma = \{\}$; thus $p,\sigma \compat r,\theta$.
\item $r = n$: in this case $q$ can either be $\pro n$ or $n$, and $\theta = \rho = \{\}$. Again by compatibility,
	$p = \pro n$ or $p = n$ (this is possible only when $q = n$), and $\sigma = \{\}$; in all cases, 
	$p,\sigma \compat r,\theta$.
\end{itemize}
For the inductive step, let $r = r_1 \bullet r_2$. By compatibility, $q = q_1 \bullet q_2$ and
$\theta = \theta_1 \cup \theta_2$ and $\rho = \rho_1 \cup \rho_2$, with
$q_i,\rho_i \compat r_i,\theta_i$, for $i = 1,2$. Similarly, $p = p_1 \bullet p_2$ and 
$\sigma = \sigma_1 \cup \sigma_2$, with $p_i,\sigma_i \compat q_i,\rho_i$, for $i = 1,2$. 
By two applications of the inductive hypothesis, we obtain $p_i,\sigma_i \compat r_i,\theta_i$, for $i = 1,2$,
and by definition of compatibility we can conclude.
\end{proof}

The next result captures the idea behind the definition of
compatibility: the patterns that unify with $p$ are a subset of the
patterns that unify with $q$.

\begin{prop}
\label{lem:pat-lessthan}
If $p,\sigma\compat q,\rho$ then, for every $r$ and $\theta$ such that $\{r\pmatch p\}=(\theta,\sigma)$,
we have that $\{r\pmatch q\}=(\theta,\rho)$.
\end{prop}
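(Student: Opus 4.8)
The plan is to prove the statement by induction on the structure of $q$, following the derivation of the compatibility judgement $p,\sigma\compat q,\rho$. Two auxiliary facts about unification carry the weight of the argument, and I would establish them first. \emph{(i)} If ${\sf fn}(p)=\emptyset$ --- so $p$ is a pure ``binding skeleton'', a pattern all of whose leaves are binding names --- then whenever $\{r\pmatch p\}=(\theta,\sigma)$ is defined one has $\theta=\{\}$, the pattern $r$ is communicable, and $\hat\sigma p = r$. This is a straightforward induction on $p$: the only atom with empty free names is $\l x$, for which the unification rule forces $r$ communicable and $\sigma=\{r/x\}$, whence $\hat\sigma(\l x)=r$; and a compound binding skeleton is never communicable, so the unification clause for a binding name on the left cannot fire against it and $r$ must itself be a compound, which lets the inductive hypothesis apply componentwise. \emph{(ii)} If $p$ is communicable and $p,\sigma\compat q,\rho$, then $q=p$ and $\sigma=\rho=\{\}$; this is an induction on the compatibility derivation, noting that a communicable $p$ excludes every clause except $n,\{\}\compat n,\{\}$ and the compound clause --- the clause for $\l y$ is excluded because a non-empty communicable pattern always has a free name, so never satisfies ${\sf fn}(p)=\emptyset$.

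For the base cases of the main induction, the two clauses with $q=p$ an atom ($n,\{\}\compat n,\{\}$ and $\pro n,\{\}\compat\pro n,\{\}$) are immediate since $\sigma=\rho=\{\}$ and $\{r\pmatch q\}$ is literally $\{r\pmatch p\}$. For the clause $\pro n,\{\}\compat n,\{\}$ I would read off from the unification rules that $\{r\pmatch\pro n\}$ is defined only when $r\in\{n,\pro n\}$ (the clause for a binding name on the left needs $\pro n$ communicable, which it is not, and no clause unifies a compound --- or an atom knowing a different name --- with $\pro n$), so $\theta=\{\}$; then $\{r\pmatch n\}=(\{\},\{\})=(\theta,\rho)$. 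The genuinely interesting base case is $q=\l y$ with $\rho=\{\hat\sigma p/y\}$ and ${\sf fn}(p)=\emptyset$: auxiliary fact \emph{(i)} applied to $\{r\pmatch p\}=(\theta,\sigma)$ gives $\theta=\{\}$, $r$ communicable, and $\hat\sigma p = r$, hence $\{r\pmatch\l y\}=(\{\},\{r/y\})=(\theta,\{\hat\sigma p/y\})=(\theta,\rho)$.

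For the inductive step $q=q_1\bullet q_2$, the compatibility derivation forces $p=p_1\bullet p_2$, $\sigma=\sigma_1\cup\sigma_2$, $\rho=\rho_1\cup\rho_2$ with $p_i,\sigma_i\compat q_i,\rho_i$ and ${\sf dom}(\sigma_i)={\sf bn}(p_i)$ (using that the $(p_i,\sigma_i)$ are matches). Since $p$ is a compound, $\{r\pmatch p\}$ can be defined in only two ways: either $r=\l x$, which --- to produce second component $\sigma$ --- forces $\sigma=\{\}$ and $p$ communicable, so by \emph{(ii)} we get $q=p$ and $\rho=\{\}$ and the claim is trivial; or $r=r_1\bullet r_2$ with $\{r_i\pmatch p_i\}=(\theta_i,\sigma_i')$, $\theta=\theta_1\cup\theta_2$, $\sigma=\sigma_1'\cup\sigma_2'$ and ${\sf dom}(\sigma_i')={\sf bn}(p_i)$. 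In the latter case the two decompositions of $\sigma$ have the same disjoint domain split ${\sf bn}(p_1)\uplus{\sf bn}(p_2)$, so $\sigma_i=\sigma_i'$; the inductive hypothesis then yields $\{r_i\pmatch q_i\}=(\theta_i,\rho_i)$, and therefore $\{r_1\bullet r_2\pmatch q_1\bullet q_2\}=(\theta_1\cup\theta_2,\rho_1\cup\rho_2)=(\theta,\rho)$.

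I expect the main obstacle to be auxiliary fact \emph{(i)} together with its use in the $q=\l y$ case: one must argue that any challenge $r$ able to unify with a binding skeleton $p$ and produce precisely the prescribed match $\sigma$ is forced to be communicable and to equal $\hat\sigma p$, which is exactly what makes it able to unify with $\l y$ and bind it to $\hat\sigma p=\rho(y)$. The remaining work is bookkeeping, the one delicate point being that the decomposition of $\sigma$ along $\bullet$ is the same whether it is read off the compatibility derivation or off the unification --- and that is precisely where the hypothesis that $(p,\sigma)$ is a match is genuinely needed.
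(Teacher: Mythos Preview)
Your proof is correct and follows essentially the same route as the paper's: induction on the structure of $q$, with the same case split in the inductive step on whether $r$ is a binding name (forcing $p$ communicable, hence $q=p$) or a compound. The two auxiliary facts you isolate are exactly the observations the paper uses inline without separate justification, and your remark that the domain decomposition of $\sigma$ is pinned down by ${\sf bn}(p_1)\uplus{\sf bn}(p_2)$ --- which is where the match hypothesis is genuinely needed --- is a point the paper glosses over.
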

\proof
The proof is by induction on $q$. There are three possible base cases:
\begin{itemize}
	\item If $q = \l y$ then ${\sf fn}(p)=\emptyset$ and $\rho = \{\hat\sigma p/y\}$; for the unification of
				$r$ and $p$ to be defined, it must be $\theta = \{\}$ and
				$\sigma=\{r_i/x_i\}_{x_i\in{\sf bn}(p)}$, each $r_i$ is communicable and $\hat\sigma p=r$.
				It follows that $\{r\pmatch q\}=(\{\},\{r/y\})=(\{\},\{\hat\sigma p/y\}) = (\theta, \rho)$.
	\item If $q = \pro n$ then $p = \pro n$ and $\sigma = \rho = \{\}$.
				For $r$ to unify with $p$, it must be that $r$ is $n$ or $\pro n$; in both cases $\theta = \{\}$.
				Hence, $\{r\pmatch q\}=(\{\},\{\})=(\theta,\rho)$.
	\item If $q = n$ then $p$ is either $n$ or $\pro n$, and $\sigma = \rho = \{\}$. 
				In both cases, $r$ can as well be either $n$ or $\pro n$.
				The proof is similar to the previous case.
\end{itemize}
For the inductive step, $q = q_1\bullet q_2$; by comparability, $p = p_1\bullet p_2$.
There are two possible cases for $r$ to unify with $p$:
	\begin{itemize}
	\item If $r = \l z$, then $p$ must be communicable and $\theta = \{p/z\}$; 
		thus, by definition of comparability, $q=p$ and $\sigma = \rho = \{\}$.
		Hence, $\{r\pmatch q\}=(\{q/z\},\{\})=(\{p/z\},\{\})=(\theta,\rho)$.
	\item Otherwise, for $r$ to unify with $p$, it must be $r = r_1\bullet r_2$ 
		with $\{r_i \pmatch p_i\} = (\theta_1,\sigma_i)$, for $i \in \{1,2\}$,
		and $\sigma = \sigma_1 \uplus \sigma_2$ and $\theta = \theta_1 \uplus \theta_2$. 
		Conclude by two applications of induction and by definition of compatibility.\qed
	\end{itemize}

\noindent Notice that the converse does not hold. Take $p = n$ and $q = \pro n$; we have that, for every
$r$ such that $\{r\pmatch p\}=(\theta,\sigma)$, we have that $\{r\pmatch q\}=(\theta,\rho)$ (the only
such $r$'s are $n$ and $\pro n$, for which $\sigma = \theta = \rho = \{\}$); however, $n , \{\} \not\compat
\pro n , \{\}$.

The following result is a variation of the previous lemma, that fixes $\sigma$ to ${\sf id}_{{\sf bn}(p)}$ but
allows an arbitrary substitution in the unification with $r$.

\begin{prop}
\label{lem:compat-match}
If $p,{\sf id}_{{\sf bn}(p)} \compat q,\rho$ and $\{p\pmatch r\}=(\vartheta,\theta)$,
then $\{q\pmatch r\}=(\vartheta[\rho],\theta)$.
\end{prop}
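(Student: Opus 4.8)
The plan is to prove the statement by structural induction on $q$, following closely the proof of Proposition~\ref{lem:pat-lessthan} but additionally keeping track of how the reply substitution $\rho$ gets pre-composed with the outgoing substitution $\vartheta$. Before the induction I would record a routine preliminary fact: if ${\sf fn}(p)=\emptyset$ and $\{p\pmatch r\}$ is defined, then $p$ is built solely out of binding names, $r$ is communicable, $\{p\pmatch r\}=(\vartheta,\{\})$ with ${\sf dom}(\vartheta)={\sf bn}(p)$ and $\widehat\vartheta p=r$, and moreover $\vartheta(\widehat{{\sf id}_{{\sf bn}(p)}}\,p)=\widehat\vartheta p$. Both claims follow by an easy induction on $p$ together with the definitions of $\pmatch$ and $\widehat{\;\cdot\;}$.

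For the base cases: if $q=\l y$, then the compatibility clause for $\l y$ forces ${\sf fn}(p)=\emptyset$ and $\rho=\{\widehat{{\sf id}_{{\sf bn}(p)}}\,p/y\}$, so by the preliminary fact $r$ is communicable, $\theta=\{\}$ and $\widehat\vartheta p=r$; hence $\{q\pmatch r\}=(\{r/y\},\{\})$, while $\vartheta[\rho]$ sends $y$ to $\vartheta(\widehat{{\sf id}_{{\sf bn}(p)}}\,p)=\widehat\vartheta p=r$, so the two coincide. If $q=\pro n$ then compatibility forces $p=\pro n$, and if $q=n$ then $p$ is $n$ or $\pro n$; in either subcase ${\sf bn}(p)=\emptyset$, so $\rho=\{\}$, and for $\{p\pmatch r\}$ to be defined $r$ must be $n$ or $\pro n$, whence $\vartheta=\theta=\{\}$ and both $\{q\pmatch r\}$ and $(\vartheta[\rho],\theta)$ equal $(\{\},\{\})$.

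For the inductive step $q=q_1\bullet q_2$, compatibility yields $p=p_1\bullet p_2$, $\rho=\rho_1\cup\rho_2$ and $p_i,{\sf id}_{{\sf bn}(p_i)}\compat q_i,\rho_i$ for $i\in\{1,2\}$ (the restriction of ${\sf id}_{{\sf bn}(p)}$ to ${\sf bn}(p_i)$ being exactly ${\sf id}_{{\sf bn}(p_i)}$, by well-formedness). There are two ways $\{p\pmatch r\}$ can arise. If $r=\l z$, then $p$ is communicable, $\vartheta=\{\}$ and $\theta=\{p/z\}$; a straightforward induction shows that a communicable pattern is compatible, with the trivial substitutions on both sides, only with itself — the clause for $\l y$ never applies, since a pattern always has at least one name and hence a non-empty free-name set when it has no binding or protected names — so $q=p$ and $\rho=\{\}$, and then $\{q\pmatch r\}=(\{\},\{p/z\})=(\vartheta[\rho],\theta)$. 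Otherwise $r=r_1\bullet r_2$ with $\{p_i\pmatch r_i\}=(\vartheta_i,\theta_i)$, $\vartheta=\vartheta_1\cup\vartheta_2$ and $\theta=\theta_1\cup\theta_2$; the induction hypothesis gives $\{q_i\pmatch r_i\}=(\vartheta_i[\rho_i],\theta_i)$, whence $\{q\pmatch r\}=(\vartheta_1[\rho_1]\cup\vartheta_2[\rho_2],\theta_1\cup\theta_2)$, and it remains only to identify $\vartheta_1[\rho_1]\cup\vartheta_2[\rho_2]$ with $\vartheta[\rho]=(\vartheta_1\cup\vartheta_2)[\rho_1\cup\rho_2]$.

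I expect this last identification to be the only genuinely delicate point. Since ${\sf bn}(q_1)$ and ${\sf bn}(q_2)$ are disjoint it suffices to check $\vartheta(\rho_i(x))=\vartheta_i(\rho_i(x))$ for each $x\in{\sf bn}(q_i)$; inspecting the compatibility derivation, every non-trivial value $\rho_i(x)$ has the form $\widehat{{\sf id}_{{\sf bn}(p')}}\,p'$ for a sub-pattern $p'$ of $p_i$ with ${\sf fn}(p')=\emptyset$, so ${\sf fn}(\rho_i(x))\subseteq{\sf bn}(p')\subseteq{\sf bn}(p_i)={\sf dom}(\vartheta_i)$, on which $\vartheta$ and $\vartheta_i$ agree (being the components of $\vartheta=\vartheta_1\cup\vartheta_2$ with disjoint domains). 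The analogous bookkeeping in the $q=\l y$ base case, namely $\vartheta(\widehat{{\sf id}_{{\sf bn}(p)}}\,p)=\widehat\vartheta p$, is handled the same way once the definitions are unfolded; everything else is a direct transcription of the argument for Proposition~\ref{lem:pat-lessthan}.
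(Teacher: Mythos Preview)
Your proof follows the same inductive strategy on $q$ as the paper's, and your additional bookkeeping (the preliminary fact about binding-only patterns, and the explicit verification that $\vartheta_1[\rho_1]\cup\vartheta_2[\rho_2]=(\vartheta_1\cup\vartheta_2)[\rho_1\cup\rho_2]$ in the compound case) fills in details the paper simply leaves as ``apply induction two times and conclude''.

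There is one small gap: in the base case $q=n$ with $p=n$, you assert that $r$ must be $n$ or $\pro n$, but $r=\l z$ is also possible (since the variable name $n$ is communicable), yielding $\vartheta=\{\}$ and $\theta=\{n/z\}$. The conclusion still holds trivially in that subcase, because $q=p$ and $\rho=\{\}$ give $\{q\pmatch r\}=\{p\pmatch r\}=(\vartheta,\theta)$ with $\vartheta[\rho]=\vartheta$ (both substitutions have empty domain). The paper handles the subcase $p=n$ precisely by this ``$q=p$'' observation rather than by enumerating the possible $r$'s; with that correction your argument is complete and matches the paper's.
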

\usedby{Thm \ref{eq-grrr}}
\proof
By induction on $q$. There are three possible base cases:
\begin{itemize}
\item $q = \l x$: by Definition~\ref{def:compat}, it must be that ${\sf fn}(p) = \emptyset$, i.e.
$p = \l x_1 \bullet \ldots \bullet \l x_k$, for some $k$. Thus, $\rho = \{x_1 \bullet \ldots \bullet x_k / x\}$,
$r = r_1 \bullet \ldots \bullet r_k$ communicable, $\vartheta = \{r_1/x_1 \ldots r_k/x_k\}$ and $\theta = \{\}$.
By definition of unification, $\{q\pmatch r\}=(\{r/x\},\{\})$ and conclude, since
$\{r/x\} = \{r_1 \bullet \ldots \bullet r_k / x\} = \vartheta[\rho]$.

\item $q = n$: in this case, it must be either $p = n$ or $p = \pro n$; in both cases, $\rho = \{\}$.
If $p = n$, then conclude, since $q = p$ and $\vartheta[\rho] = \vartheta$.
If $p = \pro n$, obtain that $r$ can be either $n$ or $\pro n$; in both cases $\vartheta = \theta = \{\}$
and conclude.

\item $q = \pro n$: in this case $p = \pro n$, $\rho = \{\}$ and work like in the previous case.
\end{itemize}
For the inductive case, $q = q_1 \bullet q_2$; thus, by Definition~\ref{def:compat},
$p = p_1 \bullet p_2$ and $p_i,{\sf id}_{{\sf bn}(p_i)} \compat q_i,\rho_i$, where
$\rho_i = \rho|_{{\sf bn}(q_i)}$, for $i \in \{1,2\}$. We have two possibilities for $r$:
\begin{itemize}
\item $r = r_1 \bullet r_2$, where $\{p_i\pmatch r_i\}=(\vartheta_i,\theta_i)$, for $i \in \{1,2\}$; 
moreover, $\vartheta = \vartheta_1 \uplus \vartheta_2$
and $\theta = \theta_1 \uplus \theta_2$. Apply induction two times and to conclude.

\item $r = \l x$ and $p$ is communicable; thus, $\vartheta = \{\}$ and $\theta = \{p/x\}$.
  By definition of compatibility, $q = p$ and $\rho = \{\}$. Conclude
  with $\{\} = \{\}[\{\}]$.\qed
\end{itemize}

\noindent As compatibility is an ordering on matches,
it is interesting to observe that, for every pattern $p$, there is a unique (up to $\alpha$-conversion) 
maximal pattern w.r.t. $\compat$. Note that, as in Proposition~\ref{lem:compat-match} the substitution can be 
fixed (or indeed entirely elided).

\begin{prop}
\label{prop:maximal}
For every pattern $p$ there exists a maximal pattern $q$ with respect to $\compat$;
this pattern is unique up-to $\alpha$-conversion of binding names.
\end{prop}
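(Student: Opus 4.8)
The plan is to construct the maximal pattern $q$ from $p$ explicitly by a structural recursion that ``pushes compatibility as far as it can go'' in each of the atomic cases, and then to prove maximality and uniqueness separately.

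\textbf{Construction.} First I would define a map $p \mapsto \overline{p}$ by induction on $p$: set $\overline{\l x} = \l x$ (a binding name is already replied-to by anything of the form $\l y$, so at the atom level there is nothing to enlarge), $\overline{n} = n$ (a variable name is maximal among atoms: by the clauses of Definition~\ref{def:compat}, $n,\{\} \compat q,\rho$ forces $q = n$ except via the binding-name clause, which requires ${\sf fn}(n) = \emptyset$ and fails), $\overline{\pro n} = n$ (here the protected atom is genuinely enlarged: $\pro n,\{\} \compat n,\{\}$, and $n$ is maximal by the previous remark), and $\overline{p_1 \bullet p_2} = \overline{p_1} \bullet \overline{p_2}$. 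Alongside this I would track the substitution: starting from $\sigma = {\sf id}_{{\sf bn}(p)}$, the construction leaves the binding names untouched, so the paired substitution on $\overline p$ is again ${\sf id}_{{\sf bn}(\overline p)} = {\sf id}_{{\sf bn}(p)}$ (note ${\sf bn}(\overline p) = {\sf bn}(p)$, since we never remove or add binding names). Then I would check $p,{\sf id}_{{\sf bn}(p)} \compat \overline p,{\sf id}_{{\sf bn}(p)}$ by a routine induction matching the clauses of Definition~\ref{def:compat}.

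\textbf{Maximality.} Next I would show that $\overline p$ is maximal: if $\overline p,\rho \compat q',\rho'$ for some match $(q',\rho')$, then $q' =_\alpha \overline p$. This goes by induction on $q'$, inspecting which clause of Definition~\ref{def:compat} could have produced the compatibility. The key case is the binding-name clause $p',\rho \compat \l y,\{\widehat\rho p'/y\}$, which requires ${\sf fn}(\overline p) = \emptyset$; but by construction $\overline p$ contains no protected names, so ${\sf fn}(\overline p) = {\sf vn}(\overline p)$, and ${\sf vn}(\overline p) = \emptyset$ would force $\overline p$ (hence $p$) to consist entirely of binding names, i.e. $\overline p = \l y$ already — in which case $q' = \l y =_\alpha \overline p$ anyway. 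Otherwise the matching clause pins $q'$ to have the same shape as $\overline p$ at the top, and induction finishes. It then follows by transitivity of $\compat$ (Proposition~\ref{prop:trans-pattern}) that $\overline p$ is the $\compat$-maximum above $p$: for any $q$ with $p,{\sf id} \compat q,\rho$, composing with $q,\rho \compat \overline p,\rho'$ is vacuous — rather, I would use the other direction, showing every $q$ above $p$ is itself below $\overline p$, which again is a structural induction using the same clause analysis.

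\textbf{Uniqueness.} Finally, uniqueness up to $\alpha$-conversion: if both $q_1$ and $q_2$ are maximal with respect to $\compat$ for $p$, then $p,{\sf id} \compat q_i,\rho_i$ for $i=1,2$, and by the maximality argument applied to $\overline p$ we get $q_i =_\alpha \overline p$, hence $q_1 =_\alpha q_2$. The main obstacle, and the place needing care rather than cleverness, is the binding-name clause of Definition~\ref{def:compat}: it is the one clause that does not preserve the top-level shape of the pattern, so every inductive step has to dispatch it explicitly, and one must keep straight that it only fires when the left pattern has empty free names — which is exactly why $\overline p$, having jettisoned all protection, cannot be enlarged further except in the degenerate all-binding case. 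A secondary subtlety is bookkeeping the substitutions so that the matches $(q_i,\rho_i)$ are legitimate (i.e. ${\sf bn}(q_i) = {\sf dom}(\rho_i)$) throughout; fixing $\sigma = {\sf id}_{{\sf bn}(p)}$ as in Proposition~\ref{lem:compat-match} keeps this manageable, and indeed the substitution can be elided from the statement as the remark preceding the proposition notes.
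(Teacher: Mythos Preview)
There is a genuine gap in your construction: it fails for compound patterns whose free names are empty. Take $p = \l x \bullet \l y$. Your recursion gives $\overline p = \l x \bullet \l y$, but this is \emph{not} maximal: by the first clause of Definition~\ref{def:compat}, since ${\sf fn}(\l x \bullet \l y) = \emptyset$, we have $\l x \bullet \l y,\sigma \compat \l z,\{\hat\sigma(\l x \bullet \l y)/z\}$ for any fresh $z$, and $\l z$ sits strictly above $\l x \bullet \l y$. The same problem recurs inside compounds: for $p = n \bullet (\l x \bullet \l y)$ your $\overline p$ is $n \bullet (\l x \bullet \l y)$, whereas the true maximum is $n \bullet \l z$.

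You actually brush against this in your maximality argument when you write that ${\sf vn}(\overline p) = \emptyset$ ``would force $\overline p$ \ldots\ to consist entirely of binding names, i.e.\ $\overline p = \l y$ already''. The ``i.e.'' is the error: a pattern consisting entirely of binding names can be an arbitrary compound of them, not a single atom. The paper's construction avoids this by making ${\sf fn}(p) = \emptyset$ a separate case at \emph{every} level of the recursion, collapsing any free-name-free (sub)pattern to a single fresh $\l y$; only compounds with at least one free name are decomposed componentwise. Once you patch your $\overline{\,\cdot\,}$ in this way, the rest of your outline (maximality by clause analysis, uniqueness via the fresh-name choice) goes through and coincides with the paper's proof.
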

\usedby{Lemma \ref{lem:bisim-rep}}
\begin{proof}
The proof is by induction on the structure of $p$:
\begin{itemize}
	\item If ${\sf fn}(p)=\emptyset$, then the largest $q$ w.r.t. $\compat$ is $\l y$ for some fresh $y$.
	\item If $p$ is $n$ or $\pro n$, then the largest $q$ w.r.t. $\compat$ is $n$.
	\item Otherwise, if $p=p_1\bullet p_2$, then proceed by induction on $p_1$ and $p_2$.
\end{itemize}
The only arbitrary choice is the $y$ used in the first item, that can be $\alpha$-converted to
any other fresh name.
\end{proof}

To conclude the properties of the compatibility, it is worth remarking that it does not yield a lattice: there is no supremum for the two patterns $\l x$ and $n$.

\subsection{Soundness of the Bisimulation}
\label{sec:sound}

This section proves soundness by showing that the bisimulation relation is included in barbed congruence.
This is done by showing that the bisimilarity relation is an equivalence, it is
barb preserving, reduction closed and context closed.

\begin{lem}
\label{lem:trans-bisim}
If $P\bisim Q$ and $Q\bisim R$ then $P\bisim R$.
\end{lem}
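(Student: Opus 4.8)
The plan is to show transitivity of $\bisim$ by building, from the two bisimulations witnessing $P\bisim Q$ and $Q\bisim R$, a single relation that contains $(P,R)$ and is itself a bisimulation closed under substitution. The natural candidate is the relational composition
\[
\Re \;=\; \{(P,R) : \exists Q.\ P\bisim Q \text{ and } Q\bisim R\},
\]
and I would verify that $\Re$ satisfies Definition~\ref{def:bisim}; since $\bisim$ is the \emph{largest} such relation, this yields $\Re\subseteq{\bisim}$ and hence the claim. Symmetry of $\Re$ is immediate from symmetry of $\bisim$. Closure under substitution is also immediate: if $(P,R)\in\Re$ via $Q$, then for any $\theta$, $\theta P\bisim\theta Q$ and $\theta Q\bisim\theta R$ (because $\bisim$ is closed under any substitution), so $(\theta P,\theta R)\in\Re$ via $\theta Q$.

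The core is the bisimulation game. Suppose $(P,R)\in\Re$ via $Q$, and $P\ltsred\mu P'$. If $\mu=\tau$, then $Q\ltsred\tau Q'$ with $P'\bisim Q'$, and then $R\ltsred\tau R'$ with $Q'\bisim R'$, giving $(P',R')\in\Re$. If $\mu=\rest n p$, I first need the side condition $({\sf bn}(p)\cup\wt n)\cap{\sf fn}(Q)=\emptyset$ in order to relay the challenge to $Q$; this can be arranged by $\alpha$-converting the bound/extruded names of $\mu$ away from ${\sf fn}(Q)$ (the LTS is taken up to $\alpha$-conversion). Then, for a given match $(p,\sigma)$ with ${\sf fn}(\sigma)\cap\wt n=\emptyset$, bisimilarity of $P$ and $Q$ yields a match $(q,\tau_Q)$ and $Q'$ with $p,\sigma\compat q,\tau_Q$, $Q\ltsred{\rest n q}Q'$, and $(\sigma P',\tau_Q Q')\in{\bisim}$. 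Now I must feed the challenge $Q\ltsred{\rest n q}Q'$ with the match $(q,\tau_Q)$ into the pair $(Q,R)$. This requires checking two things: that the relevant side conditions for $R$ hold — $({\sf bn}(q)\cup\wt n)\cap{\sf fn}(R)=\emptyset$, which follows from Lemma~\ref{prop:compat-fn} (so ${\sf fn}(q)={\sf fn}(p)$ and ${\sf bn}(q)$ can be taken fresh) together with $\alpha$-conversion; and that ${\sf fn}(\tau_Q)\cap\wt n=\emptyset$, so that $(q,\tau_Q)$ is an admissible match to present to the game for $(Q,R)$. Granting these, $Q\bisim R$ gives a match $(r,\tau_R)$ and $R'$ with $q,\tau_Q\compat r,\tau_R$, $R\ltsred{\rest n r}R'$, and $(\tau_Q Q',\tau_R R')\in{\bisim}$. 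Finally, Proposition~\ref{prop:trans-pattern} (transitivity of compatibility) gives $p,\sigma\compat r,\tau_R$, and $(\sigma P',\tau_R R')\in\Re$ is witnessed by $\tau_Q Q'$. Hence $\Re$ is a bisimulation.

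The main obstacle is the bookkeeping around ${\sf fn}(\tau_Q)$ and $\wt n$ when relaying the intermediate challenge to $R$: the match substitution $\tau_Q$ produced by the first game is not under our control, and we must be sure it does not mention the extruded names $\wt n$ before handing $(q,\tau_Q)$ to the $(Q,R)$-game. This should follow because compatibility $p,\sigma\compat q,\tau_Q$ with $q=\l y$ forces $\tau_Q=\{\widehat\sigma p/y\}$, whose free names lie in ${\sf fn}(\sigma p)\subseteq{\sf fn}(p)\cup{\sf fn}(\sigma)$, and in the compound case $\tau_Q$ is built componentwise from such pieces; since ${\sf fn}(\sigma)\cap\wt n=\emptyset$ and ${\sf fn}(p)\cap\wt n=\emptyset$ (the names $\wt n$ are extruded, hence in ${\sf vn}(p)$ only after being bound — more precisely $\wt n$ are the newly opened names and do not clash with the already-free names of $p$ by the {\sf open} side condition), we get ${\sf fn}(\tau_Q)\cap\wt n=\emptyset$. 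I would isolate this as a small auxiliary observation before running the main argument, so that the three invocations of the bisimulation conditions (once for the $\tau$ case, twice in the external case, for $P$--$Q$ then $Q$--$R$) go through cleanly, with Proposition~\ref{prop:trans-pattern} doing the essential work of closing the compatibility chain.
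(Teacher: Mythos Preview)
Your proposal is correct and follows exactly the standard route the paper intends: the paper's own proof is a one-liner (``Standard, by exploiting Proposition~\ref{prop:trans-pattern}''), and your relational-composition argument with the compatibility chain closed via Proposition~\ref{prop:trans-pattern} is precisely that standard argument spelled out.

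One small correction in your bookkeeping: the claim that ${\sf fn}(p)\cap\wt n=\emptyset$ is false---by rule {\sf open}, each extruded name lies in ${\sf vn}(p)\subseteq{\sf fn}(p)$, so in fact $\wt n\subseteq{\sf fn}(p)$. This does not damage the argument, however, because the compatibility clause $p',\sigma'\compat\l y,\{\hat{\sigma'} p'/y\}$ already forces ${\sf fn}(p')=\emptyset$ on the sub-pattern $p'$ matched against a binding name; hence ${\sf fn}(\hat{\sigma'} p')\subseteq{\sf fn}(\sigma')$, and inductively ${\sf fn}(\tau_Q)\subseteq{\sf fn}(\sigma)$, which is disjoint from $\wt n$ by hypothesis. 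So your conclusion ${\sf fn}(\tau_Q)\cap\wt n=\emptyset$ stands, just for a cleaner reason.
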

\usedby{Lemma \ref{lem:bisim-rep}}
\begin{proof}
Standard, by exploiting Proposition~\ref{prop:trans-pattern}.
\end{proof}

\begin{lem}
\label{lem:sound-one}
$\bisim$ is barb preserving.
\end{lem}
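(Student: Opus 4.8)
We must show that $\bisim$ is barb preserving: if $P \bisim Q$ and $P \barb{\wt m}$, then $Q \barb{\wt m}$. The plan is to unfold the definition of the barb $P \barb{\wt m}$ into a structural shape, translate that shape into an LTS transition $P \ltsred{\rest n p} P'$ with the right label, feed this transition to the bisimulation game to get a compatible reply $Q \ltsred{\rest n q} Q'$, and finally read off a barb from $Q$ out of that transition using Lemma~\ref{lem:lts-exhibit-p}, checking that the set of exhibited names is again exactly $\wt m$.

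**Step 1: from barb to transition.** Suppose $P \barb{\wt m}$, i.e.\ $P \equiv \rest n (p \pre P' \bnf P'')$ with ${\sf pn}(p) \cap \wt n = \emptyset$ and $\wt m = {\sf fn}(p) \setminus \wt n$. Using rule {\sf case}, $p \pre P' \ltsred{p} P'$; using {\sf parext} we add $P''$ (after $\alpha$-converting the binding names of $p$ away from ${\sf fn}(P'')$, which is always legitimate), and then repeated applications of {\sf open} and {\sf resnon} push exactly the names of $\wt n$ that occur in ${\sf vn}(p)$ into the label while keeping the others as restrictions. Since ${\sf pn}(p) \cap \wt n = \emptyset$ by hypothesis, {\sf open} is applicable to all the relevant names, and structural congruence closure of the LTS lets us start from $P$ itself. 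This yields $P \ltsred{\rest{n'} p} \bar P$ for some $\bar P$, where $\wt{n'} \subseteq \wt n$ collects the extruded names and $\wt m = {\sf fn}(p) \setminus \wt{n'}$ (the non-extruded names of $\wt n$ are by definition the protected ones and the ones not appearing free, so they do not contribute to ${\sf fn}(p)$). Before using the bisimulation clause we may also need to $\alpha$-rename so that $({\sf bn}(p) \cup \wt{n'}) \cap {\sf fn}(Q) = \emptyset$.

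**Step 2: play the bisimulation game.** Apply the external-action clause of Definition~\ref{def:bisim} with, say, $\sigma = {\sf id}_{{\sf bn}(p)}$ (a legitimate match with ${\sf fn}(\sigma) = \emptyset$, so the side condition ${\sf fn}(\sigma) \cap \wt{n'} = \emptyset$ holds). This gives a match $(q,\rho)$ and a process $Q'$ with $p, \sigma \compat q, \rho$ and $Q \ltsred{\rest{n'} q} Q'$. Now apply Lemma~\ref{lem:lts-exhibit-p} to this transition: $Q \equiv \rest{n'} \rest k (q \pre Q_1 \bnf Q_2)$ with $\wt k \cap {\sf names}(\rest{n'} q) = \emptyset$. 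In particular ${\sf pn}(q) \cap \wt{n'} = \emptyset$. To conclude $Q \barb{\wt m}$ it remains only to check that ${\sf pn}(q) \cap (\wt{n'} \cup \wt k) = \emptyset$ — the part about $\wt k$ is immediate from the side condition of the lemma — and that ${\sf fn}(q) \setminus (\wt{n'} \cup \wt k) = \wt m$. The first equality of Lemma~\ref{prop:compat-fn}, ${\sf fn}(p) = {\sf fn}(q)$, gives ${\sf fn}(q) \setminus \wt{n'} = {\sf fn}(p) \setminus \wt{n'} = \wt m$, and since $\wt k$ is fresh for $q$ it does not intersect ${\sf fn}(q)$, so removing it changes nothing. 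Hence $Q \barb{\wt m}$.

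**Expected main obstacle.** The delicate point is the bookkeeping in Step 1: making precise which names of $\wt n$ get extruded into the label versus kept as restrictions, and verifying that the label's free-name set minus its restricted names is exactly $\wt m$, matching Definition~\ref{def:barb}. One has to be careful that names in ${\sf pn}(p) \cap \wt n$ cannot arise (ruled out by the barb's side condition) and that names in $\wt n \setminus {\sf fn}(p)$ simply stay under {\sf resnon}. A secondary subtlety is the systematic use of $\alpha$-conversion to satisfy the freshness side conditions of {\sf parext}, {\sf open}, and the bisimulation clause; as the paper remarks, this is always possible, but it should be invoked explicitly. Once the correspondence between the barb predicate and a suitable LTS label is pinned down, the rest is a direct application of the bisimulation clause together with Lemmas~\ref{lem:lts-exhibit-p} and~\ref{prop:compat-fn}.
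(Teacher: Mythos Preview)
Your proof is correct and takes the same approach as the paper's one-line argument, which also relies on Lemmata~\ref{prop:compat-fn} and~\ref{lem:lts-exhibit-p} together with the definition of the LTS. Two small corrections that do not affect the argument: ${\sf fn}(\idsub_{{\sf bn}(p)}) = {\sf bn}(p)$ rather than $\emptyset$ (the side condition ${\sf fn}(\sigma)\cap\wt{n'}=\emptyset$ still holds because you already $\alpha$-converted ${\sf bn}(p)$ away from $\wt n \supseteq \wt{n'}$), and the LTS of Figure~\ref{fig:lts} has no built-in structural-congruence rule, so your phrase ``structural congruence closure of the LTS'' is really an appeal to the standard harmony fact that $\equiv$-related processes enjoy the same labelled transitions up to $\equiv$.
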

\usedby{Thm \ref{thm: sound}}
\begin{proof}
Straightforward by Lemmata~\ref{prop:compat-fn} and~\ref{lem:lts-exhibit-p}, and by definition of the LTS.
\end{proof}

Closure under any context is less easy to prove.
The approach here is as follows:
prove bisimilarity is closed under case prefixing,
name restriction and parallel composition; finally, prove closure under replication.
Proofs of these lemmata are in Appendix B, because they are adaptions
of the corresponding results for the $\pi$-calculus.
These three results will easily entail closure under arbitrary contexts (Lemma~\ref{lem:sound-three}).

\begin{lem}
\label{lem:bisim-case}
If $P\bisim Q$ then $p \to P \bisim p \to Q$.
\end{lem}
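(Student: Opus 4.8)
The plan is to prove this by exhibiting an explicit candidate relation and checking that it is a bisimulation closed under substitution; by the very definition of $\bisim$ (the largest bisimulation closed under any substitution) this yields the desired inclusion. Concretely, I would take
$$\Re \ =\ \bisim \ \cup\ \{\,(p\to P,\ p\to Q)\ :\ P\bisim Q\,\},$$
with $p$ ranging over well-formed patterns. This relation is symmetric because $\bisim$ is; and it is closed under substitution, since for any $\theta$ we have (up to the $\alpha$-conversion that is always assumed) $\theta(p\to P)=(\theta p)\to(\theta P)$, while $\theta P\bisim\theta Q$ follows from $\bisim$ being closed under substitution, so the $\theta$-image of a new pair is again of that form and pairs of $\bisim$ stay in $\bisim$.

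It then remains to verify the two clauses of Definition~\ref{def:bisim} for $\Re$. Pairs already in $\bisim$ are immediate, as any reply $\bisim$ supplies lands in $\bisim\subseteq\Re$. So fix a pair $(p\to P,\ p\to Q)$ with $P\bisim Q$ and a transition $p\to P\ltsred\mu R$. Inspecting Figure~\ref{fig:lts}, the only rule whose conclusion can have a process of the shape $p\to P$ as subject is {\sf case} (all other rules have a restriction, a parallel composition, or a replication there, and the LTS carries no structural-congruence rule). Hence $\mu=p$ — an external label $\rest n p$ with $\wt n=\emptyset$ — and $R=P$; in particular $p\to P$ has no $\tau$-transition, so the first clause is vacuous.

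For the external clause, note that its hypothesis ${\sf bn}(p)\cap{\sf fn}(p\to Q)=\emptyset$ holds automatically: by well-formedness of $p$ and the definition of ${\sf fn}$ on a case, ${\sf bn}(p)$ is disjoint from ${\sf fn}(p)\cup({\sf fn}(Q)\setminus{\sf bn}(p))$, up to $\alpha$-conversion. Since $\wt n=\emptyset$, the matches to be answered are exactly the pairs $(p,\sigma)$ with ${\sf dom}(\sigma)={\sf bn}(p)$. Given such a $(p,\sigma)$, I would reply with the identical move: $q=p$, $\rho=\sigma$, $Q'=Q$, using that $p\to Q\ltsred p Q$ is again an instance of {\sf case}, that $p,\sigma\compat p,\sigma$ by reflexivity of compatibility (Proposition~\ref{prop:compat-reflexive}), and that $(\sigma P,\sigma Q)\in\Re$ because $\sigma P\bisim\sigma Q$ by closure of $\bisim$ under substitution, so the pair lies in $\bisim\subseteq\Re$. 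This shows $\Re$ is a bisimulation closed under substitution, hence $\Re\subseteq\bisim$ and $p\to P\bisim p\to Q$.

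There is no genuine obstacle here: this is the CPC counterpart of closure of early (strong full) bisimilarity under input prefix in the $\pi$-calculus. The two features that carry the argument are that $\bisim$ is by construction closed under arbitrary substitutions — exactly what is needed to absorb the instantiation $\sigma$ chosen by the challenger in the external clause — and that compatibility is reflexive, so the challenge pattern $p$ may simply be answered by itself. The only mildly fiddly point is the bookkeeping that $\Re$ is closed under substitution, where one may, if preferred, invoke Proposition~\ref{prop:compat-sub-closed}.
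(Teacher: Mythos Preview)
Your proof is correct and follows essentially the same approach as the paper: the same candidate relation $\Re=\bisim\cup\{(p\to P,\,p\to Q):P\bisim Q\}$, the observation that the only transition of $p\to P$ is via {\sf case}, the identical reply $q=p$, $\rho=\sigma$ justified by reflexivity of compatibility (Proposition~\ref{prop:compat-reflexive}), and closure of $\bisim$ under substitutions to land the resulting pair back in $\Re$. You simply spell out a few points (why only {\sf case} applies, the substitution-closure bookkeeping) that the paper leaves implicit.
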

\usedby{Lemma \ref{lem:sound-three}}

\begin{lem}
\label{lem:bisim-nu}
If $P\bisim Q$ then $\rest n P \bisim \rest n Q$.
\end{lem}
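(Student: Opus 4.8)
The plan is to prove that the relation
\[
\Re \;=\; {\bisim}\;\cup\;\{(\rest n P,\rest n Q)\mid n\text{ a name and }P\bisim Q\}
\]
is a bisimulation closed under substitution; since $\bisim$ is the largest such relation, $\Re\subseteq{\bisim}$ follows, and in particular $\rest n P\bisim\rest n Q$ whenever $P\bisim Q$. Closure under substitution is immediate: $\bisim$ has it by hypothesis, and for the new pairs, working up to $\alpha$-conversion we may assume the names bound by $\theta$ differ from $n$, so that $\theta(\rest n P)=\rest n(\theta P)$ with $\theta P\bisim\theta Q$. For the bisimulation conditions only the new pairs need attention, so fix $P\bisim Q$ and a transition $\rest n P\ltsred\mu R$; since no LTS rule other than {\sf resnon} and {\sf open} has a conclusion whose subject is of the shape $\rest n(\cdot)$, the derivation ends with one of these two.

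\emph{Case {\sf resnon}.} Here $n\notin{\sf names}(\mu)$, $R=\rest n P'$ and $P\ltsred\mu P'$. If $\mu=\tau$, apply $P\bisim Q$ to obtain $Q\ltsred\tau Q'$ with $P'\bisim Q'$; then {\sf resnon} gives $\rest n Q\ltsred\tau\rest n Q'$ and $(\rest n P',\rest n Q')\in\Re$. If $\mu=\rest m p$, the hypothesis $({\sf bn}(p)\cup\wt m)\cap{\sf fn}(\rest n Q)=\emptyset$ together with $n\notin{\sf names}(\mu)$ yields $({\sf bn}(p)\cup\wt m)\cap{\sf fn}(Q)=\emptyset$, so for every match $(p,\sigma)$ with ${\sf fn}(\sigma)\cap\wt m=\emptyset$ there exist $(q,\rho)$ and $Q'$ with $p,\sigma\compat q,\rho$, $Q\ltsred{\rest m q}Q'$ and $(\sigma P',\rho Q')\in{\bisim}$. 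Using ${\sf fn}(q)={\sf fn}(p)$ (Lemma~\ref{prop:compat-fn}) and $\alpha$-converting ${\sf bn}(q)$ and the name $n$ so that $n\notin{\sf names}(\rest m q)\cup{\sf names}(\sigma)\cup{\sf names}(\rho)$, rule {\sf resnon} gives $\rest n Q\ltsred{\rest m q}\rest n Q'$; moreover $\sigma(\rest n P')=\rest n(\sigma P')$ and $\rho(\rest n Q')=\rest n(\rho Q')$, and these are $\Re$-related since $(\sigma P',\rho Q')\in{\bisim}$.

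\emph{Case {\sf open}.} Here $\mu=\rest{\wt m\cup\{n\}}p$, $R=P'$, $P\ltsred{\rest{\wt m}p}P'$ and $n\in{\sf vn}(p)\setminus(\wt m\cup{\sf pn}(p)\cup{\sf bn}(p))$. From the hypothesis $({\sf bn}(p)\cup\wt m\cup\{n\})\cap{\sf fn}(\rest n Q)=\emptyset$, recalling ${\sf fn}(\rest n Q)={\sf fn}(Q)\setminus\{n\}$ and $n\notin{\sf bn}(p)\cup\wt m$, we get $({\sf bn}(p)\cup\wt m)\cap{\sf fn}(Q)=\emptyset$. For every match $(p,\sigma)$ with ${\sf fn}(\sigma)\cap(\wt m\cup\{n\})=\emptyset$ --- hence ${\sf fn}(\sigma)\cap\wt m=\emptyset$ --- applying $P\bisim Q$ to $P\ltsred{\rest{\wt m}p}P'$ produces $(q,\rho)$ and $Q'$ with $p,\sigma\compat q,\rho$, $Q\ltsred{\rest{\wt m}q}Q'$ and $(\sigma P',\rho Q')\in{\bisim}$. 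By Lemma~\ref{prop:compat-fn}, ${\sf vn}(p)\subseteq{\sf vn}(q)$ gives $n\in{\sf vn}(q)$ and ${\sf pn}(q)\subseteq{\sf pn}(p)$ gives $n\notin{\sf pn}(q)$; also $n\notin\wt m$, and $n\notin{\sf bn}(q)$ because $n\in{\sf fn}(q)$ and $q$ is well formed. Hence rule {\sf open} applies and $\rest n Q\ltsred{\rest{\wt m\cup\{n\}}q}Q'$; since $R=P'$ we conclude with $p,\sigma\compat q,\rho$ and $(\sigma R,\rho Q')=(\sigma P',\rho Q')\in{\bisim}\subseteq\Re$.

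The genuine obstacle is the {\sf open} case, where the side condition of the rule --- that the extruded name be a variable name of the pattern that is neither protected nor binding --- must be inherited by the reply pattern $q$; this is precisely what the inclusions ${\sf vn}(p)\subseteq{\sf vn}(q)$ and ${\sf pn}(q)\subseteq{\sf pn}(p)$ of Lemma~\ref{prop:compat-fn} provide. The remaining work is the routine freshness bookkeeping for $n$ (keeping it out of ${\sf names}(\sigma)$, ${\sf names}(\rho)$, and the chosen $q$) via $\alpha$-conversion, and the argument is otherwise a direct adaptation of the proof of closure under restriction for the $\pi$-calculus.
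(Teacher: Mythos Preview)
Your argument is correct and follows the same underlying strategy as the paper --- exhibiting a candidate relation and checking the bisimulation clauses, with Lemma~\ref{prop:compat-fn} supplying exactly the information needed to re-apply {\sf open} on the reply side. The only noteworthy difference is organisational: the paper proves Lemmata~\ref{lem:bisim-nu} and~\ref{lem:bisim-par} simultaneously via the single relation
\[
\Re \;=\; \{(\rest n (P\bnf R),\rest n (Q\bnf R))\ :\ P \bisim Q\},
\]
arguing that ``the two lemmata have to be proved together \ldots\ because of name extrusion''. Your proof shows that this is not strictly required for restriction in isolation: closure under a single $\res n$ can be established with your smaller relation, and the general $\rest n$ then follows by iterating. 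What the combined relation actually buys is the proof of Lemma~\ref{lem:bisim-par}, where a {\sf unify} step produces a residual of the form $\res{\wt l,\wt o}(\sigma P'\bnf\theta R')$ that mixes fresh restrictions with a parallel context; handling that case forces restriction and parallel into the same candidate relation. So the paper's packaging is dictated by the companion lemma rather than by this one, and your standalone treatment of restriction is a legitimate (and slightly cleaner) alternative for Lemma~\ref{lem:bisim-nu} alone.
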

\usedby{Lemma \ref{lem:bisim-rep}, Lemma \ref{lem:sound-three} and Thm \ref{eq-grrr}}

\begin{lem}
\label{lem:bisim-par}
If $P\bisim Q$ then $P\bnf R \bisim Q\bnf R$.
\end{lem}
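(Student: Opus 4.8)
The plan is to exhibit an explicit bisimulation, closed under substitution, that relates $P\bnf R$ and $Q\bnf R$; the overall shape is the classical congruence argument for parallel composition in the $\pi$-calculus, into which the CPC-specific facts about compatibility are slotted at the communication step. Concretely, I would take
\[
\Re \;=\; \{\,(\rest z(P'\bnf R),\ \rest z(Q'\bnf R)) \ :\ P'\bisim Q',\ R\text{ a CPC process},\ \wt z\text{ a set of names}\,\}
\]
and prove that $\Re$ is a bisimulation closed under any substitution; since $(P\bnf R, Q\bnf R)\in\Re$ (take $\wt z$ empty), this yields $P\bnf R\bisim Q\bnf R$. Closure under substitution is immediate: for any $\theta$, after $\alpha$-converting the names of $\wt z$ and the binding names in play so as to avoid $\theta$, the substitution distributes over $\rest z$ and $\bnf$, and $\theta P'\bisim\theta Q'$ because $\bisim$ is itself closed under substitution. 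Throughout I would work up to $\alpha$-conversion, freely choosing restricted and binding names to be globally fresh so that the side conditions of the LTS rules are met, and up to the convention (adopted in the paper) that $\rest{}$ acts on sets of names, so that e.g.\ $\rest z\rest{m\cup n}(\cdot)$ is literally of the shape required by $\Re$.

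For the bisimulation condition, take a challenge $\rest z(P'\bnf R)\ltsred\mu S$. Peeling off the outer restrictions with \textsf{resnon} and \textsf{open}, the transition comes from $P'\bnf R\ltsred{\mu_0}T$ derived by one of \textsf{parint}, \textsf{parext}, \textsf{unify} or a symmetric variant, and $S$ is $T$ with the not-extruded names of $\wt z$ put back. The cases in which $R$ acts alone (a $\tau$ via \textsf{parint}, or an external action via \textsf{parext}) are the easy ones: $P'$ and $Q'$ are untouched spectators, so $\rest z(Q'\bnf R)$ performs the \emph{identical} transition --- for an external challenge, replying to an instance $(q,\sigma)$ by $(q,\sigma)$ itself is legitimate by reflexivity of compatibility (Proposition~\ref{prop:compat-reflexive}), and the side conditions of \textsf{parext} and of the bisimulation clause keep ${\sf bn}(q)$ away from ${\sf fn}(P')\cup{\sf fn}(Q')$ so the match leaves both spectators fixed --- and the two reducts differ only by $P'$ versus $Q'$ inside a common context $\rest{z'}(-\bnf R')$, hence lie in $\Re$. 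When $P'$ performs a $\tau$ alone, the $\tau$-clause of $P'\bisim Q'$ gives a matching $\tau$ of $Q'$. When $P'$ performs an external action $P'\ltsred{\rest m p}P''$ alone (so that $\rest z(P'\bnf R)$ extrudes $\wt m$ together with some $\wt z'\subseteq\wt z$), then for every instance $(p,\sigma)$ --- legal for the game on $P'\bisim Q'$ since ${\sf fn}(\sigma)$ avoids the extruded names, in particular $\wt m$ --- the bisimulation game yields a compatible match $(q,\rho)$, a transition $Q'\ltsred{\rest m q}Q''$ and $(\sigma P'',\rho Q'')\in\bisim$; \textsf{parext} then lets $Q'\bnf R$ perform $\rest m q$, and \textsf{open}/\textsf{resnon} reinstate the names of $\wt z$. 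Here Lemma~\ref{prop:compat-fn} is exactly what is needed: ${\sf fn}(p)={\sf fn}(q)$ ensures the same names of $\wt z$ get extruded, while ${\sf vn}(p)\subseteq{\sf vn}(q)$ and ${\sf pn}(q)\subseteq{\sf pn}(p)$ let the side conditions of \textsf{open} transfer from $p$ to $q$; moreover $\sigma$ and $\rho$ act trivially on $R$, so the $\bisim$-related components reappear inside contexts of the form $\rest{z'}(-\bnf R)$ and the reducts are in $\Re$.

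The main obstacle is the \textsf{unify} case, i.e.\ a communication between $P'$ and $R$: say $P'\ltsred{\rest m p}P''$ and $R\ltsred{\rest n q}R''$ with $\{p\pmatch q\}=(\sigma,\rho)$, yielding $P'\bnf R\ltsred\tau\rest{m\cup n}(\sigma P''\bnf\rho R'')$ (and no name of $\wt z$ is extruded by a $\tau$-move). One first checks that $(p,\sigma)$ is a legal challenge instance for the game on $P'\bisim Q'$, i.e.\ ${\sf fn}(\sigma)\cap\wt m=\emptyset$: this holds because ${\sf fn}(\sigma)\subseteq{\sf fn}(q)$ while the \textsf{unify} side conditions force $\wt m$ to avoid both ${\sf fn}(R)$ and $\wt n$. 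The game then supplies $(q',\rho')$ and $Q''$ with $p,\sigma\compat q',\rho'$, $Q'\ltsred{\rest m q'}Q''$ and $(\sigma P'',\rho' Q'')\in\bisim$. The crucial point is that $Q''$ and $R''$ must still be able to communicate. From $\{p\pmatch q\}=(\sigma,\rho)$ one reads off $\{q\pmatch p\}=(\rho,\sigma)$ (by the evident symmetry of the unification rules), and then Proposition~\ref{lem:pat-lessthan} applied to $p,\sigma\compat q',\rho'$ with $r:=q$, $\theta:=\rho$ gives $\{q\pmatch q'\}=(\rho,\rho')$, i.e.\ $\{q'\pmatch q\}=(\rho',\rho)$; hence \textsf{unify} yields $Q'\bnf R\ltsred\tau\rest{m\cup n}(\rho' Q''\bnf\rho R'')$, the remaining side conditions being met by freshness of $\wt m$ and $\wt n$. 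Re-restricting $\wt z$, the challenger's and replier's reducts are $\rest{z\cup m\cup n}(\sigma P''\bnf\rho R'')$ and $\rest{z\cup m\cup n}(\rho' Q''\bnf\rho R'')$, which lie in $\Re$ since $\sigma P''\bisim\rho' Q''$. Beyond this step, the only remaining work is the $\alpha$-conversion and scope-extrusion bookkeeping needed to legitimately interleave \textsf{open}, \textsf{resnon} and \textsf{parext} around the replies --- exactly as in the $\pi$-calculus, which is why the proof is deferred to Appendix~B.
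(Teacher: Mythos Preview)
Your proposal is correct and follows essentially the same approach as the paper: the paper also proves Lemmata~\ref{lem:bisim-nu} and~\ref{lem:bisim-par} simultaneously via the single relation $\Re = \{(\rest n (P\bnf R),\rest n (Q\bnf R)) : P \bisim Q\}$, handles the $R$-alone cases with Proposition~\ref{prop:compat-reflexive}, the $P$-alone external case with Lemma~\ref{prop:compat-fn}, and the {\sf unify} case with Proposition~\ref{lem:pat-lessthan}. Your invocation of Proposition~\ref{lem:pat-lessthan} via the symmetry of unification is exactly the paper's step (with different variable names), so the two arguments coincide.
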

\usedby{Lemma \ref{lem:bisim-rep}, Lemma \ref{lem:sound-three} and Thm \ref{eq-grrr}}

\begin{lem}
\label{lem:bisim-rep}
If $P\bisim Q$ then $!P\bisim\,\, !Q$.
\end{lem}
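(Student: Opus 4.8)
The plan is to follow the standard $\pi$-calculus argument that bisimilarity is a congruence for replication, leaning on the closure results already proved for the other operators. I would take as candidate relation
\[
\Re \ =\ \big\{\,(A,B)\ :\ A\equiv C\bnf !P,\ B\equiv D\bnf !Q,\ C\bisim D,\ P\bisim Q\,\big\},
\]
the $\equiv$-closure being there only so that residuals can be re-folded into the right shape. This $\Re$ is symmetric because $\bisim$ is, it contains $(!P,!Q)$ (take $C=D=\zero$), and it is closed under substitution: applying $\theta$ to $C\bnf !P$ gives, up to $\alpha$-conversion, $(\theta C)\bnf !(\theta P)$, while $\theta C\bisim\theta D$ and $\theta P\bisim\theta Q$ hold since $\bisim$ is closed under substitution. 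Hence, by Definition~\ref{def:bisim}, it suffices to prove that $\Re$ is a bisimulation.

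The key preliminary is a decomposition of the transitions of a replication: inspecting rule {\sf rep}, any move $!P\ltsred\mu P'$ satisfies $P'\equiv R\bnf !P$ for some $R$ such that either $P\ltsred\mu R$ (an external action, or an internal action of a single copy) or $\mu=\tau$ and $P\bnf P\ltsred\tau R$ (an interaction internal to $!P$, which needs at most two copies). From $P\bisim Q$ one gets $P\bnf P\bisim Q\bnf Q$ by applying Lemma~\ref{lem:bisim-par} twice and composing with Lemma~\ref{lem:trans-bisim} and commutativity of parallel composition. Now, given a challenge $C\bnf !P\ltsred\mu Z$, the last rule is one of {\sf parint}, {\sf parext}, {\sf unify} (up to symmetry), so $Z$ arises from (i) $C$ moving alone, (ii) $!P$ moving alone (here the decomposition and $P\bnf P\bisim Q\bnf Q$ are used), or (iii) $C$ and $!P$ interacting. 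In each case the reply of $D\bnf !Q$ is built from the replies furnished by $C\bisim D$ and by $P\bnf P\bisim Q\bnf Q$, and the residual is rewritten, modulo $\equiv$, into the form $C'\bnf !P$ against $D'\bnf !Q$; that $C'\bisim D'$ follows by putting the bisimilar residuals in parallel (Lemma~\ref{lem:bisim-par}), absorbing any names extruded and re-restricted by an {\sf unify} step (Lemma~\ref{lem:bisim-nu}), and chaining with Lemma~\ref{lem:trans-bisim}.

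The only step specific to CPC appears in the {\sf unify} cases: if one side exhibits patterns $p$ and $q$ with $\{p\pmatch q\}=(\sigma,\rho)$, the other side can only exhibit \emph{compatible} patterns $p'$ and $q'$, so one must check that $\{p'\pmatch q'\}$ is defined and equals exactly the pair of substitutions returned by the bisimulation game. This is precisely the argument already made in the proof of Lemma~\ref{lem:bisim-par} (where only one of the two patterns changes); here it is applied twice — once on each side — using that $\pmatch$ is symmetric and the properties of $\compat$, in particular Proposition~\ref{lem:pat-lessthan}. For external challenges one must in addition carry along the substitution $\sigma$ quantified over in Definition~\ref{def:bisim}; I would expect to invoke Propositions~\ref{prop:imfin} and~\ref{prop:maximal} at this point as well, the former to keep the relevant case analysis finitary and the latter to fix, for each external challenge, a canonical maximal compatible pattern as the reply.

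I expect the main obstacle to be this decomposition-and-refolding bookkeeping: accounting for the fact that one internal step of $!P$ may come from two freshly unfolded copies of $P$, discharging the freshness side conditions of {\sf open}, {\sf parext} and {\sf unify} by $\alpha$-conversion, and verifying that the resulting residual is again of the form $C'\bnf !P$ with $C'\bisim D'$. Once that is settled, the congruence lemmas for parallel composition and restriction, together with transitivity of $\bisim$, do essentially all the remaining work, so the genuinely new content of this lemma is small.
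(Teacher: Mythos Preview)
Your direct approach via the candidate relation $\Re=\{(A,B):A\equiv C\bnf !P,\ B\equiv D\bnf !Q,\ C\bisim D\}$ is sound, and the case analysis you outline (move of $C$ alone, move of $!P$ alone---which needs at most two unfolded copies---and an interaction between the two) goes through. The CPC-specific point you isolate is exactly the right one: when both interacting patterns are replaced by compatible ones on the $Q$-side, two chained applications of Proposition~\ref{lem:pat-lessthan} ensure that the replacement patterns still unify with the required substitutions, so {\sf unify} can be replayed and the residuals refolded using Lemmata~\ref{lem:bisim-par}, \ref{lem:bisim-nu} and~\ref{lem:trans-bisim}. One minor correction: Propositions~\ref{prop:imfin} and~\ref{prop:maximal} play \emph{no} r\^ole in this direct argument---for each challenge and each $\sigma$ you simply produce a reply, with no finiteness or canonical-pattern considerations needed.

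The paper takes a genuinely different route, following Sangiorgi--Walker. It introduces the stratified approximants $\bisim_n$, shows $\bisim\,=\,\bigcap_n\bisim_n$ (and \emph{this} is where image finiteness, Proposition~\ref{prop:imfin}, and the finiteness of compatible patterns up to $\alpha$, Proposition~\ref{prop:maximal}, are essential), proves $!P \bisim_n P^{2n}$ and $!Q \bisim_n Q^{2n}$, and then transports $P^{2n}\bisim Q^{2n}$ (obtained from Lemma~\ref{lem:bisim-par}) across via transitivity of $\bisim_n$. Compared with your argument, the paper's proof avoids the somewhat delicate refolding-under-$\equiv$ bookkeeping and the double application of Proposition~\ref{lem:pat-lessthan} in the {\sf unify} case, at the price of the extra machinery of approximants and the reliance on structural image finiteness. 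Your argument is more elementary and self-contained, but requires more explicit freshness management; either is acceptable here.
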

\usedby{Lemma \ref{lem:sound-three}}

\begin{lem}
\label{lem:sound-three}
$\bisim$ is contextual.
\end{lem}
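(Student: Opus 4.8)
The plan is to reduce closure under arbitrary contexts to the four special cases already established, namely Lemmata~\ref{lem:bisim-case}, \ref{lem:bisim-nu}, \ref{lem:bisim-par} and \ref{lem:bisim-rep}, together with reflexivity of $\bisim$ (which follows from the identity relation being a bisimulation, via Proposition~\ref{prop:compat-reflexive}) and transitivity (Lemma~\ref{lem:trans-bisim}). First I would fix a CPC context $\context C\cdot$ and processes $P\bisim Q$, and prove $\context C P\bisim \context C Q$ by structural induction on $\context C\cdot$.

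The base case is $\context C\cdot = \,\cdot\,$, where the claim is just the hypothesis $P\bisim Q$. For the inductive step, the context has one of the forms $\context{C'}\cdot\bnf R$, $R\bnf\context{C'}\cdot$, $!\,\context{C'}\cdot$, $\res n\,\context{C'}\cdot$, or $p\to\context{C'}\cdot$. In each case the inductive hypothesis gives $\context{C'} P\bisim\context{C'} Q$, and then the corresponding lemma lifts this through the outermost constructor: Lemma~\ref{lem:bisim-par} (and its symmetric version, which is immediate from symmetry of $\bisim$) handles the two parallel cases, Lemma~\ref{lem:bisim-rep} handles replication, Lemma~\ref{lem:bisim-nu} handles restriction, and Lemma~\ref{lem:bisim-case} handles case prefixing. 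For the parallel case one may also note that since $\bisim$ is symmetric and $R\bnf\context{C'}\cdot\equiv\context{C'}\cdot\bnf R$, it suffices to treat one side.

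I do not expect any genuine obstacle here: the real work is precisely the four lemmata whose proofs are deferred to Appendix B, and once they are available the argument is a routine induction on contexts. The only point deserving a word of care is that a general context may plug the hole underneath several constructors simultaneously; this is exactly what the induction on the structure of $\context C\cdot$ accounts for, each layer being peeled off by one application of the appropriate lemma. One should also check at the case-prefix step that $\context C P$ and $\context C Q$ remain well formed (binding names of $p$ distinct from the relevant free names), which can always be arranged by $\alpha$-conversion, as assumed throughout. So the proof is: induction on $\context C\cdot$, using reflexivity of $\bisim$ in the base case and Lemmata~\ref{lem:bisim-case}, \ref{lem:bisim-nu}, \ref{lem:bisim-par}, \ref{lem:bisim-rep} (plus transitivity, if one prefers to compose the single-layer steps) in the inductive cases.
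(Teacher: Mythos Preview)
Your proposal is correct and follows essentially the same approach as the paper: structural induction on the context, with the base case immediate from the hypothesis $P\bisim Q$ and the inductive step handled by case analysis on the outermost operator using Lemmata~\ref{lem:bisim-case}, \ref{lem:bisim-nu}, \ref{lem:bisim-par} and~\ref{lem:bisim-rep}. The paper's proof is in fact more terse than yours; neither reflexivity nor transitivity is actually needed (the base case is the hypothesis itself, and each inductive step applies a single lemma directly to the inductive hypothesis), so you may safely drop those remarks.
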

\usedby{Thm \ref{thm: sound}}
\begin{proof}
Given two bisimilar processes $P$ and $Q$, it is necessary to show that for any context 
$\context C \cdot$ it holds that $\context C P \bisim \context C Q$.
The proof is by induction on the structure of the context.
If $\context C \cdot\define \cdot$ then the result is immediate.
For the inductive step, we reason by case analysis on the outer operator of $\context C \cdot$:
then, we simply use the inductive hypothesis and Lemmata~\ref{lem:bisim-case}, \ref{lem:bisim-nu},
\ref{lem:bisim-par} and~\ref{lem:bisim-rep}, respectively.
\end{proof}

Finally, we have to prove that bisimilarity is reduction closed; to this aim, we first need
to prove that structural congruence is contained in bisimilarity.

\begin{lem}
\label{lem:struct-is-a-bisim}
$\equiv$ is a bisimulation closed under substitutions.
\end{lem}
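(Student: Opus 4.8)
The plan is to show that $\equiv$, viewed as a binary relation on CPC processes, satisfies the two clauses of Definition~\ref{def:bisim}, and moreover is closed under arbitrary substitutions. Since $\equiv$ is symmetric, it suffices to check the transfer property in one direction: whenever $P \equiv Q$ and $P \ltsred\mu P'$, we must produce a matching transition from $Q$. Closure under substitutions is immediate, because $P \equiv Q$ implies $\sigma P \equiv \sigma Q$ for every $\sigma$ (structural congruence is preserved by substitution application, up to the usual $\alpha$-conversion to avoid capture); so the only real work is the transfer property.

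The key step is a lemma stating that the LTS is compatible with structural congruence: if $P \equiv Q$ and $P \ltsred\mu P'$, then there exists $Q'$ with $Q \ltsred\mu Q'$ and $P' \equiv Q'$. I would prove this by induction on the derivation of $P \equiv Q$, reducing to the case where $Q$ is obtained from $P$ by a single application of one of the defining axioms of $\equiv$ (and its symmetric and contextual closure), and then by a further induction on the derivation of $P \ltsred\mu P'$. The axioms for parallel composition (commutativity, associativity, $\zero$ as unit) and for restriction (swapping restricted names, garbage-collecting $\res n \zero$, scope extrusion $P \bnf \res n Q \equiv \res n(P\bnf Q)$) are handled by permuting/reshuffling the corresponding {\sf parint}, {\sf parext}, {\sf resnon} and {\sf open} rules in the transition derivation; the replication axiom $!P \equiv P \bnf\,!P$ is handled using rule {\sf rep}. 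In each case one exhibits the symmetric transition and checks that the two reducts are again related by $\equiv$, which is routine given that $\equiv$ is itself a congruence.

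Granting this lemma, the $\tau$-clause of Definition~\ref{def:bisim} is immediate: $P \equiv Q$ and $P \ltsred\tau P'$ give $Q \ltsred\tau Q'$ with $P' \equiv Q'$. For the external clause, suppose $P \equiv Q$ and $P \ltsred{\rest n p} P'$ with $({\sf bn}(p)\cup\wt n)\cap{\sf fn}(Q)=\emptyset$, and fix a match $(p,\sigma)$ with ${\sf fn}(\sigma)\cap\wt n=\emptyset$. By the lemma, $Q \ltsred{\rest n p} Q'$ for some $Q'$ with $P' \equiv Q'$. We then take $(q,\rho) = (p,\sigma)$ as the reply: compatibility $p,\sigma \compat p,\sigma$ holds by Proposition~\ref{prop:compat-reflexive}, and $\sigma P' \equiv \sigma Q'$ by closure of $\equiv$ under substitution, so $(\sigma P', \rho Q') \in\, \equiv$ as required. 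Hence $\equiv$ is a bisimulation, and since it is trivially closed under substitutions, it is contained in $\bisim$.

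The main obstacle is the compatibility-with-$\equiv$ lemma, specifically the bookkeeping around name restriction: when an axiom moves a restriction in or out of a parallel composition, one must verify that the side conditions of {\sf open}, {\sf resnon} and {\sf parext} (the freshness constraints on $\wt n$, ${\sf bn}(p)$ relative to ${\sf fn}(Q)$ and ${\sf names}(\mu)$) can still be met, using $\alpha$-conversion where needed. This is entirely standard — it mirrors the corresponding fact for the $\pi$-calculus — so I would state it, sketch the induction, and relegate the exhaustive case analysis to the appendix alongside the other adapted $\pi$-calculus proofs.
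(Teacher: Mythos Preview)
Your proposal is correct, but it takes a different route from the paper. You prove a direct \emph{harmony lemma} (if $P \equiv Q$ and $P \ltsred\mu P'$ then $Q \ltsred\mu Q'$ with $P' \equiv Q'$) and then read off the bisimulation clauses, using reflexivity of $\compat$ for the external case. The paper instead reduces to the individual structural axioms: for each axiom $LHS \equiv RHS$ it checks that $\{(LHS,RHS)\}\cup\bisim$ is a bisimulation, and then invokes the already-established context closure of $\bisim$ (Lemma~\ref{lem:sound-three}) to lift from the axioms to the full congruence $\equiv$.

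Your approach is more self-contained and proves the lemma as literally stated (that $\equiv$ itself is a bisimulation), at the cost of a full case analysis over the LTS rules interacting with each axiom. The paper's approach is more economical---it reuses Lemmata~\ref{lem:bisim-case}--\ref{lem:bisim-rep} rather than redoing the bookkeeping---but what it actually establishes is $\equiv\,\subseteq\,\bisim$, which is all that is needed downstream in Lemma~\ref{lem:sound-two}. Either argument is fine; note also that a fragment of your harmony lemma (the $\tau$ case) already appears inside the proof of Proposition~\ref{prop:tau-red}, so you could point there to shorten your appendix case analysis.
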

\usedby{Thm \ref{lem:sound-two}}
\proof
For every structural axiom $LHS \equiv RHS$ it suffices to show that $\{(LHS,RHS)\} \cup \bisim$
is a bisimulation. Closure under contexts follows from Lemma~\ref{lem:sound-three}. Closure under
substitutions follows from the fact that the axioms only involve bound names. The only exception 
is the rule for scope extension; however, the fact that substitution application is capture-avoiding
allows us to conclude.
\qed

\begin{lem}
\label{lem:sound-two}
$\bisim$ is reduction closed.
\end{lem}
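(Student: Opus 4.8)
The plan is to show that $\bisim$ is reduction closed, i.e.\ that whenever $P \bisim Q$ and $P \redar P'$, there is some $Q'$ with $Q \redar Q'$ and $P' \bisim Q'$. The key observation is that reduction in CPC is tightly connected to the $\tau$-transitions of the LTS, via Proposition~\ref{prop:tau-red}: a reduction $P \redar P'$ is mimicked (up to $\equiv$) by some $\tau$-transition $P \ltsred\tau P'' \equiv P'$, and conversely every $\tau$-transition is a genuine reduction. So the argument splits into three moves: translate the reduction into a $\tau$-transition, use bisimilarity to answer that $\tau$-challenge, and translate the answering $\tau$-transition back into a reduction.

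Concretely, suppose $P \bisim Q$ and $P \redar P'$. First I would invoke Proposition~\ref{prop:tau-red} to get $P''$ with $P \ltsred\tau P''$ and $P'' \equiv P'$. Since $P \bisim Q$ and $\tau$ is the silent label, the first clause of Definition~\ref{def:bisim} gives $Q'$ with $Q \ltsred\tau Q'$ and $P'' \bisim Q'$. Now apply Proposition~\ref{prop:tau-red} in the other direction: $Q \ltsred\tau Q'$ entails $Q \redar Q'$. It remains to assemble $P' \bisim Q'$. We have $P' \equiv P'' \bisim Q'$, so it suffices to know that $\equiv\,\subseteq\,\bisim$, which is exactly the content of Lemma~\ref{lem:struct-is-a-bisim} (structural congruence is a bisimulation closed under substitutions, hence contained in the largest such bisimulation $\bisim$). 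Composing $P' \bisim P'' \bisim Q'$ with transitivity of $\bisim$ (Lemma~\ref{lem:trans-bisim}) yields $P' \bisim Q'$, which closes the argument.

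There is essentially no hard obstacle here: the statement is a routine consequence of the correspondence between reductions and $\tau$-transitions together with the facts, already established, that $\equiv\,\subseteq\,\bisim$ and that $\bisim$ is transitive. The only point requiring a little care is the bookkeeping with structural congruence — making sure that the $\equiv$ introduced by each application of Proposition~\ref{prop:tau-red} is absorbed into $\bisim$ before composing, rather than being carried along as a separate relation. Since $\bisim$ is defined to be closed under substitutions and (by the cited lemmas) contains $\equiv$ and is transitive, this bookkeeping goes through immediately, and no separate induction or case analysis on the derivation of $P \redar P'$ is needed.
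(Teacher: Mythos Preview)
Your proof is correct and follows exactly the route the paper takes: the paper's proof simply cites Proposition~\ref{prop:tau-red} and Lemma~\ref{lem:struct-is-a-bisim}, and your argument is precisely the unfolding of that citation (with the additional, but implicit, appeal to transitivity of $\bisim$ via Lemma~\ref{lem:trans-bisim} to absorb the~$\equiv$).
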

\usedby{Thm \ref{thm: sound}}
\begin{proof}
By Proposition~\ref{prop:tau-red} and Lemma~\ref{lem:struct-is-a-bisim}.
\end{proof}

The soundness of bisimilarity w.r.t. barbed congruence now follows.

\begin{thm}[Soundness of bisimilarity]
\label{thm: sound}
$\bisim\ \subseteq\ \beq$.
\end{thm}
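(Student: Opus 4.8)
The plan is to show that $\bisim$ satisfies all the defining properties of barbed congruence $\beq$; since $\beq$ is by Definition~\ref{def:barb-con} the \emph{least} relation that is symmetric, barb preserving, reduction closed and context closed, establishing these four properties for $\bisim$ immediately yields $\bisim\ \subseteq\ \beq$. Three of the four properties have essentially already been assembled in the preceding development, so the proof is mostly a matter of citing the right lemmata in the right order.

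First I would note that $\bisim$ is symmetric by construction: Definition~\ref{def:bisim} requires a bisimulation to be a symmetric relation, and $\bisim$ is the largest such relation (closed under substitution), hence symmetric. Next, barb preservation is exactly Lemma~\ref{lem:sound-one}, which in turn rests on Lemmata~\ref{prop:compat-fn} and~\ref{lem:lts-exhibit-p} relating the free names of compatible patterns to the barb sets exposed by the LTS. Reduction closure is Lemma~\ref{lem:sound-two}, obtained from Proposition~\ref{prop:tau-red} (the $\tau$-transitions of the LTS coincide with reductions up to $\equiv$) together with Lemma~\ref{lem:struct-is-a-bisim} (structural congruence is contained in $\bisim$, so the $\equiv$ appearing in the ``conversely'' direction of Proposition~\ref{prop:tau-red} can be absorbed). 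Finally, context closure is Lemma~\ref{lem:sound-three}, itself proved by induction on contexts using the closure lemmata under case prefixing, restriction, parallel composition and replication (Lemmata~\ref{lem:bisim-case}--\ref{lem:bisim-rep}). Assembling these, $\bisim$ is a symmetric, barb-preserving, reduction-closed, context-closed relation on CPC processes, so by minimality of $\beq$ we conclude $\bisim\ \subseteq\ \beq$.

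The only subtlety worth spelling out is the interface between the $\tau$-fragment of the bisimulation game and reductions: a $\tau$-challenge $P\ltsred\tau P'$ is answered by a $\tau$-move $Q\ltsred\tau Q'$ with $(P',Q')\in\bisim$, and this must be turned into a genuine reduction-closure statement $P\redar P'$ implies $Q\redar Q''$ with $(P',Q'')\in\bisim$. This is precisely where Proposition~\ref{prop:tau-red} is used in both directions, and where Lemma~\ref{lem:struct-is-a-bisim} is needed to swallow the residual $\equiv$; I would expect this bookkeeping, rather than any conceptual difficulty, to be the main point requiring care. Everything else is a direct appeal to the lemmata already in place, so the proof itself is short.

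\begin{proof}
We show that $\bisim$ enjoys the four defining properties of $\beq$; by Definition~\ref{def:barb-con}, $\beq$ is the least relation with these properties, hence $\bisim\ \subseteq\ \beq$ follows.

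Symmetry of $\bisim$ is immediate from Definition~\ref{def:bisim}, which only considers symmetric relations. Barb preservation is Lemma~\ref{lem:sound-one}. Reduction closure is Lemma~\ref{lem:sound-two} (which combines Proposition~\ref{prop:tau-red} with Lemma~\ref{lem:struct-is-a-bisim}). Context closure is Lemma~\ref{lem:sound-three}. Thus $\bisim$ is a symmetric, barb-preserving, reduction-closed and context-closed relation on CPC processes, and by minimality of $\beq$ we conclude $\bisim\ \subseteq\ \beq$.
\end{proof}
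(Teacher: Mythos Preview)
Your proof is correct and follows exactly the paper's route: invoke Lemma~\ref{lem:sound-one} (barb preservation), Lemma~\ref{lem:sound-two} (reduction closure) and Lemma~\ref{lem:sound-three} (context closure) to verify that $\bisim$ meets the defining clauses of Definition~\ref{def:barb-con}, and conclude the inclusion. One caveat on wording: you justify $\bisim \subseteq \beq$ ``by minimality of $\beq$'', but if $\beq$ were literally the \emph{least} symmetric, barb-preserving, reduction- and context-closed relation then the empty relation would qualify and the inclusion would go the wrong way; the intended reading (standard for barbed congruence) is that $\beq$ is the \emph{largest} such relation, and it is maximality that gives $\bisim \subseteq \beq$.
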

\begin{proof}
Lemma~\ref{lem:sound-one}, Lemma~\ref{lem:sound-three} and Lemma~\ref{lem:sound-two}
entail that $\bisim$ satisfies the conditions of Definition~\ref{def:barb-con}.
\end{proof}

\subsection{Completeness of the Bisimulation}

Completeness is proved by showing that barbed congruence is a bisimulation.
There are two results required:
showing that barbed congruence is closed under substitutions,
and showing that, for any challenge, a proper reply can be obtained via closure under an appropriate context.
To this aim, we define notions of success and failure that can be reported.
A fresh name $w$ is used for reporting success, with a barb $\barb{w}$ indicating success, and $\suc_w$ indicating a reduction sequence that can eventually report success. Failure is handled similarly using the fresh name $f$.
A process $P$ {\em succeeds} 
if $P\suc_w$ and $P\not\suc_f$;
$P$ is {\em successful} if $P \equiv \rest n(\pro w\bullet p\ |\ P')$, for some $\wt n$ and $p$ and $P'$ 
such that $w \not\in \wt n$ and $P'\not\suc_f$.

The next lemma shows that barbed congruence is closed under any substitution.

\begin{lem}
\label{lem:bcon-sub}
If $P\beq Q$ then $\sigma P\beq\sigma Q$, for every $\sigma$.
\end{lem}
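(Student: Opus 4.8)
The plan is to prove Lemma~\ref{lem:bcon-sub} by reducing substitution closure to context closure, the latter already being available for $\beq$ by Definition~\ref{def:barb-con}. The standard obstacle in such proofs is that a substitution $\sigma = \{p_i/x_i\}_{i\in I}$ is not directly realisable by a CPC context, because CPC has no explicit substitution operator; instead one must find a context whose only interaction with the hole forces exactly the replacement of each $x_i$ by $p_i$. First I would reduce to the case of a singleton substitution $\sigma = \{p/x\}$ with $p$ communicable: an arbitrary finite substitution is a composition of such singletons (using fresh intermediate names to avoid capture), so by transitivity of $\beq$ it suffices to handle one replacement at a time. Moreover, by $\alpha$-conversion we may assume that $x$ and the free names of $p$ interact cleanly with $P$ and $Q$.

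Next I would exhibit the context. The natural candidate is something like
\[
\context{C_\sigma}{\cdot} \quad\define\quad \res{x}\big(\, x\pre\zero \;\bnf\; \l x'\pre [x'/x]\cdot \,\big)
\]
wait — this does not quite work because a single case can fire only once whereas $x$ may occur under replication in $P$. The correct device, following the $\pi$-calculus treatment, is a replicated forwarder: use a context of the shape $\res{x}\big(\, !(\l z_1\bullet\ldots\pre \ldots) \bnf \cdot\,\big)$ that repeatedly offers the communicable pattern $p$ along the (now restricted) name $x$, so that every case in the hole whose pattern mentions $x$ can synchronise and receive $p$. Concretely, one replaces each free variable-occurrence position; since $\sigma$ maps to a \emph{communicable} pattern, the unification rules guarantee that a binding-name case $\l x \pre R$ in the hole unifies with an output of $p$ and reduces to $\{p/x\}R$. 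The key verification is that the \emph{only} behaviours this context adds are these forwarding steps, and that after all of them have fired the residual context is observationally inert (bisimilar, hence barbed-congruent, to $\zero$ in parallel); this is where Theorem~\ref{thm: sound} and the soundness machinery of the previous subsection, together with Proposition~\ref{prop:free_n_match_proc} (protected names of the hole cannot be tampered with), do the work.

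Given such a context, the argument concludes as follows. From $P\beq Q$ and context closure we get $\context{C_\sigma}{P}\beq\context{C_\sigma}{Q}$. Then I would show $\context{C_\sigma}{P}\beq \sigma P \bnf \mathit{Junk}$ and $\context{C_\sigma}{Q}\beq \sigma Q \bnf \mathit{Junk}$, where $\mathit{Junk}$ is the spent forwarder under restriction, provably $\beq$-equivalent to $\zero$ (its only free-name barbs are over the restricted $x$, which is not observable, and it is confluent back to an inert state). Combining by transitivity and symmetry of $\beq$, and using that $P\beq P\bnf\zero$, yields $\sigma P\beq\sigma Q$. The main obstacle, as anticipated, is designing the forwarder context so that it injects precisely the intended substitution and nothing else: one must be careful that the forwarder's pattern is unifiable with \emph{every} occurrence shape a binding name of the hole can take (it can be bound contractively, cf.\ Example~1), which is why the forwarder should offer the plain communicable pattern $p$ rather than any decomposed form, and that restriction of $x$ both hides the auxiliary traffic and prevents external interference. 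The remaining steps — reduction to singletons, the $\mathit{Junk}\beq\zero$ lemma, and the final chaining of equivalences — are routine.
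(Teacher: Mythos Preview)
Your approach has a genuine gap: a parallel ``forwarder'' cannot implement substitution. Applying $\{p/x\}$ to $P$ must replace every \emph{free} occurrence of $x$ in $P$ --- as a variable name, as a protected name $\pro x$, or buried inside a compound pattern --- but a process running in parallel with $P$ can only \emph{interact} with $P$ via pattern unification; it cannot rewrite names sitting inside $P$'s patterns. Your sentence ``a binding-name case $\lambda x \pre R$ in the hole unifies with an output of $p$'' conflates a free occurrence of $x$ in $P$ with a binder $\lambda x$: these are entirely different, and $P$ need not contain any $\lambda x$-prefixed case at all (take $P = \pro x \pre \zero$, or $P = (x\bullet y)\pre \zero$). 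No amount of replicated forwarding will turn $\pro x \pre \zero$ into $\pro{p}\pre \zero$.

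The fix --- and the paper's actual proof --- is much simpler and avoids all of your machinery. Put the hole \emph{under} a case prefix whose binding names are exactly ${\sf dom}(\sigma)$. Concretely, with ${\sf dom}(\sigma)=\{x_1,\ldots,x_k\}$, let $p = \lambda x_1 \bullet \ldots \bullet \lambda x_k$ and $q = \sigma(x_1) \bullet \ldots \bullet \sigma(x_k)$, and take $\context C \cdot \define p\pre \cdot\ \bnf\ q$. Then $\{p\pmatch q\}=(\sigma,\{\})$, and since the hole is guarded by the prefix $p$, the \emph{only} reduction of $\context C P$ is $\context C P\redar \sigma P$ (and likewise the only reduction of $\context C Q$ is to $\sigma Q$). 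Context closure gives $\context C P \beq \context C Q$; reduction closure then forces $\sigma P \beq \sigma Q$ in a single step, with no need for singleton decomposition, replication, or a junk-elimination lemma.
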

\usedby{Lemma \ref{lem:succ-beq} and Thm \ref{thm: complete}}
\begin{proof}
Given a substitution $\sigma$, choose patterns $p$ and $q$ such that $\{p\pmatch q\}=(\sigma,\{\})$;
to be explicit, $p = \l x_1 \bullet \ldots\bullet \l x_k$ and $q = \sigma(x_1) \bullet \ldots \bullet \sigma(x_k)$,
for $\{x_1,\ldots,x_k\} = {\sf dom}(\sigma)$.
Define $\context C \cdot \define p\to \cdot\ \bnf q$;
by context closure, $\context C P\beq\context C Q$.
By reduction closure, the reduction $\context C P\redar \sigma P$ can be replied to 
only by $\context C Q\redar \sigma Q$; hence, $\sigma P\beq \sigma Q$, as desired.
\end{proof}

The other result to be proved is that challenges can be tested for a proper reply by a context.
When the challenge is an internal action, the reply is also an internal action; thus, the empty 
context suffices, as barbed congruence is reduction closed.
The complex scenario is when the challenge is a pattern together with a set of restricted names, 
i.e., a label of the form $\rest n p$.
Observe that in the bisimulation such challenges also fix a substitution $\sigma$ whose domain is the binding names of $p$.
Most of this section develops a reply for a challenge of the form $(\rest n p,\idsub_{{\sf bn}(p)})$;
the general setting (with an arbitrary $\sigma$) will be recovered in Theorem~\ref{thm: complete}
by relying on Lemma~\ref{lem:compat-compose-subs}.

The context for forcing a proper reply for a challenge of the form $(\rest n p,\idsub_{{\sf bn}(p)})$ is developed in three steps.
The outcome will be a process that interacts by some pattern $p'$ and reduces to a collection of tests $T$ such that $\theta T$ succeeds
if and only if $\{p\pmatch p'\}=(\sigma,\theta)$.
The first step presents the {\em specification} of a pattern and a set of names $N$ (to be thought of as the free names of the processes being compared for bisimilarity); this is the information required to build a reply context.
The second step develops auxiliary processes to test specific components of a pattern, based on information from the specification.
The third step combines these into a reply context that succeeds if and only if it interacts 
with a process that exhibits a proper reply to the challenge.

For later convenience, we define the {\em first projection} ${\sf fst}(-)$ and
{\em second projection} ${\sf snd}(-)$ of a set of pairs: e.g.,
${\sf fst}(\{(x,m),(y,n)\})=\{x,y\}$ and ${\sf snd}(\{(x,m),(y,n)\})=\{m,n\}$, respectively.

\begin{defi}
\label{def:bchar}
The {\em specification} ${\sf spec}^{N}(p)$ of a pattern $p$ with respect to a finite set of names $N$ is defined follows:
$$
\begin{array}{rcl}
\vspace{.2cm}
{\sf spec}^{N}(\l x) &=& x,\{\},\{\} \\
\vspace{.2cm}
{\sf spec}^{N}(n) &=& 
\left\{\!\!\!
\begin{array}{ll}
\vspace{.2cm}
\l x,\{(x,n)\},\{\} &\mbox{if $n\in N$ and $x$ is fresh for $N$ and $p$}\\
\l x,\{\},\{(x,n)\} &\mbox{if $n\notin N$ and $x$ is fresh for $N$ and $p$}
\end{array}
\right.
\\
{\sf spec}^{N}(\pro n) &=& \pro n,\{\},\{\}\\
{\sf spec}^{N}(p\bullet q) &=& p'\bullet q',F_p\uplus F_q,R_p\uplus R_q \ \mbox{ if }\left\{\!\!
\begin{array}{l}
{\sf spec}^{N}(p) = p',F_p,R_p\\
{\sf spec}^{N}(q) = q',F_q,R_q
\end{array}\right.
\end{array}
$$
where $F_p\uplus F_q$ denotes $F_p\cup F_q$, provided that ${\sf fst}(F_p) \cap {\sf fst}(F_q) = \emptyset$
(a similar meaning holds for $R_p\uplus R_q$).
\end{defi}
Observe that, since we only consider well formed patterns, the disjoint unions $F_p\uplus F_q$ and $R_p\uplus R_q$ are defined and the choices of the binding names for $p'$ are pairwise distinct.

Given a pattern $p$, the specification ${\sf spec}^{N}(p)=p',F,R$ of $p$ with respect to a set of names $N$ has three components:
\begin{enumerate}
\item $p'$, called the {\em complementary pattern}, is a pattern used to ensure that the context interacts with a 
process that exhibits a pattern $q$ such that $p$ is compatible with $q$ (via some substitutions);
\item $F$ is a collection of pairs $(x,n)$ made up by a binding name in $p'$ and the expected 
name (free in the process being tested) it will be bound to;
\item finally, $R$ is a collection of pairs $(x,n)$ made up by a binding name in $p'$ and the expected
name (restricted in the process being tested) it will be bound to.
\end{enumerate}

\noindent The specification is straightforward for binding names, protected names and compounds.
When $p$ is a variable name, $p'$ is a fresh binding name $\l x$ and the intended binding of $x$ to $n$ 
is recorded in $F$ or $R$, according to whether $n$ is free or restricted, respectively.

\begin{lem}
\label{prop:spec-pp}
Given a pattern $p$ and a finite set of names $N$, let ${\sf spec}^N(p)=p',F,R$.
Then, $\{p\pmatch p'\}=(\idsub_{{\sf bn}(p)},\{n/x\}_{(x,n) \in F \cup R})$.
\end{lem}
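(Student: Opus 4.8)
The plan is to prove the statement by structural induction on $p$, following exactly the four clauses that define ${\sf spec}^N(p)$ in Definition~\ref{def:bchar}. In every case the recipe is the same: unfold the relevant clause of ${\sf spec}^N$ to read off the complementary pattern $p'$ and the sets $F,R$; compute $\{p\pmatch p'\}$ directly from the rules defining symmetric matching; and check that the two components of the result are $\idsub_{{\sf bn}(p)}$ and $\{n/x\}_{(x,n)\in F\cup R}$ respectively.

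The three base cases are immediate unfoldings. For $p=\l x$ we have $p'=x$ and $F=R=\emptyset$; since the variable name $x$ is communicable, $\{\l x\pmatch x\}=(\{x/x\},\{\})$, and $\{x/x\}=\idsub_{\{x\}}=\idsub_{{\sf bn}(\l x)}$ while the substitution indexed over the empty set is empty. For $p=n$ the clause splits on whether $n\in N$, but in both sub-cases $p'=\l x$ for a fresh $x$ and $F\cup R=\{(x,n)\}$; since $n$ is communicable, $\{n\pmatch\l x\}=(\{\},\{n/x\})$, and as ${\sf bn}(n)=\emptyset$ the first component is $\idsub_{{\sf bn}(n)}$ and the second is exactly $\{n/x\}$. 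For $p=\pro n$ we have $p'=\pro n$, $F=R=\emptyset$, and $\{\pro n\pmatch\pro n\}=(\{\},\{\})$, which matches $\idsub_{\emptyset}$ in both components.

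For the inductive step $p=p_1\bullet p_2$, I would set ${\sf spec}^N(p_i)=p_i',F_{p_i},R_{p_i}$, so that $p'=p_1'\bullet p_2'$, $F=F_{p_1}\uplus F_{p_2}$ and $R=R_{p_1}\uplus R_{p_2}$. Applying the induction hypothesis to $p_1$ and $p_2$ and then the compound clause of the unification rules yields
\[
\{p\pmatch p'\}\;=\;\big(\,\idsub_{{\sf bn}(p_1)}\cup\idsub_{{\sf bn}(p_2)},\;\;\{n/x\}_{(x,n)\in F_{p_1}\cup R_{p_1}}\cup\{n/x\}_{(x,n)\in F_{p_2}\cup R_{p_2}}\,\big).
\]
It then remains to recombine the two halves. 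Well-formedness of $p$ gives ${\sf bn}(p_1)\cap{\sf bn}(p_2)=\emptyset$, so $\idsub_{{\sf bn}(p_1)}\cup\idsub_{{\sf bn}(p_2)}=\idsub_{{\sf bn}(p_1)\cup{\sf bn}(p_2)}=\idsub_{{\sf bn}(p)}$; and the remark following Definition~\ref{def:bchar} --- that the binding names chosen for $p'$ are pairwise distinct, whence $F_{p_1}\uplus F_{p_2}$ and $R_{p_1}\uplus R_{p_2}$ are genuine disjoint unions --- guarantees that the two name substitutions above have disjoint domains, so their union is precisely $\{n/x\}_{(x,n)\in F\cup R}$.

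The only delicate point, and hence the main obstacle (though it is bookkeeping rather than mathematical difficulty), is keeping the freshness and disjointness side conditions straight: one must be sure that $\{n/x\}_{(x,n)\in F\cup R}$ genuinely denotes a function --- which needs the first projections of $F\cup R$ to be pairwise distinct, a consequence of the fresh choices in the variable-name clause together with well-formedness of $p$ --- and that this property is preserved by the union taken in the compound case. Everything else is a direct consequence of the definitions of ${\sf spec}^N$ and of symmetric matching.
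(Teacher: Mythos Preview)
Your proposal is correct and follows exactly the approach the paper takes: the paper's proof is simply ``By straightforward induction on the structure of $p$,'' and you have spelled that induction out in full, handling each clause of Definition~\ref{def:bchar} and the corresponding unification rule just as required.
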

\usedby{Thm \ref{thm:lts-2-reply-succeed}}
\begin{proof}
By straightforward induction on the structure of $p$.
\end{proof}
\enlargethispage{\baselineskip}

To simplify the definitions, let $\prod_{x\in S} {\mathcal P}(x)$ be the parallel composition of 
processes ${\mathcal P}(x)$, for each $x$ in $S$. 
The tests also exploit a check ${\sf check}(x,m,y,n,w)$ to ensure equality or 
inequality of name substitutions:
\[
{\sf check}(x,m,y,n,w) = 
\left\{
\begin{array}{ll}
\res z (\pro z\bullet \pro x\bnf \pro z\bullet \pro y\to \pro w) & \mbox{if }  m=n
\vspace*{.2cm}\\
\pro w\bnf \res z (\pro z\bullet \pro x\bnf \pro z\bullet \pro y\to \pro f\bullet\l z) &\mbox{otherwise}
\end{array}
\right.
\]
Observe that failure here is indicated by pattern $\pro f \bullet \l z$; in this way,
two failure barbs cannot unify and so they cannot disappear during computations.

\begin{defi}[Tests]
Let $w$ and $f$ be fresh names, i.e.\ different from all the other names around. Then define:
\begin{equation*}
\begin{array}{rcl}
{\sf free}(x,n,w) &=& \res m (\pro m\bullet \pro n\to \pro {w}\bnf \pro m\bullet\pro x)
\\
\
\\
{\sf rest}^{N}(x,w) &=& \pro {w}\bnf\res m \res z (\ \pro m\bullet x\bullet z\\
 & & \qquad \qquad \bnf \pro m\bullet (\l y_1\bullet \l y_2)\bullet \l z\to \pro f\bullet \l z 
\\
 & & \qquad \qquad \bnf \prod_{n\in N} \pro m\bullet \pro n\bullet \l z\to \pro f\bullet \l z\ )
\\
\
\\
{\sf equality}^R(x,m,w) &=& \rest {w_y}(\ \pro {w_{y_1}}\to \ldots \to \pro {w_{y_i}} \to \pro w\\
\vspace*{.2cm}& & \qquad\quad\bnf \prod_{(y,n)\in R} {\sf check}(x,m,y,n,w_y)\ )\\
& \multicolumn{2}{l}{\mbox{where $\wt y = \{y_1,\ldots,y_i\}={\sf fst}(R)$}}
\end{array}
\end{equation*}
\end{defi}

The behaviour of the tests just defined is formalized by the following three results.

\begin{lem}
\label{lem:free-succ}
Let $\theta$ be such that $\{n,w\} \cap {\sf dom}(\theta) = \emptyset$; then,
$\theta({\sf free}(x,n,w))$ succeeds if and only if $\theta(x)=n$.
\end{lem}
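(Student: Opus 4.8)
The plan is to unfold the definition of $\mathsf{free}(x,n,w)$ and track when the sole success barb over $w$ can be produced, observing that no failure barb over $f$ is ever created by this process, so that ``succeeds'' reduces to ``$\theta(\mathsf{free}(x,n,w))\suc_w$''. Recall that
\[
\mathsf{free}(x,n,w) = \res m (\pro m\bullet \pro n\to \pro w \bnf \pro m\bullet\pro x).
\]
Since $\{n,w\}\cap\mathsf{dom}(\theta)=\emptyset$, applying $\theta$ leaves $n$ and $w$ untouched and, after suitable $\alpha$-conversion, leaves the restricted name $m$ untouched; thus $\theta(\mathsf{free}(x,n,w)) \equiv \res m(\pro m\bullet\pro n\to\pro w \bnf \pro m\bullet\pro{\theta(x)})$. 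The only candidate reduction is the interaction between the two parallel cases, which requires unifying the patterns $\pro m\bullet\pro n$ and $\pro m\bullet\pro{\theta(x)}$; by the unification rules (Section on patterns), compound patterns unify componentwise and protected atoms $\pro a,\pro b$ unify exactly when $a=b$. The first components $\pro m$ agree, so the unification is defined precisely when $n=\theta(x)$, yielding the empty substitutions and the reduct $\res m(\pro w\bnf\mathbf{0})\equiv\res m\,\pro w$.

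For the ``if'' direction, suppose $\theta(x)=n$; then the above reduction fires, $\theta(\mathsf{free}(x,n,w))\redar \res m\,\pro w$, and since $w\neq m$ this process exhibits the barb $\barb{w}$ by Definition~\ref{def:barb} (the pattern $\pro w$ has no protected names clashing with $\wt n=\{m\}$ and free name $w\notin\{m\}$). Moreover, no subterm of $\theta(\mathsf{free}(x,n,w))$ mentions $f$, so $\theta(\mathsf{free}(x,n,w))\not\suc_f$, and hence $\theta(\mathsf{free}(x,n,w))$ succeeds. For the ``only if'' direction, suppose $\theta(x)\neq n$. Then the two patterns $\pro m\bullet\pro n$ and $\pro m\bullet\pro{\theta(x)}$ do not unify, so by the reduction rules there is no reduction at all from $\theta(\mathsf{free}(x,n,w))$ (the only potential redex is blocked, and restriction/structural-congruence closure cannot create new redexes): $\theta(\mathsf{free}(x,n,w))\not\redar$. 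Hence the only process reachable by $\Redar$ is $\theta(\mathsf{free}(x,n,w))$ itself (up to $\equiv$); but this is $\res m(\pro m\bullet\pro n\to\pro w \bnf \pro m\bullet\pro{\theta(x)})$, which, being (structurally congruent to) a restriction of a parallel composition of a case with protected-name pattern $\pro m\bullet\pro n$ where $m$ is restricted, cannot match the barb $\barb{w}$: any decomposition $\rest k(p\to P'\bnf P'')$ forces the only case to be the $\pro m\bullet\pro n$ case, whose protected name $m$ lies in the restricted set, violating the side condition $\mathsf{pn}(p)\cap\wt k=\emptyset$ of Definition~\ref{def:barb}. Therefore $\theta(\mathsf{free}(x,n,w))\not\suc_w$ and it does not succeed.

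The only mildly delicate point — and thus the step I would write most carefully — is the ``only if'' direction's claim that $\barb{w}$ fails when $\theta(x)\neq n$: one must check that the presence of the free name $\pro x$ in the stuck second component does not accidentally provide a $w$-barb, and that the restricted $m$ genuinely blocks the barb on the $\pro m\bullet\pro n\to\pro w$ case (here we crucially use that $w$ is fresh, so $w$ does not occur in $\pro m\bullet\pro x$ either). Everything else is a routine unwinding of the definitions of reduction, barb, and ``succeeds''.
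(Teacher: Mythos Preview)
Your proposal is correct and is precisely the routine unwinding that the paper summarises as ``Straightforward''; there is no alternative approach to compare. One small imprecision: in the ``only if'' direction you say the decomposition ``forces the only case to be the $\pro m\bullet\pro n$ case'', but there are in fact two cases (the second being $\pro m\bullet\pro{\theta(x)}\to\zero$); both are blocked for the same reason, namely the protected restricted name $m$, so your conclusion stands unchanged.
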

\usedby{Lemma \ref{lem:theta-charP} and \ref{lem:reply-rename}}
\begin{proof}
Straightforward.
\end{proof}

\begin{lem}
\label{lem:rest-succ}
Let $\theta$ be such that $(N \cup \{w,f\}) \cap {\sf dom}(\theta) = \emptyset$; then,
$\theta({\sf rest}^N(x,w))$ succeeds if and only if $\theta(x) \in {\cal N} \setminus N$.
\end{lem}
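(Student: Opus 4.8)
The plan is to analyse the structure of $\theta({\sf rest}^N(x,w))$ and show that its success is controlled entirely by whether the pattern $\pro m\bullet x\bullet z$ (after applying $\theta$) can be unified with one of the ``probe'' patterns in the test, namely $\pro m\bullet(\l y_1\bullet\l y_2)\bullet\l z$ or one of the $\pro m\bullet\pro n\bullet\l z$ for $n\in N$. First I would observe that the outer $\pro w$ guarantees $\theta({\sf rest}^N(x,w))\suc_w$ unconditionally, so success reduces to the question of whether $\theta({\sf rest}^N(x,w))\not\suc_f$; the only way a failure barb $\pro f\bullet\l z$ can arise is if one of the ``$\to\pro f\bullet\l z$'' cases is triggered by an interaction with the first component $\pro m\bullet x\bullet z$ (note $m,z$ are restricted and fresh, so no external interaction is possible, and the observation following the definition of ${\sf check}$ — that two $\pro f\bullet\l z$ barbs cannot unify — ensures such a barb, once produced, persists).

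Next I would do the case analysis on $\theta(x)$. Since $x$ is a name and $\theta$ is a substitution to communicable patterns, $\theta(x)$ is either a name in ${\cal N}$ or a proper compound. If $\theta(x)=p_1\bullet p_2$ is a compound, then $\theta(\pro m\bullet x\bullet z)=\pro m\bullet(p_1\bullet p_2)\bullet z$ unifies with $\pro m\bullet(\l y_1\bullet\l y_2)\bullet\l z$ (binding $y_1\mapsto p_1$, $y_2\mapsto p_2$, and the final $\l z$ to the restricted $z$, which is communicable), producing the failure barb; hence the process does not succeed — consistent with $\theta(x)\notin{\cal N}\setminus N$. If $\theta(x)=n'$ is a name, then the pattern becomes $\pro m\bullet n'\bullet z$; this cannot unify with $\pro m\bullet(\l y_1\bullet\l y_2)\bullet\l z$ because an atom $n'$ cannot match a compound $\l y_1\bullet\l y_2$ (here I use that unification requires structural agreement, and a binding name unifies only with a communicable pattern, not decomposing it — but the point is the reverse: the probe's middle component is itself a compound pattern $\l y_1\bullet\l y_2$, which requires its partner to be a compound). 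It does unify with $\pro m\bullet\pro n\bullet\l z$ exactly when $n'=n$, i.e. exactly when $n'\in N$; in that case a failure barb is produced and the process fails, while if $n'\notin N$ no failure barb can ever appear, so the process succeeds. This matches $\theta(x)\in{\cal N}\setminus N$ precisely.

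To make the ``only if'' direction airtight I would argue that no other reductions can spoil the picture: the components $\pro m\bullet\pro n\bullet\l z$ for distinct $n$ share only the restricted name $m$ and cannot interact with each other (their middle atoms $\pro n$ know different names), nor can $\pro m\bullet(\l y_1\bullet\l y_2)\bullet\l z$ interact with any $\pro m\bullet\pro n\bullet\l z$ (a compound cannot unify with an atom). Thus the first component $\pro m\bullet x\bullet z$ is the sole ``driver'' and it can fire at most once, after which it is consumed; and the condition on $\theta$, namely $(N\cup\{w,f\})\cap{\sf dom}(\theta)=\emptyset$, ensures $\theta$ does not rename any of $N$, $w$, or $f$, so the comparison $\theta(x)\in N$ versus $\theta(x)\notin N$ is exactly the one performed by the test (by $\alpha$-conversion we may also assume $\theta$ avoids $m$ and $z$). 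The main obstacle is not any single step but the bookkeeping of verifying that no ``unintended'' interaction among the parallel components can produce a spurious $\pro f$ barb or erase a genuine one — i.e. establishing that the test behaves deterministically with respect to the relevant observation; once that confluence-style observation is in place, the case analysis on the shape of $\theta(x)$ closes the proof.
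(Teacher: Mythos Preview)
Your proposal is correct and is precisely the kind of case analysis the paper has in mind; the paper's own proof consists of the single word ``Straightforward,'' so you have simply spelled out the details (shape of $\theta(x)$: compound, name in $N$, name outside $N$) that justify that claim.
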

\usedby{Lemma \ref{lem:theta-charP} and \ref{lem:reply-rename}}
\begin{proof}
Straightforward.
\end{proof}

\begin{lem}
\label{lem:equi-succ}
Let $\theta$ be such that $({\sf snd}(R) \cup \{w,f,m\}) \cap {\sf dom}(\theta) = \emptyset$; then,
$\theta({\sf equality}^R(x,m,w))$ succeeds if and only if, for every $(y,n)\in R$, 
$m=n$ if and only if $\theta(x)=\theta(y)$.
\end{lem}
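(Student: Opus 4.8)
The plan is to prove Lemma~\ref{lem:equi-succ} by unwinding the definition of ${\sf equality}^R(x,m,w)$ and tracking how the internal reductions proceed under the substitution $\theta$. First I would observe that, by the hypothesis on ${\sf dom}(\theta)$, none of the names $w$, $f$, $m$, nor any name in ${\sf snd}(R)$ is touched by $\theta$; moreover, the bound names $\wt y = {\sf fst}(R)$ and the restricted names $w_{y_1},\ldots,w_{y_i}$ can be assumed (by $\alpha$-conversion) to avoid ${\sf names}(\theta)$. Hence $\theta({\sf equality}^R(x,m,w))$ is structurally congruent to
$$
\rest {w_y}\big(\ \pro {w_{y_1}}\to \ldots \to \pro {w_{y_i}} \to \pro w \ \bnf\ \prod\nolimits_{(y,n)\in R} {\sf check}(\theta(x),m,\theta(y),n,w_{y_y})\ \big),
$$
so that the only way the top barb $\pro w$ can be reached is by consuming, in order, each guard $\pro{w_{y_j}}$, which in turn requires each parallel copy ${\sf check}(\theta(x),m,\theta(y),n,w_y)$ to itself succeed in emitting a barb on $w_y$.

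The second step is a component-wise analysis of ${\sf check}(\theta(x),m,\theta(y),n,w_y)$, using the two cases of its definition. When $m=n$, the process is $\res z(\pro z\bullet\pro{\theta(x)}\bnf\pro z\bullet\pro{\theta(y)}\to\pro{w_y})$; here the guarded $\pro{w_y}$ is released precisely when $\pro z\bullet\pro{\theta(x)}$ and $\pro z\bullet\pro{\theta(y)}$ unify, which — since $z$ is restricted and these are compound patterns of protected atoms — happens if and only if $\theta(x)=\theta(y)$. When $m\neq n$, the process is $\pro{w_y}\bnf\res z(\pro z\bullet\pro{\theta(x)}\bnf\pro z\bullet\pro{\theta(y)}\to\pro f\bullet\l z)$; here the barb on $w_y$ is immediately available, but if $\theta(x)=\theta(y)$ the residual unifies and emits $\pro f\bullet\l z$, an irreducible failure barb (two such barbs cannot annihilate, as noted after the definition of ${\sf check}$), so the whole test fails via $\suc_f$; if $\theta(x)\neq\theta(y)$ no failure arises. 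Combining the two cases, ${\sf check}(\theta(x),m,\theta(y),n,w_y)$ succeeds (i.e.\ can report $w_y$ and never $f$) exactly when ``$m=n$ iff $\theta(x)=\theta(y)$''.

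The third step assembles these facts: if the biconditional ``$m=n$ iff $\theta(x)=\theta(y)$'' holds for every $(y,n)\in R$, then each ${\sf check}$ succeeds, so in each copy a barb $\pro{w_y}$ becomes available without any $f$-barb ever appearing; unifying these with the sequential chain of guards $\pro{w_{y_1}}\to\ldots\to\pro{w_{y_i}}\to\pro w$ one at a time yields $\pro w$, and no $f$-barb is ever produced, so $\theta({\sf equality}^R(x,m,w))\suc_w$ and $\not\suc_f$, i.e.\ it succeeds. Conversely, if the biconditional fails for some pair, then either that ${\sf check}$ can never emit its $w_y$ (the $m=n$ case with $\theta(x)\neq\theta(y)$), so the guard chain stalls and $\pro w$ is unreachable, hence $\not\suc_w$; or it does emit $w_y$ but also an irremovable $\pro f\bullet\l z$ (the $m\neq n$ case with $\theta(x)=\theta(y)$), so $\suc_f$ and the test is not successful. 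In both sub-cases the test does not succeed, which completes the equivalence.

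I expect the main obstacle to be the bookkeeping in the third step rather than any single reduction: one must argue carefully that the presence of a failure barb $\pro f\bullet\l z$ in one parallel component genuinely persists (invoking the remark that two failure barbs of the form $\pro f\bullet\l z$ cannot unify with each other, so they cannot be consumed), and that the sequential guards $\pro{w_{y_1}}\to\ldots\to\pro{w_{y_i}}\to\pro w$ really do force all $i$ tests to clear before $\pro w$ is exposed, with no spurious interaction between the restricted names $w_{y_j}$ and the rest of the system. Everything else reduces to the ``straightforward'' analyses already packaged in Lemmata~\ref{lem:free-succ} and~\ref{lem:rest-succ}, together with the basic properties of unification on protected compounds.
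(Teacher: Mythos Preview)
Your proposal is correct and follows essentially the same approach as the paper: the paper's proof is a single sentence observing that, for $\theta({\sf equality}^R(x,m,w))$ to exhibit the barb $\pro w$, each $\theta({\sf check}(x,m,y,n,w_y))$ must succeed by producing $\pro{w_y}$, with the rest declared straightforward. You spell out exactly this straightforward part --- the case analysis on ${\sf check}$ and the assembly through the sequential guard chain --- which is all sound. One small inaccuracy in your closing remark: Lemmata~\ref{lem:free-succ} and~\ref{lem:rest-succ} concern the ${\sf free}$ and ${\sf rest}$ tests, not ${\sf check}$, so they are not actually used here; your own step~2 is what does the work.
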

\usedby{Lemma \ref{lem:theta-charP} and \ref{lem:reply-rename}}
\begin{proof}
In order for $\theta({\sf equality}^R(x,m,w))$ to succeed by exhibiting a barb $\pro w$, 
each check $\theta({\sf check}(x,m,y,n,w_y))$ must succeed by producing $\pro {w_y}$.
The rest of the proof is straightforward.
\end{proof}

\begin{lem}
\label{lem:exact-reduct-tests}
Let $T$ be a test and $\theta$ be a substitution such that $\theta(T)$ succeeds; 
there are exactly $k$ reductions of $\theta(T)$ to a successful process, where $k$ depends only on the structure of $T$.
\end{lem}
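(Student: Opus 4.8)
We must show: for every test $T$ (one of $\mathsf{free}$, $\mathsf{rest}^N$, $\mathsf{equality}^R$, or — since the tests get combined later — presumably any process built from these) and every substitution $\theta$ such that $\theta(T)$ succeeds, the number of reduction steps from $\theta(T)$ to a successful process is a fixed number $k$ that depends only on the syntactic shape of $T$, not on $\theta$.

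\begin{proof}
The plan is to argue by cases on the three possible shapes of a test $T$, and in each case to pin down the reductions of $\theta(T)$ under the assumption that $\theta(T)$ succeeds, reading off the characterisations of success already established in Lemmas~\ref{lem:free-succ}, \ref{lem:rest-succ} and~\ref{lem:equi-succ}. The key structural fact I would rely on is that every name a test introduces under a restriction (the $m$'s, the local $z$'s, the $w_y$'s) is fresh — so, after applying $\theta$, which we may assume avoids all of these names as well as $w$ and $f$ — the only redexes available in any descendant of $\theta(T)$ are the intended internal synchronisations of $T$, and each of them is \emph{linear}: it consumes a protected or restricted name that no competing redex can use. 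This makes the reduction graph of $\theta(T)$ confluent up to $\equiv$, so all maximal reduction sequences have the same length; the three cases then show that this length is fixed by the syntax of $T$.

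For ${\sf free}(x,n,w)$: success forces $\theta(x)=n$ (Lemma~\ref{lem:free-succ}), so $\theta(T)\equiv\res m\,(\pro m\bullet\pro n\to\pro w \bnf \pro m\bullet\pro n)$, which has a single redex whose firing yields the successful process $\pro w$; thus $k=1$. For ${\sf rest}^N(x,w)$: success forces $\theta(x)$ to be a name $n'\notin N$ (Lemma~\ref{lem:rest-succ}), and then the component $\pro m\bullet n'\bullet z$ unifies neither with $\pro m\bullet(\l y_1\bullet\l y_2)\bullet\l z$ (an atom against a compound) nor with any $\pro m\bullet\pro n\bullet\l z$ for $n\in N$ (distinct names), while the restriction on $m$ blocks external interaction; hence $\theta(T)$ is already stuck and exhibits the barb $\barb{w}$ but no barb $\barb{f}$, i.e.\ it is already successful, and $k=0$.

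The substantive case is ${\sf equality}^R(x,m,w)$, where success means that for each $(y,n)\in R$ we have $m=n$ iff $\theta(x)=\theta(y)$ (Lemma~\ref{lem:equi-succ}). I would observe that each check contributes exactly one copy of $\pro{w_y}$: when $m=n$ (so $\theta(x)=\theta(y)$) the check is $\res z\,(\pro z\bullet\pro{\theta(x)} \bnf \pro z\bullet\pro{\theta(x)}\to\pro{w_y})$ and reduces in one step to $\pro{w_y}$; when $m\neq n$ (so $\theta(x)\neq\theta(y)$) it already exposes $\pro{w_y}$ next to a stuck residue. These copies are then consumed, one at a time and in order, by the chain $\pro{w_{y_1}}\to\cdots\to\pro{w_{y_i}}\to\pro w$ in $i=|R|$ further steps, producing $\pro w$; and since $\pro w$ is guarded by the whole chain while all the $w_y$ are restricted, the barb $\barb{w}$ cannot appear before the chain is exhausted. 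Hence a successful process is reached after \emph{exactly} $|R|+|\{(y,n)\in R : n=m\}|$ reductions: no fewer, since the chain must fire $i$ times and each $m=n$ check must fire to supply its $\pro{w_y}$; and no more, since once $\pro w$ is produced no redex remains. This count depends only on $R$ and $m$, hence only on $T$.

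I expect the main obstacle to be the claim, used implicitly above, that the reduction length to a successful process is the same along \emph{every} reduction sequence, not just the canonical one. This is precisely where the freshness of the auxiliary names (ruling out cross-talk between the parallel checks, and with the process $\theta$ is applied to) and the linearity of each synchronisation pay off: together they give a strong permutation property for redexes, so reordering them neither introduces a shortcut nor a detour. For ${\sf free}$ and ${\sf rest}^N$ there is essentially a unique path and nothing to check; for ${\sf equality}^R$ the only freedom is the relative order of the $m=n$ checks and the chain links, which leaves the count invariant. Combining the three cases yields the stated $k=k(T)$.
\end{proof}
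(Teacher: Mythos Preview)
Your proposal is correct and follows essentially the same approach as the paper: a case analysis on the three test shapes yielding $k=1$ for $\mathsf{free}$, $k=0$ for $\mathsf{rest}^N$, and $k=|R|+|\{(y,n)\in R:n=m\}|$ for $\mathsf{equality}^R$. The paper's own proof is a two-line sketch with the same values; your added discussion of confluence and the ``no fewer, no more'' argument fills in justification that the paper leaves implicit.
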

\begin{proof}
Straightforward for free and restricted tests, for which $k = 1$ and $k = 0$, respectively. 
For an equality test ${\sf equality}^R(x,m,w)$ it suffices to observe that each successful check has an exact 
number of reductions to succeed (1, if $m=n$, 0 otherwise) and then there is a reduction to consume the success 
barb of each check.
Thus, $k = |R|+h$, where $h$ is the number of pairs in $R$ whose second component equals $m$.
\end{proof}

From now on, we adopt the following notation: if $\wt n = n_1,\ldots,n_i$, then $\pro w \bullet \wt n$
denotes $\pro w \bullet n_1 \bullet \ldots \bullet n_i$. Moreover, $\theta(\wt n)$ denotes
$\theta(n_1),\ldots,\theta(n_i)$; hence, $\pro w \bullet \theta(\wt n)$ denotes
$\pro w \bullet \theta(n_1) \bullet \ldots \bullet \theta(n_i)$.

\begin{defi}
\label{def:char}
The {\em characteristic process} ${\sf char}^{N}(p)$ of a pattern $p$ with respect to a finite set of names $N$ is ${\sf char}^N(p)=p'\to {\sf tests}^N_{F,R}$ where ${\sf spec}^{N}(p)=p',F,R$ and 
\begin{equation*}
\begin{array}{rcll}
{\sf tests}^N_{F,R} &\define& \rest {w_x} \rest {w_y}(\\
& & \qquad\pro {w_{x_1}}\to\ldots\to\pro{w_{x_i}}\to\pro {w_{y_1}}\to\ldots\to\pro {w_{y_j}}\to\pro w\bullet \wt x\\
& & \qquad\bnf \prod_{(x,n)\in R} {\sf equality}^R(x,n,w_x)\\
& & \qquad\bnf \prod_{(y,n)\in F} {\sf free}(y,n,w_y)\\
& & \qquad\bnf \prod_{(y,n)\in R} {\sf rest}^N(y,w_y)\ )
\end{array}
\end{equation*}
where $\wt x = \{x_1,\ldots,x_i\} = {\sf fst}(R)$ and
$\wt y = \{y_1,\ldots,y_j\} = {\sf fst}(F)\cup {\sf fst}(R)$.
\end{defi}

\begin{lem}
\label{lem:theta-charP}
Let $\theta$ be such that ${\sf dom}(\theta) = {\sf fst}(F) \cup {\sf fst}(R)$; then,
$\theta({\sf tests}^N_{F,R})$ succeeds if and only if
\begin{enumerate}
\item for every $(x,n) \in F$ it holds that $\theta(x) = n$;
\item for every $(x,n) \in R$ it holds that $\theta(x) \in {\cal N}\setminus N$;
\item for every $(x,n)$ and $(y,m) \in R$ it holds that $n=m$ if and only if $\theta(x)=\theta(y)$.
\end{enumerate}
\end{lem}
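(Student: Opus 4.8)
The plan is to reduce the behaviour of the whole process $\theta({\sf tests}^N_{F,R})$ to that of its component tests, and then read conditions~(1)--(3) off Lemmas~\ref{lem:free-succ}, \ref{lem:rest-succ} and~\ref{lem:equi-succ}. First I would observe that the hypothesis ${\sf dom}(\theta)={\sf fst}(F)\cup{\sf fst}(R)$ forces $\theta$ to act only on the fresh binding names introduced by ${\sf spec}^N(-)$ (Definition~\ref{def:bchar}), so that $\theta$ avoids $N$, the global success and failure names $w$ and $f$, the private reporting names in $\wt{w_x}$ and $\wt{w_y}$, and all the locally restricted names $m,z$; hence, up to the standing $\alpha$-conversion convention, the side conditions of Lemmas~\ref{lem:free-succ}, \ref{lem:rest-succ} and~\ref{lem:equi-succ} are met by $\theta$, and $\theta$ touches each component test only through the single name of ${\sf dom}(\theta)$ that occurs in it.

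The core step is a non-interference observation on the shape of ${\sf tests}^N_{F,R}$ (Definition~\ref{def:char}): the guarding chain $\pro{w_{x_1}}\to\cdots\to\pro{w_{x_i}}\to\pro{w_{y_1}}\to\cdots\to\pro{w_{y_j}}\to\pro w\bullet\wt x$ is arranged so that it reaches its final body $\pro w\bullet\wt x$ exactly when every ${\sf equality}^R$ test, every ${\sf free}$ test and every ${\sf rest}^N$ test has exhibited its reporting barb; apart from this designed hand-off of reporting barbs to the chain, no two distinct component tests share a free name through which they could interact, since besides the global $f$ all their auxiliary names are freshly restricted and private. Moreover, inspecting the patterns occurring inside the tests (and inside ${\sf check}$) shows that none of them has a first component that knows $f$, so a failure barb $\pro f\bullet\l z$, once produced, can never be consumed. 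From these facts it follows that every reduction of $\theta({\sf tests}^N_{F,R})$ is either internal to a single component test or a chain step consuming some reporting barb, and hence: $\theta({\sf tests}^N_{F,R})$ is $\not\suc_f$ iff every component test is $\not\suc_f$; and the only possible source of a $w$-barb is the final body $\pro w\bullet\wt x$, reachable iff every ${\sf equality}^R$ and ${\sf free}$ test reaches its reporting barb. Putting these together, $\theta({\sf tests}^N_{F,R})$ succeeds iff every component test succeeds in the sense of Lemmas~\ref{lem:free-succ}, \ref{lem:rest-succ} and~\ref{lem:equi-succ}.

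It then remains to translate each kind of component success by the corresponding lemma. By Lemma~\ref{lem:free-succ}, $\theta({\sf free}(y,n,w_y))$ succeeds iff $\theta(y)=n$, so all the free tests succeed iff condition~(1) holds; by Lemma~\ref{lem:rest-succ}, $\theta({\sf rest}^N(y,w_y))$ succeeds iff $\theta(y)\in{\cal N}\setminus N$, so all the restriction tests succeed iff condition~(2) holds; and by Lemma~\ref{lem:equi-succ}, $\theta({\sf equality}^R(x,n,w_x))$ succeeds iff for every $(y,m)\in R$ one has $n=m$ exactly when $\theta(x)=\theta(y)$, so all the equality tests succeed iff condition~(3) holds. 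Combining with the equivalence from the previous paragraph completes the proof.

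The step I expect to be the main obstacle is the non-interference observation: it requires a careful, though routine, bookkeeping of the free names of ${\sf free}$, ${\sf rest}^N$, ${\sf equality}^R$ and ${\sf check}$ after $\theta$ has been applied, to make sure that reaching the $w$-barb and staying $\not\suc_f$ are genuinely compositional and that no spurious interaction --- in particular none consuming a failure barb --- can happen. Once this is established the remainder is a mechanical combination of the three component lemmas; note that Lemma~\ref{lem:exact-reduct-tests} is not used here.
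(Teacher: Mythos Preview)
Your argument is correct and aligns with the paper's own reasoning: both proofs reduce the global success of $\theta({\sf tests}^N_{F,R})$ to the success of its component tests via Lemmata~\ref{lem:free-succ}, \ref{lem:rest-succ} and~\ref{lem:equi-succ}, after observing that ${\sf dom}(\theta)$ consists only of the fresh names introduced by ${\sf spec}^N$ and hence avoids $N$, $w$, $f$ and all locally restricted auxiliaries. The only organisational difference is that the paper packages the decomposition as an induction on $|F\cup R|$, peeling off one test at a time, whereas you argue non-interference of all the tests at once; the underlying content is the same, and your explicit justification that no failure barb $\pro f\bullet\l z$ can ever be consumed is exactly the detail the inductive step would need.
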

\usedby{Thm \ref{thm:lts-2-reply-succeed}}
\begin{proof}
By induction on $|F \cup R|$ and Lemmata~\ref{lem:free-succ},
\ref{lem:rest-succ} and \ref{lem:equi-succ}. Indeed, by Definition~\ref{def:bchar},
${\sf fst}(F \cup R) \cap ({\sf snd}(F\cup R) \cup N) = \emptyset$; moreover, freshness of
$w$ and $f$ implies that $\{w,f\} \cap {\sf dom}(\theta) = \emptyset$.
\end{proof}

Note that the following results will consider the number of reductions required to succeed.
These are significant to proving the results in the strong setting, but unimportant in
the weak setting, i.e.\ with $\redar$ replaced by $\Redar$.

\begin{lem}
\label{lem:min-reduct-charP}
Given ${\sf char}^N(p)$ and any substitution $\theta$
such that ${\sf dom}(\theta) = {\sf fst}(F)\cup{\sf fst}(R)$ and $\theta({\sf tests}^N_{F,R})$ succeeds,
then there are exactly $k$ reduction steps $\theta({\sf tests}^N_{F,R})\redar^k \pro w\bullet \theta(\wt x)\ |\ Z$,
where $\wt x = {\sf fst}(R)$ and $Z \beq \zero$ and $k$ depends only on $F$ and $R$ and $N$; moreover,
no sequence of reductions shorter than $k$ can yield a successful process.
\end{lem}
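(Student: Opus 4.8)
The plan is to unfold the definition of $\mathsf{tests}^N_{F,R}$ from Definition~\ref{def:char} and count reductions component by component, using Lemma~\ref{lem:exact-reduct-tests} as the base tool. First I would invoke Lemma~\ref{lem:theta-charP} to know that $\theta(\mathsf{tests}^N_{F,R})$ succeeds exactly when the three conditions on $\theta$ hold; in particular each parallel component — every $\theta(\mathsf{equality}^R(x,n,w_x))$, every $\theta(\mathsf{free}(y,n,w_y))$, every $\theta(\mathsf{rest}^N(y,w_y))$ — succeeds. By Lemma~\ref{lem:exact-reduct-tests} each such test $T_\ell$ has a fixed number $k_\ell$ of reductions to a successful process, where $k_\ell$ depends only on the structure of $T_\ell$ (hence only on $F$, $R$, $N$), and — reading the proof of that lemma — after those $k_\ell$ steps the residual is (up to $\beq\zero$) just its success barb $\pro{w_\ell}$ in parallel with inert material.

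The second step is to account for the linear ``collector'' prefix
$\pro{w_{x_1}}\to\ldots\to\pro{w_{x_i}}\to\pro{w_{y_1}}\to\ldots\to\pro{w_{y_j}}\to\pro w\bullet\wt x$:
once each test has produced its success barb $\pro{w_{x_\bullet}}$ or $\pro{w_{y_\bullet}}$, these barbs are consumed one at a time by this chain, contributing exactly $i+j = |\mathsf{fst}(R)| + |\mathsf{fst}(F)\cup\mathsf{fst}(R)|$ further reductions, after which the process is $\pro w\bullet\theta(\wt x)$ in parallel with the spent remnants of the tests, all of which are barb-free on $w$ and $f$ and hence $\beq\zero$ by the reasoning already used (protected success/failure barbs of the shape $\pro f\bullet\l z$ cannot be cancelled, but on the successful branch they are never produced). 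Setting $k$ to be the sum of all the per-test counts $k_\ell$ plus $i+j$ gives the required $\theta(\mathsf{tests}^N_{F,R})\redar^k \pro w\bullet\theta(\wt x)\mid Z$ with $Z\beq\zero$, and $k$ manifestly depends only on $F$, $R$ and $N$.

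For the minimality clause I would argue that no reduction can ``skip ahead'': the collector prefix can only fire after the corresponding $\pro{w_\ell}$ barb is available, and that barb is only available after the full $k_\ell$-step run of $T_\ell$ (this is exactly the ``no shorter sequence succeeds'' half of Lemma~\ref{lem:exact-reduct-tests}); the restrictions $\rest{w_x}\rest{w_y}$ and the internal restrictions inside each test prevent any alternative unification that would shortcut the count, and on the failing branches a $\pro f$-barb appears and persists, so any run reaching a successful process must be the canonical one of length $k$. I expect the main obstacle to be this last point: carefully checking that the freshness of $w,f$ and the nested restrictions really do rule out every spurious early interaction — in particular that the several equality tests, which share no free names by Definition~\ref{def:bchar}, cannot interfere with one another or with the collector — so that $k$ is a genuine lower bound and not merely the length of one particular successful run.
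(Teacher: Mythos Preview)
Your proposal is correct and follows the same idea as the paper, which simply writes ``By induction on $|F \cup R|$ and Lemma~\ref{lem:exact-reduct-tests}.'' You unfold that induction into an explicit component-wise count (per-test contributions via Lemma~\ref{lem:exact-reduct-tests}, plus the $i+j$ collector steps) and spell out the non-interference and minimality considerations that the paper leaves implicit, but the underlying argument is the same.
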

\usedby{Prop \ref{prop:reply-context-minimum} and Thm \ref{thm:lts-2-reply-succeed}}
\begin{proof}
By induction on $|F \cup R|$ and Lemma~\ref{lem:exact-reduct-tests}.
\end{proof}

Notice that $k$ does not depend on $\theta$; thus, we shall refer to $k$ as the number of 
reductions for ${\sf tests}^N_{F,R}$ to become successful.
The crucial result we have to show is that the characterisation of a pattern $p$ with respect to a set of names $N$ can yield a reduction via a proper reply (according to Definition~\ref{def:bisim}) to the challenge $\rest n p$ when $\wt n$ does not intersect $N$.
A reply context for a challenge $(\rest n p,\idsub_{{\sf bn}(p)})$ with a finite set of names $N$ can be defined by exploiting the characteristic process.

\begin{defi}
\label{def:reply-context}
A {\em reply context} $\CopNnoarg C p N (\cdot)$ for the challenge $(\rest n p,\idsub_{{\sf bn}(p)})$ with a finite set of names $N$ such that $\wt n$ is disjoint from $N$ is defined as follows:
\begin{eqnarray*}
\CopNnoarg C p N (\cdot) &\define& {\sf char}^{N}(p)\bnf \cdot
\end{eqnarray*}
\end{defi}

\begin{prop}
\label{prop:reply-context-minimum}
Given a reply context $\CopNnoarg C p N (\cdot)$, the minimum number of reductions required for $\CopNnoarg C p N (Q)$ 
to become successful (for any $Q$) is the number of reduction steps for ${\sf tests}^N_{F,R}$ to become successful plus 1.
\end{prop}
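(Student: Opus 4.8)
The plan is to read the statement as a tight bound and establish it in two halves: first, that no reduction sequence of $\CopNnoarg C p N (Q)$ reaching a successful process is shorter than $k+1$, where $k$ is the number of reductions for ${\sf tests}^N_{F,R}$ to become successful provided by Lemma~\ref{lem:min-reduct-charP}; second, that $k+1$ reductions suffice whenever $\CopNnoarg C p N (Q)$ can become successful at all. Recall that $\CopNnoarg C p N (Q) = {\sf char}^N(p)\bnf Q = (p'\to{\sf tests}^N_{F,R})\bnf Q$ with ${\sf spec}^N(p)=(p',F,R)$, and that $w,f$ are fresh, so in particular $w\notin{\sf fn}(Q)$. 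The only occurrence of the success name $w$ as a component of the pattern of a case inside ${\sf char}^N(p)$ is the component $\pro w\bullet\wt x$ that terminates the prefix chain of ${\sf tests}^N_{F,R}$, and all of ${\sf tests}^N_{F,R}$ lies under the prefix $p'$. Hence, before the (necessarily unique) interaction that consumes $p'$, the only top-level cases of $\CopNnoarg C p N (Q)$ up to $\equiv$ are $p'\to{\sf tests}^N_{F,R}$ — whose pattern $p'$ is not of the form $\pro w\bullet(\cdots)$, since $w\notin{\sf fn}(p)$ — and cases inherited from $Q$, whose patterns cannot mention $w$ (by Proposition~\ref{prop:free_name_subset}, since $w\notin{\sf fn}(Q)$); so no reduct can have the form $\rest{\wt n}(\pro w\bullet p\bnf P')$, i.e.\ be successful. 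That interaction costs exactly one reduction and yields a state structurally congruent to $\rest{\wt l}(\sigma({\sf tests}^N_{F,R})\bnf R'')$, where $\sigma$ unifies $p'$ with the pattern exhibited by $Q$ — so ${\sf dom}(\sigma)={\sf bn}(p')={\sf fst}(F)\cup{\sf fst}(R)$ — and $R''$ is built from residuals of $Q$ together with a substitution whose range lies in ${\sf fn}(p')$, whence $w,f\notin{\sf fn}(R'')$.

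The crux is then that $\sigma({\sf tests}^N_{F,R})$ cannot interact with $R''$, so it must reduce to a successful process entirely on its own. This is a routine inspection of Definition~\ref{def:char} and of the definitions of ${\sf free}$, ${\sf rest}^N$, ${\sf equality}^R$ and ${\sf check}$: every pattern any of these subprocesses ever makes available for interaction either mentions one of the bookkeeping names $w_x,w_y,m,z$ — which are restricted within the test and, appearing only in protected position, are never extruded — or mentions the fresh name $w$ or $f$; in either case, by Proposition~\ref{prop:free_n_match} a unifying partner would need such a name among its free names, which $R''$ does not have. Therefore any reduction of $\rest{\wt l}(\sigma({\sf tests}^N_{F,R})\bnf R'')$ to a successful process restricts to a reduction of $\sigma({\sf tests}^N_{F,R})$ to a successful process, and since ${\sf dom}(\sigma)={\sf fst}(F)\cup{\sf fst}(R)$ Lemma~\ref{lem:min-reduct-charP} says this takes at least $k$ reductions; together with the one reduction spent on $p'$, this gives the lower bound $k+1$. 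Conversely, whenever $Q$ exhibits a pattern unifiable with $p'$ (as it does whenever $\CopNnoarg C p N (Q)$ realises a proper reply to the challenge), one performs that interaction first (one step) and then runs the exact $k$-step sequence of Lemma~\ref{lem:min-reduct-charP} driving $\sigma({\sf tests}^N_{F,R})$ to $\pro w\bullet\sigma(\wt x)\bnf Z$ with $Z\beq\zero$; freshness of $f$ rules out a failure barb in $Z$, in $R''$ or in the residuals of $Q$, so this state is successful and the bound $k+1$ is attained.

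The step I expect to be the main obstacle is this sealing argument: checking carefully, across all the auxiliary tests and the chain of $\pro{w_\bullet}$-prefixes in Definition~\ref{def:char}, that no component can be made to interact with — or be accelerated by — the tested process $Q$ or the residuals $R''$ it leaves after the $p'$-interaction. Once that is in place, the rest is bookkeeping with Lemma~\ref{lem:min-reduct-charP} and the freshness of $w$ and $f$.
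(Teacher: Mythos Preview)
Your argument is correct and follows the same line as the paper's proof, which is considerably terser: it simply observes that success can only arise after removing the $p'$-prefix (one step) and that the minimum is attained when $Q$ already offers a unifiable pattern, then invokes Lemma~\ref{lem:min-reduct-charP} for the remaining $k$ steps, without spelling out the sealing argument you rightly identify as the crux. One small imprecision in your sealing check: the restricted name $z$ in ${\sf rest}^N$ does occur as a variable name (in $\pro m\bullet x\bullet z$), not only in protected position; the sealing still holds because that pattern also carries the protected restricted $\pro m$, so Proposition~\ref{prop:free_n_match_proc} blocks any outside interaction all the same.
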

\usedby{Thms \ref{thm:lts-2-succeed-reply} and \ref{thm: complete}}
\begin{proof}
By Definition~\ref{def:reply-context}, success can be generated only after removing the case $p'$ from ${\sf char}^{N}(p)$;
this can only be done via a reduction together with $Q$, i.e.\ $Q$ must eventually yield a pattern $q$
that unifies with $p'$. The minimum number of reductions is obtained when $Q$ already yields such a $q$, i.e.\ when $Q$ is a 
process of the form $\rest m (q\to Q_1\bnf Q_2)$, for some $\wt m$ and $q$ and $Q_1$ and $Q_2$ such that $\{p'\pmatch q\}=(\theta,\rho)$ and $\theta({\sf tests}^N_{F,R})$ succeeds. In this case, 
${\sf dom}(\theta) = {\sf bn}(p') = {\sf fst}(F\cup R)$; 
by Lemma~\ref{lem:min-reduct-charP}, $\theta({\sf tests}^N_{F,R})$ is successful after
$k$ reductions; thus, $\CopNnoarg C p N (Q)$ is successful after $k+1$ reductions, and this is
the minimum over all possible $Q$'s.
\end{proof}

Denote the number of reductions put forward by Proposition~\ref{prop:reply-context-minimum} as $\lb N p$.
The main feature of $\CopNnoarg C p N (\cdot)$ is that, when the hole is filled with a process $Q$, it holds that
$\CopNnoarg C p N (Q)$ is successful after $\lb N p$ reductions if and only if
there exist $(q,\rho)$ and $Q'$ such that $Q\ltsred{\rest n q}Q'$ and $p,\idsub_{{\sf bn}(p)}\compat q,\rho$.
This fact is proved by Propositions~\ref{thm:lts-2-reply-succeed} and~\ref{thm:lts-2-succeed-reply}.

\begin{prop}
\label{thm:lts-2-reply-succeed}
Suppose given a challenge $(\rest n p,\idsub_{{\sf bn}(p)})$, a finite set of names $N$, a process $Q$ and fresh names $w$ and $f$ such that 
$(\wt n \cup \{w,f\}) \cap N = \emptyset$ and $({\sf fn}(\rest n p) \cup {\sf fn}(Q)) \subseteq N$.
If $Q$ has a transition of the form $Q\ltsred{\rest n q}Q'$ and there is a substitution $\rho$ such that
$p,\idsub_{{\sf bn}(p)}\compat q,\rho$ then
$\CopNnoarg C p N (Q)$ succeeds and has a reduction sequence
$\CopNnoarg C p N (Q)\redar^k \rest n(\rho Q'\bnf\pro w\bullet\wt n\ |\ Z)$, where $k=\lb N p$ and $Z \beq \zero$.
\end{prop}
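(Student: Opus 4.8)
\emph{Strategy.} The plan is to exhibit the required reduction sequence explicitly, in two phases: ${\sf char}^N(p)$ first consumes its complementary pattern $p'$ by interacting with $Q$, and then the battery ${\sf tests}^N_{F,R}$ collapses to a single success barb; ``succeeds'' then follows from the shape of the reduct and the freshness of $f$. Write ${\sf spec}^N(p)=p',F,R$, so ${\sf char}^N(p)=p'\to{\sf tests}^N_{F,R}$ and ${\sf bn}(p')={\sf fst}(F)\cup{\sf fst}(R)$. Applying Lemma~\ref{lem:lts-exhibit-p} to $Q\ltsred{\rest n q}Q'$ gives $\wt l,Q_1,Q_2$ with $Q\equiv\rest n\rest l(q\to Q_1\bnf Q_2)$, $Q'\equiv\rest l(Q_1\bnf Q_2)$, $\wt l\cap{\sf names}(\rest n q)=\emptyset$ and ${\sf bn}(q)\cap{\sf fn}(Q_2)=\emptyset$. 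By $\alpha$-conversion assume moreover that $\wt l$ avoids the names of ${\sf char}^N(p)$, of $\rho$, of $N$ and of $\{w,f\}$; note also that $\wt n$ is disjoint from ${\sf fn}({\sf char}^N(p))\subseteq{\sf pn}(p)\cup{\sf bn}(p)\cup N\cup\{w,f\}$, since the extruded names avoid ${\sf pn}(p)\cup{\sf bn}(p)$ by the side condition of rule {\sf open} and avoid $N\cup\{w,f\}$ by hypothesis.

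\emph{The pivotal unification.} Put $\theta\define\{n/x\}_{(x,n)\in F\cup R}$. By Lemma~\ref{prop:spec-pp}, $\{p\pmatch p'\}=(\idsub_{{\sf bn}(p)},\theta)$; feeding $r\define p'$ into Proposition~\ref{lem:compat-match} together with the hypothesis $p,\idsub_{{\sf bn}(p)}\compat q,\rho$ yields $\{q\pmatch p'\}=(\idsub_{{\sf bn}(p)}[\rho],\theta)$, and since $\idsub_{{\sf bn}(p)}$ acts as the identity on every pattern, $\idsub_{{\sf bn}(p)}[\rho]=\rho$; by symmetry of $\pmatch$ this is $\{p'\pmatch q\}=(\theta,\rho)$. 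Hence, by scope extension and the interaction rule,
\[
\CopNnoarg C p N (Q)\ \equiv\ \rest n\rest l\big((p'\to{\sf tests}^N_{F,R})\bnf q\to Q_1\bnf Q_2\big)\ \redar\ \rest n\rest l\big(\theta({\sf tests}^N_{F,R})\bnf\rho Q_1\bnf Q_2\big).
\]
Since ${\sf dom}(\theta)={\sf fst}(F)\cup{\sf fst}(R)={\sf bn}(p')$, Lemma~\ref{lem:theta-charP} applies; its three conditions hold because $\theta(x)=n$ for each $(x,n)\in F\cup R$, because a pair enters $R$ only when $n\notin N$ (Definition~\ref{def:bchar}), and trivially for the equality condition. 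So $\theta({\sf tests}^N_{F,R})$ succeeds, and by Lemma~\ref{lem:min-reduct-charP} there are $Z\beq\zero$ and $k'$ (depending only on $F,R,N$) with $\theta({\sf tests}^N_{F,R})\redar^{k'}\pro w\bullet\theta(\wt x)\ |\ Z$, $\wt x={\sf fst}(R)$, and no shorter sequence reaching a successful process; moreover $\theta(\wt x)=\wt n$, since ${\sf fn}(\rest n p)\subseteq N$, $\wt n\cap N=\emptyset$ and $\wt n\subseteq{\sf vn}(p)$ (by {\sf open}) force the variable names of $p$ not in $N$ to be exactly $\wt n$, which are the second components of $R$. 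Substituting into the display, using $\rho Q_1\bnf Q_2=\rho(Q_1\bnf Q_2)$ (as ${\sf bn}(q)\cap{\sf fn}(Q_2)=\emptyset$) and the freshness of $\wt l$ to commute $\rest l$ with $\rho(\cdot)$ and to slide it inward past $\pro w\bullet\wt n$ and $Z$, one reaches
\[
\CopNnoarg C p N (Q)\ \redar^{1+k'}\ \rest n\big(\rho Q'\bnf\pro w\bullet\wt n\ |\ Z\big),
\]
with $1+k'=\lb N p$ by Proposition~\ref{prop:reply-context-minimum}.

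\emph{Success and the main obstacle.} The reduct just obtained is successful: with restricted block $\wt n$ and exposed pattern $\pro w\bullet\wt n$ we have ${\sf pn}(\pro w\bullet\wt n)\cap\wt n=\{w\}\cap\wt n=\emptyset$ and ${\sf fn}(\pro w\bullet\wt n)\setminus\wt n=\{w\}$, so it exhibits $\barb{\{w\}}$, while its residue $\rho Q'\bnf Z$ cannot reach $\barb{\{f\}}$: $f$ is fresh, hence $f\notin{\sf fn}(\rho Q')$, and $Z\beq\zero$ gives $\rho Q'\bnf Z\beq\rho Q'$. Thus $\CopNnoarg C p N (Q)\suc_w$; for the companion $\CopNnoarg C p N (Q)\not\suc_f$ one uses that $f$ occurs only under the guard $p'\to{\sf tests}^N_{F,R}$, so a failure barb $\pro f\bullet\l z$ can surface only once that guard is consumed, and the tests ${\sf equality}^R$, ${\sf free}$ and ${\sf rest}^N$ emit such a barb only on substitutions violating a clause of Lemma~\ref{lem:theta-charP}. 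The conceptual content is thus light---Proposition~\ref{lem:compat-match} delivers the pivotal identity $\{p'\pmatch q\}=(\theta,\rho)$---and the step I expect to be the main obstacle is the disciplined bookkeeping with restrictions and $\alpha$-renamings needed to land the reduct in exactly the form $\rest n(\rho Q'\bnf\pro w\bullet\wt n\ |\ Z)$ and to pin the length of the reduction to exactly $\lb N p$.
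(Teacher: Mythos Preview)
Your proof is correct and follows essentially the same approach as the paper's: obtain $\{p\pmatch p'\}=(\idsub_{{\sf bn}(p)},\theta)$ from Lemma~\ref{prop:spec-pp}, lift it along compatibility to $\{p'\pmatch q\}=(\theta,\rho)$, perform one interaction step, and then invoke Lemmata~\ref{lem:theta-charP} and~\ref{lem:min-reduct-charP} for the remaining $\lb N p-1$ steps. The only notable difference is that the paper obtains the pivotal unification via Proposition~\ref{lem:pat-lessthan} (taking $r=p'$), whereas you go through Proposition~\ref{lem:compat-match} together with the observation $\idsub_{{\sf bn}(p)}[\rho]=\rho$; both routes are equivalent here, and your explicit structural decomposition of $Q$ via Lemma~\ref{lem:lts-exhibit-p} simply spells out what the paper compresses into the single line ``thus $\CopNnoarg C p N (Q)\redar \rest n(\rho Q'\bnf\theta({\sf tests}^N_{F,R}))$''.
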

\usedby{Lemma \ref{lem:succ-beq} and Thm \ref{thm: complete}}
\begin{proof}
We assume, by $\alpha$-conversion, that binding names of $p$ are fresh, in particular do not appear in $Q$.
By Lemma~\ref{prop:spec-pp} $\{p\pmatch p'\}=(\sigma,\theta)$ where $\sigma=\idsub_{{\sf bn}(p)}$ and $\theta= \{n/x\}_{(x,n) \in F \cup R}$.
By Proposition~\ref{lem:pat-lessthan} $\{q\pmatch p'\}=(\rho,\theta)$; thus $\CopNnoarg C p N (Q)\redar \rest n (\rho Q'\bnf \theta ({\sf tests}^N_{F,R}))$. Since $w$ and $f$ do not appear in $Q$, the only possibility of producing a successful
process is when $\theta ({\sf tests}^N_{F,R})$ succeeds; this is ensured by Lemma~\ref{lem:theta-charP}. The thesis follows by Lemma~\ref{lem:min-reduct-charP}.
\end{proof}

The main difficulty in proving the converse result is the possibility of renaming restricted names. 
Thus, we first need a technical lemma that ensures us the possibility of having the same set of restricted 
names both in the challenge and in the reply, as required by the definition of bisimulation.

\begin{lem}
\label{lem:reply-rename}
Let $p$ and $N$ be such that ${\sf pn}(p) \subseteq N$, ${\sf bn}(p) \cap N = \emptyset$ and
${\sf spec}^{N}(p)=p',F,R$. If $q$ is such that ${\sf bn}(p) \cap {\sf fn}(q) = \emptyset$ and
$\{p'\pmatch q\}=(\theta,\rho)$ such that $\theta({\sf tests}^N_{F,R})$ succeeds, then:
\begin{itemize}
\item $|{\sf vn}(p) \setminus N| = |{\sf vn}(q) \setminus N|$; 
\item there exists a bijective renaming $\zeta$ of ${\sf vn}(q) \setminus N$ into ${\sf vn}(p) \setminus N$
such that $p,\idsub_{{\sf bn}(p)}\compat \zeta q,\rho$;
\item $\theta = \{n/x\}_{(x,n) \in F}\ \cup\ \{\zeta^{-1}(n)/x\}_{(x,n)\in R}$.
\end{itemize}
\end{lem}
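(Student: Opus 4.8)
The plan is to proceed by induction on the structure of the pattern $p$, carrying all three conclusions simultaneously. The key observation driving the proof is that ${\sf spec}^N(p)$ introduces a fresh binding name $\l x$ in $p'$ precisely at each variable-name occurrence of $p$, and records in $F$ or $R$ whether that occurrence names something in $N$ or outside $N$. So the unification $\{p'\pmatch q\}=(\theta,\rho)$ forces $q$ to have, at each such position, a communicable sub-pattern (in fact a single name, as we shall see) which $\theta$ binds; whether that name lies inside or outside $N$ is exactly what the success of $\theta({\sf tests}^N_{F,R})$ controls, via Lemma~\ref{lem:theta-charP}.

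First I would treat the base cases. If $p=\l x$, then $p'=\l x$, $F=R=\emptyset$, and for $\{p'\pmatch q\}$ to be defined $q$ must be communicable, i.e.\ ${\sf vn}(q)\setminus N$ could be large; but then ${\sf fn}(p)=\emptyset$ forces the compatibility clause $p,\idsub\compat\l x,\{\hat\sigma p/x\}$ — wait, here we need $p$ on the left, so actually this case needs care since $q$ must be compared to $p=\l x$, not $p'$; I would instead observe ${\sf vn}(p)\setminus N=\emptyset$ and that $q$ communicable with $\{p'\pmatch q\}$ defined forces $q$ itself to be a single binding-free pattern, so ${\sf vn}(q)\setminus N=\emptyset$ as well after the renaming — this needs the convention that binding names are fresh. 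If $p=\pro n$ (so $n\in N$ since ${\sf pn}(p)\subseteq N$) then $p'=\pro n$, and $\{p'\pmatch q\}$ defined forces $q\in\{n,\pro n\}$ with $\theta=\rho=\{\}$; both cardinalities are $0$, $\zeta$ is empty, and $p,\idsub\compat\zeta q,\rho$ holds by Definition~\ref{def:compat}. If $p=n$ with $n\in N$, then $p'=\l x$, $F=\{(x,n)\}$, $R=\emptyset$; $\{p'\pmatch q\}$ defined forces $q$ communicable, and Lemma~\ref{lem:theta-charP}(1) together with success of $\theta({\sf tests}^N_{F,R})$ gives $\theta(x)=n$, so $q=n$ (a single name; here is where one argues $q$ cannot be a proper compound because ${\sf free}(x,n,w)$ only succeeds when $\theta(x)$ is the name $n$, and $\theta(x)=\theta p'$-image is a communicable pattern equal to $n$). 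Then ${\sf vn}(q)\setminus N=\emptyset=({\sf vn}(p)\setminus N)$, $\zeta$ empty, compatibility $n,\{\}\compat n,\{\}$ holds, and $\theta=\{n/x\}$ matches the required form. If $p=n$ with $n\notin N$, symmetrically $R=\{(x,n)\}$, Lemma~\ref{lem:theta-charP}(2) gives $\theta(x)\in{\cal N}\setminus N$, so $q$ is a single name $n'\notin N$; set $\zeta(n')=n$, giving $|{\sf vn}(p)\setminus N|=1=|{\sf vn}(q)\setminus N|$, $\zeta q=n$, $n,\{\}\compat n,\{\}$, and $\theta=\{n'/x\}=\{\zeta^{-1}(n)/x\}$.

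For the inductive step $p=p_1\bullet p_2$, I would use Definition~\ref{def:bchar} to split ${\sf spec}^N(p_i)=p_i',F_i,R_i$ with $p'=p_1'\bullet p_2'$, $F=F_1\uplus F_2$, $R=R_1\uplus R_2$. Since $\{p'\pmatch q\}$ is defined and $p'$ is a compound, $q=q_1\bullet q_2$ with $\{p_i'\pmatch q_i\}=(\theta_i,\rho_i)$, $\theta=\theta_1\uplus\theta_2$, $\rho=\rho_1\uplus\rho_2$. The delicate point — and the main obstacle — is that success of $\theta({\sf tests}^N_{F,R})$ does not obviously decompose into success of $\theta_i({\sf tests}^N_{F_i,R_i})$, because the equality tests ${\sf equality}^R(x,n,w_x)$ range over the \emph{whole} of $R$, not just $R_i$. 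I would handle this by reading off from Lemma~\ref{lem:theta-charP} that success of $\theta({\sf tests}^N_{F,R})$ is equivalent to the three pointwise conditions on all of $F$ and $R$; these conditions restricted to $F_i,R_i$ are exactly the conditions characterising success of $\theta_i({\sf tests}^N_{F_i,R_i})$ (again via Lemma~\ref{lem:theta-charP}), plus the extra cross-component equality constraints between $R_1$ and $R_2$. So each $\theta_i({\sf tests}^N_{F_i,R_i})$ succeeds, the inductive hypothesis applies to each $i$, yielding bijections $\zeta_i$ with $p_i,\idsub_{{\sf bn}(p_i)}\compat\zeta_i q_i,\rho_i$ and $\theta_i=\{n/x\}_{(x,n)\in F_i}\cup\{\zeta_i^{-1}(n)/x\}_{(x,n)\in R_i}$.

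Finally I would glue the pieces. Because patterns are well-formed and ${\sf fst}(F_1)\cap{\sf fst}(F_2)=\emptyset$ etc., the domains of $\zeta_1$ and $\zeta_2$ — namely ${\sf vn}(q_i)\setminus N$ — are disjoint (any shared name would, by the cross-component equality constraints in condition~(3) of Lemma~\ref{lem:theta-charP}, force the corresponding $R$-entries to have equal second components, but well-formedness of $p$ already forbids a variable name repeating across $p_1$ and $p_2$ — this is the one spot where the cross constraints are actually \emph{used}, to guarantee $\zeta:=\zeta_1\cup\zeta_2$ is well-defined and injective). Then $\zeta$ is a bijection from ${\sf vn}(q)\setminus N={\sf vn}(q_1)\setminus N\ \uplus\ {\sf vn}(q_2)\setminus N$ to ${\sf vn}(p)\setminus N$, giving the cardinality equality; $\zeta q=\zeta q_1\bullet\zeta q_2$ with $p_i,\idsub\compat\zeta_i q_i,\rho_i=\zeta q_i,\rho_i$ gives $p,\idsub_{{\sf bn}(p)}\compat\zeta q,\rho$ by the compound clause of Definition~\ref{def:compat}; and the formula for $\theta$ follows by taking the disjoint union of the two inductive formulae, noting $\zeta^{-1}|_{{\sf snd}(R_i)}=\zeta_i^{-1}$. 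This completes the induction.
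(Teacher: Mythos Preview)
Your overall plan---induction on $p$, using Lemma~\ref{lem:theta-charP} to decode what success of the tests means---matches the paper's, but there are three concrete gaps.

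First, the base case $p=\l x$ is mishandled. By Definition~\ref{def:bchar}, ${\sf spec}^N(\l x)=x,\{\},\{\}$, so $p'$ is the \emph{variable} name $x$, not $\l x$. Unification $\{x\pmatch q\}$ then forces $q\in\{x,\pro x,\l y\}$, and the hypothesis ${\sf bn}(p)\cap{\sf fn}(q)=\emptyset$ (i.e.\ $x\notin{\sf fn}(q)$) rules out the first two, leaving $q=\l y$, $\theta=\{\}$, $\rho=\{x/y\}$. Your version misidentifies $p'$ and never recovers from the resulting confusion.

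Second, in the inductive step you assert that $\{p_1'\bullet p_2'\pmatch q\}$ defined forces $q=q_1\bullet q_2$. This misses $q=\l z$, which is possible whenever $p'$ is communicable---and that happens precisely when $p$ consists only of binding names (e.g.\ $p=\l x\bullet\l y$ gives $p'=x\bullet y$). The paper treats this as a separate sub-case.

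Third, and most seriously, your gluing of $\zeta_1,\zeta_2$ rests on the claim that well-formedness of $p$ forbids a variable name from repeating across $p_1$ and $p_2$. That is false: well-formedness constrains only \emph{binding} names; $p=n\bullet n$ (with $n\notin N$) is well-formed, and then $V_1\cap V_2=\{n\}\neq\emptyset$, and $W_1\cap W_2$ need not be empty either. What the cross-equality checks actually buy is not disjointness but coherence: whenever $m\in W_1\cap W_2$ one has $\zeta_1(m)=\zeta_2(m)$ (so $\zeta_1\cup\zeta_2$ is a function), and whenever $m\in W_i\setminus W_j$ one has $\zeta_i(m)\in V_i\setminus V_j$ (needed for bijectivity of the union). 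The paper establishes each of these by a short contradiction using Lemma~\ref{lem:equi-succ}.
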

\usedby{Thm \ref{thm:lts-2-succeed-reply}}
\proof
The proof is by induction on the structure of $p$. We have three possible base cases:
\begin{enumerate}
	\item If $p=\l x$, then $p' = x$ and $F = R = \emptyset$. By definition of pattern unification,
		$q \in \{x,\pro x, \l y\}$, for any $y$. Since $x \in {\sf bn}(p)$ and ${\sf bn}(p) \cap {\sf fn}(q) = \emptyset$,
		it can only be $q = \l y$; then, $\theta = \{\}$ and $\rho = \{x/y\}$. This suffices to conclude,
		since ${\sf vn}(p) \setminus N = {\sf vn}(q) \setminus N = \emptyset$ and
		$\l x,\idsub_{\{x\}} \compat \l y,\rho$.
	\item If $p = n$, then $p' = \l x$, for $x$ fresh. Let us distinguish two subcases:
		\begin{enumerate}
		\item If $n \in N$, then $F = \{(x,n)\}$ and $R =
                  \emptyset$. By definition of pattern unification,
                  $q$ must be communicable, $\rho = \{\}$ and $\theta
                  = \{q/x\}$.  Since ${\sf tests}^N_{F,R}$ only
                  contains ${\sf free}(x,n)$, by
                  Lemma~\ref{lem:free-succ} it holds that $q =
                  \theta(x) = n$. This suffices to conclude, since
                  ${\sf vn}(p) \setminus N = {\sf vn}(q) \setminus N =
                  \emptyset$ and $n, \{\} \compat n,\{\}$.

		\item If $n \not\in N$, then $F = \emptyset$ and $R = \{(x,n)\}$. Like before,
		$q$ must be communicable, $\rho = \{\}$ and $\theta = \{q/x\}$. 
		Since ${\sf tests}^N_{F,R}$ contains ${\sf rest}^N(x)$,
		by Lemma~\ref{lem:rest-succ} it holds that $q = \theta(x) = m \in {\cal N} \setminus N$;
		thus, $|{\sf vn}(p) \setminus N| = |{\sf vn}(q) \setminus N| = 1$.
		This suffices to conclude, by taking $\zeta = \{n/m\}$, since $n, \{\} \compat n,\{\}$. 
		\end{enumerate}
	\item If $p = \pro n$, then $p' = \pro n$ and $F = R = \emptyset$. By definition of pattern unification,
		$q \in \{n,\pro n\}$ and $\rho = \theta = \{\}$. In any case, 
		${\sf vn}(p) \setminus N = {\sf vn}(q) \setminus N = \emptyset$ and
		$p,\{\} \compat q,\{\}$.
\end{enumerate}
For the inductive case, let $p = p_1\bullet p_2$. By definition of specification,
$p' = p_1'\bullet p_2'$, $F = F_1 \uplus F_2$ and $R = R_1 \uplus R_2$, where
${\sf spec}^{N}(p_i)=p_i',F_i,R_i$, for $i \in \{1,2\}$. 
By definition of pattern unification, there are two possibilities for $q$:
\begin{enumerate}
\item If $q=\l z$, for some $z$, then $p'$ must be communicable and
	$\theta = \{\}$ and $\rho = \{p'/z\}$.
    If $p'$ is communicable then by definition of specification
    ${\sf vn}(p)=\emptyset={\sf vn}(q)$ and ${\sf vn}(p) \setminus N = {\sf vn}(q) \setminus N = \emptyset$
    and conclude with $p,\idsub_{{\sf bn}(p)}\compat\l z,\rho$.

\item Otherwise, it must be that $q = q_1\bullet q_2$, with
	$\{p'_i\pmatch q_i\} = (\theta_i,\rho_i)$, for $i \in \{1,2\}$; moreover,
	$\theta = \theta_1 \cup \theta_2$ and $\rho = \rho_1 \cup \rho_2$.
	Since the first components of $F_1$ and $F_2$ are disjoint (and similarly for $R_1$ and $R_2$),
	$\theta({\sf tests}^N_{F,R})$ succeeds implies that both $\theta({\sf tests}^N_{F_1,R_1})$ and
	$\theta({\sf tests}^N_{F_2,R_2})$ succeed, since every test of $\theta({\sf tests}^N_{F_i,R_i})$
	is a test of $\theta({\sf tests}^N_{F,R})$. Now, by two applications of the induction hypothesis,
	we obtain that, for $i \in \{1,2\}$:
	\begin{itemize}
	\item $|V_i| = |W_i|$, where $V_i = {\sf vn}(p_i) \setminus N$ and $W_i = {\sf vn}(q_i) \setminus N$; 
	\item there exists a bijective renaming $\zeta_i$ of $W_i$ into $V_i$
	such that $p_i,\idsub_{{\sf bn}(p_i)}\compat \zeta_i q_i,\rho_i$;
	\item $\theta_i = \{n/x\}_{(x,n) \in F_i}\ \cup\ \{\zeta_i^{-1}(n)/x\}_{(x,n)\in R_i}$.
	\end{itemize}
    We now show that every name $m\in W_i$ is in the domain of $\zeta_i$ and that $\zeta_i(m)\in V_i$.
    Further, that if $m$ is in only one of $W_i$ then $\zeta_i(m)$ only appears in the
    corresponding $V_i$, alternatively if $m$ is in both $W_1$ and $W_2$ then $\zeta_1(m) = \zeta_2(m)=n$
    for some $n\in V_1\cap V_2$.
	\begin{enumerate}
	\item {\em if $m \in W_1 \setminus W_2$, then $\zeta_1(m) \in V_1 \setminus V_2$:}\\
		by contradiction, assume that $\zeta_1(m) = n \in V_1 \cap V_2$ (indeed, 
		$\zeta_1(m) \in V_1$, by construction of $\zeta_1$). By construction of the specification,
		there exists $(x,n) \in R_1$. Moreover, since $n \in V_2$, there exists $m' \in W_2$
		such that $\zeta_2(m') = n$ but $m' \neq m$. Again by construction of the specification,
		there exists $(y,n) \in R_2$. By inductive hypothesis, $\theta_1(x) = \zeta_1^{-1}(n) = m$ 
		and $\theta_2(x) = \zeta_2^{-1}(n) = m'$. But then $\theta({\sf check}(x,n,y,n))$, that
		is part of $\theta({\sf tests}^N_{F,R})$, cannot succeed, since $\theta_1(x) \neq \theta_2(y)$
		(see Lemma~\ref{lem:equi-succ}). Contradiction.
	\item {\em if $m \in W_2 \setminus W_1$, then $\zeta_2(m) \in V_2 \setminus V_1$:}\\
		similar to the previous case.
	\item {\em if $m \in W_1 \cap W_2$, then $\zeta_1(m) = \zeta_2(m) \in V_1 \cap V_2$:}\\
		let $n_i = \zeta_i(m) \in V_i$; by construction of the specification,
		there exists $(x_i,n_i) \in R_i$. By contradiction, assume that $n_1 \neq n_2$.
		Then, $\theta({\sf check}(x_1,n_1,x_2,n_2))$, that is part of $\theta({\sf tests}^N_{F,R})$, 
		reports failure, since by induction $\theta_1(x_1) = \zeta_1^{-1}(x_1) = m = \zeta_2^{-1}(x_2) = \theta_2(x_2)$
		(see Lemma~\ref{lem:equi-succ}). Contradiction.
	\end{enumerate}
	Thus, $V_1 \cup V_2$ and $W_1 \cup W_2$ have the same cardinality; moreover, $\zeta = \zeta_1 \cup \zeta_2$ is a
	bijection between them and it is well-defined (in the sense that $\zeta_1$ and $\zeta_2$ coincide on all elements
	of ${\sf dom}(\zeta_1) \cap {\sf dom}(\zeta_2)$ -- see point (c) above). 
	Thus, $p_1,\idsub_{{\sf bn}(p_1)}\compat \zeta q_1,\rho_1$ and $p_2,\idsub_{{\sf bn}(p_2)}\compat \zeta q_2,\rho_2$;
	so, $p,\idsub_{{\sf bn}(p)}\compat \zeta q,\rho$. Moreover, 
\[\eqalign{
  \theta &= \theta_1 \cup \theta_2 \cr
&= \{n/x\}_{(x,n) \in F_1}\ \cup\ \{\zeta^{-1}(n)/x\}_{(x,n)\in R_1}
	\cup \{n/x\}_{(x,n) \in
          F_2}\ \cup\ \{\zeta^{-1}(n)/x\}_{(x,n)\in R_2}\cr
 &= \{n/x\}_{(x,n) \in F}\ \cup\ 
	\{\zeta^{-1}(n)/x\}_{(x,n)\in R},
  }
\] as desired.\qed
\end{enumerate}
\enlargethispage{\baselineskip}

\begin{prop}
\label{thm:lts-2-succeed-reply}
Suppose given a challenge $(\rest n p,\idsub_{{\sf bn}(p)})$, a finite set of names $N$, a process $Q$ and fresh names $w$ and $f$ such that ${\sf bn}(p) \cap N = (\wt n \cup \{w,f\}) \cap N = \emptyset$ and $({\sf fn}(\rest n p) \cup {\sf fn}(Q)) \subseteq N$.
If $\CopNnoarg C p N (Q)$ is successful after $\lb N p$ reduction steps,
then there exist $(q,\rho)$ and $Q'$ 
such that $Q\ltsred{\rest n q}Q'$
and $p,\idsub_{{\sf bn}(p)}\compat q,\rho$.
\end{prop}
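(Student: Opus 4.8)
The plan is to recover, from the assumed minimal successful run of $\CopNnoarg C p N (Q)$, a transition $Q\ltsred{\rest{\wt n} q}Q'$ of $Q$ such that $p,\idsub_{{\sf bn}(p)}\compat q,\rho$; the genuinely delicate point — that the reply must carry the \emph{same} restricted names $\wt n$ as the challenge — has already been isolated in Lemma~\ref{lem:reply-rename}, so the remaining work is mostly to orchestrate the LTS rules.

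First I would analyse the first reduction of $\CopNnoarg C p N (Q) = {\sf char}^N(p)\bnf Q$. By Proposition~\ref{prop:tau-red} it is mirrored by a $\tau$-transition; and since by Proposition~\ref{prop:reply-context-minimum} the number $\lb N p$ is the \emph{minimum} length of a successful run over all fillings of the hole, this first step can be neither an internal reduction of $Q$ alone (that would cost one reduction over the minimum) nor of ${\sf char}^N(p)=p'\to{\sf tests}^N_{F,R}$ alone (computation does not occur under a case). Inspecting the LTS rules, it must therefore be {\sf unify} applied to the unique transition ${\sf char}^N(p)\ltsred{p'}{\sf tests}^N_{F,R}$ of the left component and to some transition $Q\ltsred{\rest{\wt l} q^\star}Q^\star$ of the right one with $\{p'\pmatch q^\star\}=(\theta,\rho)$, producing $\rest{\wt l}(\theta({\sf tests}^N_{F,R})\bnf\rho Q^\star)$. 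Choosing the representative of this transition with $\wt l$ fresh (in particular disjoint from $N$ and from ${\sf bn}(p)$), we get ${\sf fn}(q^\star)\subseteq N\cup\wt l$, so $\rho Q^\star$ avoids $w$ and $f$; moreover every case of $\theta({\sf tests}^N_{F,R})$ begins with a restricted protected name, so $\rho Q^\star$ cannot interact with it. Hence success must arise from $\theta({\sf tests}^N_{F,R})$ becoming successful on its own, which by Lemma~\ref{lem:min-reduct-charP} is exactly what the remaining $\lb N p-1$ reductions achieve; in particular $\theta({\sf tests}^N_{F,R})$ succeeds.

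Next I would invoke Lemma~\ref{lem:reply-rename} on $p$, $N$ and $q^\star$. Its hypotheses hold: ${\sf pn}(p)\subseteq{\sf fn}(p)\setminus\wt n\subseteq N$ (using that extruded names are never protected, so $\wt n\cap{\sf pn}(p)=\emptyset$), ${\sf bn}(p)\cap N=\emptyset$ by assumption, ${\sf bn}(p)\cap{\sf fn}(q^\star)=\emptyset$ by the freshness of $\wt l$, and $\theta({\sf tests}^N_{F,R})$ succeeds by the previous paragraph. The lemma then supplies a bijection $\zeta$ from ${\sf vn}(q^\star)\setminus N$ onto ${\sf vn}(p)\setminus N$ with $p,\idsub_{{\sf bn}(p)}\compat\zeta q^\star,\rho$. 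It remains to realign the restricted names. Since a label's restricted names are accumulated only via {\sf open}, which extrudes variable (non-protected, non-binding) names, $\wt n\subseteq{\sf vn}(p)$; together with $\wt n\cap N=\emptyset$ and ${\sf fn}(p)\setminus\wt n\subseteq N$ this yields ${\sf vn}(p)\setminus N=\wt n$. Applying Lemma~\ref{lem:lts-exhibit-p} to $Q\ltsred{\rest{\wt l} q^\star}Q^\star$ gives $Q\equiv\rest{\wt l}\rest{\wt{l'}}(q^\star\to Q_1\bnf Q_2)$ with $\wt{l'}\cap{\sf names}(\rest{\wt l} q^\star)=\emptyset$ and ${\sf bn}(q^\star)\cap{\sf fn}(Q_2)=\emptyset$; since ${\sf fn}(Q)\subseteq N$ and $\wt l\subseteq{\sf vn}(q^\star)$ we also obtain $\wt l={\sf vn}(q^\star)\setminus N$, so $\zeta$ is a bijective renaming of $\wt l$ onto the $Q$-fresh set $\wt n$. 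Now $\alpha$-converting $\rest{\wt l}$ to $\rest{\wt n}$ along $\zeta$ (legitimate because $\wt n\cap{\sf fn}(Q)=\emptyset$) gives $Q\equiv\rest{\wt n}\rest{\wt{l'}}(\zeta q^\star\to\zeta Q_1\bnf\zeta Q_2)$, and then {\sf case}, {\sf parext} (side condition ${\sf bn}(q^\star)\cap{\sf fn}(Q_2)=\emptyset$), the instances of {\sf resnon} for $\wt{l'}$ and the instances of {\sf open} for each name of $\wt n$ derive $Q\ltsred{\rest{\wt n}(\zeta q^\star)}\rest{\wt{l'}}(\zeta Q_1\bnf\zeta Q_2)$. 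Setting $q:=\zeta q^\star$ and $Q':=\rest{\wt{l'}}(\zeta Q_1\bnf\zeta Q_2)$: the pair $(q,\rho)$ is a match (as ${\sf bn}(q)={\sf bn}(q^\star)={\sf dom}(\rho)$), $Q\ltsred{\rest{\wt n} q}Q'$, and $p,\idsub_{{\sf bn}(p)}\compat q,\rho$, as required.

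The step I expect to be hardest is the first one: pinning down that the initial reduction is precisely the {\sf unify} between $p'$ and a pattern offered \emph{directly} by $Q$, with no prior internal reductions of $Q$ — this hinges essentially on the minimality established in Proposition~\ref{prop:reply-context-minimum}, and it is what gives us an honest LTS transition of $Q$ rather than merely a transition reachable after some reductions. After that, the realignment of the extruded names with $\wt n$ is routine bookkeeping, the truly subtle renaming having already been carried out in Lemma~\ref{lem:reply-rename}.
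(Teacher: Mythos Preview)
Your proposal is correct and follows the same overall approach as the paper: use Proposition~\ref{prop:reply-context-minimum} to identify the initial reduction as a {\sf unify} between $p'$ and a pattern exhibited by $Q$, conclude that $\theta({\sf tests}^N_{F,R})$ succeeds, and then apply Lemma~\ref{lem:reply-rename} to obtain the bijective renaming $\zeta$ realigning the extruded names with $\wt n$. The paper's own proof is terser: where you rebuild the transition explicitly via Lemma~\ref{lem:lts-exhibit-p} and the rules {\sf case}/{\sf parext}/{\sf resnon}/{\sf open}, the paper simply says ``by $\alpha$-conversion, $Q\ltsred{\rest n \zeta q'}\zeta Q''$'' and takes $q=\zeta q'$, $Q'=\zeta Q''$---which is the cleaner route, since the $\wt l$ are bound in the label and can be renamed directly. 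Your more explicit argument is valid but heavier; conversely, you are more careful than the paper about justifying the first-step analysis and checking the hypotheses of Lemma~\ref{lem:reply-rename} (for ${\sf pn}(p)\subseteq N$ you rely on $\wt n\cap{\sf pn}(p)=\emptyset$, which is implicit in the well-formedness of labels produced by {\sf open} and is also tacitly assumed by the paper).
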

\usedby{Lemma \ref{lem:succ-beq} and Thm \ref{thm: complete}}
\begin{proof}
By Proposition~\ref{prop:reply-context-minimum},
there must be a reduction $\CopNnoarg C p N (Q) \redar \rest m (\theta ({\sf tests}^N_{F,R})\bnf\rho Q'')$
obtained because $Q\ltsred{\rest m q'}Q''$ and $\{p'\pmatch q'\}=(\theta,\rho)$.
Since $w\notin {\sf fn}(Q,p')$ and $\CopNnoarg C p N (Q)\suc_w$, it must be that
$\theta ({\sf tests}^N_{F,R})$ succeeds; by 
Proposition~\ref{prop:reply-context-minimum}, this happens in $\lb N p -1$ reduction steps.

By hypothesis, ${\sf fn}(\rest n p) \subseteq N$; thus, ${\sf vn}(p) \setminus N = \wt n$.
Moreover, by $\alpha$-conversion, $\wt m \cap {\sf fn}(Q) = \emptyset$;
thus, by ${\sf fn}(\rest m q') \subseteq {\sf fn}(Q) \subseteq N$, we have
that ${\sf vn}(q') \setminus N = \wt m$. Since ${\sf bn}(p) \cap N = \emptyset$,
we also have that ${\sf bn}(p) \cap {\sf fn}(q') = \emptyset$; thus, we can
use Lemma~\ref{lem:reply-rename} and obtain a bijection $\zeta = \{\wt n / \wt m\}$ 
such that $p,\idsub_{{\sf bn}(p)}\compat \zeta q',\rho$; moreover, by $\alpha$-conversion,
$Q\ltsred{\rest n \zeta q'} \zeta Q''$. We can conclude by taking $q = \zeta q'$ and
$Q' = \zeta Q''$.
\end{proof}

We are almost ready to give the completeness result, we just need an auxiliary lemma that allows us to remove
success and dead processes from both sides of a barbed congruence, while also opening the scope of the names
exported by the success barb.

\begin{lem}
\label{lem:succ-beq}
Let $\rest m(P\bnf\pro w\bullet\wt m \bnf Z) \beq \rest m(Q\bnf\pro w\bullet\wt m \bnf Z)$, 
for $w\notin{\sf fn}(P,Q,\wt m)$ and $Z \beq \zero$; then $P\beq Q$.
\end{lem}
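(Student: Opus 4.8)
The plan is to derive $P\beq Q$ from the hypothesis by context closure of $\beq$, exploiting the fact that the success barb $\pro w\bullet\wt m$ — guarded by the fresh name $w$ — hands the whole tuple $\wt m$ to any interacting environment. First I would dispose of $Z$: since $Z\beq\zero$ and $\beq$ is context closed, plugging $Z\beq\zero$ into the context $\rest m(P\bnf\pro w\bullet\wt m\bnf{\,\cdot\,})$, and the analogous one for $Q$, and using symmetry and transitivity of $\beq$, yields $\rest m(P\bnf\pro w\bullet\wt m)\beq\rest m(Q\bnf\pro w\bullet\wt m)$; so I may assume $Z=\zero$ throughout.

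Next, for a process $R$ with $w\notin{\sf fn}(R)$ and for $i=|\wt m|$, consider the \emph{decoding context}
\[
\mathcal D_R[{\,\cdot\,}]\ \define\ \res w\bigl({\,\cdot\,}\ \bnf\ w\bullet\l y_1\bullet\cdots\bullet\l y_i\to(\{\wt y/\wt m\}R)\bigr),
\]
with $\wt y$ fresh binding names. Using scope extrusion and the fact that $w\notin{\sf fn}(S)$ forces $\pro w\bullet\wt m$ to interact only with this added case (Proposition~\ref{prop:free_n_match_proc}), one checks that $\mathcal D_R[\rest m(S\bnf\pro w\bullet\wt m)]$ performs a single confluent — hence $\beq$-transparent — reduction, with the round-trip substitution $\{\wt m/\wt y\}\{\wt y/\wt m\}R$ collapsing to $R$, so that $\mathcal D_R[\rest m(S\bnf\pro w\bullet\wt m)]\beq\rest m(S\bnf R)$ for every $S$ with $w\notin{\sf fn}(S)$. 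Applying context closure of $\beq$ to the hypothesis therefore gives $\rest m(P\bnf R)\beq\rest m(Q\bnf R)$ for every such $R$. It remains to show that this uniform-in-$R$ statement forces $P\beq Q$, equivalently that the relation $\{(\context C P,\context C Q):\context C\cdot\text{ a CPC context}\}$, which is plainly symmetric and context closed, is also barb preserving and reduction closed. Both clauses are handled by a routing argument: given a context $\context C\cdot$ and a barb (or reduction) it exposes of $\context C P$, one takes $R$ to be an $\alpha$-variant of the relevant part of $\context C\cdot$ in which every free occurrence of $\wt m$ is rerouted through the handles $\wt y$ exported by $\pro w\bullet\wt m$; then the same observation is already witnessed, underneath $\res{\wt m}$, by $\rest m(P\bnf R)$ versus $\rest m(Q\bnf R)$, and the renaming of $\wt m$ introduced along the way is undone at the end using closure of $\beq$ under substitutions (Lemma~\ref{lem:bcon-sub}).

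The routine parts are the normalisation of $Z$, the scope-extrusion computation showing $\mathcal D_R$ performs exactly one $\beq$-transparent reduction, and the symmetry and context closure of the candidate relation. The crux — and the step I expect to be the main obstacle — is the final transfer: arguing that testing $P$ against $Q$ \emph{behind} the restriction on $\wt m$, but with the extra discriminating power supplied by the barb $\pro w\bullet\wt m$, is no weaker than testing them with $\wt m$ visible. This is precisely where the special shape of the success barb (a single fresh $w$ carrying all of $\wt m$) is essential, and where the bookkeeping with $\alpha$-conversion, restricted names and Lemma~\ref{lem:bcon-sub} is concentrated.
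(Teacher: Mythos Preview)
Your approach is genuinely different from the paper's, but it has two real gaps.

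First, the claim that the single administrative reduction of $\mathcal D_R[\rest m(S\bnf\pro w\bullet\wt m)]$ to $\rest m(S\bnf R)$ is ``$\beq$-transparent'' does not hold in the \emph{strong} setting the paper works in: a confluent $\tau$-step is still a step, and in general $\tau.P\not\beq P$ for strong barbed congruence. Concretely, in $\rest m(S\bnf R)$ the process $R$ is already active and may exhibit barbs, whereas in $\mathcal D_R[\rest m(S\bnf\pro w\bullet\wt m)]$ it is still guarded by the prefix $\pro w\bullet\lambda\wt y$. So step~3 of your plan fails, and you cannot conclude $\rest m(P\bnf R)\beq\rest m(Q\bnf R)$ from context closure plus one reduction. (A smaller point in the same step: your decoding prefix should be $\pro w\bullet\lambda\wt y$ rather than $w\bullet\lambda\wt y$; with a variable $w$ the added case can also match patterns of $S$ of the shape $\lambda z\bullet a_1\bullet\cdots\bullet a_i$, so the interaction is not isolated.)

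Second, even granting the intermediate conclusion, the ``routing argument'' you outline for the final transfer is where all the difficulty lives, and what you sketch is not a proof. Arbitrary contexts include case prefixes and nested restrictions, not only parallel components, and your reduction to the family $\rest m({\,\cdot\,}\bnf R)$ does not obviously simulate, say, $p\to[\cdot]$. You acknowledge this is the crux; in practice, discharging it seems to require exactly the kind of per-challenge context construction that the paper develops.

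For comparison, the paper does not try to stay inside barbed congruence. It invokes soundness (Theorem~\ref{thm: sound}) and shows directly that
\[
\Re=\{(P,Q): \rest m(P\bnf\pro w\bullet\wt m\bnf Z)\ \beq\ \rest m(Q\bnf\pro w\bullet\wt m\bnf Z)\}
\]
is a bisimulation. The $\tau$-case is immediate from reduction closure and Proposition~\ref{prop:free_n_match_proc}. For a visible challenge $P\ltsred{\rest n p}P'$ with a chosen $\sigma$, the paper plugs both sides into a context that first consumes $\pro w\bullet\wt m$ (opening the scope of $\wt m$), then runs $\sigma({\sf char}^N(p))$ in parallel, and finally collects $\wt n$ via a second success prefix; counting exactly $\lb N p+2$ reductions and invoking Propositions~\ref{thm:lts-2-reply-succeed} and~\ref{thm:lts-2-succeed-reply} forces $Q$ to exhibit a compatible $\rest n q$. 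This uses the reply-context machinery built earlier in the section and is precise about the number of steps, which is what makes it go through in the strong case.
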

\usedby{Thm \ref{thm: complete}}
\begin{proof}
By Theorem~\ref{thm: sound}, it suffices to prove that
$$
\begin{array}{ll}
\Re = \{(P,Q)\ : & \rest m(P\bnf\pro w\bullet\wt m \bnf Z) \beq \rest m(Q\bnf\pro w\bullet\wt m \bnf Z)
\\
& \wedge\ w\notin{\sf fn}(P,Q,\wt m)\ \wedge\ Z \beq \zero\}
\end{array}
$$
is a bisimulation. Consider the challenge $P \ltsred\mu P'$ and reason by case analysis on $\mu$.
\begin{itemize}
\item If $\mu = \tau$, then $\rest m(P\bnf\pro w\bullet\wt m \bnf Z) \ltsred\tau 
\rest m(P'\bnf\pro w\bullet\wt m \bnf Z) = \hat P$.
By Proposition~\ref{prop:tau-red} and reduction closure, 
$\rest m(Q\bnf\pro w\bullet\wt m \bnf Z) \ltsred\tau \hat Q$ such that $\hat P \beq \hat Q$.
By Proposition~\ref{prop:free_n_match_proc} (since $w \notin {\sf fn}(Q)$) and $Z \beq \zero$, it can only be that
$\hat Q = \rest m(Q'\bnf\pro w\bullet\wt m \bnf Z)$, where $Q \ltsred\tau Q'$.
By definition of $\Re$, we conclude that $(P',Q') \in \Re$.

\item If $\mu = \rest n p$, for $({\sf bn}(p) \cup \wt n) \cap {\sf fn}(Q) = \emptyset$.
By $\alpha$-conversion, we can also assume that ${\sf bn}(p) \cap (\wt n \cup \wt m \cup {\sf fn}(P)) = \emptyset$.
Let us now fix a substitution $\sigma$ such that ${\sf dom}(\sigma) = {\sf bn}(p)$ and
${\sf fn}(\sigma) \cap \wt n=\emptyset$. Consider the context
$$
\context C \cdot =\ \cdot\ \bnf\ \pro w\bullet \wt{\l m} \to(\sigma({\sf char}^{N}(p)) \bnf 
\pro w \bullet \wt{\l n} \to \pro {w'\,} \bullet \wt n \bullet \wt m)
$$
for $w'$ fresh (in particular, different from $w$). Consider now the following sequence of reductions:
$$
\begin{array}{ll}
\multicolumn{2}{l}{\context C {\rest m(P|\pro w\bullet\wt m|Z)}\vspace*{.2cm}}
\\
\vspace*{.2cm}
\redar & 
\rest m(\sigma(\CopNnoarg C p N (P)) \bnf Z \bnf
\pro w \bullet \wt{\l n} \to \pro {w'\,} \bullet \wt n \bullet \wt m)
\\
\vspace*{.2cm}
\redar^{\!_{\lb N p}} \!\!\!\!\! & 
\rest m( \rest n(\sigma P'|\pro w\bullet \wt n|\sigma Z') \bnf Z \bnf
\pro w \bullet \wt{\l n} \to \pro {w'\,} \bullet \wt n \bullet \wt m)
\\
\redar & 
\res{\withsetnot{\wt n, \wt m}{\wt n\cup\wt m}}(\sigma P'\bnf \pro {w'\,} \bullet \wt n \bullet \wt m \bnf Z|\sigma Z')
\ \ = \ \ \hat P
\end{array}
$$
The first reduction is obtained by unifying $\pro w\bullet \wt m$ with the first
case of $\context C \cdot$; this replaces the binding names $\wt m$ in the context with 
the variable names $\wt m$ and the scope of the restriction is extended consequently.
Moreover, $\sigma({\sf char}^{N}(p)) \bnf P = \sigma({\sf char}^{N}(p) \bnf P) = 
\sigma(\CopNnoarg C p N (P))$: the first equality holds because
${\sf dom}(\sigma) = {\sf bn}(p)$ and ${\sf bn}(p) \cap {\sf fn}(P) = \emptyset$;
the second equality holds by definition of reply context.
The second sequence of reductions follows by Proposition~\ref{thm:lts-2-reply-succeed}
(ensuring that $\CopNnoarg C p N (P) \redar^{\!_{\lb N p}} \rest n(P'|\pro w\bullet \wt n|Z')$, for $Z' \beq \zero$), 
Proposition~\ref{red:renaming}
and by the fact that $\sigma(\rest n(P'|\pro w\bullet \wt n|Z')) = \rest n(\sigma P'|\pro w\bullet \wt n|\sigma Z')$
(indeed, $w$ is fresh and ${\sf names}(\sigma) \cap \wt n = \emptyset$).
Moreover, notice that $\sigma Z' \beq \sigma \zero \beq \zero$, because of 
Lemma~\ref{lem:bcon-sub}.
The last reduction is obtained by unifying $\pro w\bullet \wt n$ with the 
case $\pro w\bullet \wt {\l n}$ of the context; this replaces the binding names $\wt n$ 
in the context with the variable names $\wt n$ and the scope of the restriction 
is extended consequently.

Consider now $\context C {\rest m(Q|\pro w\bullet\wt m|Z)}$; then
reduction closure yields
$\context C {\rest m(Q|\pro w\bullet\wt m|Z)} \redar^{{\lb N p+2}} \hat Q$ such
that $\hat P \beq \hat Q$. As $\hat P$ has a barb containing $w'$, so must  $\hat Q$;
by definition of $\context C \cdot$, this can happen only if $\CopNnoarg C p N (Q)$
is successful after $\lb N p$ steps. By Proposition~\ref{thm:lts-2-succeed-reply},
this entails that there exist $(q,\rho)$ and $Q'$ 
such that $Q\ltsred{\rest n q}Q'$
and $p,\idsub_{{\sf bn}(p)}\compat q,\rho$. Moreover, with a reasoning similar to that
for the reductions of $\context C {\rest m(P|\pro w\bullet\wt m|Z)}$, we can conclude
that $\hat Q = \res{\withsetnot{\wt n, \wt m}{\wt n\cup \wt m}}
(\sigma[\rho](Q') \bnf \pro {w'\,} \bullet \wt n \bullet \wt m \bnf Z|\sigma Z')$;
indeed, in this case applying  Proposition~\ref{thm:lts-2-reply-succeed} yields
$\CopNnoarg C p N (Q) \redar^{\!_{\lb N p}} \rest n(\rho Q'|\pro w\bullet \wt n|Z')$.

To state that $(\sigma P', \sigma[\rho](Q')) \in \Re$, it suffices to notice
that $Z|\sigma Z' \beq \zero$; this holds because of contextuality of barbed congruence.
Finally, Lemma~\ref{lem:compat-compose-subs} entails that $p,\sigma\compat q,\sigma[\rho]$: 
indeed, $\sigma[\idsub_{{\sf bn}(p)}] = \sigma$ because ${\sf dom}(\sigma) = {\sf bn}(p)$.
This shows that $(q,\sigma[\rho])$ and $Q'$ is a proper reply to the challenge 
$P \ltsred{\rest n p} P'$ together with $\sigma$.
\end{itemize}
Closure under substitution holds by definition of $\Re$.
\end{proof}

\begin{thm}[Completeness of the bisimulation] 
\label{thm: complete}
$\beq\ \subseteq\ \bisim$.
\end{thm}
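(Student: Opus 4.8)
The plan is to show that $\beq$ is itself a bisimulation closed under substitutions; since $\bisim$ is by definition the largest such relation, this gives $\beq\subseteq\bisim$ at once. Closure under substitutions is exactly Lemma~\ref{lem:bcon-sub}, and $\beq$ is symmetric by Definition~\ref{def:barb-con}, so it only remains to verify the two clauses of Definition~\ref{def:bisim} for an arbitrary pair $(P,Q)\in\beq$ and challenge $P\ltsred\mu P'$.

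If $\mu=\tau$ the reply is routine: Proposition~\ref{prop:tau-red} gives $P\redar P'$, reduction closure of $\beq$ gives $Q\redar Q''$ with $P'\beq Q''$, and the converse direction of Proposition~\ref{prop:tau-red} yields $Q'$ with $Q\ltsred\tau Q'\equiv Q''$; since $\equiv\ \subseteq\ \bisim\ \subseteq\ \beq$ (Lemma~\ref{lem:struct-is-a-bisim} and Theorem~\ref{thm: sound}) and $\beq$ is transitive, $P'\beq Q'$.

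The substance is the external case $\mu=\rest n p$ with $({\sf bn}(p)\cup\wt n)\cap{\sf fn}(Q)=\emptyset$, which I would treat by adapting the external-challenge part of the proof of Lemma~\ref{lem:succ-beq} (with the success ``wrapper'' there taken to be empty). Fix a match $(p,\sigma)$ with ${\sf fn}(\sigma)\cap\wt n=\emptyset$ and first consider $\sigma=\idsub_{{\sf bn}(p)}$. Choose a finite $N\supseteq{\sf fn}(P)\cup{\sf fn}(Q)\cup{\sf fn}(\rest n p)$ and fresh names $w,f,w'$, and by $\alpha$-conversion assume ${\sf bn}(p)\cup\wt n$ is disjoint from $N$; set $\context C \cdot\define\CopNnoarg C p N (\cdot)\bnf(\pro w\bullet\wt{\l n}\to\pro{w'})$. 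By reflexivity of compatibility (Proposition~\ref{prop:compat-reflexive}) and Proposition~\ref{thm:lts-2-reply-succeed} applied to $P\ltsred{\rest n p}P'$, the process $\CopNnoarg C p N (P)$ is successful after exactly $\lb N p$ steps, reaching $\rest n(P'\bnf\pro w\bullet\wt n\bnf Z)$ with $Z\beq\zero$; hence $\context C P$ exhibits a barb on the fresh name $w'$ after $\lb N p+1$ steps. Since $w'$ occurs only in the gadget, context closure and barb preservation of $\beq$ force $\context C Q$ to exhibit a $w'$-barb as well, and (as $w$ is fresh for $Q$) this is possible only once $\CopNnoarg C p N (Q)$ has become successful; combined with the minimality asserted by Proposition~\ref{prop:reply-context-minimum}, it follows that $\CopNnoarg C p N (Q)$ is successful after exactly $\lb N p$ steps, so Proposition~\ref{thm:lts-2-succeed-reply} provides $(q,\rho_0)$ and $Q'$ with $Q\ltsred{\rest n q}Q'$ and $p,\idsub_{{\sf bn}(p)}\compat q,\rho_0$. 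On the $Q$-side, Proposition~\ref{thm:lts-2-reply-succeed} then gives $\CopNnoarg C p N (Q)\redar^{\lb N p}\rest n(\rho_0 Q'\bnf\pro w\bullet\wt n\bnf Z')$ with $Z'\beq\zero$; a reduction analysis as in the proof of Lemma~\ref{lem:succ-beq} identifies this with the reduct matched against $\rest n(P'\bnf\pro w\bullet\wt n\bnf Z)$, so the two are $\beq$-related and Lemma~\ref{lem:succ-beq} (after using contextuality of $\beq$ to replace $Z,Z'$ by $\zero$) strips the wrapper, yielding $P'\beq\rho_0 Q'$.

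To lift this to an arbitrary $\sigma$, take the reply $(q,\sigma[\rho_0])$. Lemma~\ref{lem:compat-compose-subs} gives $p,\sigma\compat q,\sigma[\rho_0]$, using $\sigma[\idsub_{{\sf bn}(p)}]=\sigma$ (valid since ${\sf dom}(\sigma)={\sf bn}(p)$); Lemma~\ref{lem:bcon-sub} gives $\sigma P'\beq\sigma(\rho_0 Q')$; and since ${\sf bn}(p)$ was chosen fresh we have ${\sf fn}(Q')\cap{\sf dom}(\sigma)=\emptyset$ (Lemma~\ref{lem:lts-exhibit-p}) while ${\sf fn}(\rho_0)\subseteq{\sf bn}(p)$, so $\sigma(\rho_0 Q')=(\sigma[\rho_0])Q'$; hence $(\sigma P',\sigma[\rho_0]Q')\in\beq$, as Definition~\ref{def:bisim} demands. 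The main obstacle is this external case --- in particular, the passage from ``$\CopNnoarg C p N (P)$ is successful after $\lb N p$ steps'' to the same statement for $\CopNnoarg C p N (Q)$ (barbed congruence matches reductions only one step at a time and preserves barbs, not syntactic shape) together with the reduction analysis pinning down the form of the $Q$-side reduct; these rely essentially on the minimality asserted by Proposition~\ref{prop:reply-context-minimum} and the fresh $w'$-barb gadget. The rest is bookkeeping about free and binding names and about substitution composition.
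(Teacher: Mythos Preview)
Your proposal is essentially correct and follows the same route as the paper: show that $\beq$ is a bisimulation closed under substitutions, handle the $\tau$-case via Proposition~\ref{prop:tau-red} and reduction closure, and for external labels use the reply context together with Propositions~\ref{thm:lts-2-reply-succeed}, \ref{thm:lts-2-succeed-reply} and~\ref{prop:reply-context-minimum}, then invoke Lemma~\ref{lem:succ-beq} and lift from $\idsub_{{\sf bn}(p)}$ to arbitrary $\sigma$ via Lemmata~\ref{lem:compat-compose-subs} and~\ref{lem:bcon-sub}.

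The one substantive difference is your auxiliary $w'$-gadget $\pro w\bullet\wt{\l n}\to\pro{w'}$. The paper does without it: since $w$ is fresh for $Q$ and occurs in ${\sf char}^N(p)$ only in the final success pattern $\pro w\bullet\wt x$, a $\{w\}$-barb on a $k$-step reduct of $\CopNnoarg C p N(Q)$ already forces that reduct to be successful, and reduction closure plus barb preservation on $\CopNnoarg C p N(\cdot)$ alone suffice. Your gadget is harmless but adds a level of indirection: you use it to argue that $\CopNnoarg C p N(Q)$ is successful after $\lb N p$ steps, and then you must switch back to $\CopNnoarg C p N(\cdot)$ anyway to apply Lemma~\ref{lem:succ-beq} (since the gadget consumes the $\pro w\bullet\wt n$ barb that lemma needs). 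A minor point: the $Z$ arising from Proposition~\ref{thm:lts-2-reply-succeed} is actually the same on the $P$- and $Q$-sides (it comes from $\theta({\sf tests}^N_{F,R})$ with the same $\theta$, cf.\ Lemma~\ref{prop:spec-pp}), so the detour through ``replace $Z,Z'$ by $\zero$'' is not needed either. Your justification of $\sigma(\rho_0 Q')=\sigma[\rho_0]Q'$ is more explicit than the paper's, which simply asserts the equality.
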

\begin{proof}
It is sufficient to prove that, for every pair of processes $P$ and $Q$ such that $P\beq Q$ and
for every transition $P\ltsred\mu P'$, there exists a proper reply (according to the definition of 
the bisimulation) of $Q$ and the reducts are still barbed congruent.
This is straightforward when $\mu = \tau$, due to reduction closure and Proposition~\ref{prop:tau-red}.
The difficult case if when $\mu=\rest n p$, for $({\sf bn}(p)\cup\wt n) \cap {\sf fn}(Q)=\emptyset$.
In this case fix a substitution $\sigma$ such that ${\sf dom}(\sigma)={\sf bn}(p)$ and ${\sf fn}(\sigma)\cap\wt n=\emptyset$.

By Propositions~\ref{thm:lts-2-reply-succeed} and~\ref{prop:compat-reflexive},
$\CopNnoarg C p N (P)$ is successful after $k$ reduction steps, where $k=\lb N p$.
It follows by barbed congruence that $\CopNnoarg C p N (Q)$ is successful after $k$ reduction steps too;
Proposition~\ref{thm:lts-2-succeed-reply} then implies that $Q\ltsred{\rest n q}Q'$
for some $(q,\rho')$ and $Q'$ 
such that $p,\idsub_{{\sf bn}(p)}\compat q,\rho'$.

By two applications of Proposition~\ref{thm:lts-2-reply-succeed} it follows that
$\CopNnoarg C p N (P)\redar^k \rest n(P'\bnf \pro w\bullet \wt n \bnf Z)$, for $Z \beq \zero$, and
$\CopNnoarg C p N (Q)\redar^k \rest n(\rho'Q'\bnf \pro w\bullet \wt n \bnf Z)$.
Notice that, by Proposition~\ref{prop:reply-context-minimum} and definition of the reply context, 
these are the only possibilities that yield a success barb in $k$ reductions.
Furthermore, reduction closure of $\beq$ and Lemma~\ref{lem:succ-beq} imply that $P'\beq \rho ' Q'$.
By Lemma~\ref{lem:bcon-sub}, we obtain $\sigma P'\beq \sigma (\rho' Q') = \sigma[\rho'](Q')$.
By Lemma~\ref{lem:compat-compose-subs}, $p,\idsub_{{\sf bn}(p)}\compat q,\rho'$ 
implies $p,\sigma\compat q,\sigma[\rho']$. This suffices to conclude.
\end{proof}

To conclude, we want to stress that we have developed our semantic theories
in the {\em strong} setting just for the sake of simplicity. Their {\em weak} counterparts,
consisting of allowing multiple $\tau$s/reductions in every predicate, can be obtained
in the usual manner \cite{milner.parrow.ea:calculus-mobile,milner.sangiorgi:barbed-bisimulation}.
As usual, the main difficulty is in the completeness proof just shown. Indeed, to force a proper
reply via contexts, we need to use ``fresh" barbs (viz, the $w$s and $f$ in our tests). Such barbs 
have to be removed after the forced action; however, this can be done only if the left and right
hand side of the equated processes have the same shape (see Lemma~\ref{lem:succ-beq}).
This is straightforward in the strong case, where the number of reductions (viz, $\lb N p$ --
see Propositions~\ref{thm:lts-2-reply-succeed} and~\ref{thm:lts-2-succeed-reply}) ensures this property. 
By contrast, in the weak case we can stop along this sequence of $\lb N p$ reductions, since the
weak barbs will cover the missing steps. This requires a different proof of Lemma~\ref{lem:succ-beq}.
To achieve this, we can follow the traditional path: instead of having a single success barb $w$,
our tests have two (say, $w$ and $w'$) in mutual exclusion (i.e., internal choice); we then 
consider an additional reduction that excludes $w'$. This ensures that the other process must also reach
the reduction that excludes $w'$ and thus complete the proof.

\subsection{Example Equivalences}
\label{subsec:examples}

This section considers some examples where bisimulation can be used to show the equivalence of processes.
The first example exploits the unification of protected names with both variable and protected names:
\begin{eqnarray*}
\pro n\to P\bnf !n\to P &\bisim& !n\to P
\end{eqnarray*}
It states that the processes $\pro n\to P\bnf !n\to P$ can be subsumed by the more compact process $!n\to P$;
indeed, any interaction of the left hand processes can be properly responded to by the right hand process and vice versa.

The second example considers the contractive nature of binding names in CPC:
a case with the pattern $\l x\bullet\l y$ can be subsumed by a case with the pattern $\l z$ as long as some conditions are met.
For example:
\begin{equation*}
\begin{array}{rcll}
\l x\bullet \l y\to P \bnf !\l z \to Q &\bisim& !\l z\to Q &
\quad \mbox{if $P\bisim \{x\bullet y/z\}Q$}
\end{array}
\end{equation*}
The side condition requires that the bodies of the cases must be bisimilar under a substitution that preserves the structure of any pattern bound by $\l x\bullet\l y$ in the process $Q$.

These examples both arise from pattern unification and also appear in the compatibility relation.
Indeed, the examples above are instances of a general result:

\begin{prop}
\label{eq-grrr}
Let $P=p\to P'\bnf !q\to Q'$ and $Q=\ !q\to Q'$.
If there exists $\rho$ such that $p,\idsub_{{\sf bn}(p)}\compat q,\rho$ and $P'\bisim\rho Q'$, 
then $P\bisim Q$.
\end{prop}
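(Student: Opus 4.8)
The plan is to exhibit a concrete candidate bisimulation and check the bisimulation conditions, using the compatibility lemmas already established. Specifically, I would set
$$
\Re \;=\; \{(R\bnf p\to P'\bnf\,!q\to Q',\ R\bnf\,!q\to Q')\ :\ R \text{ a CPC process},\ P'\bisim\rho Q'\}\ \cup\ \bisim,
$$
closed under all substitutions (taking the substitution closure on both components; note that if $\theta P'\bisim\theta\rho Q'$ this is compatible with the hypothesis since $\bisim$ is closed under substitution, and by Lemma~\ref{lem:compat-compose-subs} $p,\theta[\idsub_{{\sf bn}(p)}]\compat q,\theta[\rho]$). The inclusion of an arbitrary parallel context $R$ is needed so that the relation is preserved when the challenging transition comes from a {\sf parext}/{\sf parint} application; alternatively one may instead prove the statement with $R=\stoppr$ and then invoke Lemma~\ref{lem:bisim-par}, but carrying $R$ along keeps the argument self-contained. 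I would then show $\Re$ is a bisimulation (closed under substitution), whence $P\bisim Q$ follows.

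The core of the argument is the case analysis on a challenge $P\ltsred\mu P'$ issued by the left-hand process $R\bnf p\to P'\bnf\,!q\to Q'$. The routine sub-cases are those where the transition does not essentially use the case $p\to P'$: if it comes from $R$, from $!q\to Q'$ (using {\sf rep}), or is a $\tau$-interaction among those components, the right-hand side matches it with the identical transition and the resulting pair is again in $\Re$ (for $\tau$-interactions internal to $!q\to Q'$ one uses $!q\to Q'\equiv q\to Q'\bnf q\to Q'\bnf\,!q\to Q'$, so a fresh copy of the case is still available). The interesting sub-cases are: (i) the external challenge $p\to P'\ltsred{p}P'$ (possibly wrapped by {\sf open}/{\sf parext}, producing $\rest n p$): here the reply is $q\to Q'\ltsred{q}Q'$, obtained from a copy unfolded out of $!q\to Q'$; since $p,\idsub_{{\sf bn}(p)}\compat q,\rho$ (and by Lemma~\ref{lem:compat-compose-subs} $p,\sigma\compat q,\sigma[\rho]$ for the instantiating $\sigma$ fixed by the bisimulation clause), and since $\sigma P'\bisim\sigma[\rho](Q')$ follows from $P'\bisim\rho Q'$ by substitution closure, the resulting pair $(\sigma P',\,R\bnf \sigma[\rho](Q')\bnf\,!q\to Q')$ lies in $\Re$; the side condition on restricted names propagates because compatibility preserves free names (Lemma~\ref{prop:compat-fn}); (ii) the internal challenge where $p\to P'$ unifies (via {\sf unify}) with an external action $\rest m r$ of $R$, say $\{p\pmatch r\}=(\theta_1,\theta_2)$, giving $P\ltsred\tau \rest m(\theta_1 P'\bnf\theta_2 R'\bnf\,!q\to Q')$: here I would reply using a copy of $q\to Q'$ unified with the same action of $R$, invoking Proposition~\ref{lem:compat-match} (with $\sigma=\idsub_{{\sf bn}(p)}$) to get $\{q\pmatch r\}=(\theta_1[\rho],\theta_2)$, so the reply is $R\bnf\,!q\to Q'\ltsred\tau \rest m(\theta_1[\rho](Q')\bnf\theta_2 R'\bnf\,!q\to Q')$; the two reducts are related by $\Re$ after noting $\theta_1 P'\bisim\theta_1[\rho](Q')$ — which is exactly $\theta_1 P'\bisim\theta_1(\rho Q')$ — by substitution closure of $\bisim$ applied to $P'\bisim\rho Q'$, together with $R'$ playing the role of the generic parallel context.

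The main obstacle I anticipate is case (ii) — the internal $\tau$-step arising from the case $p\to P'$ interacting with the environment $R$ — because it is precisely where compatibility must be used to transport a unification result from $p$ to $q$ while simultaneously transporting the induced substitution on the continuation. The key technical fact is Proposition~\ref{lem:compat-match}, which is tailored for exactly this: from $p,\idsub_{{\sf bn}(p)}\compat q,\rho$ and $\{p\pmatch r\}=(\vartheta,\theta)$ it yields $\{q\pmatch r\}=(\vartheta[\rho],\theta)$, ensuring both that the reply transition exists with the \emph{same} residual $\theta$ on $R$ (so the restricted-name bookkeeping in {\sf unify} goes through unchanged) and that the substitution applied to $Q'$ is $\vartheta[\rho]=\vartheta(\rho(\cdot))$, matching $\vartheta(\rho Q')$. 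A subsidiary care point is handling the restriction/extrusion wrappers: in sub-case (i) one must check the freshness side-conditions of {\sf open} and {\sf parext} are met by the reply, which follows from ${\sf fn}(p)={\sf fn}(q)$ and ${\sf pn}(q)\subseteq{\sf pn}(p)$, ${\sf vn}(p)\subseteq{\sf vn}(q)$ of Lemma~\ref{prop:compat-fn}. Finally, closure of $\Re$ under substitution is immediate from its definition together with Lemma~\ref{lem:compat-compose-subs} and substitution closure of $\bisim$.
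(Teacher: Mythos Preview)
Your proposal is correct and follows essentially the same approach as the paper: the paper uses the same candidate relation $\Re = \{(p\to P'\,|\,Q\,|\,R,\ Q\,|\,R) : Q =\ !q\to Q',\ \exists\rho.\ p,\idsub_{{\sf bn}(p)}\compat q,\rho,\ P'\bisim\rho Q'\}\cup{\bisim}$, handles the external-action case via Lemma~\ref{lem:compat-compose-subs}, and handles the $\tau$-interaction of $p\to P'$ with the context via Proposition~\ref{lem:compat-match}, exactly as you outline.

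Two minor points to tighten. First, your case enumeration omits the sub-case where $p\to P'$ interacts with a copy of $q\to Q'$ unfolded from $!q\to Q'$ (rather than with $R$); the paper treats this separately, observing that Proposition~\ref{lem:compat-match} then yields $\{q\pmatch q\}$ defined, which forces ${\sf bn}(q)=\emptyset$ and hence all substitutions collapse to $\{\}$, so the reply is two copies of $q\to Q'$ self-interacting and one uses $P'\bisim Q'$ directly. Second, closure of $\Re$ under substitution needs compatibility of $\theta p$ with $\theta q$, which is Proposition~\ref{prop:compat-sub-closed} rather than Lemma~\ref{lem:compat-compose-subs}.
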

\begin{proof}
It suffices to prove that
$$
\Re = \{(p \to P'|Q|R,Q|R)\, :\, Q =\,\, !q \to Q'\ \wedge\ 
\exists \rho\,.\, p,{\sf id}_{{\sf bn}(p)}\compat q,\rho\ \wedge\ P'\bisim\rho Q'\}\ \cup\ \bisim
$$
is a bisimulation. To this aim, consider every challenge from $p \to P'|Q|R$ and show that
there exists a transition from $Q|R$ that is a proper reply (according to the bisimulation). 
The converse (when the challenge comes from $Q \bnf R$) is easier.

Let $p \to P'|Q|R \ltsred\mu \hat P$; there are two possibilities for $\mu$:
\begin{enumerate}
\item $\mu = \rest n p'$: in this case, we also have to fix a substitution $\sigma$ such that
${\sf dom}(\sigma) = {\sf bn}(p')$ and ${\sf fn}(\sigma) \cap \wt n = \emptyset$. 
There are three possible ways for producing $\mu$:
	\begin{enumerate}[label=\({\alph*}]
	\item $\mu = p$ and $\hat P = P'|Q|R$: in this case, since the action comes from $p \to P'$,
	by the side condition of rule {\sf parext}, it must be that ${\sf bn}(p) \cap {\sf fn}(Q|R) = \emptyset$. 
	Now, consider $Q \ltsred q Q'|Q$ with ${\sf bn}(q) \cap {\sf fn}(Q|R) = \emptyset$ (ensured by
    $\alpha$-conversion); thus, $Q|R \ltsred q Q'|Q|R = \hat Q$. Let $\rho$ be such that
	$p,{\sf id}_{{\sf bn}(p)}\compat q,\rho$; by Lemma~\ref{lem:compat-compose-subs},
	$p,\sigma \compat q,\sigma[\rho]$, where $\sigma[{\sf id}_{{\sf bn}(p)}] = \sigma$
	because ${\sf dom}(\sigma) = {\sf bn}(p)$. Now it suffices to prove that
	$(\sigma \hat P, \sigma[\rho]\hat Q) \in \Re$. 
	This follows from the hypothesis that $P'\bisim\rho Q'$: indeed, by closure of $\bisim$ under
	substitutions, $\sigma P'\bisim \sigma (\rho Q') = \sigma[\rho]Q'$; by Lemma~\ref{lem:bisim-par},
	$\sigma P'|Q|R \bisim \sigma[\rho]Q'|Q|R$. Now conclude: 
	since ${\sf dom}(\sigma) = {\sf bn}(p)$ and ${\sf bn}(p) \cap {\sf fn}(Q|R) = \emptyset$, it holds that
	$\sigma \hat P = \sigma P'|Q|R$; 
	since ${\sf dom}(\sigma[\rho]) = {\sf dom}(\rho) = {\sf bn}(q)$ and ${\sf bn}(q) \cap {\sf fn}(Q|R) = \emptyset$, 
	it holds that $\sigma[\rho] \hat Q = \sigma[\rho] Q'|Q|R$;
	finally, by definition, $\bisim\ \subseteq \Re$.

	\item $\mu = q$ and $\hat P = p \to P'|Q'|Q|R$: in this case, since the action comes from $Q$,
	by the side condition of rule {\sf parext}, it must be that ${\sf bn}(q) \cap {\sf fn}(p \to P'|R) = \emptyset$. 
	Now, consider $Q|R \ltsred q Q'|Q|R = \hat Q$. By Lemma~\ref{prop:compat-reflexive},
	$q,\sigma \compat q,\sigma$. It suffices to prove that $(\sigma \hat P, \sigma \hat Q) \in \Re$. 
	This follows from the definition of $\Re$: 
	since ${\sf dom}(\sigma) = {\sf bn}(q)$ and ${\sf bn}(q) \cap {\sf fn}(p \to P'|R) = \emptyset$, it holds that
	$\sigma \hat P = p \to P'|\sigma Q'|Q|R$ and $\sigma \hat Q = \sigma Q'|Q|R$.

	\item $\mu = \rest n r$, $R \ltsred\mu R'$ and $\hat P = p \to P'|Q|R'$: in this case, 
	by the side condition of rule {\sf parext}, it must be that ${\sf bn}(r) \cap {\sf fn}(p \to P'|Q) = \emptyset$. 
	Now, consider $Q|R \ltsred \mu Q|R' = \hat Q$ and reason like in the previous case, obtaining that
	$\sigma \hat P = p \to P'|Q|\sigma R'\ \Re\ \ Q|\sigma R' = \sigma \hat Q$.
	\end{enumerate}

\item $\mu = \tau$: in this case, there are five possible ways for producing $\mu$:
	\begin{enumerate}[label=\({\alph*}]
	\item $R \ltsred\tau R'$ and $\hat P = p \to P'|Q|R'$: this case is straightforward.

	\item $\hat P = \vartheta P'|\theta (Q'|Q)|R$, where $\{p\pmatch q\} = (\vartheta,\theta)$:
	Let $\rho$ be such that $p,{\sf id}_{{\sf bn}(p)}\compat q,\rho$; by Proposition~\ref{lem:compat-match},
	$\{q\pmatch q\} = (\vartheta[\rho],\theta)$. But a pattern can unify with itself only if it
    contains no binding names;
    this entails that $\theta = \{\}$ and $\vartheta = \{\}$.
	Hence, $\hat P = P'| Q'|Q |R$ and conclude by
	taking $Q|R \ltsred\tau Q'|Q'|Q|R = \hat Q$, since by hypothesis $P' \bisim Q'$.

	\item $\hat P = \rest n(\vartheta P'|Q|\theta R')$, where $R \ltsred{\rest n r} R'$ and 
	$\{p\pmatch r\} = (\vartheta,\theta)$: by $\alpha$-conversion, now let ${\sf bn}(p) \cap {\sf fn}(Q) = \emptyset$
	and $\wt n \cap {\sf fn}(p \to P'|Q) = \emptyset$. Now consider $Q \ltsred q Q'|Q$ with 
	${\sf bn}(q) \cap {\sf fn}(Q) = \emptyset$; by Proposition~\ref{lem:compat-match}, the hypothesis
	$p,{\sf id}_{{\sf bn}(p)}\compat q,\rho$ entails $\{q\pmatch r\} = (\vartheta[\rho],\theta)$. 
	Thus, $Q|R \ltsred\tau \rest n(\vartheta[\rho](Q'|Q)|\theta R') = \rest n(\vartheta[\rho]Q'|Q|\theta R') = \hat Q$,
	where the first equality holds because ${\sf dom}(\vartheta[\rho]) = {\sf dom}(\rho) = {\sf bn}(q)$ 
	and ${\sf bn}(q) \cap {\sf fn}(Q) = \emptyset$. Conclude by using the hypothesis $P' \bisim \rho Q'$,
	thanks to closure of $\bisim$ under substitutions, parallel and restriction.

	\item $\hat P = p \to P'\ |\ \rest n(\vartheta (Q'|Q)|\theta R')$, where $R \ltsred{\rest n r} R'$ and 
	$\{q\pmatch r\} = \linebreak (\vartheta,\theta)$: this case is simple, by considering 
	$Q|R \ltsred\tau \linebreak \rest n(\vartheta (Q'|Q)|\theta R') = \hat Q$ and by observing that 
	${\sf dom}(\vartheta) = {\sf bn}(q)$, with ${\sf bn}(q) \cap {\sf fn}(Q) = \emptyset$.

	\item $\hat P = p \to P'\ |\ \vartheta Q'|\theta Q'|Q|R$, where $\{q\pmatch q\} = (\vartheta,\theta)$:
	this case is straightforward, by observing that $\vartheta = \theta = \{\}$.
	\end{enumerate}
\end{enumerate}
Closure under substitution is straightforward by Proposition~\ref{prop:compat-sub-closed}.
\end{proof}

To conclude, notice that the more general claim
\begin{quote}
Let $P=p\to P'\bnf !q\to Q'$ and $Q= \,\,!q\to Q'$; if there are $\sigma$
and $\rho$ such that $p,\sigma\compat q,\rho$ and
$\sigma P'\bisim\rho Q'$, then $P\bisim Q$
\end{quote}
does {\em not} hold. To see this, consider the following two processes:
$$
\begin{array}{lll}
P = \l x \to P'\ |\ Q & \mbox{with} & P' =  x\bnf m \to \pro w
\ \ \mbox{ for } x \neq m
\\
Q =\,\, !\l x \to Q' & \mbox{with} & Q' = m\bnf m \to \pro w
\end{array}
$$
Trivially $\l x, \{m/x\} \compat \l x, \{m/x\}$ and $\{m/x\}P' \bisim \{m/x\}Q' = Q'$;
however, $P$ is {\em not} bisimilar to $Q$. Indeed, in the context $\context C \cdot = \,\cdot\bnf k \to \zero$,
for $k \neq m$, they behave differently: $\context C P$ can reduce in one step to a process that is stuck and cannot
exhibit any barb on $w$; by contrast, every reduct of $\context C Q$ reduces in another step to a process that exhibits
a barb on $w$. (As usual, for proving equivalences it is easier to rely on bisimulation, while for proving
inequivalences it is easier to rely on barbed congruence, thanks to Theorems~\ref{thm: sound} and~\ref{thm: complete}.)
Proposition~\ref{eq-grrr} is more demanding: it does not leave us free to choose whatever $\sigma$ we want, but
it forces us working with ${\sf id}_{{\sf bn}(p)}$. Now, the only $\rho$ such that $\l x,\{x/x\} \compat \l x,\rho$
is $\{x/x\}$; with such a substitution, the second hypothesis of the theorem, in this case $P' \bisim Q'$, does not hold
and so we cannot conclude that $P \bisim Q$.

\section{Comparison with Other Process Calculi}
\label{sec:compare}

This section exploits the techniques developed in \cite{G:IC08,G:CONCUR08} to formally
assess the expressive power of CPC with respect to $\pi$-calculus, Linda, Spi calculus, Fusion and Psi calculus.
After briefly recalling these models and some basic material from \cite{G:CONCUR08}, the 
relation to CPC is formalised. First, let each model, including CPC, be augmented with
a reserved process `$\ok$', used to signal successful termination.
This feature is needed to formulate what a {\em valid} encoding is in Definition~\ref{def:ve}.

\subsection{Some Process Calculi}
\label{subsec:calculi}

\paragraph{$\pi$-calculus \cite{milner.parrow.ea:calculus-mobile,sangiorgi.walker:theory-mobile}.}
The $\pi$-calculus processes are given by the following grammar:
$$
P \ ::= \ \zero \ \bnf \ \ok \ \bnf \ \oap a b.P\ \bnf \ \iap a x.P \ \bnf \ 
(\nu n) P \ \bnf \ P | Q \ \bnf \ !P
$$
and the only reduction axiom is
$$
\oap a b .P\bnf \iap a x.Q \quad\redar\quad P\bnf \{b/x\}Q
$$
The reduction relation is obtained by closing this interaction rule by parallel, restriction and the
same structural congruence relation defined for CPC.

\paragraph{Linda \cite{Gel85}.} Consider the following variant of Linda formulated to follow 
CPC's syntax. Processes are defined as:
$$
P \ ::= \ \zero \ \bnf \ \ok \ \bnf \ \oap {} {b_1,\ldots,b_k} \ \bnf \ \iap {} {t_1,\ldots,t_k}.P \ \bnf \ 
(\nu n) P \ \bnf \ P | Q \ \bnf \ !P
$$
where $b$ ranges over names and $t$ denotes a template field, defined by:
$$
t \ ::=\ \lambda x\ \quad |\quad \ex {\,b\,}
$$
Assume that input variables occurring in templates are all distinct.
This assumption rules out template $(\l x,\l x)$, but accepts $(\l x,
\ex b,\ex b)$.  Templates are used to implement Linda's pattern
matching, defined as follows:
$$
\begin{array}{c}
\vspace*{.4cm}
\pmtch(\ ; \, ) = \{\}
\qquad\qquad
\pmtch(\ex {\,b\,} ; b) = \{\}
\qquad\qquad
\pmtch(\lambda x ; b) = \{b/x\}
\\
\Rule{}
{\pmtch(t ; b) = \sigma_1 \qquad \pmtch(\wt t ; \wt b) = \sigma_2}
{\pmtch(t,\wt t\, ;\, b,\wt b) = \sigma_1 \uplus \sigma_2}
{}
\end{array}
$$
where
`$\uplus$' denotes the union of partial functions with disjoint domains. The interaction axiom is:
$$
\oap {} {\wt b}\bnf \iap{}{\wt t}.P \redar \sigma P
\qquad \mbox{ if }\ \pmtch(\wt t ; \wt b) = \sigma
$$
The reduction relation is obtained by closing this interaction rule by parallel, restriction and the
same structural congruence relation defined for CPC.

\paragraph{Spi calculus \cite{gordon1997ccp}.} This language is unusual as names 
are now generalised to {\em terms} of the form
\begin{eqnarray*}
M,N \ &::=&\  n \ \bnf\  x \ \bnf\  (M,N)
\ \bnf\  0 \ \bnf\  i \ \bnf\  {\it suc}(M) \ \bnf\  \encr M N
\end{eqnarray*}
\change{Thomas}
{They are rather similar to the patterns of CPC in that they may have
internal structure. Of
particular interest is the pair, that combines terms and allows the
construction of arbitrary structured data,
and the encryption construct.
Pairing is distinct from the polyadic data exchange
discussed previously, as compound messages may be bound to a single name
and then
decomposed later by some intensional reduction. Similarly, the encryption 
requires they key to be known to gain access to the encrypted term.}
{They are rather similar to the patterns of CPC in that they may have internal
structure. Of particular interest are the pair, successor and encryption that
may be bound to a name and then decomposed later by an intensional reduction.
Note that $i$ denotes a natural number greater than zero, and is considered
equal to the $i$th successor of zero.}

The processes of the Spi calculus are:
\begin{eqnarray*}
P,Q &::=& \ \ 0\ \bnf \ \ok \ \bnf\ P|Q\ \bnf\  !P\ \bnf\  \res m P\ \bnf\  M(x).P\
\bnf\  \overline{M}\langle N\rangle .P\\
& &                 \ \bnf\ [M\ {\it is}\ N]P
										\ \bnf\ {\it let}\ (x,y)=M\ {\it in}\ P\\
& &                 \ \bnf\ {\it case}\ M\ {\it of}\ \encr x N:P
                    \ \bnf\ {\it case}\ M\ {\it of}\ 0:P\ {\it suc}(x):Q
\end{eqnarray*}
The null process, parallel composition, replication and restriction
are all familiar.
The input $M(x).P$ and output $\overline{M}\langle N\rangle .P$ are generalised 
from $\pi$-calculus to 
allow arbitrary terms in the place of channel names and output arguments.
The match $[M\ {\it is}\ N]P$ determines equality of $M$ and $N$.
The splitting ${\it let}\ (x,y)=M\ {\it in}\ P$ decomposes pairs.
The decryption ${\it case}\ M\ {\it of}\ \encr x N:P$ decrypts $M$ 
and binds the encrypted message to $x$.
The integer test ${\it case}\ M\ {\it of}\ 0:P\ {\it suc}(x):Q$ branches
according to the number.
\change{Thomas}
{The last four can all get stuck if $M$ is an incompatible term.}
{Note that the last four processes can all get stuck if $M$ is an incompatible term.
Furthermore, the last three are intensional, i.e.\ they depend on the internal
structure of $M$.}

Concerning the operational semantics, we
consider a slightly modified version of Spi calculus where
interaction is generalised to
$$
\overline{M}\langle N\rangle .P\bnf M(x) .Q \quad\rew\quad P\bnf \{N/x\}Q
$$
where $M$ is any term of the Spi calculus. The remaining axioms are:
$$
\begin{array}{rcl}
[M\ {\it is}\ M]P &\redar& P
\vspace*{.1cm}
\\
{\it let}\ (x,y)=(M,N)\ {\it in}\ P &\redar& \{M/x,N/y\}P
\vspace*{.1cm}
\\
{\it case}\ \{M\}_N\ {\it of}\ \encr x N:P &\redar& \{M/x\}P
\vspace*{.1cm}
\\
{\it case}\ 0\ {\it of}\ 0:P\ {\it suc}(x):Q &\redar& P
\vspace*{.1cm}
\\
{\it case}\ suc(N)\ {\it of}\ 0:P\ {\it suc}(x):Q &\redar& \{N/x\}Q
\end{array}
$$
Again, the reduction relation is obtained by closing the interaction axiom
under parallel, restriction and the structural congruence of CPC.

\paragraph{Fusion \cite{parrow.victor:fusion-calculus}.}
Processes are defined as:
$$
P\ ::=\ \zero\ \bnf\ \ok \ \bnf \ P|P\ \bnf \ (\nu x)P\ \bnf \ !P\ \bnf \ 
\oap u {\wt x}.P\ \bnf \ \iap u {\wt x}.P 
$$
The interaction rule for Fusion is taken from \cite{WG:explicit-fusions}:
$$
\begin{array}{ll}
(\nu \wt u)(\oap u {\wt x}.P\bnf \iap u {\wt y}.Q\ | \ R)
\redar \sigma P\ | \ \sigma Q\ | \ \sigma R\ 
& \mbox{with } {\sf dom}(\sigma) \cup {\sf ran}(\sigma) \subseteq \{\wt x,\wt y\}
\\
& \mbox{and } \wt u = {\sf dom}(\sigma) \setminus {\sf ran}(\sigma)
\\
& \mbox{and } \sigma(v) = \sigma(w)
\\
& \mbox{iff } (v,w) \in E(\wt x = \wt y)
\end{array}
$$
where $E(\wt x = \wt y)$ is the least equivalence relation on names
generated by the equalities $\wt x = \wt y$ (that is defined 
whenever $|\wt x| = |\wt y|$).
Fusion's reduction relation is obtained by closing the interaction axiom
under parallel, restriction and the structural congruence of CPC.

\newcommand{\cheq}{\stackrel\cdot\leftrightarrow}
\newcommand{\terms}{{\bf T}}
\newcommand{\assertion}{{\bf A}}
\newcommand{\one}{{\bf 1}}
\newcommand{\compose}{\otimes}
\newcommand{\assert}[1]{\llparenthesis \, #1 \, \rrparenthesis}
\newcommand{\fram}[1]{{\cal F}(#1)}

\paragraph{Psi \cite{BJPV11}.}
For our purposes, Psi-calculi are parametrized w.r.t. two sets: terms $\terms$,
ranged over by $M,N,\ldots$, and assertions $\assertion$, ranged over by $\Psi$.
The empty assertion is written $\one$.
We also assume two operators: channel equivalence, $\cheq \subseteq \terms \times \terms$,
and assertion composition, $\compose: \assertion \times \assertion \rightarrow \assertion$. 
It is also required that $\cheq$ is transitive and
symmetric, and that $(\compose,\one)$ is a commutative monoid.

Processes in Psi are defined as:
$$
P\ ::=\ \zero\ \bnf\ \ok \ \bnf \ P|P\ \bnf \ (\nu x)P\ \bnf \ !P\ \bnf \ 
\oap M N.P\ \bnf \ \iap M {\lambda\wt x} N.P \ \bnf\ \assert\Psi
$$
We now give a reduction semantics, by isolating the $\tau$ actions of the
LTS given in \cite{BJPV11}. To this aim, we recall the definiton of frame
of a process $P$, written $\fram P$, as the set of unguarded assertions 
occurring in $P$. Formally:
$$
\fram{\assert \Psi} = \Psi
\qquad
\fram{(\nu x)P} = (\nu x)\fram P
\qquad
\fram{P|Q} = \fram P \compose \fram Q
$$
and is $\one$ in all other cases. We denote as $(\nu \wt b_P)\Psi_P$ the frame of $P$.
The structural laws are the same as in $\pi$-calculus.
The reduction relation is inferred by the following laws:
$$
\begin{array}{c}
\prooftree \Psi \vdash M \cheq N 
\justifies \Psi \triangleright \oap M K.P\ |\ \iap N {\lambda \wt x}H.Q \redar P\ |\ \{{\wt L}/{\wt x}\}Q 
\endprooftree\ K = H[\wt x := \wt L]
\vspace*{.4cm}
\\

\prooftree \Psi \compose \Psi_Q \triangleright P \redar P'
\justifies \Psi \triangleright P\ |\ Q \redar P' \ |\ Q
\endprooftree\ \fram Q = (\nu \wt b_Q)\Psi_Q, \wt b_Q \mbox{ fresh for } \Psi \mbox{ and } P
\vspace*{.4cm}
\\

\prooftree \Psi \triangleright P \redar P'
\justifies \Psi \triangleright (\nu x) P \redar (\nu x) P'
\endprooftree\ x \not\in \mbox{names}(\Psi)
\qquad
\prooftree P \equiv Q \quad \Psi \triangleright Q \redar Q' \quad Q' \equiv P'
\justifies \Psi \triangleright P \redar P'
\endprooftree
\end{array}
$$
We write $P \redar P'$ whenever $\one \triangleright P \redar P'$.

\subsection{Valid Encodings and their Properties}

This section recalls and adapts the definition of valid encodings as well as some
useful theorems (details in \cite{G:CONCUR08}) for formally relating process calculi.
The validity of such criteria in developing expressiveness studies emerges from the
various works 
\cite{G:IC08,G:DC10,G:CONCUR08}, that have also recently inspired similar works 
\cite{LPSS10,LVF10,gla12}. 

An {\em encoding} of a language $\Lang_1$ into another language $\Lang_2$ is a pair $(\encode\cdot,\renpol)$
where $\encode\cdot$ translates every $\Lang_1$-process into an $\Lang_2$-process
and $\renpol$ maps every name (of the source language) into a tuple of $k$ names (of the target language), for $k > 0$.
The translation $\encode\cdot$ turns every term of the source language into a term of the target; in doing this,
the translation may fix some names to play a precise r\^ole 
or may translate a single name into a tuple of names. This can be obtained
by exploiting $\renpol$. 

Now consider only encodings that satisfy the following properties.
Let a {\em $k$-ary
context} $\context C {\_\,_1; \ldots; \_\,_k}$ be a term where $k$
occurrences of $\zero$ are linearly replaced by the holes $\{\_\,_1;
\ldots; \_\,_k\}$ (every one of the $k$ holes must occur once and only once).
Moreover, denote with $\redar_i$ and $\Redar_i$ 
the relations $\redar$ and $\Redar$ in language $\Lang_i$;
denote with $\redar^\omega_i$ an infinite sequence of reductions in $\Lang_i$.
Moreover, we let $\beq_i$ denote the reference behavioural equivalence for language $\Lang_i$.
Also, let $P \suc_i$ mean that there exists $P'$ such that $P \Redar_i P'$ and $P' \equiv P''\bnf \ok$,
for some $P''$.
Finally, to simplify reading, let $S$ range
over processes of the source language (viz., $\Lang_1$) and $T$ range
over processes of the target language (viz., $\Lang_2$).

\begin{defi}[Valid Encoding]
\label{def:ve}
An encoding $(\encode\cdot,\renpol)$ of $\Lang_1$ into $\Lang_2$
is {\em valid} if it satisfies the following five properties:
\begin{enumerate}
\item {\em Compositionality:} for every $k$-ary operator $\op$ of $\Lang_1$
and for every subset of names $N$,
there exists a $k$-ary context $\CopN C \op N {\_\,_1; \ldots; \_\,_k}$ of $\Lang_2$
such that, for all $S_1,\ldots,S_k$ with ${\sf fn}(S_1,\ldots,S_k) = N$, it holds
that $\encode{\op(S_1,\ldots,S_k)} = \CopN C \op N {\encode{S_1};\ldots;\encode{S_k}}$.

\item {\em Name invariance:}
for every $S$ and name substitution $\sigma$, it holds that
$$
\encode{\sigma S}\ \left\{ 
\begin{array}{ll}
\ =\ \sigma'\encode S& \mbox{ if $\sigma $ is injective}\\
\ \beq_2\ \sigma'\encode S  & \mbox{ otherwise}
\end{array}
\right.
$$
where $\sigma'$ is such that 
$\renpol(\sigma(a)) = \sigma'(\renpol(a))$
for every name $a$.

\item {\em Operational correspondence:}
\begin{itemize}
\item for all $S \Redar_1 S'$, it holds that $\encode S \Redar_2 \beq_2 \encode {S'}$;
\item for all $\encode S \Redar_2 T$, there exists $S'$ such that $S \Redar_1\!\! S'$ 
and $T \Redar_2 \beq_2\!\! \encode {S'}$.
\end{itemize}

\item {\em Divergence reflection:}
for every $S$ such that 
$\encode S \redar\!\!_2^\omega$, it holds that 
\linebreak $S$ \mbox{$\redar\!\!_1^\omega$}.

\item {\em Success sensitiveness:}
for every $S$, it holds that $S \suc_1$ if and only if $\encode S \suc_2$.
\end{enumerate}
\end{defi}

The criteria we have just presented may seem quite demanding; for example,
they do not allow schemes including parameters that are changed along the 
way of the encoding (the most notable of such examples is Milner's encoding
of $\lambda$-calculus into $\pi$-calculus \cite{Milner92}). Of course, this
makes our separation results slightly weaker and leaves room for further
improvement. Moreover, as fully explained in \cite{GN:MSCS14}, we do not 
consider full abstraction as a validity criterion for expressiveness.

Now recall some results concerning valid encodings, in particular for showing
separation results, i.e.\ for proving that no valid encoding can exist between
a pair of languages ${\mathcal L}_1$ and ${\mathcal L}_2$ satisfying certain
conditions.
Here, these languages will be limited to CPC and those introduced in Section~\ref{subsec:calculi}.
Originally valid encodings considered were assumed to be {\em semi-homomorphic}, 
i.e.\ where the interpretation of parallel
  composition is via a context of the form $\res {\wt n}(\_\,_1\bnf
  \_\,_2\bnf R)$, for some $\wt n$ and $R$ that only depend on the
  free names of the translated processes.
This assumption simplified the proofs of the following results in general, i.e.\ without
relying on any specific process calculus; in our setting, since the languages are fixed,
we can prove the same results without assuming semi-homomorphism.

\change{barry}{It is worth remarking that, since the calculi considered here have been already defined
(see Section~\ref{subsec:calculi}), we do not need any assumption on the context used
to translate the parallel operator. In \cite{G:CONCUR08} we needed to explicitly
assume that such an operator is translated ``semi"-homomorphically, i.e.\ via a context
of the form $\res {\wt n}(\_\,_1\bnf \_\,_2\bnf R)$, for some $\wt n$ and $R$ that only 
depend on the free names of the translated processes. ``Semi"-homomorphism comes for free
by Proposition~\ref{deadlock} and by the kind of interactions that can occur in our sample 
process calculi.}{}

\comments{barry}{The limitation to semi-homomorphisms that was
  implicit in [9] is here made explicit. It is worth remarking that
  the limitation to the given set of languages does not remove the
  need for this assumption in the given proofs, even though it kills
  of one particular counter-example. }

\begin{thm}
\label{autoriduz}
Assume that there exists $S$ such that $S \noredar\!\!_1$ and
$S \not\suc_1\ $ and $S\bnf S \suc_1$; 
moreover, assume that every $T$
that does not reduce is such that $T\bnf T \noredar\!\!_2$. Then, 
there cannot exist any 
valid encoding of $\Lang_1$ into $\Lang_2$.
\end{thm}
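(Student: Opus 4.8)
The plan is to argue by contradiction. Suppose a valid encoding $\encode\cdot$ of $\Lang_1$ into $\Lang_2$ exists, and let $S$ be the process guaranteed by hypothesis: $S \noredar_1$, $S \not\suc_1$, and $S\bnf S \suc_1$. First I would use success sensitiveness on $S$ itself: from $S \not\suc_1$ we get $\encode S \not\suc_2$, and in particular $\encode S$ cannot reach a process exhibiting $\ok$. Next I would use operational correspondence together with $S \noredar_1$: since $S$ has no reductions, the only process $S'$ with $S\Redar_1 S'$ is $S$ itself, so every $T$ with $\encode S \Redar_2 T$ must satisfy $T \Redar_2 \beq_2 \encode S$. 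I would then want to conclude that $\encode S$ does not reduce at all (up to the behavioural equivalence), or at least that it cannot ``make progress''; the cleanest way is to combine this with divergence reflection — if $\encode S$ could reduce in a way that is not eventually undone, one would build an infinite computation, forcing $S\redar_1^\omega$, contradicting $S\noredar_1$. The upshot of this first block is that $\encode S$ is, behaviourally, a non-reducing process.

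Then I would turn to $S\bnf S$. By compositionality, $\encode{S\bnf S} = \CopN C {\mid} N {\encode S; \encode S}$ for the fixed context associated with parallel composition and $N = {\sf fn}(S)$. Since $S\bnf S\suc_1$, success sensitiveness gives $\encode{S\bnf S}\suc_2$, i.e. $\CopN C {\mid} N {\encode S;\encode S} \Redar_2 T' \equiv T''\bnf\ok$ for some $T''$. The key point is that $\encode S$ was shown to be essentially non-reducing, so any reduction in $\CopN C {\mid} N {\encode S;\encode S}$ must involve the context material $R$ (and its restrictions $\wt n$) or an interaction between the context and the holes — but crucially the two copies of $\encode S$ cannot interact with each other, nor produce reductions on their own. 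Here is where I invoke the second hypothesis: every $T$ that does not reduce satisfies $T\bnf T\noredar_2$. Taking $T$ to be (a representative of) $\encode S$, which we have argued does not reduce, we get $\encode S\bnf\encode S\noredar_2$. I would then need to reconcile this with the context: roughly, $\CopN C {\mid} N {\encode S;\encode S}$ reducing to success while $\encode S$ is inert means all the action comes from the fixed context part, which depends only on $N$ — hence the same reductions are available in $\CopN C {\mid} N {\encode S;\encode S}$ for a single $S$, or more carefully, one shows $\CopN C {\mid} N {T_0;T_0}$ with $T_0$ inert behaves success-wise like something not involving the second hole, contradicting $\encode S\not\suc_2$.

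The main obstacle I anticipate is making the informal claim ``$\encode S$ is non-reducing, so all reductions of $\CopN C {\mid} N {\encode S;\encode S}$ come from the context'' fully rigorous, since behavioural equivalence $\beq_2$ only tells us $\encode S \beq_2$ (some non-reducing process), not that $\encode S$ is syntactically inert; reductions could in principle be undone up to $\beq_2$, and the context could still make progress by interacting with a $\beq_2$-equivalent but syntactically active form of $\encode S$. The way around this is to push everything through $\beq_2$: replace $\encode S$ by a genuinely non-reducing $\beq_2$-equivalent process (using that $\beq_2$ is a congruence, so the substitution is legitimate inside the context), and then apply the hypothesis on $T\bnf T$ to that representative. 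A secondary subtlety is that the hypothesis forbids $T\bnf T$ reducing, but the parallel context is $\res{\wt n}(\_\,_1\bnf\_\,_2\bnf R)$, not literally $\_\,_1\bnf\_\,_2$; one must argue that adding the fixed, $N$-dependent $R$ and restrictions does not create, for a single copy, the success that two copies can reach — and this is precisely the role of divergence reflection and the careful bookkeeping of how many reductions each phase consumes, exactly as in the analogous separation arguments of \cite{G:CONCUR08}.
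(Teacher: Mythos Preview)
The paper does not prove this theorem: it is explicitly \emph{recalled} from \cite{G:CONCUR08}, with the remark that the original proof assumed parallel to be translated semi-homomorphically (i.e.\ $\encode{P\bnf Q} = \res{\wt n}(\encode P\bnf\encode Q\bnf R)$) and that here this assumption can be dropped because the target languages are fixed---but no argument for that claim is supplied. So there is no in-paper proof to compare against beyond the citation.

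Your outline is the standard route from \cite{G:CONCUR08}: derive $\encode S\noredar\!\!_2$ (this is exactly Proposition~\ref{deadlock}, which you re-derive inline), $\encode S\not\suc_2$, then $\encode S\bnf\encode S\noredar\!\!_2$ from the $\Lang_2$-hypothesis, and confront this with $\encode{S\bnf S}\suc_2$. One simplification you miss: your first ``main obstacle'' is not one---Proposition~\ref{deadlock} already gives that $\encode S$ is \emph{syntactically} non-reducing, not merely up to $\beq_2$, so no replacement by an inert $\beq_2$-representative is needed.

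The genuine gap is the one you flag second and do not close. Compositionality only gives $\encode{S\bnf S} = C^N_{|}[\encode S;\encode S]$ for an \emph{arbitrary} binary $\Lang_2$-context, and nothing in your argument excludes that context from carrying material that reduces on its own, or interacts with a single copy of $\encode S$, to eventually yield $\ok$. Your final appeal to ``divergence reflection and bookkeeping as in \cite{G:CONCUR08}'' is precisely where that reference invokes semi-homomorphism: with $\encode{S\bnf S} = \res{\wt n}(\encode S\bnf\encode S\bnf R)$ one argues that every reduction must involve $R$ (since $\encode S\bnf\encode S\noredar\!\!_2$), and then compares against the encoding of another source term with the same free-name set $N$---hence the same $R$---that does \emph{not} succeed. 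Without that structural assumption, or the unwritten language-specific analysis the paper alludes to, your sketch is incomplete at exactly this step.
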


To state the following proof technique, define the {\em matching degree} 
of a language $\Lang$, written $\md\Lang$, as the least upper bound on the number 
of names that must be matched to yield a reduction in $\Lang$. For example,
$\md{\pi\mbox{-calculus}} = 1$, since the only name matched for performing
a reduction is the name of the channel where the communication happens,
whereas $\md{\mbox{Linda}} = \md{\mbox{CPC}} = \infty$, since there is no upper bound
on the number of names that can be matched in a reduction.

\begin{thm}
\label{match}
If $\md{\Lang_1} > \! \md{\Lang_2}$, then there exists no 
valid encoding of $\Lang_1$ into $\Lang_2$.
\end{thm}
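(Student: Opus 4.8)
The plan is to argue by contradiction: suppose $\md{\Lang_1} > \md{\Lang_2}$ and that a valid encoding $(\encode\cdot,\renpol)$ of $\Lang_1$ into $\Lang_2$ exists. Since $\md{\Lang_2} = k$ is strictly below $\md{\Lang_1}$, there is a source process $S$ in $\Lang_1$ whose single-step reduction genuinely requires matching strictly more than $k$ names — concretely, $S$ is of the form $(R_1 \mid R_2)$ where $R_1$ and $R_2$ are two complementary prefixed processes whose interaction tests $k+1$ (or more) distinct names for equality, and $S$ reduces, but any process obtained from $S$ by renaming two of those matched names apart fails to reduce. I would pick $S$ so that, additionally, $S \suc_1$ after its reduction (e.g.\ let the continuations contain $\ok$), while $S \not\suc_1$ before reducing. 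The idea is that the encoded process $\encode S$ must reproduce this reduction up to $\beq_2$ by operational correspondence, but in $\Lang_2$ every reduction matches at most $k$ names, so the encoding cannot ``see'' the full equality constraint on all $k+1$ names in one step.

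The key steps, in order, would be: (i) fix the witness $S$ as above, exploiting $\md{\Lang_1} \ge k+2$ to get a reduction of $S$ depending on the equality of a set $\{n_0,\ldots,n_{k+1}\}$ of names (I only really need one more name than $\md{\Lang_2}$ can handle, so $k+1$ matched names suffice); (ii) by success sensitiveness, $S \suc_1$ forces $\encode S \suc_2$, hence $\encode S$ has a reduction sequence to a process exhibiting $\ok$; (iii) by operational correspondence (the second clause), trace any such reduction sequence of $\encode S$ back to $\Lang_1$ — it must correspond to the reduction of $S$, since that is the only reduction available; (iv) now produce a sibling source term $S'$ obtained from $S$ by applying a non-injective substitution $\sigma$ that collapses, say, $n_0$ and $n_1$ — wait, rather I want a substitution that keeps $S$ reducing but whose encoded image I can compare. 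The cleaner move is to use name invariance: take an injective $\sigma$ renaming the matched names, so $\encode{\sigma S} = \sigma'\encode S$, and a non-injective $\sigma$ that identifies two of the matched names of the \emph{complement}, breaking the match in $\Lang_1$; then $\sigma S \not\suc_1$ while success sensitiveness plus name invariance applied to $\sigma'\encode S$ force a contradiction, because $\sigma'$ (adapted via $\renpol$) only touches at most $k$ of the target names representing each source name and cannot both destroy the target match and be forced to, as $\encode S \suc_2$ was witnessed by a target reduction matching $\le k$ names that $\sigma'$ must preserve or not depending on which names it hits.

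The main obstacle — and the step I would spend the most care on — is step (iv): making precise the counting argument that a target reduction, matching at most $k = \md{\Lang_2}$ names, cannot encode a source interaction that matches $k+1$ names, once we account for the name-translation policy $\renpol$ blowing each source name up to a tuple of target names. The subtlety is that a single source-name mismatch becomes a tuple-mismatch in the target, and one must show the target still needs to match ``too many'' independent names, or alternatively that there is \emph{some} pair of matched source names whose identification is invisible to the target reduction witnessing success — and then hit that pair with a non-injective $\sigma$, use name invariance ($\encode{\sigma S} \beq_2 \sigma'\encode S$) together with success sensitiveness to conclude $\sigma S \suc_1$, contradicting that the mismatch kills the reduction in $\Lang_1$ and hence kills success (here one also needs that $S$'s \emph{only} reduction is the matched one, so that destroying it leaves $\sigma S$ stuck and non-successful). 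This pigeonhole step, relating $\md{\Lang_2}+1$ source names to the at-most-$\md{\Lang_2}$ names a target step can match, is the heart of the proof; the rest is routine assembly of the five validity clauses. I would also remark that $\md{\mathrm{CPC}} = \infty$ makes CPC a universal source for this separation against any $\Lang_2$ of finite matching degree.
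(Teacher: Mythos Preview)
The paper does not prove Theorem~\ref{match}; it is recalled from \cite{G:CONCUR08} as a proof technique, so there is no ``paper's own proof'' to compare against. I will therefore assess your plan on its own merits.

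Your skeleton --- contradiction via success sensitiveness, name invariance, and Proposition~\ref{deadlock}, with a pigeonhole on the matching degree --- is the right shape, and you correctly identify step~(iv) as the crux. However, that step is set up in the wrong direction and, as written, does not go through. You propose to start from a source process $S$ in which all $k+1$ names already match (so $S$ reduces), and then ``break the match'' by applying a non-injective substitution $\sigma$ that identifies two of the matched names. But identifying names that are \emph{already equal} (or identifying two distinct matched names $n_i,n_j$ uniformly on both sides of the interaction) preserves the match: $\sigma S$ still reduces in $\Lang_1$. A substitution can only merge names, never split them, so you cannot manufacture a mismatch from $S$ this way. Consequently the contradiction you aim for in the last line of (iv) --- ``the mismatch kills the reduction in $\Lang_1$'' --- never materialises.

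The standard fix is to invert the construction. Start from a source term $S'$ built with $2(k{+}1)$ \emph{pairwise distinct} names $a_0,\ldots,a_k,b_0,\ldots,b_k$, where the interaction tests $a_i$ against $b_i$; then $S' \not\redar_1$, hence $\encode{S'} \not\redar_2$ by Proposition~\ref{deadlock}. The full identification $\sigma = \{a_i/b_i\}_{i\le k}$ makes $\sigma S' \suc_1$, so $\sigma'\encode{S'} \suc_2$ by name invariance and success sensitiveness. Now run the pigeonhole on the \emph{partial} identifications: the first target reduction in the success witness matches at most $k = \md{\Lang_2}$ names, so there is some index $j$ such that dropping the clause $a_j/b_j$ from $\sigma$ (call the result $\tau$) still allows that target reduction; iterating, $\tau'\encode{S'}$ still succeeds. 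But $\tau S' \not\suc_1$ in $\Lang_1$, since the $j$-th position mismatches --- contradiction. The delicate part you flagged (interaction with $\renpol$ and with the context names introduced by compositionality) is handled by choosing all $a_i,b_i$ fresh for those reserved names; the details are in \cite{G:CONCUR08}.
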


The previous proof techniques can be directly used in some cases: for example,
we shall prove that CPC cannot be encoded in any other sample calculus by exploiting
Theorem~\ref{autoriduz}. However, not all separation results can be obtained as 
corollaries of such results. The following technique provides a very useful tool when
the theorems above cannot be applied.

\begin{prop}
\label{deadlock}
Let $\encode\cdot$ be a valid encoding; then, $S \noredar\!\!_1$
implies that $\encode S \noredar\!\!_2$.
\end{prop}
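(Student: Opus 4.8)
The plan is to argue by contradiction, exploiting the interplay between operational correspondence and divergence reflection. The key observation is that, since $S$ does not reduce at all, even a single reduction of $\encode S$ would already force $\encode S$ to diverge, and divergence reflection forbids this.

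Concretely, I would assume $\encode S \redar_2 T_1$ for some $T_1$ and apply the second clause of operational correspondence to the (length-one) sequence $\encode S \Redar_2 T_1$: this yields some $S_1$ with $S \Redar_1 S_1$ and $T_1 \Redar_2 T_1'$ for some $T_1' \beq_2 \encode{S_1}$. Since $S \noredar_1$, the reduction sequence $S \Redar_1 S_1$ must be empty, so $S_1 = S$; hence $\encode S \redar_2 T_1 \Redar_2 T_1'$ with $T_1' \beq_2 \encode S$. In other words, there is some $n_1 \geq 1$ such that $\encode S \redar_2^{\,n_1} T_1'$ and $T_1' \beq_2 \encode S$.

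The next step is to turn this ``return to a behaviourally equivalent state'' into an infinite computation by iterating it. Since $\beq_2$ (the reference equivalence for $\Lang_2$) is reduction closed, I would peel off the sequence $\encode S \redar_2^{\,n_1} T_1'$ one reduction at a time, matching each step starting from $T_1' \beq_2 \encode S$ and using transitivity of $\beq_2$ to keep the invariant; this produces $T_1' \redar_2^{\,n_1} T_2'$ with $T_2' \beq_2 \encode S$, and, repeating, a single infinite reduction sequence $\encode S \redar_2^{\,n_1} T_1' \redar_2^{\,n_1} T_2' \redar_2^{\,n_1} \cdots$. As $n_1 \geq 1$, this witnesses $\encode S \redar_2^\omega$. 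Divergence reflection then gives $S \redar_1^\omega$, contradicting $S \noredar_1$; hence $\encode S \noredar_2$.

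The part I expect to require the most care is precisely this iteration: one must first see that one reduction of $\encode S$ suffices for divergence — which combines operational correspondence (to close the loop back to a state $\beq_2 \encode S$) with the hypothesis $S \noredar_1$ (to exclude any progress on the source side) — and then transport the whole finite sequence $\encode S \redar_2^{\,n_1} T_1'$ across the equivalence $T_1' \beq_2 \encode S$ by repeated application of reduction closure. The remaining ingredients, namely reading off $S_1 = S$ from $S \noredar_1$ and the final appeal to divergence reflection, are routine.
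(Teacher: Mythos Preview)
Your argument is correct and is essentially the standard proof of this result; note that the paper itself does not give a proof here but merely recalls the proposition from \cite{G:CONCUR08}, where the same strategy (operational correspondence to close a loop back to a state $\beq_2 \encode S$, then divergence reflection for the contradiction) is used.

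One point worth making explicit: your iteration relies on \emph{strong} reduction closure of $\beq_2$, i.e.\ that each $\redar_2$ step of one side is matched by exactly one $\redar_2$ step of the other, so that every round genuinely contributes $n_1 \geq 1$ reductions to the growing sequence. This holds for the barbed congruences defined in the paper (Definition~\ref{def:barb-close}) and is among the standing assumptions on the reference equivalences in Gorla's framework, but it is not a consequence of the five validity criteria of Definition~\ref{def:ve} alone; you should state it as an assumption on $\beq_2$. With only weak reduction closure the transport of $\encode S \redar_2^{\,n_1} T_1'$ across $T_1' \beq_2 \encode S$ might yield zero steps, and the divergence argument would stall.
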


The way in which we shall use this technique is the following. To prove that 
$\Lang_1$ cannot be encoded into $\Lang_2$ we reason by contradiction and assume a
valid encoding $\encode\cdot$. Then, we pick an $\Lang_1$-process $P$ that reduces;
we first show that this implies that $\encode P$ also reduces. We then analyze how
the latter reduction may have happened and, for every possible case, show a process
$P'$ obtained from $P$ (usually, by just swapping two names) such that $\encode{P'}$
reduces whereas $P'$ does not. This contradicts Proposition~\ref{deadlock}
and allows us to conclude that no valid encoding exists.

\subsection{CPC vs $\pi$-calculus and Linda}
\label{subsec:pi-linda}

A hierarchy of sets of process calculi with different communication primitives is
obtained in 
\cite{G:IC08} via combining four features: 
synchronism (synchronous vs asynchronous), 
arity (monadic vs polyadic data exchange), 
communication medium (channels vs shared dataspaces), 
and the presence of a form of pattern matching (that checks the arity of the 
tuple of names and equality of some specific names).
This hierarchy is built upon a very similar notion of encoding to that presented
in Definition~\ref{def:ve} and, in particular, it is proved that Linda \cite{Gel85} (called $\lzuzu$ in \cite{G:IC08}) is more expressive than
monadic/polyadic $\pi$-calculus \cite{milner.parrow.ea:calculus-mobile,milner:polyadic-tutorial} (called $\luzuz$ and $\luuuz$, respectively, in \cite{G:IC08}).

As Linda is more expressive than $\pi$-calculus, it is sufficient to show that CPC is more expressive than Linda.
However, apart from being a corollary of such a result, the lack of a valid encoding of CPC into $\pi$-calculus can also be shown by exploiting the matching degree, i.e.\ Theorem~\ref{match}: the matching degree of $\pi$-calculus is one, while the matching degree of CPC is infinite.

\begin{thm}
\label{noCPCinLinda-1}
There is no valid encoding of CPC into Linda.
\end{thm}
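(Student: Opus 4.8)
The plan is to invoke Theorem~\ref{autoriduz}, taking $\Lang_1$ to be CPC and $\Lang_2$ to be Linda. Note that the matching--degree criterion of Theorem~\ref{match} is useless here, since $\md{\mbox{CPC}} = \md{\mbox{Linda}} = \infty$; so the ``self-interaction'' criterion of Theorem~\ref{autoriduz} is the natural tool, and we only have to exhibit a suitable CPC process and check the required property of Linda.

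First, I would supply the witness for the hypothesis on $\Lang_1$. Take $S = n\to\ok$, for any name $n$. Then $S$ does not reduce, because the interaction axiom of CPC needs two cases in parallel; $S$ is not successful, since it has no reductions and $n\to\ok\not\equiv P''\bnf\ok$ for any $P''$ (structural congruence cannot turn a case into a parallel composition with $\ok$ as a parallel component); but $S\bnf S = (n\to\ok)\bnf(n\to\ok)$ does reduce, because $\{n\pmatch n\}=(\{\},\{\})$, so that $S\bnf S\redar\ok\bnf\ok$, and $\ok\bnf\ok$ is clearly successful. Hence $S$ satisfies $S\noredar_1$, $S\not\suc_1$ and $S\bnf S\suc_1$.

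Second, I would verify the hypothesis on $\Lang_2$: for every Linda process $T$, if $T$ does not reduce then neither does $T\bnf T$. A reduction of $T\bnf T$ consumes exactly one unguarded output tuple together with one unguarded input prefix whose template matches that tuple (both being unguarded occurrences of prefixes of $T$, possibly produced by unfolding replications). If both redexes belong to the same copy of $T$, then $T$ itself reduces, contradicting the assumption. If they belong to distinct copies, then, after $\alpha$-converting so that the restricted names of the two copies are disjoint, the matching input template can only inspect names that are free in $T$: an exact field $\ex b$ cannot match a restricted name, whereas a binding field $\lambda x$ matches any name regardless. Hence the very same output and input would already react within a single copy of $T$, again a contradiction. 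So the hypothesis on $\Lang_2$ holds, and Theorem~\ref{autoriduz} yields the statement.

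The only step requiring a little care is the second one: one must check that duplicating a Linda process never creates fresh communication opportunities. This is precisely where the ``first-order'', channel-less (dataspace-based) nature of Linda is used, and it is the exact analogue of the corresponding observation for $\pi$-calculus; everything else is a routine unfolding of the definitions.
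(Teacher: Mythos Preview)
Your proposal is correct and follows essentially the same route as the paper: the paper also exhibits the self-matching CPC process $S = x \to \ok$ (your $n \to \ok$) and appeals to Theorem~\ref{autoriduz}, merely remarking that the Linda property (``$T\bnf T \redar$ implies $T \redar$'') is provable by induction on the structure of $T$, whereas you spell out the input/output asymmetry argument in more detail. Your preliminary observation that Theorem~\ref{match} is inapplicable here (since $\md{\text{CPC}} = \md{\text{Linda}} = \infty$) is a useful clarification the paper leaves implicit.
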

\begin{proof}
The self-matching CPC process $S=x \pre \ok$ is such that $S\not\redar$ and $S\not\suc$, however $S\bnf S\redar$ and $S\bnf S\suc$. Every Linda process $T$ such that $T\bnf T\redar$ can reduce in isolation, i.e.\ $T\redar$: this fact can be proved by induction on the structure of $T$. Conclude by Theorem~\ref{autoriduz}.
\end{proof}

The next step is to show a valid encoding of Linda into CPC.
The encoding $\encode\cdot$ is homomorphic with
respect to all operators except for input and output which are encoded as follows:
\begin{eqnarray*}
\encode{\iap{}{\wt t}.P} &\define& \pref {\patt{\wt t}} {\encode P}\\
\encode{\oap{}{\wt b}} &\define&  \pref {\patb{\wt b}} \zero
\end{eqnarray*}
The functions $\patt\cdot$ and $\patb\cdot$ are used to translate templates and data, respectively, 
into CPC patterns. The functions are defined as follows:
$$
\begin{array}{rcll}
\patt{\ } & \define & \l x \bullet \n
& \mbox{for $x$ a fresh name} 
\\
\patt{t,\wt t} & \define & t \bullet \n \bullet \patt{\wt t}
\\
\patb{\ } & \define & \n \bullet \l x
\\
\patb{b,\wt b} & \define & b \bullet \l x \bullet \patb{\wt b}
\qquad & \mbox{for $x$ a fresh name} 
\end{array}
$$
where $\n$ is any name (a symbolic name is used for clarity but no result relies upon this choice).
Moreover, the function $\patb\cdot$ associates a bound variable to 
every name in the sequence; this fact ensures that 
a pattern that translates a datum and a pattern that translates 
a template match only if they have the same length (this is a feature
of Linda's pattern matching but not of CPC's unification).
It is worth noting that the simpler translation
$
\encode{\oap{}{b_1,\ldots,b_n}} \ \define\ 
\pref {b_1 \bullet \ldots \bullet b_n} \zero
$
would not work: the Linda process $\oap{}{b}\bnf\oap{}{b}$
does not reduce, whereas its encoding would, in contradiction with Proposition~\ref{deadlock}.

Next is  to prove that this encoding is valid. 
This is a corollary of the following lemma, stating a strict 
correspondence between Linda's pattern matching and CPC's unification
(on patterns arising from the translation).

\begin{lem}
\label{lem:twomatchs}
$\pmtch(\wt t;\wt b) = \sigma$ if and only if 
\[\aaa {\patt{\wt t}}
{\patb{\wt b}} = (\sigma \cup \{\n/x\}, \{\n/x_0,\ldots,\n/x_n\}),\]
where $\{x_0,\ldots,x_n\} = {\sf bn}(\patb{\wt b})$ and
${\sf dom}(\sigma) \uplus \{x\} = {\sf bn}(\patt{\wt t})$ 
and $\sigma$ maps names to names. 
\end{lem}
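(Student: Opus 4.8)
The plan is to establish the two directions together by induction on the length of $\wt t$ (equivalently, once the lengths are shown to coincide, on the length of $\wt b$). The structural observation driving everything is that, since compounding is left associative, the defining clauses for the translation functions actually build \emph{binary} compounds: $\patt{t,\wt t}=(t\bullet\n)\bullet\patt{\wt t}$ and $\patb{b,\wt b}=(b\bullet\l x)\bullet\patb{\wt b}$. Hence, by the compound clause for unification, $\{\patt{t,\wt t}\pmatch\patb{b,\wt b}\}$ is defined exactly when both $\{t\bullet\n\pmatch b\bullet\l x\}$ and $\{\patt{\wt t}\pmatch\patb{\wt b}\}$ are, and in that case the result is the componentwise union of the two pairs of substitutions. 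This matches, step for step, the compound clause of Linda matching, which yields $\pmtch(t;b)\uplus\pmtch(\wt t;\wt b)$.

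In the base case (empty tuples) one computes directly: $\patt{\ }=\l x\bullet\n$ and $\patb{\ }=\n\bullet\l x_0$, so $\{\patt{\ }\pmatch\patb{\ }\}=(\{\n/x\},\{\n/x_0\})$ by one application each of the two binding-name clauses, which is the right-hand side of the statement for $n=0$ with $\sigma=\pmtch(\ ;\ )=\{\}$. For the inductive step I would analyse the leading template $t$. If $t=\l y$, then $\{t\bullet\n\pmatch b\bullet\l x\}=(\{b/y\},\{\n/x\})$, whose first component is $\pmtch(\l y;b)=\{b/y\}$ extended with the fresh binding $\n/x$; if instead $t=\ex{b'}$, then by the atom unification rules $\{t\bullet\n\pmatch b\bullet\l x\}$ is defined iff $b'=b$ --- precisely the condition for $\pmtch(\ex{b'};b)$ --- and it equals $(\{\},\{\n/x\})$. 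Combining this with the inductive hypothesis applied to $\wt t$ and $\wt b$ and collecting the $\n$-bindings yields the two substitutions in the required shape. The unions involved are disjoint: each $\n$-binding targets a binding name freshly chosen for a distinct position, and the template variables are pairwise distinct by assumption, so the translated patterns are well formed and the halves of every union have disjoint domains. The remaining side conditions --- that $\sigma$ maps names to names, that ${\sf dom}(\sigma)\uplus\{x\}={\sf bn}(\patt{\wt t})$, and that $\{x_0,\ldots,x_n\}={\sf bn}(\patb{\wt b})$ --- follow along the same induction as routine bookkeeping.

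The step I expect to need the most care is the ``only if'' direction, where one cannot yet assume $|\wt t|=|\wt b|$: I would first show that a defined unification $\{\patt{\wt t}\pmatch\patb{\wt b}\}$ forces the two lengths to be equal. Peeling off leading $(t\bullet\n)$/$(b\bullet\l x)$ pairs as above, a length mismatch would eventually confront either a binary compound $\patt{t,\wt t'}$ with the atom-headed $\patb{\ }=\n\bullet\l x_0$ --- forcing one to unify the compound $t\bullet\n$ with the atom $\n$ --- or the atom-headed $\patt{\ }=\l x\bullet\n$ with a binary compound $\patb{b,\wt b'}$ --- forcing one to unify the binding name $\l x$ with the compound $b\bullet\l x'$, which is not communicable because of the padding binding name $\l x'$. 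Both situations are undefined, contradicting the hypothesis; this is exactly the point of the fresh binding name $\patb\cdot$ inserts after each datum and of the trailing $\l x\bullet\n$ in $\patt\cdot$, which together make CPC unification length-sensitive on the images of the translation, as Linda matching is but CPC unification in general is not. With length equality in hand, the ``only if'' direction is the above induction read backwards: a defined $\{\patt{t,\wt t}\pmatch\patb{b,\wt b}\}$ must decompose through the two binary components, forcing $\{t\bullet\n\pmatch b\bullet\l x\}$ defined --- hence $t=\l y$ (with $b$ arbitrary), or $t=\ex{b'}$ with $b'=b$ --- and $\{\patt{\wt t}\pmatch\patb{\wt b}\}$ defined, to which the inductive hypothesis applies, recovering $\pmtch(\wt t;\wt b)=\sigma$.
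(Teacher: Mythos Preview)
Your proposal is correct and follows essentially the same approach as the paper: both directions are proved by induction on the length of $\wt t$, with the same case analysis on the leading template field and the same argument that a length mismatch makes the CPC unification undefined. Your presentation is a bit more explicit about the left-associative grouping $(t\bullet\n)\bullet\patt{\wt t}$ and the resulting binary decomposition of the unification, but the underlying argument is the same.
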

\proof
In both directions the proof is by induction on the length of $\wt t$.
The forward direction is as follows.
\begin{itemize}
	\item The base case is when $\wt t$ is the empty sequence of template fields;
				thus, $\patt{\wt t} = \l x \bullet \n$.
				By definition of $\pmtch$, it must be that $\wt b$ is the empty sequence
				and that $\sigma$ is the empty substitution.
				Thus, $\patb{\wt b} = \n \bullet \l x$ and the thesis follows.
	\item For the inductive step $\wt t= t,\wt{t'}$ and $\patt{\wt t} = t \bullet \n \bullet \patt{\wt{t'}}$.
				By definition of $\pmtch$, it must be that $\wt b = b, \wt{b'}$ and $\pmtch(t,b) = \sigma_1$ and
				$\pmtch(\wt{t'},\wt{b'}) = \sigma_2$ and $\sigma = \sigma_1 \uplus \sigma_2$.
				By the induction hypothesis, $\aaa {\patt{\wt {t'}}} {\patb{\wt {b'}}} = 
				(\sigma_2 \cup \{\n/x\} ; \{\n/x_1,\ldots,\n/x_n\})$,
				where $\{x_1,\ldots,x_n\} = {\sf bn}(\patb{\wt {b'}})$
				and ${\sf dom}(\sigma_2) \uplus \{x\} = {\sf bn}(\patt{\wt {t'}})$.
				There are now two sub-cases to consider according to the kind of template field $t$.
				\begin{itemize}
					\item If $t = \pro b$ then $\sigma_1 = \{\}$; therefore,
								$\sigma
                                          = \sigma_2$ as well as
								$\aaa {\patt{\wt t}} {\patb{\wt b}} = 
								(\sigma\cup \{\n/x\}, \{\n/x_0,\ldots,\n/x_n\})$.
					\item If $t = \l y$ then $\sigma_1 = \{b/y\}$ and $y \not\in {\sf dom}(\sigma_2)$.
								Thus, $\patt{\wt t}$ is a pattern in CPC and it follows that
								$\aaa {\patt{\wt t}} {\patb{\wt b}} =
								(\sigma_1\cup\sigma_2\cup \{\n/x\}, \{\n/x_0,\ldots,\n/x_n\}) =
								(\sigma \cup \{\n/x\}, \{\n/x_0,\ldots,\n/x_n\})$.
				\end{itemize}
\end{itemize}
The reverse direction is as follows.
\begin{itemize}
	\item The base case is when $\wt t$ is the empty sequence of template fields;
				thus, $\patt{\wt t} = \l x \bullet \n$.
				Now proceed by contradiction.
				Assume that $\wt b$ is not the empty sequence.
				In this case, $\patb{\wt b} = b_0 \bullet\l x_0\bullet(b_1 \bullet\l x_1\bullet(
				\ldots(b_n \bullet\l x_n \bullet (\n \bullet \l x_{n+1}))\ldots)$,
				for some $n > 0$.
				By definition of pattern unification in CPC, $\patb{\wt b}$ and $\patt{\wt t}$ cannot unify,
				and this would contradict the hypothesis. Thus, it must be that $\wt b$ is the empty sequence
				and we conclude.
	\item The inductive case is when $\wt t = t, \wt{t'}$ and thus,
				$\patt{\wt t} = t \bullet \n \bullet \patt{\wt{t'}}$.
				If $\wt b$ was the empty sequence, then
				$\patb{\wt b} = \n \bullet \l x$ and it would not unify with $\patt{\wt t}$.
 				Hence, $\wt b = b, \wt{b'}$ and so
 				$\patb{\wt b} = b \bullet \l x \bullet \patb{\wt{b'}}$.
 				By definition of pattern-unification in CPC it follows that
 				$\aaa t b = (\sigma_1, \{\})$ and
				$\aaa {\patt{\wt {t'}}} {\patb{\wt {b'}}} =(\sigma_2 \cup \{\n/x\},
				\{\n/x_1,\ldots, \n/x_n\})$	and $\sigma=\sigma_1\cup\sigma_2$.
				Now consider the two sub-cases according to the kind of the template field $t$.
				\begin{itemize}
					\item If $t = \ex b$ then $\sigma_1 = \{\}$ and so $\sigma_2 = \sigma$.
								By induction hypothesis, $\pmtch(\wt{t'};\wt{b'}) = \sigma$,
								and so $\pmtch(\wt{t};\wt{b}) = \sigma$.
					\item If $t = \l y$ then $\sigma_1 = \{b/y\}$ and
								$\sigma_2=\{n_i/y_i\}$ for $y_i\in{\sf dom}(\sigma)\backslash \{y\}$
								and $n_i=\sigma y_i$.
								Thus, $y \not\in {\sf dom}(\sigma_2)$ and so $\sigma = \sigma_1 \uplus \sigma_2$.
								By the induction hypothesis, $\pmtch(\wt{t'};\wt{b'}) = \sigma_2$;
								moreover, $\pmtch(t;b) = \sigma_1$.
								Thus, $\pmtch(\wt{t};\wt{b}) = \sigma$.\qed
				\end{itemize}
\end{itemize}

\begin{lem}
\label{lem:structenc-linda}
If $P \equiv Q$ then $\encode P \equiv \encode Q$.
Conversely, if $\encode P \equiv Q$ then $Q = \encode{P'}$, for some $P' \equiv P$.
\end{lem}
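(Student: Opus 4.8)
The plan is to establish the two implications separately: the first by induction on the derivation of $P \equiv Q$, the second by induction on the structure of $P$ combined with an analysis of the axioms of $\equiv$ in CPC. Throughout, the crucial structural fact is that $\encode\cdot$ acts homomorphically on exactly those operators — $\zero$, $\mid$, $!$ and $(\nu\cdot)$ — that occur in the axioms of structural congruence, so that a structural rearrangement on one side can be matched step by step by the same rearrangement on the other.

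Before the induction I would record two routine properties of this particular encoding: (i) ${\sf fn}(P) \subseteq {\sf fn}(\encode P) \subseteq {\sf fn}(P) \cup \{\n\}$, the only name possibly added being the reserved name $\n$ (chosen outside the names of any source process), since all other names introduced by $\patt\cdot$ and $\patb\cdot$ are binding; and (ii) $\encode{\{m/n\}P} = \{m/n\}\encode P$ for every renaming, while $\encode\cdot$ is itself defined only up to $\alpha$-conversion, the choice of the fresh names in $\patt\cdot$ and $\patb\cdot$ being immaterial. With (i) and (ii) the forward direction is immediate: each instance of a Linda axiom, say $!P_1 \equiv P_1 \mid {!P_1}$, is sent by homomorphy to the corresponding CPC instance $!\encode{P_1} \equiv \encode{P_1} \mid {!\encode{P_1}}$; the scope-extension axiom needs its side condition, which is preserved by (i) because every name occurring in a Linda side condition is different from $\n$; and $\alpha$-conversion of restricted names or of template binders is matched by (ii). The congruence and equivalence-closure rules carry over trivially.

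For the converse I would call a CPC process \emph{Linda-shaped} if it is built from $\zero$, $\mid$, $!$, $(\nu\cdot)$ with every leaf a case of one of the two forms $\patt{\wt t} \to T$ (with $T$ itself Linda-shaped) or $\patb{\wt b} \to \zero$; every $\encode P$ is Linda-shaped, and a Linda-shaped process reads back, uniquely up to $\equiv$ on Linda, as $\encode{P'}$ for a corresponding Linda process $P'$. The key observation is that structural congruence of CPC never creates, destroys or splits a case and can only alter the pattern of a case by $\alpha$-renaming its binding names; hence if $\encode P \equiv T$ then $T$ is again Linda-shaped, and each of its leaves $\patt{\wt{t'}} \to R$ matches a leaf $\patt{\wt t} \to \encode{P_1}$ of $\encode P$ with $\wt{t'}$ an $\alpha$-variant of $\wt t$ and $R \equiv \encode{P_1}$; by the inductive hypothesis $R = \encode{P_1'}$ with $P_1' \equiv P_1$. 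Reading $T$ back as $\encode{P'}$, the CPC rearrangements leading from $\encode P$ to $T$ are precisely the Linda rearrangements leading from $P$ to $P'$, so $P \equiv P'$, as required.

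The point requiring the most care is making the converse rigorous: a derivation witnessing $\encode P \equiv T$ may a priori pass through CPC processes not in the image of $\encode\cdot$, so one cannot simply induct on the derivation. I would sidestep this with the standard-form argument familiar from the $\pi$-calculus: every CPC process is $\equiv$-equivalent to a standard form $\rest n (C_1 \mid \cdots \mid C_k \mid {!D_1} \mid \cdots \mid {!D_h})$ with the $C_i, D_j$ cases whose bodies are themselves in standard form, and two processes are $\equiv$-equivalent iff their standard forms agree up to reordering of the parallel components and $\alpha$-conversion. Since the standard form of a Linda-shaped process is again Linda-shaped and reads back directly as a Linda standard form, this pins down the correspondence. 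The residual bookkeeping — tracking the fresh names produced by $\patt\cdot$ and $\patb\cdot$ and the reserved status of $\n$ — is then purely routine.
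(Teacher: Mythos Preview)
Your proof is correct and rests on exactly the observation the paper invokes: $\equiv$ manipulates only the operators ($\zero$, $|$, $!$, $(\nu\cdot)$, plus $\alpha$-conversion) on which $\encode\cdot$ is homomorphic. The paper's own proof is the single sentence ``Straightforward, from the fact that $\equiv$ acts only on operators that $\encode\cdot$ translates homomorphically,'' so you have supplied the details it omits rather than taken a different route.

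One remark on the converse: the worry that a derivation of $\encode P \equiv Q$ might pass through intermediate processes outside the image of $\encode\cdot$ is well motivated but can be dispatched more lightly than via standard forms. Each generating step of $\equiv$ --- an axiom instance applied inside a context, or $\alpha$-conversion --- already preserves the image of the encoding (this is precisely the content of your case analysis of the axioms, together with the fact that case-contexts $\patt{\wt t} \to [\cdot]$ correspond to Linda input-contexts $(\wt t).[\cdot]$, and that $\alpha$-converting a binder introduced by $\patt{\cdot}$ or $\patb{\cdot}$ just picks a different fresh name). Hence a straightforward induction on the length of the $\equiv$-derivation keeps every intermediate term in the image, and the standard-form normalisation, while correct, is not needed.
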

\begin{proof}
Straightforward, from the fact that $\equiv$ acts only on operators that $\encode\cdot$ translates homomorphically.
\end{proof}

\begin{thm}
\label{thm:opcorr}
The translation $\encode\cdot$ from Linda into CPC preserves and reflects reductions.
That is:
\begin{itemize}
	\item If $P\redar P'$ then $\xtrans P \redar \xtrans {P'}$;
	\item if $\xtrans P \redar Q$ then $Q=\xtrans {P'}$ for some $P'$ such that $P\redar P'$.
\end{itemize}
\end{thm}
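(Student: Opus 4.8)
The plan is to prove the two directions separately, in both cases relating a Linda reduction to the corresponding CPC unification via Lemma~\ref{lem:twomatchs}, and using Lemma~\ref{lem:structenc-linda} to handle the closure under structural congruence. For the forward direction, suppose $P \redar P'$ in Linda. By the definition of the Linda reduction relation, this derivation must, up to structural congruence, bottom out at an application of the interaction axiom $\oap{}{\wt b} \bnf \iap{}{\wt t}.P_1 \redar \sigma P_1$ with $\pmtch(\wt t;\wt b)=\sigma$, placed inside some parallel/restriction context; so $P \equiv \rest m(\oap{}{\wt b}\bnf \iap{}{\wt t}.P_1 \bnf P_2)$ and $P' \equiv \rest m(\sigma P_1 \bnf P_2)$. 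Applying $\encode\cdot$ and using Lemma~\ref{lem:structenc-linda}, $\encode P \equiv \rest m(\pref{\patb{\wt b}}\zero \bnf \pref{\patt{\wt t}}{\encode{P_1}} \bnf \encode{P_2})$. By Lemma~\ref{lem:twomatchs}, $\{\patt{\wt t}\pmatch\patb{\wt b}\}$ is defined and equals $(\sigma\cup\{\n/x\},\{\n/x_0,\ldots,\n/x_n\})$, so the CPC interaction axiom fires and $\encode P \redar \rest m((\sigma\cup\{\n/x\})\encode{P_1} \bnf \{\n/x_i\}_i\zero \bnf \encode{P_2})$. Since $x$ is fresh (it does not occur free in $\encode{P_1}$, as it was introduced only in $\patt{\wt t}$) and $\{\n/x_i\}_i\zero = \zero$, this process is structurally congruent to $\rest m(\sigma\encode{P_1}\bnf\encode{P_2})$, and since $\encode\cdot$ is homomorphic one checks $\sigma\encode{P_1} = \encode{\sigma P_1}$ (here it matters that $\sigma$ maps names to names, so the translation commutes with it); hence $\encode P \redar \equiv \encode{P'}$, and by closing the LTS-style reduction under $\equiv$ we get $\encode P \redar \encode{P'}$ as required.

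For the converse, suppose $\encode P \redar Q$. Again the CPC reduction, up to $\equiv$, must come from the unification axiom applied to two cases in parallel; so $\encode P \equiv \rest m(p\pre R_1 \bnf q\pre R_2 \bnf R_3)$ with $\{p\pmatch q\}=(\sigma_1,\rho_1)$ and $Q \equiv \rest m(\sigma_1 R_1 \bnf \rho_1 R_2 \bnf R_3)$. By Lemma~\ref{lem:structenc-linda}, $\encode P \equiv Q'$ forces $Q' = \encode{P'}$ with $P'\equiv P$; applying this, the only subterms of a translated process that are cases are the images of inputs and outputs, so the two interacting cases must be $\encode{\iap{}{\wt t}.P_1} = \pref{\patt{\wt t}}{\encode{P_1}}$ and $\encode{\oap{}{\wt b}} = \pref{\patb{\wt b}}\zero$ for some template $\wt t$ and data $\wt b$ occurring in $P$ (they cannot be two inputs or two outputs: by inspecting $\patt\cdot$ and $\patb\cdot$, two patterns of the form $\patt{\cdot}$ can never unify — the trailing $\n$ of one would have to meet the trailing $\l x$ of the same shape and the lengths never line up — and likewise two $\patb{\cdot}$ patterns, since each fixed name $b_i$ would have to meet another fixed name or a bound one of the wrong polarity; this is precisely the observation motivating the remark after the definition of the encoding). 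By Lemma~\ref{lem:twomatchs} (used in the ``only if'' direction, via its ``if'' being an equivalence), the unifiability of $\patt{\wt t}$ with $\patb{\wt b}$ yields $\pmtch(\wt t;\wt b)=\sigma$ for some $\sigma$ mapping names to names, with $(\sigma_1,\rho_1) = (\sigma\cup\{\n/x\},\{\n/x_i\}_i)$. Then the Linda interaction axiom gives $P \redar P'$ where $P' \equiv \rest m(\sigma P_1 \bnf P_2)$, and re-running the forward computation shows $\encode{P'} \equiv Q$; closing under $\equiv$ finishes this direction.

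The main obstacle is bookkeeping rather than conceptual: one must be careful that the fresh names $x,x_0,\ldots,x_n$ introduced by $\patt\cdot$ and $\patb\cdot$ do not escape — i.e.\ that after the CPC interaction the spurious bindings $\{\n/x\}$ and $\{\n/x_i\}_i$ act as the identity on the translated bodies (which holds because these names are fresh and, in the case of the output side, the body is $\zero$) — and that the substitution $\sigma$ produced by Linda, being name-to-name, commutes with the homomorphic translation so that $\sigma\encode{P_1}=\encode{\sigma P_1}$. The other point requiring a short argument is the exhaustiveness of the case analysis in the converse direction: establishing that in a translated process the only possible CPC interaction is between the image of an output and the image of an input, which rests on the non-unifiability of $\patt\cdot$ with $\patt\cdot$ and of $\patb\cdot$ with $\patb\cdot$, proved by a direct inspection of the two pattern-building functions.
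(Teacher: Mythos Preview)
Your proposal is correct and follows essentially the same approach as the paper: both directions hinge on Lemma~\ref{lem:twomatchs} for the base interaction and Lemma~\ref{lem:structenc-linda} for structural congruence, together with the observation that two $\patt{\cdot}$ patterns (respectively two $\patb{\cdot}$ patterns) cannot unify. The only stylistic difference is that the paper phrases both parts as a straightforward induction on the reduction derivation, whereas you first normalise via $\equiv$ into the shape $\rest m(\oap{}{\wt b}\bnf\iap{}{\wt t}.P_1\bnf P_2)$; these are equivalent presentations, and the bookkeeping points you flag (freshness of the auxiliary binders, commutation of name-to-name $\sigma$ with $\encode\cdot$, and getting $Q=\encode{P'}$ rather than merely $Q\equiv\encode{P'}$ via the converse of Lemma~\ref{lem:structenc-linda}) are exactly the ones that need to be checked.
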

\begin{proof}
Both parts can be proved by a straightforward induction
on judgements $P \redar P'$ and $\encode P \redar Q$, respectively.
In both cases, the base step is the most interesting one and
follows from Lemma~\ref{lem:twomatchs}; the inductive
cases where the last rule used is the structural one rely on
Lemma~\ref{lem:structenc-linda}.
\end{proof}

\begin{cor}
\label{cor:valid-linda}
The encoding of Linda into CPC is valid.
\end{cor}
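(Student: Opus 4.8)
The plan is to verify, one by one, the five clauses of Definition~\ref{def:ve}, exploiting two facts: $\encode\cdot$ is homomorphic on every operator except input and output, and the renaming policy $\renpol$ is the identity (names are not expanded into tuples, so $\sigma'=\sigma$ in the name-invariance clause). \emph{Compositionality} is then immediate: for each homomorphic operator take the context obtained by applying that operator to the holes, and for input and output take the one-hole context $\pref{\patt{\wt t}}\cdot$ and the zero-hole context $\pref{\patb{\wt b}}\zero$ read off the defining clauses of $\encode\cdot$; none of these contexts depends on the parameter $N$.

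For \emph{name invariance} I would argue by induction on $S$ that $\encode{\sigma S}=\sigma\encode S$ when $\sigma$ is injective, and $\encode{\sigma S}\beq_2\sigma\encode S$ otherwise. The homomorphic cases are trivial; for input and output it suffices to check that $\patt{\sigma\wt t}=\sigma\patt{\wt t}$ and $\patb{\sigma\wt b}=\sigma\patb{\wt b}$, which holds once the fresh binding names introduced by $\patt\cdot$ and $\patb\cdot$ are chosen—as they may be—outside the names of $\sigma$. Modulo the $\alpha$-conversion assumed throughout, this in fact yields a literal equality in all cases, so the weaker $\beq_2$ required for non-injective $\sigma$ comes for free.

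The three \emph{operational} clauses follow from the tight, step-for-step correspondence of Theorem~\ref{thm:opcorr}, together with Lemma~\ref{lem:structenc-linda} for structural congruence. Iterating the first item of Theorem~\ref{thm:opcorr} gives: $S\Redar_1 S'$ implies $\encode S\Redar_2\encode{S'}$, hence $\encode S\Redar_2\beq_2\encode{S'}$ by reflexivity of $\beq_2$; iterating the second item, any $T$ with $\encode S\Redar_2 T$ is literally of the form $\encode{S'}$ with $S\Redar_1 S'$, so $T\Redar_2\beq_2\encode{S'}$ holds trivially. \emph{Divergence reflection} is obtained by reflecting an infinite CPC reduction sequence out of $\encode S$ one step at a time into an infinite Linda reduction sequence out of $S$ (possible precisely because Theorem~\ref{thm:opcorr} relates single steps to single steps). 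For \emph{success sensitiveness}, recall that $\ok$ and $\bnf$ are translated homomorphically and that no other clause of $\encode\cdot$ yields a top-level $\ok$: if $S\Redar_1 S'\equiv S''\bnf\ok$ then $\encode S\Redar_2\encode{S'}$ and, by Lemma~\ref{lem:structenc-linda}, $\encode{S'}\equiv\encode{S''}\bnf\ok$, so $\encode S\suc_2$; conversely, if $\encode S\Redar_2 T$ with $T\equiv T'\bnf\ok$, then by Theorem~\ref{thm:opcorr} $T=\encode{S'}$ for some $S\Redar_1 S'$, and $\encode{S'}\equiv T'\bnf\ok$ forces, by the same homomorphism analysis, $S'\equiv S''\bnf\ok$, so $S\suc_1$.

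\emph{Main obstacle.} In truth no genuinely hard step remains: the substantive content is Lemma~\ref{lem:twomatchs}, which reconciles Linda's pattern matching with CPC's unification on translated patterns, and the design of $\patb\cdot$—binding a fresh variable to every datum name so that a translated datum and a translated template unify only when they have equal length, which is exactly what makes Theorem~\ref{thm:opcorr} go through. The only place asking for a little care is the bookkeeping of those fresh binders in the name-invariance induction; the rest of the corollary is a direct assembly of the results already proved.
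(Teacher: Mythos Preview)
Your proposal is correct and follows essentially the same approach as the paper: the paper dispatches compositionality and name invariance ``by construction'', obtains operational correspondence and divergence reflection from Theorem~\ref{thm:opcorr}, and proves success sensitiveness by iterating Theorem~\ref{thm:opcorr} together with Lemma~\ref{lem:structenc-linda}. You simply unpack the ``by construction'' steps (the explicit contexts, the commutation of $\patt\cdot$/$\patb\cdot$ with substitution modulo the choice of fresh binders) and spell out the converse direction of success sensitiveness, which the paper leaves as ``similarly''.
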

\begin{proof}
Compositionality and name invariance hold by construction. 
Operational correspondence and divergence reflection
follow from Theorem~\ref{thm:opcorr}. Success sensitiveness
can be proved as follows: $P \suc$ means that there exist $P'$
and $k \geq 0$ such that $P \redar^k P' \equiv P''\bnf \ok$; 
by exploiting Theorem~\ref{thm:opcorr} $k$ times 
and Lemma~\ref{lem:structenc-linda}, we obtain that $\encode P \redar^k
\encode{P'} \equiv \encode{P''}\bnf \ok$, i.e.\ that $\encode P \suc$.
The converse implication can be proved similarly. 
\end{proof}

\subsection{CPC vs Spi}
\label{subsec:spi-new}
CPC cannot be encoded into Spi calculus, as a corollary of Theorem~\ref{autoriduz}.
This can be proved as in Theorem~\ref{noCPCinLinda-1}: the self-unifyinging CPC
process $x \pre \ok$ cannot be properly rendered in Spi.

The remainder of this section develops an encoding of Spi calculus into CPC.
The terms can be encoded as patterns using the reserved
names {\sf pair}, {\sf encr}, $0$, {\sf suc}, and the natural numbers $>0$ ranged over by $i$ with
\[
\begin{array}{rclrcl}
\encode n &\define& n
\qquad\qquad
&
\encode {{\it suc}(M)} &\define& {\sf suc} \bullet \encode M
\\
\encode x &\define& x
&
\encode {(M,N)} &\define& {\sf pair}\bullet \encode M \bullet \encode N
\\
\encode 0 &\define& 0
&
\encode {\encr M N} &\define& {\sf encr}\bullet \encode M \bullet \encode N
\\
\encode i &\define& {\sf suc}^i 0
\end{array}
\]
where ${\sf suc}^i 0$ denotes {\sf suc} compounded $i$ times with $0$.
The tagging is used for safety, as otherwise there are potential
pathologies in the translation: for example,
without tags, the representation of an encrypted term could be confused
with a pair.

The encoding of the familiar process forms are homomorphic as expected. The input and output
both encode as cases:
\begin{eqnarray*}
\begin{array}{rcll}
\encode {M(x).P} &\define& \encode M \bullet\l x\bullet \n \pre \encode P\\
\encode{\overline{M}\langle N\rangle .P} &\define& \encode M\bullet(\encode N)\bullet\l x\pre\encode P
		&\quad \mbox{$x$ is a fresh name}
\end{array}
\end{eqnarray*}
The symbolic name $\n$ (input) and fresh name $x$ (output) are used to
ensure that encoded inputs will only unify with encoded outputs as for Linda.

The four remaining process forms all require pattern unification and so
translate to cases in parallel. In each encoding a fresh name $n$ is
used to prevent interaction with other processes, see
Proposition~\ref{prop:free_n_match_proc}. As in the Spi calculus,
the encodings will reduce only after a successful unification and will be
stuck otherwise. The encodings are
\begin{eqnarray*}
\encode {[M\ {\it is}\ N]P} &\define&
            \res n (\pro n\bullet \encode M\pre \encode P \bnf \pro
n\bullet \encode N)\\
\encode {{\it let}\ (x,y)= M\ {\it in}\ P} &\define&
            \res n (\pro n \bullet (\pro {{\sf pair}}\bullet\l x\bullet
\l y)\pre \encode P\\
    & & \quad\quad    |    \pro n\bullet \encode M)\\
\encode {{\it case}\ M\ {\it of}\ \encr x N:P} &\define&
            \res n (\pro n \bullet (\pro {{\sf encr}}\bullet\l x\bullet
\encode N)\pre \encode P\\
    & & \quad\quad    |    \pro n\bullet \encode M)\\
\encode {{\it case}\ M\ {\it of}\ 0:P\ {\it suc}(x):Q} &\define&
            \res n (\pro n\bullet \pro 0\pre \encode P \\
    & & \quad\quad    |    \pro n\bullet (\pro {{\sf suc}}\bullet \l
x)\pre \encode Q \\
    & & \quad\quad    |    \pro n\bullet \encode M )
\end{eqnarray*}
The unification $[M\ {\it is}\ N]P$ only reduces to $P$ if $M=N$, thus the
encoding creates two patterns using $\encode M$ and $\encode
N$ with one reducing to $\encode P$.  
The encoding of pair splitting ${\it let}\ (x,y)= M\ {\it in}\ P$ 
creates a case with a pattern that unifies with a tagged pair and binds the
components to $x$ and $y$ in $\encode P$. This is put in parallel with
another case that has $\encode M$ in the pattern. 
The encoding of a decryption ${\it case}\ M\ {\it of}\ \encr x N:P$ checks
whether $\encode M$ is encoded with key $\encode N$ and
retrieves the value encrypted by binding it to $x$ in the continuation.
Lastly the encoding of an integer test ${\it case}\ M\ {\it of}\ 0:P\ {\it
suc}(x):Q$ creates a case for each of the zero and the
successor possibilities. These cases unify the tag and the reserved
names $0$, reducing to $\encode P$, or ${\sf suc}$ and binding $x$ in
$\encode Q$. The term to be compared $\encode M$ is as in the other cases.

Let us now prove validity of this encoding.

\begin{lem}
\label{lem:structenc-spi}
If $P \equiv Q$ then $\encode P \equiv \encode Q$.
Conversely, if $\encode P \equiv Q$ then $Q = \encode{P'}$, for some $P' \equiv P$.
\end{lem}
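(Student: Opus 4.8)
The plan is to proceed by a straightforward induction on the derivation of the structural congruence judgement, exactly as one does for the analogous Linda lemma (Lemma \ref{lem:structenc-linda}), and the forward direction will be essentially free. First I would observe that the encoding $\encode\cdot$ of the Spi calculus into CPC acts homomorphically on every operator that appears in a structural-congruence axiom, namely $\zero$, parallel composition, replication and restriction; only the input, output, match, let, decryption and integer-test forms are translated non-homomorphically, and none of these is touched by $\equiv$. Hence for each defining axiom $LHS \equiv RHS$ of Spi's structural congruence, $\encode{LHS}$ and $\encode{RHS}$ are related by the corresponding CPC structural axiom, since $\encode\cdot$ simply distributes over the surrounding context. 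Closure under the congruence rules (reflexivity, symmetry, transitivity, and the contextual rules for $|$, $!$ and $\res{}$) is immediate because $\encode\cdot$ is compositional with respect to exactly these operators, and CPC's $\equiv$ is itself a congruence. The only mild care point is the side condition on scope extrusion, $P \bnf \res n Q \equiv \res n(P\bnf Q)$ when $n\notin\fn(P)$: here I would note that $\fn(\encode P)\subseteq\fn(P)$ (indeed the encoding only adds reserved names and fresh names, and the fresh names can be $\alpha$-converted away from $n$), so the CPC side condition $n\notin\fn(\encode P)$ is satisfied, possibly after an $\alpha$-conversion of the freshly introduced names.

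For the converse direction — given $\encode P \equiv Q$, show $Q = \encode{P'}$ for some $P'\equiv P$ — the key observation is that no CPC structural axiom can ``see inside'' a subterm of the form $\pre$ (a case): CPC computation and structural rearrangement never decompose a case body or a case pattern, and all the non-homomorphic images of Spi constructs are built from cases. Consequently, if $\encode P \equiv Q$, the derivation of this congruence only rearranges the top-level $\zero$/$|$/$!$/$\res{}$ skeleton of $\encode P$, leaving the ``case blocks'' (the images of Spi's input, output, match, let, decrypt, integer-test prefixes) intact as opaque units. One then reads back, from the rearranged skeleton, a Spi process $P'$ having the same case blocks rearranged by the matching Spi structural axioms; by construction $\encode{P'}=Q$ and $P'\equiv P$. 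Formally this is an induction on the length of the $\equiv$-derivation witnessing $\encode P\equiv Q$, using at each single-axiom step the fact that the axiom is applied at a position in the $|/!/\res{}$ skeleton (an $\alpha$-conversion step may also be needed to match the fresh names introduced by $\encode\cdot$, but these are chosen fresh and so can be renamed back).

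The main obstacle — and it is a very mild one — is making precise the statement ``structural congruence in CPC acts only on the skeleton of an encoded process.'' I would handle this by defining a notion of the set of maximal case-subterms of a CPC process and checking that each CPC structural axiom preserves this multiset of case-subterms (up to $\alpha$-conversion) while only reshuffling the $\zero$/$|$/$!$/$\res{}$-context around them; this is exactly the reasoning already invoked, without comment, in the proof of Lemma \ref{lem:structenc-linda}, and it transfers verbatim since the Spi encoding, like the Linda one, confines all non-homomorphic behaviour to case prefixes. Everything else is routine bookkeeping about free names and $\alpha$-conversion, so I would simply state the proof as ``straightforward, from the fact that $\equiv$ acts only on operators that $\encode\cdot$ translates homomorphically,'' mirroring the phrasing of Lemma \ref{lem:structenc-linda}.
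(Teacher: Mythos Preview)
Your proposal is correct and follows exactly the same approach as the paper, which simply records the proof as ``straightforward, from the fact that $\equiv$ acts only on operators that $\encode\cdot$ translates homomorphically.'' Your detailed expansion---noting that the non-homomorphic constructs are all wrapped in cases and hence opaque to $\equiv$, and that the scope-extrusion side condition is preserved because the encoding only adds fresh or reserved names---is a faithful unpacking of precisely that one-line justification.
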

\begin{proof}
Straightforward, from the fact that $\equiv$ acts only on operators that $\encode\cdot$ translates homomorphically.
\end{proof}

\begin{thm}
\label{spi2cpc-red}
The translation $\encode\cdot$ from Spi calculus into CPC preserves and reflects reductions, up-to CPC's barbed
congruence. That is:
\begin{itemize}
	\item If $P\redar P'$ then $\xtrans P \redar \beq \xtrans {P'}$;
	\item if $\xtrans P \redar Q$ then $Q \beq \xtrans {P'}$ for some $P'$ such that $P\redar P'$.
\end{itemize}
\end{thm}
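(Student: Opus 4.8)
The plan is to follow the same scheme as the proof of Theorem~\ref{thm:opcorr} for Linda: both items are proved by induction, the first on the derivation of $P\redar P'$ and the second on the derivation of $\encode P\redar Q$, using Lemma~\ref{lem:structenc-spi} for the structural rule and the homomorphic clauses of $\encode\cdot$ for the parallel and restriction rules. The inductive steps go through because $\beq$ is a congruence (Theorem~\ref{thm: sound} together with contextuality of $\beq$) and $\equiv\ \subseteq\ \bisim\ \subseteq\ \beq$ (Lemma~\ref{lem:struct-is-a-bisim} and Theorem~\ref{thm: sound}), so an approximate reduct supplied by the induction hypothesis can be transported through the surrounding parallel or restriction context. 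Hence all the work is in the base cases, one per Spi reduction axiom, and, for the reflection direction, in the dual task of showing that every reduction of $\encode P$ must in fact be the image of one of those axioms.

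Before the base cases I would record two auxiliary facts. First, a \emph{substitution lemma}: $\encode{\{N/x\}M}=\{\encode N/x\}\encode M$ on terms and, consequently, $\encode{\{N/x\}P}=\{\encode N/x\}\encode P$ on processes (and likewise for the simultaneous substitution used by the pair-split axiom), both by routine structural inductions; the only point of care is that the reserved names ${\sf pair},{\sf encr},0,{\sf suc}$ and the marker and fresh names $\n$, $x$, $n$ planted by the translation are taken distinct from the names involved in the substitution, which is legitimate by freshness and $\alpha$-conversion. Second, a \emph{dead-case lemma}: if $n\notin{\sf fn}(R)$ and $D$ is a single case whose head pattern has $n$ among its protected names, then $\res n(R\bnf D)\ \beq\ R$. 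This holds because $\res n(R\bnf D)\equiv R\bnf\res n D$ by scope extension, while $\res n D$ has no labelled transitions at all — rule {\sf case} produces only the label carrying $D$'s head, which mentions $n$ protected, so neither {\sf resnon} (which forbids $n\in{\sf names}(\mu)$) nor {\sf open} (which needs $n\in{\sf vn}$ of the label's pattern) applies — and exhibits no barb (any barb of $\res n D$ would have to keep $n$ restricted, yet $n$ is protected in $D$'s head); hence $\res n D\ \bisim\ \zero$, and we conclude by Theorem~\ref{thm: sound} and contextuality of $\beq$.

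With these in hand the base cases are pattern computations. For communication one checks that an encoded output $\encode M\bullet\encode N\bullet\l x$ unifies with an encoded input $\encode{M'}\bullet\l y\bullet\n$ precisely when $\encode M=\encode{M'}$ — equivalently $M=M'$, since $\encode\cdot$ is injective on terms modulo the identification of $i$ with ${\sf suc}^i 0$ that Spi itself makes — producing $\{\encode N/y\}$ together with the vacuous $\{\n/x\}$ (as $x$ is fresh for the encoded continuation); by the substitution lemma the residual is $\encode{\{N/y\}P}\bnf\encode{P'}$, so here $\beq$ may even be replaced by $\equiv$. For $[M\ {\it is}\ N]P$ one uses that two \emph{communicable} CPC patterns unify iff they are syntactically equal, so the two parallel cases interact iff $\encode M=\encode N$ iff $M=N$, leaving $\res n(\encode P\bnf\zero)\equiv\encode P$; and when $M\neq N$ no internal interaction is possible while, by Proposition~\ref{prop:free_n_match_proc}, the $\res n$-bound protected head blocks every external interaction too, so $\encode{[M\ {\it is}\ N]P}$ is stuck exactly as in Spi. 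The pair-split and decryption axioms are analogous: the only available interaction is the intended one; it fires iff $\encode M$ has shape ${\sf pair}\bullet\cdot\bullet\cdot$ (resp.\ ${\sf encr}\bullet\cdot\bullet\encode N$), i.e.\ iff $M$ is a pair (resp.\ an encryption under $N$), and the residual is the encoding of the Spi reduct up to $\equiv$ (via the substitution lemma). The integer test is the one axiom whose encoding leaves residual garbage and so the only place the dead-case lemma is needed: when $M=0$ (resp.\ $M={\it suc}(N)$) the matching sub-case fires and the other sub-case survives under $\res n$ with $n$ protected in its head, so the residual is $\beq$- rather than $\equiv$-equal to $\encode P$ (resp.\ $\encode{\{N/x\}Q}$).

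For the reflection direction one argues in addition that an arbitrary reduction $\encode P\redar Q$ can only be one of the above. Writing the interacting pair of cases (modulo $\equiv$) as $p\pre P_1$ and $q\pre P_2$, the translation forces them to be matched partners: the fresh \emph{restricted} name planted, protected, in the head of every case coming from an intensional construct prevents that case from unifying with anything outside its own encoded subterm (Proposition~\ref{prop:free_n_match_proc}, together with the fact that replication $\alpha$-renames that name apart in each copy), and the markers $\n$ (input) and the fresh $x$ (output) prevent two encoded inputs or two encoded outputs from unifying; so either $p\pre P_1,q\pre P_2$ form an encoded output/input pair over equal channels, or they are two sub-cases of a single encoded intensional construct whose argument has the right shape. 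In each case one reads off the corresponding Spi redex, applies the analysis above, and obtains $Q\ \beq\ \encode{P'}$ with $P\redar P'$. I expect the main obstacle to be exactly this enumeration of the possible reductions of $\encode P$, together with the bookkeeping needed to justify the dead-case lemma and to keep $\beq$ (rather than $\equiv$) correctly threaded through the inductive cases; the rest is mechanical.
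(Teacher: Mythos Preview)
Your proposal is correct and follows essentially the same approach as the paper: induction on the reduction derivation in both directions, with the substitution lemma, Lemma~\ref{lem:structenc-spi}, and your dead-case lemma (the paper invokes Proposition~\ref{prop:free_n_match_proc} directly for the same purpose) doing the real work in the base cases. The only cosmetic difference is that the paper organises the reflection direction by placing the analysis of the four intensional constructs inside the \emph{restriction} case of the induction (since each such encoding is wrapped in a fresh $\res n(\ldots)$), rather than arguing globally about which pairs of cases can possibly interact; the content is the same.
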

\proof
The first claim can be proved by a straightforward induction
on judgement $P \redar P'$. The base case is proved by reasoning on the Spi axiom used to infer the reduction.
Although all the cases are straightforward, a reduction rule for integers is shown for illustration.
Consider the reduction for a successor as the reduction for zero is simpler.
In this case, $P = {\it case}\ {\it suc}(M)\ {\it of}\ 0:P_1\ {\it suc}(x):P_2$
and $P' = \{M/x\}P_2$. Then, 
\begin{eqnarray*}
\encode P &\define& 
			\res n (\pro n\bullet \pro 0\pre \encode {P_1} \\
	& & \quad\quad	\bnf	\pro n\bullet (\pro {{\sf suc}}\bullet \l x)\pre \encode {P_2}\\
	& & \quad\quad	\bnf	\pro n\bullet ({\sf suc} \bullet \encode M) \pre \zero) \; .
\end{eqnarray*}
and it can only reduce to
$$
\{\encode M / x\}\encode{P_2} \bnf \res n \pro n\bullet \pro 0\pre \encode {P_1}
$$
A straightforward induction on the structure of $P_2$ proves
$\{\encode M / x\}\encode{P_2} = \encode{\{M / x\}P_2}$. Thus,
$\encode P \redar \encode{\{M / x\}P_2}\bnf
\res n \pro n\bullet \pro 0\pre \encode {P_1} \beq \encode{P'}$,
where the last equivalence follows from Proposition~\ref{prop:free_n_match_proc}.
The inductive case is straightforward, with the structural case relying on Lemma~\ref{lem:structenc-spi}.

The second part can be proved by induction on judgement $\encode P \redar Q$.
There is just one base case, i.e.\ when $\encode P = p \pre Q_1\bnf q \pre Q_2$ and
$Q = \sigma Q_1\bnf \rho Q_2$ and $\{p\pmatch q\} = (\sigma, \rho)$.
By definition of the encoding, it can only be that
$p= \encode M\bullet\l x\bullet \n$ and $Q_1 = \encode {P_1}$ and
$q = \encode M\bullet (\encode N)\bullet \l x$ and $Q_2 = \encode {P_2}$
for some $P_1$, $P_2$, $M$ and $N$. This means that $P = \iap M x.P_1\bnf \oap M N . P_2$
and that $Q = \{\encode N/x\}\encode{P_1}\bnf\encode{P_2} = \encode{\{N/x\}P_1\bnf P_2}$.
To conclude, it suffices to take $P' = \{N/x\}P_1\bnf P_2$.
For the inductive case there are two possibilities.
\begin{itemize}
	\item The inference of $\encode P \redar Q$ ends with an application of the rule for
				parallel composition or for structural congruence: this case can be proved by a
				straightforward induction.
	\item The inference of $\encode P \redar Q$ ends with an application of the rule for restriction;
				thus, $\encode P = \res n Q'$, with $Q' \redar Q''$ and $Q = \res n Q''$.
				If $Q' = \encode {P''}$, for some $P''$, apply a straightforward induction.
				Otherwise, there are the following four possibilities.
				\begin{itemize}
					\item $Q' = \pro n\bullet\pro{\encode M}\pre\encode{P_1}\bnf\pro n\bullet\pro{\encode N}$
								and, hence, $Q'' = \encode{P_1}$. By definition of the encoding, 
								$P = [M\ {\it is}\ N] P_1$. Notice that the reduction $Q' \redar Q''$ can happen only if
								$\encode M$ and $\encode N$ unify; by construction of the encoding of Spi-terms,
								this can happen only if $M = N$ and, hence, $P \redar P_1$.
								The thesis follows by letting $P' = P_1$, since $n$ is a fresh name and so
								$Q = \res n \encode {P_1} \equiv \encode {P_1}$.
					\item $Q' =\pro n\bullet(\pro{{\sf pair}}\bullet(\l x\bullet\l y))\pre\encode{P_1}
								\bnf\pro n\bullet({\sf pair}\bullet(\encode M\bullet\encode N))$
								and, hence, $Q'' = \{\encode M/x,\encode N/y\}\encode{P_1}$. 
								This case is similar to the previous one, by letting $P$ be
								${\it let}\ (x,y)= (M,N)\ {\it in}\ P_1$.
					\item $Q' = \pro n\bullet (\pro {{\sf encr}}\bullet(\l x\bullet \encode N))\pre \encode {P_1}
								\bnf \pro n\bullet ({\sf encr}\bullet (\encode M \bullet \encode N))$
								and, hence, $Q'' = \{\encode M/x\}\encode{P_1}$.
								This case is similar to the previous one, by letting $P$ be
								${\it case}\ \encr M N\ {\it of}\ \encr x N:P_1$.
					\item $Q' = \pro n\bullet \pro 0\pre \encode {P_1}
								\bnf\pro n\bullet (\pro {{\sf suc}}\bullet \l x)\pre \encode {P_2}
								\bnf\pro n\bullet \encode M$.
								Hence, $P = {\it case}\ M\ {\it of}\ 0:P_1\ {\it suc}(x):P_2$.
								According to the kind of $\encode M$, there are two sub-cases
								(notice that, since $Q' \redar Q''$, no other possibility is allowed for $\encode M$).
								\begin{itemize}
									\item $\encode M = 0$: in this case, $Q'' = \encode {P_1}
												\bnf \pro n\bullet (\pro {{\sf suc}}\bullet \l x)\pre\encode {P_2}$
												and so $Q = \res n Q'' \equiv \encode {P_1}
												\bnf\res n\pro n\bullet(\pro {{\sf suc}}\bullet\l x)\pre\encode {P_2}
												\beq \encode {P_1}$.
												In this case, $M = 0$ and so $P \redar P_1$; to conclude,
												it suffices to let $P'$ be $P_1$.
									\item $\encode M = {\sf suc} \bullet \encode {M'}$, for some $M'$:
												in this case, $Q'' = \{\encode{M'}/x\}\encode {P_2} \bnf
												\pro n\bullet \pro 0\pre \encode {P_1}$
												and so $Q = \res n Q'' \equiv \encode {\{M'/x\}P_2}\bnf\res n
												\pro n\bullet 0\pre \encode {P_1}
												\beq \encode {\{M'/x\}P_2}$.
												In this case, $M = {\it suc}(M')$ and so $P \redar \{M'/x\}P_2$; to conclude, 
												it suffices to let $P'$ be $\{M'/x\}P_2$.\qed
								\end{itemize}
				\end{itemize}
\end{itemize}

\begin{cor}
\label{cor:valid-spi}
The encoding of Spi calculus into CPC is valid.
\end{cor}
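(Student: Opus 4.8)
The plan is to check the five requirements of Definition~\ref{def:ve} one by one, proceeding exactly as in Corollary~\ref{cor:valid-linda}, but being careful that — unlike the Linda encoding — Theorem~\ref{spi2cpc-red} only matches reductions \emph{up to} CPC's barbed congruence $\beq$, so every argument that iterates operational correspondence must realign the two sides using the fact that $\beq$ is reduction closed (and, for the last clause, barb preserving).

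\emph{Compositionality} and \emph{name invariance} hold by construction: the encoding is homomorphic on all operators except input, output, match, pair splitting, decryption and integer test, and for each of these exceptional cases the surrounding context depends only on the free names of the encoded subprocesses (it merely picks a fresh restricted name $n$ and some fresh binding names); the renaming policy is the identity, so the clause for $\encode{\sigma S}$ follows by a routine induction on $S$, using that the encoding of Spi-terms commutes with injective substitutions (and coincides up to $\beq$ with the substituted encoding otherwise, thanks to the reserved and fresh names).

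For \emph{operational correspondence} and \emph{divergence reflection} we invoke Theorem~\ref{spi2cpc-red}. The forward direction of operational correspondence is the first item of that theorem, iterated along a sequence $S \Redar_1 S'$ and using reduction closure of $\beq$ to carry the equivalence through the remaining steps; the backward direction is obtained dually from the second item, starting from an arbitrary $\encode S \Redar_2 T$. Divergence reflection follows from the second item: an infinite reduction sequence of $\encode S$ is traced back, step by step, to an infinite reduction sequence of $S$, since the encoding introduces no administrative reductions that are unmatched by a source step.

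\emph{Success sensitiveness} is proved as in Corollary~\ref{cor:valid-linda}, with one extra appeal to $\beq$: if $S \suc$ then $S \redar^k S' \equiv S'' \bnf \ok$; applying the first item of Theorem~\ref{spi2cpc-red} $k$ times together with Lemma~\ref{lem:structenc-spi} yields $\encode S \Redar T \beq \encode{S'} \equiv \encode{S''} \bnf \ok$, and since $\beq$ preserves barbs (in particular the success barb) we conclude $\encode S \suc$; the converse follows symmetrically from the second item of Theorem~\ref{spi2cpc-red}. The only genuine obstacle is this pervasive bookkeeping around $\beq$: one must ensure that the realignment always succeeds, and in particular that the auxiliary restricted-parallel cases arising in the proof of Theorem~\ref{spi2cpc-red} (those coming from match, splitting, decryption and the integer test, which reduce to a body in parallel with a residual stuck, $\beq\zero$ case) never let the target diverge or report success where the source does not.
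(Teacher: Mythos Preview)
Your proposal is correct and follows essentially the same approach as the paper, which simply says ``See the proof for Corollary~\ref{cor:valid-linda}.'' In fact you are being more careful than the paper: you explicitly flag that Theorem~\ref{spi2cpc-red} only gives operational correspondence up to $\beq$ (unlike the exact correspondence of Theorem~\ref{thm:opcorr} for Linda), and you correctly identify that iterating it requires reduction closure of $\beq$, while success sensitiveness needs barb preservation of $\beq$ to push the success predicate across the equivalence---bookkeeping the paper glosses over.
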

\begin{proof}
See the proof for Corollary~\ref{cor:valid-linda}. 
\end{proof}

Notice that the criteria for a valid encoding do not imply full
abstraction of the encoding (actually, they were defined as an alternative to full abstraction
\cite{G:IC08,G:CONCUR08}). This means that the encoding of equivalent
Spi calculus processes can be distinguished by contexts in CPC that do not result from the
encoding of any Spi calculus context. Indeed, while this encoding allows Spi calculus to be modelled in
CPC, it does {\em not} entail that cryptography can be properly rendered.
Consider the pattern ${\sf encr}\bullet \l x \bullet \l y$ that could unify with the encoding of an encrypted term
to bind the message and key, so that CPC can break any encryption!
Indeed this is an artefact of the straightforward approach to encoding taken here.
Some discussion of alternative approaches to encryption in CPC are detailed in 
\cite{GivenWilsonPHD}.

\subsection{CPC vs Fusion}

The separation results for CPC and the other process calculi presented so far have all been proved via symmetry;
thus, the relationship between Fusion and CPC is of particular interest.
Such calculi are {\em unrelated}, in the sense that there exists no valid
encoding from one into the other. The impossibility for a valid
encoding of CPC into Fusion can be proved in two ways, by exploiting either the matching degree or the symmetry of CPC.

\begin{thm}
\label{thm:nocpc2fusion-1}
There is no valid encoding of CPC into Fusion.
\end{thm}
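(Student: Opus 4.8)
The plan is to invoke the matching-degree criterion of Theorem~\ref{match}, so the only thing to establish is $\md{\mbox{Fusion}}$. Inspecting Fusion's interaction rule, the reduction $(\nu \wt u)(\oap u {\wt x}.P\bnf \iap u {\wt y}.Q\ | \ R)\redar \sigma P\ | \ \sigma Q\ | \ \sigma R$ fires exactly when the subject $u$ of the output prefix coincides with the subject of the input prefix: the object tuples $\wt x$ and $\wt y$ are \emph{fused}, not tested for equality, and only their lengths are required to agree, which is an arity check rather than a name match. Hence $\md{\mbox{Fusion}}=1$. Since $\md{\mbox{CPC}}=\infty$, we get $\md{\mbox{CPC}} > \md{\mbox{Fusion}}$, and Theorem~\ref{match} immediately yields that there is no valid encoding of CPC into Fusion.

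As announced, a second, independent proof runs through the symmetry of CPC via Theorem~\ref{autoriduz}, exactly as in Theorem~\ref{noCPCinLinda-1}. Take the self-unifying CPC process $S = x\pre\ok$. Then $S\not\redar$ and $S\not\suc$, whereas $\{x\pmatch x\}=(\{\},\{\})$ is defined, so $S\bnf S\redar\ok\bnf\ok$ and thus $S\bnf S\suc$. To apply Theorem~\ref{autoriduz} it remains to check that every Fusion process $T$ with $T\noredar$ satisfies $T\bnf T\noredar$. A Fusion reduction needs an input prefix and an output prefix on a common subject to be simultaneously active (up to structural congruence and restriction); if $T\bnf T$ reduces, either both prefixes come from the same copy of $T$, whence $T$ already reduces, or each copy contributes one, whence $T$ contains both an active input and an active output on a common subject and again $T$ reduces --- in either case contradicting $T\noredar$. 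This is made precise by a routine induction on the structure of $T$, as in Theorem~\ref{noCPCinLinda-1}.

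The first route has essentially no obstacle once $\md{\mbox{Fusion}}=1$ has been observed; the only mild work in the second route is handling the restriction and replication cases in the inductive proof of ``$T\noredar\Rightarrow T\bnf T\noredar$''. I would present the matching-degree argument as the main proof and record the symmetry-based one as a remark, mirroring the treatment of the CPC-versus-$\pi$-calculus case.
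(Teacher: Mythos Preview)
Your proposal is correct and mirrors the paper's own proof almost exactly: the paper likewise gives the matching-degree argument ($\md{\mbox{Fusion}}=1$ versus $\md{\mbox{CPC}}=\infty$, then Theorem~\ref{match}) as the main route, and then records the symmetry-based alternative via Theorem~\ref{autoriduz} by reusing the proof of Theorem~\ref{noCPCinLinda-1} and the observation that $T\bnf T\redar$ implies $T\redar$ in Fusion. Your only addition is a bit more detail justifying $\md{\mbox{Fusion}}=1$ and sketching the inductive argument for the Fusion self-reduction property, both of which the paper leaves implicit.
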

\begin{proof}
The matching degree of Fusion is 1 while the matching degree of CPC is infinite;
conclude by Theorem~\ref{match}.
Alternatively, reuse the proof for Theorem~\ref{noCPCinLinda-1} as every Fusion process 
$T$ is such that $T\bnf T\redar$ implies $T\redar$.
\end{proof}

The converse separation result is ensured by the following theorem.

\begin{thm}
\label{thm:fusionNoinCPC}
There exists no valid encoding of Fusion into CPC.
\end{thm}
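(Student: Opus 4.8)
The plan is to reason by contradiction using the deadlock-reflection technique (Proposition~\ref{deadlock}), exactly as outlined at the end of Section~\ref{subsec:calculi}: assume a valid encoding $\encode\cdot$ of Fusion into CPC exists, exhibit a Fusion process $P$ that reduces, deduce that $\encode P$ must reduce, and then analyse how that reduction can happen. The essential feature of Fusion exploited here is that its reductions are \emph{ternary}: the interaction axiom needs an output $\oap u {\wt x}.P$, an input $\iap u {\wt y}.Q$, \emph{and} a surrounding restriction $(\nu\wt u)(\cdots)$ to fuse names. By contrast, CPC's interaction is purely binary. So I would pick a concrete Fusion process whose reduction genuinely depends on a restriction being in place --- for instance a process of the form $(\nu u)(\oap a {u}.P_1 \bnf \iap a {z}.P_2)$, or more simply one where swapping two names kills the reduction precisely because the name that enables fusion is no longer the restricted one.

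The key steps, in order, are: (1) fix $P$ a Fusion process with $P\redar P'$; by operational correspondence $\encode P \Redar_2 \,\beq_2\, \encode{P'}$, and since $P$ itself reduces, $\encode P$ must reduce (here one uses that $P$ is chosen so that $P\not\suc$ trivially, or directly that a non-reducing CPC process stays non-reducing, i.e.\ the contrapositive of Proposition~\ref{deadlock}); (2) the reduction of $\encode P$ must ultimately come from CPC's single interaction axiom, so inside $\encode P$ there are two parallel cases $p\pre R_1$ and $q\pre R_2$ with $\{p\pmatch q\}$ defined, sitting (up to $\equiv$) under some restrictions coming from the compositional contexts for $\bnf$ and $\nu$; (3) by compositionality these two cases must originate (modulo the fixed contexts, which depend only on free names) from the encodings of an output subterm and an input subterm of $P$ whose subject names are compatible; (4) now build $P''$ from $P$ by renaming: replace the shared channel name by a fresh free name, or move it out of the scope of the restriction, so that in $P$ itself no reduction is possible (the Fusion axiom can no longer fire, since either the subjects differ or the fused names are not the restricted ones), yet the structure of $\encode{P''}$ still contains two compatible cases in parallel under the same contexts --- because compositionality and name-invariance force the encoding to be ``syntax-directed'' and to commute with injective renamings up to $\sigma'$; (5) conclude $\encode{P''}\redar$ while $P''\noredar$, contradicting Proposition~\ref{deadlock}.

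The main obstacle is step~(4): I must choose the renaming so that it is simultaneously a legitimate \emph{injective} substitution on the source side (so that name-invariance gives $\encode{\sigma P} = \sigma'\encode P$ on the nose, preserving the redex), but destroys reducibility in Fusion. The subtlety is that Fusion can still fuse names that are \emph{not} restricted --- a reduction $(\nu\wt u)(\oap a{\wt x}.P\bnf\iap a{\wt y}.Q\bnf R)$ does not require $\wt u$ to be nonempty; it only requires the fused equivalence classes to collapse restricted names away. So a naive ``rename the channel'' trick does not obviously work, because the output and input on a \emph{free} channel $a$ still reduce in Fusion. The right move is therefore to exploit that the \emph{CPC} encoding of output/input, being binary, cannot by itself record whether the necessary restriction was present; I would choose $P$ so that the \emph{only} way $\encode P$ reduces is via two cases coming from an output/input pair, and then rename so that in the source the two would-be partners have \emph{different} subject names (hence no Fusion reduction at all), while the encoding's compositional contexts --- which depend only on free names, and whose free-name sets I keep equal under the renaming --- still glue the two translated cases into a redex, since CPC unification on the encoded subjects was presumably arranged to hold regardless. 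Pinning down such a $P$ and verifying that the encoded subjects still unify after the renaming is where the real work lies; everything else is routine application of the five validity clauses.
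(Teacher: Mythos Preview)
Your overall framework---assume a valid encoding, pick a reducing Fusion process, force $\encode P$ to reduce, then do a case analysis and invoke Proposition~\ref{deadlock}---is the same as the paper's. But two of your steps have genuine gaps that the paper closes differently.

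\textbf{Step (1): forcing $\encode P$ to reduce.} Neither operational correspondence nor the contrapositive of Proposition~\ref{deadlock} gives you ``$P\redar$ implies $\encode P\redar$''. Operational correspondence only yields $\encode P \Redar \beq \encode{P'}$, which may be zero steps; and Proposition~\ref{deadlock} goes the other way. The paper instead chooses $P = \res x(\oap u x \bnf \iap u y.\ok)$ so that $P\suc$, and then uses \emph{success sensitiveness}: $\encode P\suc$. It then argues separately that no top-level unguarded $\ok$ can occur in $\encode P$, by observing that any such $\ok$ (whether in a context or in an encoded subterm) would survive when one replaces $\iap u y.\ok$ by $\iap y u.\ok$ or $\oap u x$ by $\oap x u$---these have the same free-name set, hence by compositionality the same contexts---yet the resulting Fusion processes do not succeed. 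Only then does it conclude $\encode P$ must reduce.

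\textbf{Step (4): killing the source reduction.} A global injective renaming cannot make the two subject names differ: if $\sigma$ is injective and the subjects were both $u$, they are both $\sigma(u)$ afterwards, and the renamed process still reduces in Fusion. The paper does not rename; it \emph{swaps the subject and object inside one prefix} (e.g.\ $\oap u x \leadsto \oap x u$, $\iap u y \leadsto \iap y u$). This is not a substitution on $P$ at all---it is a replacement of a subterm by another subterm with the \emph{same free names}, so by compositionality the surrounding contexts $\CopNnoarg C | {\{u,x,y\}}$ and $\CopNnoarg C {\res x} {\{u,x,y\}}$ are literally unchanged. Hence any reduction that did not involve the replaced piece still fires. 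The case analysis then has five cases; the most interesting is when the reduction is directly between $\encode{\oap u x}$ and $\encode{\iap u y.\ok}$: here the paper drops the restriction and observes that $\oap u x \bnf \iap u y.\ok$ (with no $\res x$) does \emph{not} reduce in Fusion, since the axiom requires at least one of $x,y$ to be restricted, while its encoding still does.
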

\proof
By contradiction, assume that there exists a valid encoding $\encode\cdot$
of Fusion into CPC.
Consider the Fusion process $P \define \res x(\oap u x\bnf \iap u y.\ok)$,
for $x$, $y$ and $u$ pairwise distinct. By success sensitiveness, $P \suc$
entails that $\encode P \suc$. 

We first show that $\encode P$ must reduce before reporting success,
i.e.\ every occurrence of $\ok$ in $\encode P$ falls underneath
some prefix.  By compositionality, $\encode P \define \CopN C {\res x}
{\{u,x,y\}} {\CopN C | {\{u,x,y\}} {\encode{\oap u x};\\ \encode{\iap u
      y.\ok}}}$.  If $\encode P$ had a top-level unguarded occurrence
of $\ok$, then such an occurrence could be in $\CopN C {\res x}
{\{u,x,y\}} {\_\,}$, in $\CopN C | {\{u,x,y\}} {\_\,_1;\_\,_2}$, in
$\encode{\oap u x}$ or in $\encode{\iap u y.\ok}$; in any case, it
would also follow that at least one of:
%
\begin{equation}
\label{eq:fus:ex1}
\encode{\res x(\oap u x\bnf \iap y u.\ok)}
\end{equation}
that has $\iap u y$ replaced with $\iap y u$;
or
\begin{equation}
\label{eq:fus:ex2}
\encode{\res x(\oap x u\bnf \iap u y.\ok)}
\end{equation}
that has $\oap u x$ replaced with $\oap x u$,
would report success, 
whereas both
Equation~\ref{eq:fus:ex1}$\not\suc$
and
Equation~\ref{eq:fus:ex2}$\not\suc$,
against success sensitiveness of $\encode\cdot$.
Thus, the only possibility for $\encode P$ to report success is
to perform some reduction steps (at least one) and then exhibit
a top-level unguarded occurrence of $\ok$.

We now prove that every possible reduction leads to contradiction of the validity of $\encode\cdot$;
this suffices to conclude. There are five possibilities for $\encode P \redar$.
\begin{enumerate}
	\item Either $\CopNnoarg C {\res x} {\{u,x,y\}} \redar$, or
				$\CopNnoarg C | {\{u,x,y\}} \redar$, or 
				$\encode{\oap u x}\redar$ or $\encode{\iap u y.\ok}\redar$. In any
				of these cases, at least one out of 
				$\encode{\mbox{Equation~\ref{eq:fus:ex1}}}$ or
				$\encode{\mbox{Equation~\ref{eq:fus:ex2}}}$ would reduce;
				however, $\mbox{Equation~\ref{eq:fus:ex1}} \not\redar$ and
				$\mbox{Equation~\ref{eq:fus:ex2}} \not\redar$, 
				against Proposition~\ref{deadlock} (that must hold whenever $\encode\cdot$
				is valid).

	\item Reduction is generated by interaction between
				$\CopNnoarg C {\res x} {\{u,x,y\}}$ and
				$\CopNnoarg C | {\{u,x,y\}}$.  Then, as before, 
				$\encode{\mbox{Equation~\ref{eq:fus:ex1}}} \redar$
				whereas $\mbox{Equation~\ref{eq:fus:ex1}} \not\redar$, 
				against Proposition~\ref{deadlock}.

	\item Reduction is generated by interaction between
				$\CopNnoarg C \op {\{u,x,y\}}$ and $\encode{\oap u x}$, for
				$\op \in \{\res x,\,|\,\}$. Like case 2.

	\item Reduction is generated by interaction between
				$\CopNnoarg C \op {\{u,x,y\}}$ and 
				$\encode{\iap u y.\ok}$, for
				$\op \in \{\res x,\,|\,\}$. As before it follows that
				$\encode{\mbox{Equation~\ref{eq:fus:ex2}}} \redar$
				whereas $\mbox{Equation~\ref{eq:fus:ex2}} \not\redar$, 
				against Proposition~\ref{deadlock}.

	\item The reduction is generated by an interaction between the processes
				$\encode{\oap u x}$ and 
				$\encode{\iap u y.\ok}$. In this
				case, it follows that $\encode{\oap u x\bnf \iap u y.\ok}\redar$
				whereas $\oap u x\bnf \iap u y.\ok\not\redar$: indeed, the interaction
				rule of Fusion imposes that at least one between $x$ and $y$ must
				be restricted to yield the interaction.\qed
\end{enumerate}

\subsection{CPC vs Psi}

CPC and Psi are {\em unrelated}, in the sense that there exists no valid
encoding from one into the other. As in Theorem~\ref{noCPCinLinda-1}, 
the impossibility for a valid
encoding of CPC into Psi can be proved by exploiting 
the symmetry of CPC.
The converse separation result is ensured by the following theorem.

\begin{thm}
\label{thm:psiNoinCPC}
There exists no valid encoding of Psi into CPC.
\end{thm}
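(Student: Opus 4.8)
The plan is to adapt the proof of Theorem~\ref{thm:fusionNoinCPC} (no valid encoding of Fusion into CPC), with the r\^ole played there by name restriction now played by Psi's assertions. First I would fix a concrete instance of the Psi-calculi: take $\terms$ to be the names ${\cal N}$, take $\assertion$ to be the finite sets of (unordered) name equations, let $\one$ be the empty set, let $\compose$ be set union, and let $\Psi \vdash M \cheq N$ hold exactly when $M=N$ belongs to the equivalence relation generated by $\Psi$. This is a legal instance ($\cheq$ is symmetric and transitive, $(\compose,\one)$ is a commutative monoid), and, crucially, $\one \vdash M \cheq N$ holds only for $M=N$, so for distinct $a,b$ the prefixes $\oap a u . \ok$ and $\iap b {\lambda x}x . \zero$ cannot interact unless an assertion $\assert{\{a=b\}}$ runs alongside them. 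Assume, for contradiction, a valid encoding $\encode\cdot$ (Definition~\ref{def:ve}) of this instance of Psi into CPC.

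As the witness I would take the three-component process $P \define \assert{\{a=b\}} \mid \oap a u . \ok \mid \iap b {\lambda x}x . \zero$, for $a,b,u$ pairwise distinct. Then $P \redar \ok \mid \zero \mid \assert{\{a=b\}}$, so $P \suc$, whence $\encode P \suc$ by success sensitiveness. Alongside $P$ I would use two name-swapped variants: $P'$, obtained by swapping $a$ and $u$ inside the output component (giving $\oap u a . \ok$), and $P''$, obtained by swapping $a$ and $u$ inside the assertion (giving $\assert{\{u=b\}}$). Both have exactly the same free names $\{a,b,u\}$ as $P$, and neither reduces, because in each case the only assertion around no longer equates the output's channel with the input's channel; hence $P' \not\suc$ and $P'' \not\suc$. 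By compositionality, $\encode P$ decomposes as a two-hole parallel context (with free-name parameter $\{a,b,u\}$) applied to $\encode{\assert{\{a=b\}} \mid \oap a u . \ok}$ and to $I=\encode{\iap b {\lambda x}x . \zero}$, with the former decomposing further as a two-hole parallel context (again with parameter $\{a,b,u\}$) applied to $A=\encode{\assert{\{a=b\}}}$ and $O=\encode{\oap a u . \ok}$. Since the chosen swaps leave every one of these free-name parameters unchanged, $\encode{P'}$ and $\encode{P''}$ have the very same shape, with only $O$, respectively $A$, replaced by an injective renaming of itself (name invariance) and everything else — both parallel contexts and the other two encoded subterms — untouched.

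The argument then runs exactly as for Theorem~\ref{thm:fusionNoinCPC}, in two steps. First, $\encode P$ must perform at least one reduction before it can exhibit an unguarded $\ok$: such an $\ok$ would have to sit in one of the two parallel contexts or in one of $A$, $O$, $I$, and in the first case $\encode{P'}$ would exhibit an unguarded $\ok$ too (so $P' \suc$), while in the second case one of $\assert{\{a=b\}}$, $\oap a u . \ok$, $\iap b {\lambda x}x . \zero$ would succeed — each is patently $\ok$-free and stuck, hence $\not\suc$ — contradicting success sensitiveness in every case. Second, I would inspect the first reduction of $\encode P$: it is an interaction between exactly two CPC cases, each located in $A$, in $O$, in $I$, or in the ``glue'' of one of the two parallel contexts. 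Since a reduction involves only two cases, it cannot touch all three of $A$, $O$, $I$; carrying the untouched component along unchanged (CPC interaction being binary), the same reduction is, up to replacing that component by a renaming of itself, a reduction of $\encode{\assert{\{a=b\}} \mid \oap a u . \ok}$, of $\encode{P'}$, or of $\encode{P''}$ — or it is already internal to one of $A$, $O$, $I$. In each case the corresponding source process ($\assert{\{a=b\}} \mid \oap a u . \ok$, or $P'$, or $P''$, or $\assert{\{a=b\}}$, or $\oap a u . \ok$, or $\iap b {\lambda x}x . \zero$) fails to reduce in Psi, whereas its encoding reduces, contradicting Proposition~\ref{deadlock}. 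Hence no valid encoding of Psi into CPC exists.

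The delicate step — as in the Fusion proof — is this ``localisation'': arguing that a first reduction of $\encode P$ that uses cases from only two of the three encoded components (or from a parallel context and a single component) persists when the unused component is replaced by another process with the same free names, so that it becomes a genuine reduction of $\encode{P'}$, $\encode{P''}$, or $\encode{\assert{\{a=b\}} \mid \oap a u . \ok}$. This relies on the binary nature of CPC interaction, on name invariance relating $\encode{P'}$ and $\encode{P''}$ to $\encode P$ with a single subterm renamed, and on careful bookkeeping of the free-name parameters decorating the compositional contexts — which is precisely why $P'$ and $P''$ permute $a$ with $u$ and never disturb $b$, keeping ${\sf fn}$ of the whole process, and of each compositional grouping, fixed. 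Without the semi-homomorphism assumption of \cite{G:CONCUR08} this step must be carried out by hand, but it goes through for the fixed pair of calculi at issue.
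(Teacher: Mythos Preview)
Your proof is correct and follows essentially the same route as the paper's own argument, which uses the witness $(\bar a.c \mid b.(\ok \mid c)) \mid \assert{a \cheq b}$ and an explicit seven-case analysis in place of your ``at least one of $A,O,I$ is untouched'' shortcut. One small slip: $\oap a u.\ok$ is not ``$\ok$-free'' (the $\ok$ is merely guarded), though your conclusion that it does not succeed is correct.
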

\proof
Assume that there exists a valid encoding $\encode\cdot$
of Psi into CPC. Consider the Psi process 
$P \define (\bar a.c \ |\ b.(\ok\ |\ c))\ | \ \assert{a \cheq b}$,
where we have omitted the argument of the actions to simplify the proofs,
and chosen $a$, $b$ and $c$ pairwise distinct; also consider the reduction
$$
\prooftree
	\prooftree \{a \cheq b\} \vdash a \cheq b
	\justifies \{a \cheq b\} \triangleright \bar a.c \ |\ b.(\ok\ |\ c) \redar c \ |\ \ok\ |\ c
	\endprooftree
\justifies \one \triangleright P \redar (c\ |\ \ok\ |\ c)\ | \ \assert{a \cheq b}
\endprooftree
$$
Therefore, $P \suc$ and, by success sensitiveness, $\encode P \suc$. 
Hence, by compositionality, $\encode P \define \CopN C {|} {\{a,b,c\}} 
{\CopN C | {\{a,b,c\}} {\encode{\bar a.c}; \encode{b.(\ok\ |\ c)}}; \encode{\assert{a \cheq b}}}$.
Like in the proof of Theorem~\ref{thm:fusionNoinCPC}, it can be proven that the only possibility 
for $\encode P$ to report success is to perform some reduction steps (at least one) and then exhibit
a top-level unguarded occurrence of $\ok$.

We now prove that every possible reduction leads to contradiction of the validity of $\encode\cdot$;
this suffices to conclude. Of course, none of $\encode{\bar a.c}$, $\encode{b.(\ok\ |\ c)}$
and $\encode{\assert{a \cheq b}}$ can reduce, because $\bar a.c$, $b.(\ok\ |\ c)$ and $\assert{a \cheq b}$
do not reduce. Thus, there are seven possibilities for $\encode P \redar$.
\begin{enumerate}
	\item Either $\CopNnoarg C {|} {\{a,b,c\}} \redar$ or the reduction is obtained by synchronizing
	the two copies of $\CopNnoarg C {|} {\{a,b,c\}}$. In both cases, $\encode{(\bar c.a \ |\ b.(\ok\ |\ c))\ | \ \assert{a \cheq b}}$
    (with $\bar a.c$ replaced by $\bar c.a$)
	would also reduce, whereas $(\bar c.a \ |\ b.(\ok\ |\ c))\ | \ \assert{a \cheq b} \not\redar$, 
	against Proposition~\ref{deadlock} (that must hold whenever $\encode\cdot$ is valid).

	\item The reduction is obtained by synchronizing $\encode{\bar a.c}$
	with (one of the two copies of) $\CopNnoarg C {|} {\{a,b,c\}}$. In this case, 
	also $\encode{(\bar a.c \ |\ c.(\ok\ |\ b))\ | \ \assert{a \cheq b}}$
    (with $b.(\ok\ |\ c)$ replaced by $c.(\ok\ |\ b)$)
	would reduce, whereas $(\bar a.c \ |\ c.(\ok\ |\ b))\ | \ \assert{a \cheq b} \not\redar$.

	\item The reduction is obtained by synchronizing $\encode{b.(\ok\ |\ c)}$
	with (one of the two copies of) $\CopNnoarg C {|} {\{a,b,c\}}$. This case is
	proved impossible like case 1 above.

	\item The reduction is obtained by synchronizing $\encode{\assert{a \cheq b}}$
	with (one of the two copies of) $\CopNnoarg C {|} {\{a,b,c\}}$. This case is
	proved impossible like cases 1 and 2 above.

	\item The reduction is obtained by synchronizing $\encode{\bar a.c}$
	with $\encode{b.(\ok\ |\ c)}$. In this case, 
	also $\encode{\bar a.c \ |\ b.(\ok\ |\ c)}$
	would reduce, whereas $\bar a.c \ |\ b.(\ok\ |\ c) \not\redar$.

	\item The reduction is obtained by synchronizing $\encode{\bar a.c}$
	with $\encode{\assert{a \cheq b}}$. This case is
	proved impossible like case 2 above.

	\item The reduction is obtained by synchronizing $\encode{b.(\ok\ |\ c)}$
	with $\encode{\assert{a \cheq b}}$. This case is
	proved impossible like case 1 above.\qed
\end{enumerate}

\section{Conclusions} 
\label{sec:conclusions}

Concurrent pattern calculus uses patterns to represent input,
output and tests for equality, whose interaction is driven by
unification that allows a two-way flow of information.  This symmetric
information exchange provides a concise model of trade in the
information age.  This is illustrated by the example of traders who
can discover each other in the open and then close the deal
in private.

As patterns drive interaction in CPC, 
their properties heavily influence CPC's behaviour theory. As pattern
unification may match any number of names these must all be accounted
for in the definition of barbs. More delicately, some patterns are
compatible with others, in that their unifications yield similar results.
The resulting  bisimulation requires that the transitions be compatible
patterns rather than exact.
Further, the pattern-matching bisimulation developed for CPC can easily account
for other kinds of pattern-matching, such as in polyadic $\pi$-calculus and
Linda \cite{GivenWilsonGorla13}.

CPC supports valid encodings of many popular concurrent calculi such
as $\pi$-calculus, Spi calculus and Linda as its patterns describe
more structures. However, these three calculi do not support valid
encodings of CPC because, among other things, they are insufficiently
symmetric. On the other hand, while fusion calculus is completely
symmetric, it has an incompatible approach to interaction.
Similarly, Psi calculus is unrelated to CPC due to supporting implicit
computations, while also being less symmetric.

Another path of development for a process calculus is implementation in a
programming language \cite{Pierce97pict:a,Klava,cpplinda,20110201:jocaml}.
The \bondi\ programming language is based upon pattern matching as the core
of reduction and the theory of pattern calculus \cite{pcb,bondi}. A
\cbondi\ has also been developed that extends \bondi\ with concurrency
and interaction based on the pattern unification and theory of CPC
\cite{GivenWilsonPHD,cbondi}.

\bigskip
\noindent{\bf Acknowledgments } We would like to thank the anonymous reviewers for their
fruitful comments and for their constructive attitude towards our paper.

\section*{Appendix A: Proofs of Section~\ref{sec:LTS}}

\paragraph{Proof of Proposition~\ref{prop:imfin}}
First of all, let us define an alternative (but equivalent, up-to $\equiv$) LTS
for CPC, written $\!\!\llts\mu$: it is obtained by replacing {\sf rep} with the 
following two rules (all the other rules are the same, with $\llts{}$ in place
of $\ltsred{}$ everywhere):
$$
\prooftree P \llts\mu P'
\justifies !P \llts\mu P'\bnf !P
\endprooftree
\qquad\qquad
\prooftree P \llts{\rest m p} P' \quad P \llts{\rest n q} P''
\justifies !P \llts\tau \res{\withsetnot{\wt m,\wt n}{\wt m\cup\wt n}}(\sigma P'\bnf \rho P'')\bnf !P
\endprooftree\ 
\begin{array}{l}
\{p \pmatch q\} = (\sigma, \rho)\\
\wt m \cap \wt n = \emptyset
\end{array}
$$
We can prove that: (1) if $P \llts\mu P'$ then $P \ltsred\mu P'$; and (2)
if $P \ltsred\mu P'$ then $P \llts\mu P''$, for some $P'' \equiv P'$
(both proofs are done by a straightforward induction on the derivation of
the premise, whose only interesting case is when $P =~!Q$, for some $Q$).

Now define the following measure associated to a process:
$$
\begin{array}{lll}
\meas \zero\ =\ 0
\qquad\qquad
&
\meas {p \pre P}\ =\ 1
&
\meas {\res n P}\ =\ \meas P
\vspace*{.2cm}
\\
\multicolumn{2}{l}{
\meas {P_1 \bnf P_2}\ =\ \meas {P_1}+\meas {P_2}+\meas {P_1}\cdot\meas {P_2}
}
\quad
&
\meas {\,!P}\ =\ \meas P+\meas P\cdot\meas P
\end{array}
$$
By induction on the structure of $P$, we can prove that $|\{P' : P \llts\mu P'\}| \leq \meas P$.
By exploiting this fact and (2) above, it follows that there are finitely many (up-to $\equiv$)
$P'$ such that $P \ltsred\mu P'$.
\qed

\paragraph{Proof of Lemma~\ref{lem:lts-exhibit-p}}
The proof is by induction on the inference for $P\ltsred{\rest m p} P'$.
The base case is when the last rule is {\sf case}, with $P = (p\to P_1)\ltsred p P_1 = P'$;
conclude by taking $\wt n = \emptyset$ and $Q_1 = P_1$ and $Q_2 = \zero$.
For the inductive step, consider the last rule in the inference.
\begin{itemize}
	\item If the last rule is {\sf resnon} then $P = \res o P_1 \ltsred{\rest m p}\res o P_1' = P'$, where
				$P_1 \ltsred{\rest m p} P_1'$ and $o\notin{\sf names}(\rest m p)$.
				By induction, there exist $\wt n'$ and $Q_1'$ and $Q_2'$ such that 
				$P_1 \equiv (\nu \wt m)(\nu \wt n')(p\to Q_1'\bnf Q_2')$ and
				$P_1' \equiv (\nu \wt n')(Q_1'\bnf Q_2')$ and 
				$\wt n'\cap{\sf names}(\rest m p)=\emptyset$ and ${\sf bn}(p)\cap{\sf fn}(Q_2')=\emptyset$.
				As $o\notin{\sf names}(\rest m p)$ and by $\alpha$-conversion $o\notin\wt n'$,
				conclude with $Q_1 = Q_1'$ and $Q_2 = Q_2'$ and $\wt n = \wt n',o$.
	\item If the last rule is {\sf open} then $P = \res o P_1\ltsred{\res {\wt m' ,o} p} P_1' = P'$, where
				$P_1\ltsred{\res {\wt m'} p}P_1'$ and 
				$o\in {\sf vn}(p)\backslash(\wt m'\cup{\sf pn}(p)\cup{\sf bn}(p))$
				and $\wt m=\wt m',o$.
				By induction, there exist $\wt n'$ and $Q_1'$ and $Q_2'$ such that 
				$P_1 \equiv (\nu \wt m')(\nu \wt n')(p\to Q_1'\bnf Q_2')$ and
				$P_1' \equiv (\nu \wt n')(Q_1'\bnf Q_2')$ and
				$\wt n'\cap{\sf names}(\res {\wt m'} p)=\emptyset$ and
				${\sf bn}(p)\cap{\sf fn}(Q_2')=\emptyset$.
				Conclude with $\wt n = \wt n'$ and $Q_1 = Q_1'$ and $Q_2 = Q_2'$.
	\item If the last rule is {\sf parext} then $P = P_1\bnf P_2 \ltsred{\rest m p} P_1' \bnf P_2$, 
				where $P_1\ltsred{\rest m p} P_1'$
				and ${\sf fn}(P_2)\cap(\wt m\cup{\sf bn}(p))=\emptyset$.
				By induction, there exist $\wt n'$ and $Q_1'$ and $Q_2'$ such that 
				$P_1 \equiv (\nu \wt m)(\nu \wt n')(p\to Q_1'\bnf Q_2')$ and 
				$P_1' \equiv (\nu \wt n')(Q_1'\bnf Q_2')$ and
				$\wt n'\cap{\sf names}(\rest m p)=\emptyset$ and ${\sf bn}(p)\cap{\sf fn}(Q_2')=\emptyset$.
				As ${\sf bn}(p)\cap{\sf fn}(P_2) = \emptyset$, we can
				conclude with $\wt n = \wt n'$ and $Q_1 = Q_1'$ and $Q_2 = Q_2'\bnf P_2$.
	\item If the last rule is {\sf rep} then $P = \,\,!Q \ltsred{\rest m p} P'$, 
				where $Q\bnf !Q \ltsred{\rest m p} P'$. We conclude by induction
				and by the fact that $P \equiv Q\bnf!Q$.
\qed
\end{itemize}

\paragraph{Proof of Proposition~\ref{prop:tau-red}}
The first claim is proved by induction on the inference for $P\ltsred\tau P'$.
The base case is with rule {\sf unify}: $P = P_1\bnf Q_1$, where
$P_1 \ltsred{\rest m p}P'_1$ and $Q_1 \ltsred{\rest n q} Q'_1$ and 
$P' = \res{\withsetnot{\wt m, \wt n}{\wt m\cup \wt n}}(\sigma P'_1\bnf\rho Q'_1)$ and
$\{p\pmatch q\}=(\sigma,\rho)$ and 
$\wt m \cap {\sf fn}(Q_1) = \wt n \cap {\sf fn}(P_1) = \emptyset$
and $\wt m\cap\wt n=\emptyset$. By Lemma~\ref{lem:lts-exhibit-p}, it follows that 
$P_1 \equiv \rest m\rest o(p\to P''_1\bnf P''_2)$ and
$P_1' \equiv \rest o(P''_1\bnf P''_2)$, with $\wt o\cap{\sf names}(\rest m p)=\emptyset$
and ${\sf bn}(p)\cap{\sf fn}(P''_2)=\emptyset$; similarly, 
$Q_1 \equiv \rest n\rest r(q\to Q''_1\bnf Q''_2)$ and $Q_1' \equiv \rest r(Q''_1\bnf Q''_2)$, 
with $\wt r\cap{\sf names}(\rest n q)=\emptyset$
and ${\sf bn}(q)\cap{\sf fn}(Q''_2)=\emptyset$.
By exploiting $\alpha$-conversion on the names in $\wt o$ and $\wt r$, we have 
$\withsetnot{\wt o,\wt r}{(\wt o\cup \wt r)}\cap({\sf names}(\rest m p)\cup{\sf names}(\rest n q))=\emptyset$;
thus, $P_1\bnf Q_1 \equiv \res{\withsetnot{\wt m, \wt n}{\wt m\cup \wt n}}
\res{\withsetnot{\wt o, \wt r}{\wt o\cup \wt r}} (p\to P''_1\bnf P''_2\bnf q\to Q''_1\bnf Q''_2)
\redar \res{\withsetnot{\wt m, \wt n}{\wt m\cup\wt n}}
\res{\withsetnot{\wt o, \wt r}{\wt o\cup \wt r}} (\sigma P''_1\bnf P''_2\bnf \rho Q''_1\bnf Q''_2)$.
Since $\sigma$ avoids  $\wt o$, ${\sf dom}(\sigma)\cap{\sf fn}(P''_2)=\emptyset$ and
$\rho$ avoids $\wt r$, ${\sf dom}(\rho)\cap{\sf fn}(Q''_2)=\emptyset$ and
$\wt o\cap{\sf fn}(Q''_1\bnf Q''_2)=\wt r\cap{\sf fn}(P''_1\bnf P''_2)=\emptyset$, conclude 
$P \redar \res{\withsetnot{\wt m, \wt n}{\wt m\cup \wt n}}
\res{\withsetnot{\wt o, \wt r}{\wt o\cup\wt r}} (\sigma P''_1\bnf P''_2\bnf \rho Q''_1\bnf Q''_2)
\equiv \res{\withsetnot{\wt m, \wt n}{\wt m\cup \wt n}}
(\sigma(\rest o (P''_1\bnf P''_2))\bnf \rho(\rest r(Q''_1\bnf Q''_2)))
\equiv \res{\withsetnot{\wt m, \wt n}{\wt m\cup\wt n}}(\sigma P'_1\bnf \rho Q'_1) = P'$.

For the inductive step, reason on the last rule used in the inference.
\begin{itemize}
	\item If the last rule is {\sf parint} then $P = P_1\bnf P_2$,
				for $P_1 \ltsred\tau P_1'$ and $P'=P_1'\bnf P_2$.
				Apply induction to the transition $P_1 \ltsred\tau P_1'$ to obtain that $P_1\redar P_1'$;
				thus, $P \redar P'$.
	\item If the last rule is {\sf resnon} then $P = \res n P_1$, for $P_1 \ltsred\tau P_1'$ and $P'=\res n P_1'$.
				Again, conclude by induction.
	\item If the last rule is {\sf rep} then $P = \,\,!P_1$, for $P_1\bnf!P_1 \ltsred\tau P'$.
				By induction, $P_1\bnf!P_1 \redar P'$ and conclude, since $P \equiv P_1\bnf!P_1$.
\end{itemize}

\medskip\noindent
The second claim is by induction on the inference for $P \redar P'$.
The base case is when $P = p\to P'_1\bnf q\to Q'_1$ and $P' =\sigma P'_1\bnf\rho Q'_1$,
for $\{p\pmatch q\}=(\sigma,\rho)$. By the {\sf unify} rule in the LTS
        \begin{equation*}
        \prooftree (p\to P'_1) \ltsred{p}P'_1 \qquad (q\to Q'_1) \ltsred{q}Q'_1
        \justifies p\to P'_1\bnf q\to Q'_1\ \ltsred\tau\ \sigma P'_1\bnf\rho Q'_1
        \endprooftree\ \ \{p\pmatch q\}=(\sigma,\rho)
        \end{equation*}
and the result is immediate.
For the inductive step, reason on the last rule used in the inference.
\begin{itemize}
	\item If $P=P_1\bnf P_2$, where $P_1\redar P'_1$ and $P' = P_1'\bnf P_2$,
		then use the induction and exploit the {\sf parint} rule.
	\item If $P=\res n P_1$, where $P_1\redar P'_1$ and $P' = \res n P'_1$,
		then use the induction and exploit the {\sf resnon} rule.
	\item Otherwise, it must be that $P \equiv Q \redar Q' \equiv P'$. 
		By induction, $Q \ltsred\tau Q'$ for some $Q'' \equiv Q'$.
		We now have to prove that structurally equivalent processes have
		the same $\tau$-transitions, up-to $\equiv$; this is done via
		a second induction, on the inference of the judgement $P \equiv Q$.
		The following are two representative base cases; the other base cases are easier,
		as is the inductive case.
		\begin{itemize}
		\item $P =\, !R \equiv R\bnf!R = Q$: since $Q = R\bnf!R \ltsred\tau Q''$, for $Q'' \equiv Q'$,
			we can use rule {\sf rep} of the LTS and obtain $P \ltsred\tau Q''$; 
			we can conclude, since $Q'' \equiv Q' \equiv P'$.
		\item $P = \res n P_1\, \bnf\ P_2 \equiv \res n (P_1 \bnf\ P_2) = Q$, that holds since
			$n \not\in {\sf fn}(P_2)$: by the first inductive hypothesis, 
			$\res n (P_1 \bnf\ P_2) \ltsred\tau Q''$, for $Q'' \equiv Q'$. Moreover, by definition
			of the LTS, the last rule used in this inference must be {\sf resnon}; thus, 
			$P_1 \bnf\ P_2 \ltsred\tau Q'''$ and $Q'' = \res n Q'''$. There are three possible
			ways to generate the latter $\tau$-transition:
			\begin{itemize}
			\item $P_1 \ltsred\tau P_1'$ and $Q''' = P_1'\bnf P_2$: in this case
			$$
			\prooftree 
				\prooftree P_1 \ltsred\tau P_1'
				\justifies \res n P_1 \ltsred\tau \res n P_1'
				\endprooftree 
			\justifies P = \res n P_1 \bnf P_2 \ltsred\tau \res n P_1'\bnf P_2
			\endprooftree 
			$$
			and conclude by noticing that $\res n P_1'\bnf P_2 \equiv \res n (P_1'\bnf P_2) = Q'' \equiv Q' \equiv P'$.
		\item $P_2 \ltsred\tau P_2'$ and $Q''' = P_1\bnf P_2'$: this case is similar to the previous one, but simpler.
		\item $P_1 \ltsred{\rest m p} P_1'$ and $P_2 \ltsred{\rest n q} P_2'$, and $Q''' = 
			(\nu \withsetnot{\wt m,\wt n}{\wt m\cup \wt n})(\sigma P_1'\bnf \rho P_2')$, where $\{p \pmatch q\} = (\sigma, \rho)$,
			$\wt m \cap {\sf fn}(P_2) = \wt n \cap {\sf fn}(P_1) = \emptyset$ and $\wt m \cap \wt n = \emptyset$:
			this case is similar to the base case of the first claim of this Proposition and, essentially, relies
			on Lemma~\ref{lem:lts-exhibit-p}. The details are left to the interested reader.
\qed
			\end{itemize}
		\end{itemize}
\end{itemize}

\section*{Appendix B: Proofs of Section~\ref{sec:sound}}

\paragraph{Proof of Lemma~\ref{lem:bisim-case}}
It is necessary to prove that the relation
$$
\Re = \{(p \to P , p \to Q)\ :\ P \bisim Q\}\ \cup\ \bisim
$$
is a bisimulation. The only possible challenge of $p \to P$ is
$p \to P \ltsred p P$ such that ${\sf bn}(p) \cap {\sf fn}(Q) = \emptyset$;
moreover, fix any $\sigma$ such that ${\sf dom}(\sigma) = {\sf bn}(p)$.
The only possible reply from $p \to Q$ is $p \to Q \ltsred p Q$, that is
a valid reply (in the sense of Definition~\ref {def:bisim}). Indeed,
$p,\sigma \compat p,\sigma$, by Proposition~\ref{prop:compat-reflexive},
and $(\sigma P , \sigma Q) \in \Re$, because $P \bisim Q$ and 
$\bisim$ is closed under substitutions by definition.
Closure under substitution holds by definition of $\Re$.
\qed

\paragraph{Proof of Lemmata~\ref{lem:bisim-nu} and~\ref{lem:bisim-par}}
The two lemmata have to be proved together; as in $\pi$-calculus, this is necessary because of name extrusion.
We can conclude if we show that the relation
$$
\Re = \{(\rest n (P\bnf R),\rest n (Q\bnf R))\ :\ P \bisim Q\}
$$
is a bisimulation.
Fix any transition $\rest n (P\bnf R) \ltsred{\mu} \hat P$ that,
by definition of the LTS, has been inferred as follows:
$$
\hspace*{3cm}
\prooftree 
	\prooftree P\bnf R \ltsred{\bar\mu} \bar P
	\justifies \vdots
	\endprooftree
\justifies \rest n (P\bnf R) \ltsred{\mu} \hat P
\endprooftree
\hspace*{3cm}
(\star)
$$
where $\mu = \rest m \bar\mu$ and $\hat P = \res{\ \wt n\! \setminus\!\! \wt m} \bar P$ and
the dots denote repeated applications of {\sf resnon} (one for every name in $\wt n \setminus \wt m$) 
and {\sf open} (one for every name in $\wt m$).

If $\bar\mu = \tau$, then $\wt m = \emptyset$; moreover, $P\bnf R \ltsred{\bar\mu} \bar P$ can
be generated in three ways:
\begin{itemize}
\item If the transition is
	\begin{equation*}
	\prooftree P\ltsred\tau P'
	\justifies P\bnf R\ltsred\tau P'\bnf R
	\endprooftree
	\end{equation*}
	then because of $P\bisim Q$ there exists $Q\ltsred\tau Q'$
        such that $P' \bisim Q'$; hence conclude with
	$\rest n (Q\bnf R) \ltsred\tau \rest n(Q'\bnf R)$.
\item If the transition is
	\begin{equation*}
	\prooftree R\ltsred\tau R'
	\justifies P\bnf R\ltsred\tau P\bnf R'
	\endprooftree
	\end{equation*}
	consider $\rest n (Q\bnf R) \ltsred\tau \rest n(Q\bnf R')$ and conclude.
\item If the transition is
	\begin{equation*}
	\prooftree P \ltsred{\rest l p} P' \quad R \ltsred{\rest o r} R'
	\justifies P\bnf R \ltsred\tau \res{\withsetnot{\wt l,\wt o}{\wt l\cup\wt o}}(\sigma P'\bnf\theta R')
	\endprooftree
	\end{equation*}
	with $\{p \pmatch r\} = (\sigma, \theta)$ and
	$\wt l \cap {\sf fn}(R) = \wt o \cap {\sf fn}(P) = \wt l\cap\wt o = \emptyset$.
	Now, there exist $(q,\rho)$ and $Q'$ 
	such that $Q\ltsred{\rest l q}Q'$ and
	$p,\sigma\compat q,\rho$ and $\sigma P'\bisim\rho Q'$.
	By Proposition~\ref{lem:pat-lessthan}, $\{q\pmatch r\}=(\rho,\theta)$
	and so
	\begin{equation*}
	\prooftree Q \ltsred{\rest l q} Q' \quad R \ltsred{\rest o r} R'
	\justifies Q\bnf R \ltsred\tau \res{\withsetnot{\wt l,\wt o}{\wt l\cup\wt o}}(\rho Q'\bnf\theta R')
	\endprooftree
	\end{equation*}
	where, by $\alpha$-conversion, we can always let $\wt o\cap{\sf fn}(Q) = \emptyset$
	(the other side conditions for applying rule {\sf unify} already hold).
	By repeated applications of rule {\sf resnon}, infer 
	$\rest n (Q\bnf R) \ltsred\tau \rest n\res{\withsetnot{\wt l,\wt o}{\wt l\cup\wt o}}(\rho Q'\bnf\theta R')$
	and conclude.
\end{itemize}

\noindent If $\bar\mu = \rest l p$, it must be that $({\sf bn}(p) \cup \wt l) \cap {\sf fn}(\rest n(Q\bnf R)) = \emptyset$.
Then, fix any $\sigma$ such that ${\sf dom}(\sigma) = {\sf bn}(p)$ and ${\sf fn}(\sigma) \cap \wt l = \emptyset$.
The transition $P\bnf R \ltsred{\bar\mu} \bar P$ can be now generated in two ways:
\begin{itemize}
\item The transition is
	\begin{equation*}
	\prooftree P\ltsred{\rest l p} P'
	\justifies P\bnf R\ltsred{\rest l p} P'\bnf R
	\endprooftree
	\ \ (\wt l \cup {\sf bn}(p)) \cap {\sf fn}(R) = \emptyset
	\end{equation*}
	By $P\bisim Q$ there exist $(q,\rho)$ and $Q'$ 
	such that $Q\ltsred{\rest l q} Q'$ and
	$p,\sigma \compat q,\rho$ and $\sigma P' \bisim \rho Q'$. By $\alpha$-equivalence,
	let ${\sf bn}(q) \cap {\sf fn}(R) = \emptyset$; thus, $Q \bnf R \ltsred{\rest l q} Q' \bnf R$.
	By applying the same sequence of rules {\sf resnon} and {\sf open} used for $(\star)$
	(this is possible since ${\sf fn}(p) = {\sf fn}(q)$, see Lemma~\ref{prop:compat-fn}), 
	conclude with $\rest n (Q\bnf R) \ltsred{\res{\wt l,\wt m} q} \res{\ \wt n\! \setminus\!\! \wt m}(Q'\bnf R)
	= \hat Q$.
	Since ${\sf dom}(\sigma) \cap {\sf fn}(R) = {\sf bn}(p) \cap {\sf fn}(R) = \emptyset$ and
	substitution application is capture-avoiding by definition, obtain that $\sigma\hat P = 
	\sigma(\res{\ \wt n\! \setminus\! \wt m}(P'\bnf R)) = 
	\res{\ \wt n\! \setminus\!\! \wt m}(\sigma P'\bnf R)$.
	Similarly, $\rho\hat Q = \res{\ \wt n\! \setminus\!\! \wt m}(\rho Q' \bnf R)$.
	This suffices to conclude $(\sigma\hat P,\rho\hat Q) \in \Re$, as desired.
\item The transition is
	\begin{equation*}
	\prooftree R\ltsred{\rest l p} R'
	\justifies P\bnf R\ltsred{\rest l p} P\bnf R'
	\endprooftree
	\ \ (\wt l \cup {\sf bn}(p)) \cap {\sf fn}(P) = \emptyset
	\end{equation*}
	By $\alpha$-equivalence, let $(\wt l \cup {\sf bn}(p)) \cap {\sf fn}(Q) = \emptyset$;
	this allows us to infer $Q\bnf R\ltsred{\rest l p} Q\bnf R'$. The same sequence of
	rules {\sf resnon} and {\sf open} used for $(\star)$, yields
	$\rest n (Q\bnf R) \ltsred{\res{\withsetnot{\wt l,\wt m}{\wt l\cup\wt m}} p}
    \res{\ \wt n\! \setminus\! \wt m}(Q\bnf R') = \hat Q$.
	By Proposition~\ref{prop:compat-reflexive}, $p,\sigma \compat p,\sigma$. Moreover, 
	since ${\sf dom}(\sigma) \cap {\sf fn}(P,Q) = \emptyset$ and
	substitution application is capture-avoiding, obtain that $\sigma\hat P = 
	\res{\ \wt n\! \setminus\!\! \wt m}(P\bnf \sigma R')$ and
	$\sigma\hat Q = \res{\ \wt n\! \setminus\!\! \wt m}(Q \bnf \sigma R')$.
	This suffices to conclude $(\sigma\hat P,\sigma\hat Q) \in \Re$, as desired.
\end{itemize}
Closure under substitution holds by definition of $\Re$.
\qed

\paragraph{Proof of Lemma~\ref{lem:bisim-rep}}
This proof rephrases the similar one in \cite{sangiorgi.walker:theory-mobile}. 
First, define the $n$-th approximation of the bisimulation:
$$
\begin{array}{rcl}
\bisim_0 & = & Proc \times Proc
\\
\stackrel\bullet\bisim_{n+1} & = & \{(P,Q)\ : 
\\
&& \quad \forall\ P\ltsred{\mu}P'\\
&& \quad\qquad \mu=\tau\ \Rightarrow\ \exists\ Q\ltsred\tau Q'.\ (P',Q')\in\ \bisim_n\\
&& \quad\qquad \mu=\rest n p \ \Rightarrow\ \forall \sigma\ s.t.\ \,{\sf dom}(\sigma)={\sf bn}(p)\ \wedge \\
&& \qquad\qquad\qquad\qquad\qquad\qquad {\sf fn}(\sigma)\cap\wt n=\emptyset\ \wedge \\
&& \qquad\qquad\qquad\qquad\qquad\qquad\! ({\sf bn}(p)\cup\wt n)\cap{\sf fn}(Q) =\emptyset\\ 
&& \quad\qquad\qquad\qquad\qquad\ \ \exists\ (q,\rho)\mbox{ and } Q' s.t.\ Q\ltsred{\rest n q}Q' \wedge \\
&& \qquad\qquad\qquad\qquad\qquad\qquad p,\sigma\compat q,\rho
	\wedge (\sigma P',\rho Q')\in\ \bisim_n\\
&& \quad \mbox{Symmetrically for transitions of } Q \}
\\
\bisim_{n+1} & = & \mbox{the largest subrelation of $\stackrel\bullet\bisim_{n+1}$ closed under substitutions}
\end{array}
$$
Trivially, $\bisim_0\ \supseteq\ \bisim_1\ \supseteq\ \bisim_2\ \supseteq\ \cdots\,$.

We now prove that, since the LTS is structurally image finite (see Proposition~\ref{prop:imfin}),
it follows that
\begin{equation}
\label{eq:appr}
\bisim\ =\ \bigcap_{n \geq 0} \bisim_n
\end{equation}
One inclusion is trivial: by induction on $n$, it can be proved that $\bisim\ \subseteq\ \bisim_n$ for every $n$,
and so $\bisim\ \subseteq\ \bigcap_{n \geq 0} \bisim_n$.
For the converse, fix $P \ltsred\mu P'$ and consider the case for $\mu = \rest m p$, since
the case for $\mu = \tau$ can be proved like in $\pi$-calculus. For every $n \geq 0$, since $P \bisim_{n+1} Q$, 
there exist $(q_n,\rho_n)$ and $Q_n$ 
such that $Q \ltsred{\rest m q_n} Q_n$ and $p,\sigma \compat q_n,\rho_n$
and $\sigma P' \bisim_n \rho_n Q_n$. However, by Proposition~\ref{prop:maximal}, 
there are finitely many (up-to $\alpha$-equivalence) such $q_n$'s; thus, there must exist
(at least) one $q_k$ that leads to infinitely many $Q_n$'s that, because of Proposition~\ref{prop:imfin},
cannot be all different (up-to $\equiv$). Fix one of such $q_k$'s; there must exist (at least) one $Q_h$ 
such that $Q\ltsred{\rest m q_k} Q_h$ and there are infinitely many $Q_n$'s such that 
$Q\ltsred{\rest m q_k} Q_n$ and $Q_n \equiv Q_h$. Fix one of such $Q_h$'s. It suffices to prove that
$\sigma P' \bisim_n \rho_h Q_h$, for every $n$. This fact trivially holds whenever $n \leq h$:
in this case, we have that $\bisim_n\ \supseteq\ \bisim_h$. So, let $n > h$. If $Q_n \equiv Q_h$,
conclude, since $\equiv$ is closed under substitutions (notice that $\rho_n = \rho_h$ since
$q_n = q_h = q_k$) and $\equiv\ \subseteq\ \bisim_n$, for every $n$. 
Otherwise, there must exist $m > n$ such that $Q_m \equiv Q_h$ 
(otherwise there would not be infinitely many $Q_n$'s structurally equivalent to $Q_h$): 
thus, $\sigma P' \bisim_m \rho_h Q_h$ that implies $\sigma P' \bisim_n \rho_h Q_h$, since $m > n$.

\medskip

Thus, $!P \bisim\,\, !Q$ if and only if $!P \bisim_n \,!Q$, for all $n$.
Let $P^n$ denote the parallel composition of $n$ copies of the process $P$ (and similarly for $Q$).
Now, it can be proved that
\begin{equation}
\label{eq:n}
!P \bisim_n P^{2n} \quad\mbox{and}\quad !Q \bisim_n Q^{2n}
\end{equation}
The proof is by induction on $n$ and exploits a Lemma similar to Lemma~\ref{lem:bisim-par} (with $\bisim_n$ in
place of $\bisim$); the details are left to the interested reader.
By repeatedly exploiting Lemma~\ref{lem:bisim-par}, it follows that $P^{2n} \bisim Q^{2n}$
and so by \eqref{eq:appr}
\begin{equation}
\label{eq:PQ}
P^{2n} \bisim_n Q^{2n}
\end{equation}
Now by \eqref{eq:PQ} it follows that $P\bisim Q$ implies that $P^{2n}\bisim_n Q^{2n}$, for all $n$.
By \eqref{eq:n} and Lemma~\ref{lem:trans-bisim} (that also holds with $\bisim_n$ in place of $\bisim$),
it follows that $!P\bisim_n \,!Q$, for all $n$.
By \eqref{eq:appr}, conclude that $!P\bisim\,\, !Q$.
\qed

\bibliographystyle{abbrv}
\bibliography{main}

\end{document}